\documentclass[11pt,a4paper]{article}
\usepackage{fullpage}
\usepackage{url}
\usepackage{hyperref}

\usepackage{graphicx}
\usepackage{amsmath}
\usepackage{amssymb}
\usepackage{amsthm}
\usepackage{thm-restate}

\usepackage{tikz}
\usepackage{tikz-cd}
\usetikzlibrary{calc}
\usetikzlibrary{arrows.meta, positioning}

\usepackage{breqn}
\usepackage{array}
\usepackage{tabularx}
\usepackage[ruled,linesnumbered,noend]{algorithm2e}
\SetKwComment{Comment}{/* }{ */}
\SetKwInOut{Input}{input}
\SetKwInOut{Output}{output}
\usepackage{diagbox}
\usepackage{makecell}
\usepackage{comment}
\usepackage{color}

\usepackage{mdframed}

\usepackage{placeins}

\newmdenv[
  linewidth=1pt,
  roundcorner=5pt,
  linecolor=black,
  backgroundcolor=gray!10,
  skipabove=10pt,
  skipbelow=10pt,
  innertopmargin=5pt,
  innerbottommargin=5pt,
  innerleftmargin=10pt,
  innerrightmargin=10pt
]{definitionstyle}

\usepackage{float}
\usepackage{graphicx}
\usepackage{caption}
\usepackage{subcaption}

\usepackage[noend]{algpseudocode}

\newcommand{\vectorfromset}{\mathrm{vec}}

\usepackage[disable]{todonotes}

\makeatletter
\def\renewtheorem#1{%
  \expandafter\let\csname#1\endcsname\relax
  \expandafter\let\csname c@#1\endcsname\relax
  \gdef\renewtheorem@envname{#1}
  \renewtheorem@secpar
}
\def\renewtheorem@secpar{\@ifnextchar[{\renewtheorem@numberedlike}{\renewtheorem@nonumberedlike}}
\def\renewtheorem@numberedlike[#1]#2{\newtheorem{\renewtheorem@envname}[#1]{#2}}
\def\renewtheorem@nonumberedlike#1{
\def\renewtheorem@caption{#1}
\edef\renewtheorem@nowithin{\noexpand\newtheorem{\renewtheorem@envname}{\renewtheorem@caption}}
\renewtheorem@thirdpar
}
\def\renewtheorem@thirdpar{\@ifnextchar[{\renewtheorem@within}{\renewtheorem@nowithin}}
\def\renewtheorem@within[#1]{\renewtheorem@nowithin[#1]}
\makeatother

\newtheorem{theorem}{Theorem}
\newtheorem{lemma}[theorem]{Lemma}
\newtheorem{definition}[theorem]{Definition}
\newtheorem{observation}[theorem]{Observation}
\newtheorem{proposition}[theorem]{Proposition}
\newtheorem{conjecture}[theorem]{Conjecture}

\newtheorem{corollary}[theorem]{Corollary}

\newcommand{\falsevalue}{0}
\newcommand{\truevalue}{1}

\newcommand{\treeautomaton}{A}

\newcommand{\statestreeautomaton}{Q}
\newcommand{\transitionstreeautomaton}{\Delta}
\newcommand{\finalstatestreeautomaton}{F}
\newcommand{\lang}{\mathcal{L}}

\newcommand{\treewidthvalue}{k}

\newcommand{\alphabet}{\Sigma}
\newcommand{\analgorithm}{\mathfrak{A}}
\newcommand{\decompositiongraph}[1]{\mathcal{G}(#1)}
\newcommand{\cliquenumber}{\omega}
\newcommand{\maxDegree}{\Delta}
\newcommand{\chromaticnumber}{\chi}
\newcommand{\agraph}{G}
\newcommand{\bgraph}{H}
\newcommand{\Nplus}{\mathbb{N}_{+}}
\newcommand{\N}{\mathbb{N}}

\newcommand{\abag}{\mathfrak{b}}
\newcommand{\bagset}{B}

\newcommand{\allsubterms}{\mathrm{Sub}}

\newcommand{\aterm}{\tau}

\newcommand{\nodes}[1]{\mathsf{Nodes}(#1)}
\newcommand{\anode}{p}

\newcommand{\vertexset}[1]{V(#1)}
\newcommand{\edgeset}[1]{E(#1)}

\newcommand{\incidencerelationname}{\rho}
\newcommand{\incidencerelation}[1]{\incidencerelationname_{#1}}
\newcommand{\edgeendpointsname}{\mathrm{endpts}}
\newcommand{\invariant}{\mathcal{I}}

\newcommand{\anedge}{e}
\newcommand{\avertex}{v}

\newcommand{\introvertextype}[1]{\mathtt{IntroVertex}\{#1\}}
\newcommand{\introedgetype}[2]{\mathtt{IntroEdge}\{#1,#2\}}
\newcommand{\forgetvertextype}[1]{\mathtt{ForgetVertex}\{#1\}}
\newcommand{\jointype}{\mathtt{Join}}
\newcommand{\leaftype}{\mathtt{Leaf}}

\newcommand{\introvertexgeneric}[1]{\mathtt{IntroVertex}\{#1\}}
\newcommand{\introedgegeneric}[2]{\mathtt{IntroEdge}\{#1,#2\}}
\newcommand{\forgetvertexgeneric}[1]{\mathtt{ForgetVertex}\{#1\}}

\newcommand{\initialsetgeneric}{\mathtt{Leaf}}
\newcommand{\finalwitnessgeneric}{\mathtt{Final}}

\newcommand{\joingenericcore}{\mathtt{Join}}
\newcommand{\initialsetgenericcore}{\mathtt{Leaf}}
\newcommand{\finalwitnessgenericcore}{\mathtt{Final}}
\newcommand{\cleaningfunctioncore}{\mathtt{Clean}}
\newcommand{\accepteddecompositions}[1]{\mathsf{Accepted}(#1)}

\newcommand{\asymbol}{a}
\newcommand{\arity}{\mathfrak{a}}
\newcommand{\labelingfunction}{\lambda}

\newcommand{\topbagname}{B}
\newcommand{\topbag}[1]{\topbagname(#1)}

\newcommand{\allabstractdecompositionspathwidth}[1]{\mathsf{IPD}_{#1}}
\newcommand{\allabstractdecompositionstreewidth}[1]{\mathsf{ITD}_{#1}}
\newcommand{\abstractdecomposition}{\tau}
\newcommand{\sigmaabstractdecomposition}{\sigma}

\newcommand{\defeq}{~\dot{=}~}

\newcommand{\vertexone}{u}
\newcommand{\vertextwo}{v}

\newcommand{\allwitnesses}{\mathcal{W}}
\newcommand{\awitness}{\mathsf{w}}

\newcommand{\abstractalphabet}[1]{\Sigma_{#1}}
\newcommand{\atree}{T}

\newcommand{\finitepowerset}[1]{\mathcal{P}_{\mathrm{fin}}(#1)}
\newcommand{\powerset}[1]{\mathcal{P}(#1)}
\newcommand{\powersetchoosek}[2]{\mathcal{P}(#1,#2)}

\newcommand{\Terms}[1]{\mathrm{Terms}(#1)}

\newcommand{\graphproperty}{\mathbb{P}}
\newcommand{\dpcoregraphproperty}[1]{\mathbb{G}(#1)}

\newcommand{\asetgraphs}{S}

\newcommand{\isomorphic}{\sim}
\newcommand{\isomorphism}{\varphi}
\newcommand{\isomorphismclosure}[1]{\mathsf{ISO}(#1)}

\newcommand{\dpcore}{\mathsf{D}}

\newcommand{\witnessset}{S}

\newcommand{\allgraphs}{\textsc{Graphs}}

\newcommand{\allgraphstreewidth}[1]{\textsc{GraphsTw}[#1]}
\newcommand{\allgraphspathwidth}[1]{\textsc{GraphsPw}[#1]}

\newcommand{\isomorphismvertices}{\phi}
\newcommand{\isomorphismedges}{\nu}

\newcommand{\dpNamePart}[1]{\mbox{\normalfont\ttfamily #1}}
\newcommand{\hasMultiEdgeProperty}{\dpNamePart{Prop}\_\allowbreak\dpNamePart{HasMultipleEdges}}
\newcommand{\predicateHasMultiEdge}[1]{\dpNamePart{Pred}\_\allowbreak\dpNamePart{HasMultipleEdges}(#1)}

\newcommand{\simpleCliqueNumberAtLeastProperty}[1]{\dpNamePart{Prop}\_\allowbreak\dpNamePart{SimpleCliqueNumber}\_\allowbreak\dpNamePart{AtLeast}(#1)}
\newcommand{\maxDegreeAtLeastProperty}[1]{\dpNamePart{Prop}\_\allowbreak\dpNamePart{MaximumDegree}\_\allowbreak\dpNamePart{AtLeast}(#1)}

\newcommand{\chromaticNumberAtMostProperty}[1]{\dpNamePart{Prop}\_\allowbreak\dpNamePart{ChromaticNumber}\_\allowbreak\dpNamePart{AtMost}(#1)}

\newcommand{\reedProperty}{\textsc{Reed}}

\newcommand{\hasMultiEdgeCore}{\dpNamePart{HasMultipleEdges}}

\newcommand{\simpleCliqueNumberAtLeastCore}[1]{\dpNamePart{SimpleCliqueNumber}\_\allowbreak\dpNamePart{AtLeast}(#1)}
\newcommand{\maxDegreeAtLeastCore}[1]{\dpNamePart{MaximumDegree}\_\allowbreak\dpNamePart{AtLeast}(#1)}
\newcommand{\predicateMaxDegree}[2]{\dpNamePart{Pred}\_\allowbreak\dpNamePart{MaximumDegree}\_\allowbreak\dpNamePart{AtLeast}(#1,#2)}

\newcommand{\topmapname}{\theta}
\newcommand{\topmap}[1]{\topmapname[#1]}

\newcommand{\cell}{c}

\newcommand{\invariantCore}{\mathtt{Inv}}

\newcommand{\dynamizationfunctionname}[1]{\Gamma[{#1}]}
\newcommand{\dynamizationfunction}[2]{\dynamizationfunctionname{#1,#2}}

\newcommand{\xvertex}{x}

\newcommand{\domain}{dom}

\newcommand{\numbercolors}{r}

\newcommand{\predicateColorable}[2]{\dpNamePart{Pred}\_\allowbreak\dpNamePart{ChromaticNumber}\_\allowbreak\dpNamePart{AtMost}(#1,#2)}
\newcommand{\chromaticnumberAtMostCore}[1]{\dpNamePart{ChromaticNumber}\_\allowbreak\dpNamePart{AtMost}(#1)}
\newcommand{\chromaticNumberAtMostCore}[1]{\dpNamePart{ChromaticNumber}\_\allowbreak\dpNamePart{AtMost}(#1)}

\newcommand{\cliqueparameter}{\omega}

\newcommand{\cliquemap}{\gamma}

\newcommand{\cliquesize}{s}
\newcommand{\cliquefound}{\mathfrak{b}}
\newcommand{\predicateSimpleCliqueNumberAtLeast}[2]{\dpNamePart{Pred}\_\allowbreak\dpNamePart{SimpleCliqueNumber}\_\allowbreak\dpNamePart{AtLeast}(#1,#2)}
\newcommand{\maxdegreeparameter}{d}
\newcommand{\maxdegreefound}{b}

\newcommand{\realizationclass}{\mathcal{A}}
\newcommand{\dprefutation}{R}

\newcommand{\combinator}{\mathcal{C}}

\newcommand{\vertexsetname}{V}
\newcommand{\edgesetname}{E}

\newcommand{\arityvalue}{r}

\newcommand{\multiedgefound}{\mathrm{b}}

\newcommand{\graphfunction}{\mathcal{G}}

\newcommand{\alphabetclass}{\Sigma}
\newcommand{\transitionsdpcore}{\Delta}

\newcommand{\subtermToBag}{T}

\newcommand{\mxdp}{\predicateMaxDegree{\maxdegreeparameter}{\treewidthvalue}}

\newcommand{\cnc}{\chromaticnumberAtMostCore{\numbercolors}}
\newcommand{\cnp}{\predicateColorable{\numbercolors}{\treewidthvalue}}

\newcommand{\vertexsub}{S}
\newcommand{\vdegree}{d}

\newcommand{\relabelingfunction}{f}

\newcommand{\relabelclass}{\mathcal{F}}
\newcommand{\action}{\rho}

\newcommand{\canonic}{\mathrm{CAN}}

\newcommand{\relabelsequence}{\textsc{F}}

\title{State Canonization and Early Pruning in\\Width-Based Automated Theorem Proving}

\author{Mateus de Oliveira Oliveira$^{1,2}$ and Sam Urmian$^{2,3}$ \\
$^1$Department of Computer and Systems Sciences, Stockholm University, Sweden \\
$^2$Department of Informatics, University of Bergen, Norway\\
$^3$Centre for the Science of Learning \& Technology, University of Bergen, Norway\\
oliveira@dsv.su.se \hspace{0.5cm} sam.urmian@uib.no}

\begin{document}

\maketitle

\begin{abstract}
Width-based automated theorem proving is a framework where counterexamples to graph-theoretic conjectures are searched width-wise relative to some graph width measure, such as treewidth or pathwidth.
In a recent work it has been shown that dynamic programming algorithms operating on tree decompositions can be combined together with the purpose of width-based theorem proving. This approach can be used to show that several long-standing conjectures in graph theory can be tested in time $2^{2^{k^{O(1)}}}$ on the class of graphs of treewidth at most $k$. In this work, we give the first steps towards evaluating the viability of this framework from a practical standpoint. At the same time, we advance the framework in two directions.  First, we
introduce a state-canonization technique that significantly reduces the number of states evaluated during the search for a counterexample of the conjecture. Second, we introduce an early-pruning technique that can be applied in the study of conjectures of the form $\graphproperty_1\rightarrow \graphproperty_2$, for graph properties $\graphproperty_1$ and $\graphproperty_2$, where $\graphproperty_1$ is a property closed under subgraphs.

As a concrete application, we use our framework in the study of graph-theoretic conjectures related to coloring triangle-free graphs. In particular, our algorithm is able to show that Reed's conjecture for triangle-free graphs is valid on the class of graphs of pathwidth at most 5, and on graphs of treewidth at most 3. Perhaps more interestingly, our algorithm is able to construct in a completely automated way counterexamples to invalid strengthenings of Reed's conjecture. These are the first results showing that width-based automated theorem proving is a promising avenue in the study of graph-theoretic conjectures.
\end{abstract}

\section{Introduction}

Width-based automated theorem proving is a framework where parameterized algorithms are employed to search for counterexamples to graph-theoretic conjectures.
Within this framework, the search for counterexamples is conducted width-wise, relative to some specific width measure for graphs, such as treewidth or pathwidth. More specifically, given a conjecture $C$ and a positive integer $k$, the objective is to determine whether $C$ holds on the class of graphs of width at most $k$. If $C$ does not hold on this class of graphs, a counterexample of width at most $k$ that invalidates the conjecture should be produced.

This approach is relevant for two main reasons. First, many interesting classes of graphs have small width with respect to some graph width measure. For example, trees and forests have treewidth at most $1$, series-parallel graphs have treewidth at most $2$ and outerplanar graphs have treewidth at most $2$ \cite{kloks1994treewidth,bodlaender1986classes,brandstadt1999graph}.
It has also been shown that $\treewidthvalue$-outerplanar graphs have treewidth $O(\treewidthvalue)$ \cite{biedl2015triangulating,kammer2007determining}, and that $\treewidthvalue$-caterpillars have pathwidth at most $\treewidthvalue$ \cite{proskurowski1996bandwidth}. Second, many important conjectures in graph theory are not known to hold on classes of graphs of small treewidth or pathwidth, and it is therefore natural to try to determine whether such conjectures hold when restricted to such classes of graphs.
In many cases, structural properties of graphs of small treewidth or pathwidth have been used to produce analytic proofs of special cases of several well studied conjectures  \cite{guan1999game,campos2013dominating,heldt2014bend,hell2000circular,pinlou2006oriented,zhou2007upper,pan2002tight}. The framework of width-based automated theorem proving is an avenue for automating this approach for certain classes of conjectures.

In a recent work \cite{de2023width}, a new approach for width-based automated theorem proving was introduced. Within this approach, instead of specifying graph properties using logical formulas, such properties are specified using dynamic programming algorithms operating on graph decompositions. More specifically, it was shown in \cite{de2023width} that given dynamic programming cores $\dpcore_1,\dpcore_2,\dots,\dpcore_r$ specifying graph properties $\graphproperty_1,\graphproperty_2,\dots,\graphproperty_r$, and given a Boolean combination $\graphproperty$ of these properties, there is an algorithm $\analgorithm$ that takes a number $\treewidthvalue$ as input and decides whether all graphs of treewidth at most $\treewidthvalue$ belong to $\graphproperty$. Furthermore, if not all graphs of treewidth at most $\treewidthvalue$ belong to $\graphproperty$, then the algorithm outputs a certificate that can be used to extract a counterexample.
This approach takes double-exponential time with respect to the number of bits necessary to represent local witnesses used by the dynamic programming algorithms. Given that many interesting graph properties have DP-algorithms that use local witnesses of size $\treewidthvalue^{O(1)}$ when processing tree decompositions of width at most $\treewidthvalue$, the dynamic programming approach developed in \cite{de2023width} implies that many interesting conjectures can be tested in time double-exponential in $\treewidthvalue^{O(1)}$ on the class of graphs of treewidth at most $\treewidthvalue$. This includes long-standing conjectures such as Hadwiger's conjecture (for a fixed number of colors) \cite{bollobas1980hadwiger}, Tutte's flow conjectures \cite{jacobsen2013five,peres2021tutte}, and Reed's conjecture for triangle-free graphs \cite{reed1998omega}.

\subsection{Our Results}

In this work, we take the first steps towards evaluating the viability of the width-based automated theorem proving framework introduced in \cite{de2023width} from a practical perspective. At the same time, we advance this framework by introducing a new width-based deduction algorithm that produces significantly fewer states than the algorithm originally introduced in \cite{de2023width}. Our new algorithm leverages two techniques. First, we introduce a suitable notion of state canonization. Second, we introduce an early pruning technique that can be used in the study of conjectures of the form $\graphproperty_1\rightarrow \graphproperty_2$, where $\graphproperty_1$ is a graph property closed under subgraphs. As a concrete case study, we use our implementation to
study graph-theoretic statements related to colorings of graphs. Our algorithm produced nontrivial counterexamples for false statements and confirmed in a completely automated way a well-known conjecture due to Reed \cite{reed1998omega} on the class of graphs of pathwidth at most $5$, and on the class of graphs of treewidth at most $3$.
Together, our results provide the first evidence that the width-based ATP framework introduced in \cite{de2023width} may be a promising avenue in the study of graph-theoretic conjectures.

The approach for width-based automated theorem proving introduced in \cite{de2023width}, and further developed in the present work, is heavily based on dynamic programming algorithms deciding graph-theoretic properties. Another suitable approach for the study of width-based automated theorem proving is based on logic \cite{courcelle1990monadic,seese1991structure,courcelle2012automata}. In this context, instead of using dynamic-programming algorithms to define graph properties, such graph properties are defined using logical formulas. For example, it can be shown that for each formula $\varphi$ in the monadic second-order logic of graphs, there is an algorithm that takes a positive integer $k$ as input and determines whether every graph of treewidth at most $k$ satisfies $\varphi$ \cite{seese1991structure,courcelle1990monadic}. The drawback with this approach is that the running time of the deduction algorithm is governed by a function that grows as a tower of exponentials on the quantifier depth of the formula $\varphi$. It is worth noting that several interesting graph-theoretic properties have dynamic programming algorithms that produce local witnesses of size $k^{O(1)}$ when processing tree decompositions of width at most $k$, while at the same time requiring large quantifier depth to be expressed in MSO logic \cite{de2023width}. For such properties, while the time complexity implied by the logic approach is upper bounded by a tower of exponentials on $k$, the time complexity implied by the dynamic-programming approach is upper bounded by a double exponential in $k^{O(1)}$.

Over the years, a large variety of general formalisms have been developed with the aim of solving combinatorial problems on graphs of bounded treewidth. This includes combinatorial approaches based on dynamic programming \cite{bodlaender1988dynamic,bodlaender2008combinatorial}, approaches based on answer set programming \cite{fichte2017answer}, logic-based model checking \cite{courcelle1990monadic,seese1991structure,courcelle2012automata}, etc.

A complementary line of research studies decomposition-first dynamic programming frameworks and declarative systems
that exploit supplied decompositions, often achieving strong instance-level performance for
satisfaction/optimization tasks~\cite{fichte2017answer,bannach2023structure}. Related declarative toolchains
(e.g., ASP systems such as clingo) provide extensible backends that can host decomposition-based methods
in practice~\cite{gebser2016theory}.
Additionally, first-order (FO) model checking has well-studied tractability frontiers tied to locality and structural restrictions, and in practice it is frequently supported by SAT/SMT and first-order ATP backends~\cite{de2008z3,barrett2011cvc4,kovacs2013first,weidenbach2009spass}.
Nevertheless, the primary focus of these approaches remains per-instance evaluation rather than the evaluation of a given property on a whole class of graphs of bounded width.

\section{Preliminaries}
\label{section:Preliminaries}

\par\medskip\noindent\textbf{Basic notation.}
We denote by $\N \defeq \{0,1,\dots\}$ the set of natural numbers, and by $\Nplus \defeq \N \setminus \{0\}$ the set of positive natural numbers.
For any $n \in \N$, we define the interval $[n] \defeq \{1,\dots,n\}$, noting in particular that $[0] = \emptyset$.
Given a set $S$, we write $\finitepowerset{S}$ for the set of all \emph{finite} subsets of $S$, and $\powerset{S}$ for the \emph{power set} of $S$, i.e., the set of all subsets of $S$.
For $k\in\N$, we write $\powersetchoosek{S}{k}$ for the set of all $k$-element subsets of $S$.
Let $f : X_1 \times X_2 \times \dots \times X_r \rightarrow X$ be a function, and let $S_i \subseteq X_i$ for each $i = 1,\dots,r$. We define the image of $f$ over the product $S_1 \times S_2 \times \dots \times S_r$ as
\[
f(S_1, S_2, \dots, S_r) \defeq \{f(s_1, s_2, \dots, s_r) \mid s_i \in S_i \text{ for } i = 1,\dots,r\}.
\]
A quick reference for the notation used in this paper is provided in Appendix~\ref{appendix:notation-quickref}.


\par\medskip\noindent\textbf{Graphs.}
In this work, an \emph{undirected} \emph{graph} is defined as a triple
$\agraph = (\vertexsetname, \edgesetname, \incidencerelationname)$,
where $\vertexsetname \subseteq \mathbb{N}$ is a \emph{finite} set of \emph{vertices},
$\edgesetname \subseteq \mathbb{N}$ is a finite set of \emph{edges}, and
$\incidencerelationname \subseteq \edgesetname \times \vertexsetname$ is an \emph{incidence relation}
with the property that each edge is incident to exactly two vertices.
For each edge $\anedge \in \edgesetname$, we define its \emph{endpoints} as $\edgeendpointsname(\anedge) = \{\avertex \in \vertexsetname \mid (\anedge, \avertex) \in \incidencerelationname\}$.
In what follows, we may write $\vertexset{\agraph}$, $\edgeset{\agraph}$, and
$\incidencerelation{\agraph}$ to denote $\vertexsetname$, $\edgesetname$, and
$\incidencerelationname$, respectively. The \emph{size} of a graph $\agraph$ is defined as $|\agraph| = |\vertexset{\agraph}| + |\edgeset{\agraph}|$.
We denote by $\allgraphs$ the set of all graphs. The \emph{empty graph} is the graph
$(\emptyset, \emptyset, \emptyset)$ with no vertices and no edges.
Note that vertices and edges are treated as distinct sorts, even though both are encoded as natural numbers; thus it is harmless if the same integer occurs in both $\vertexset{\agraph}$ and $\edgeset{\agraph}$.

\par\medskip\noindent\textbf{Loops, multi-edges, and simplicity.}
We consider loopless graphs; multiedges are allowed.
We say that a graph is \emph{simple} if it has no multiedge, i.e., whenever $e\neq e'$ are distinct edges then $\edgeendpointsname(e)\neq \edgeendpointsname(e')$.

\par\medskip\noindent\textbf{Disjoint union.}
Let $\agraph$ be a graph and let $i\in\{0,1\}$. We define the renamed copy $\agraph^{(i)}$ by
\[
\vertexset{\agraph^{(i)}} \defeq \{2v+i : v\in \vertexset{\agraph}\},\qquad
\edgeset{\agraph^{(i)}} \defeq \{2e+i : e\in \edgeset{\agraph}\},
\]
and
\[
\incidencerelation{\agraph^{(i)}} \defeq \{(2e+i,2v+i) : (e,v)\in \incidencerelation{\agraph}\}.
\]
For two graphs $\agraph$ and $\bgraph$, we define their disjoint union by
\[
\agraph \uplus \bgraph \defeq \agraph^{(0)} \cup \bgraph^{(1)}.
\]
That is, $\vertexset{\agraph\uplus\bgraph}=\vertexset{\agraph^{(0)}}\cup \vertexset{\bgraph^{(1)}}$,
$\edgeset{\agraph\uplus\bgraph}=\edgeset{\agraph^{(0)}}\cup \edgeset{\bgraph^{(1)}}$, and
$\incidencerelation{\agraph\uplus\bgraph}=\incidencerelation{\agraph^{(0)}}\cup \incidencerelation{\bgraph^{(1)}}$.
Whenever we say ``disjoint union'' of two graphs, we mean this fixed construction.


\par\medskip\noindent\textbf{Graph isomorphisms.}
An {\em isomorphism} from a graph $\agraph$ to a graph $\bgraph$ is a pair
$\isomorphism  = (\isomorphismvertices,\isomorphismedges)$ where
$\isomorphismvertices:\vertexset{\agraph}\rightarrow \vertexset{\bgraph}$ is a
bijection from the vertices of $\agraph$ to the vertices of $\bgraph$ and
$\isomorphismedges:\edgeset{\agraph}\rightarrow \edgeset{\bgraph}$ is a
bijection from the edges of $\agraph$ to the edges of $\bgraph$ with the
property that for each vertex $\avertex\in \vertexset{\agraph}$ and each edge
$\anedge \in \edgeset{\agraph}$, $(\anedge,\avertex)\in
\incidencerelation{\agraph}$ if and only if
$(\isomorphismedges(\anedge),\isomorphismvertices(\avertex))\in \incidencerelation{\bgraph}$.
If such a bijection exists, we say that $\agraph$
and $\bgraph$ are {\em isomorphic}, and denote this fact by $\agraph\sim \bgraph$.

\par\medskip\noindent\textbf{Graph embeddings.}
Let $H=(V_H,E_H,\incidencerelation{H})$ and $G=(V_G,E_G,\incidencerelation{G})$ be graphs.
An \emph{embedding} of $H$ into $G$ is a pair $(\phi,\nu)$ where $\phi:V_H\to V_G$ and $\nu:E_H\to E_G$ are injective functions such that for every $v\in V_H$ and every $e\in E_H$,
\[
(e,v)\in \incidencerelation{H} \iff (\nu(e),\phi(v))\in \incidencerelation{G}.
\]
We say that $H$ \emph{embeds as a subgraph} of $G$ if such an embedding exists.

\par\medskip\noindent\textbf{Graph properties.}
A {\em graph property} is any subset $\graphproperty \subseteq \allgraphs$
closed under isomorphisms. That is to say, for each two isomorphic graphs
$\agraph$ and $\bgraph$ in $\allgraphs$, $\agraph \in\graphproperty$ if and
only if $\bgraph\in\graphproperty$. Note that the sets $\emptyset$ and
$\allgraphs$ are graph properties. Given a set $\asetgraphs$ of graphs, the
{\em isomorphism closure} of $\asetgraphs$ is defined as the set
$\isomorphismclosure{\asetgraphs} = \{\agraph\in \allgraphs \;:\; \exists
\bgraph \in \asetgraphs, \agraph\isomorphic \bgraph\}.$
We use the standard Boolean operations on graph properties: for graph properties $P,Q\subseteq \allgraphs$, we write
$\neg P\defeq \allgraphs\setminus P$,
$P\wedge Q\defeq P\cap Q$,
$P\vee Q\defeq P\cup Q$,
and $P\rightarrow Q \defeq \neg P\vee Q$.
All of these are again graph properties (they are closed under isomorphism).

\par\medskip\noindent\textbf{Ranked alphabets and terms.}
A \emph{ranked alphabet} is a finite set $\alphabet$ equipped with a function $\arity: \alphabet \to \mathbb{N}$, which assigns to each symbol in $\alphabet$ a non-negative integer called its \emph{arity}.
A \emph{term} over a ranked alphabet $\alphabet$ is a pair $\aterm = (\atree, \labelingfunction)$, where $\atree$ is a finite rooted tree and $\labelingfunction : \nodes{\atree} \to \alphabet$ is a labeling function that assigns to each node $\anode \in \nodes{\atree}$ a symbol from $\alphabet$. The set $\nodes{\atree}$ denotes the set of nodes in the tree $\atree$, with the root node included, and where each node may have ordered children.
The labeling must respect arities: if $\labelingfunction(\anode)$ has arity $r$, then the node $\anode$ must have exactly $r$ children $\anode_1, \dots, \anode_r$. In particular, leaf nodes are labeled with symbols of arity $0$. We assume that the children of each node are ordered from left to right, so it makes sense to speak of the $i$-th child of a node.
We may write $\nodes{\aterm}$ to refer to $\nodes{\atree}$, and we use $|\aterm|$ to denote the number of nodes in the term, i.e., $|\nodes{\atree}|$. The \emph{height} of $\aterm$ is defined as the height of the tree $\atree$. The set of all terms over $\alphabet$ is denoted by $\Terms{\alphabet}$.
If $\aterm_1 = (\atree_1, \labelingfunction_1), \dots, \aterm_{\arityvalue} = (\atree_{\arityvalue}, \labelingfunction_{\arityvalue})$ are terms in $\Terms{\alphabet}$, and $\asymbol \in \alphabet$ is a symbol of arity $\arityvalue$, then the term $\asymbol(\aterm_1, \dots, \aterm_{\arityvalue})$ is defined after replacing the child terms by isomorphic copies with pairwise disjoint node sets. Thus reusing the same written term in two child positions means taking two fresh copies. The resulting pair is $\aterm = (\atree, \labelingfunction)$ where $u$ is a fresh node not appearing in any of the copied trees, and the node set of $\atree$ is $\{u\} \cup \nodes{\atree_1} \cup \dots \cup \nodes{\atree_{\arityvalue}}$. The root of $\atree$ is $u$, and we define $\labelingfunction(u) = \asymbol$. For each $j \in [\arityvalue]$, the labeling $\labelingfunction$ agrees with $\labelingfunction_j$ on the nodes of $\atree_j$, that is, $\labelingfunction|_{\nodes{\atree_j}} = \labelingfunction_j$.

\section{Instructive Dynamic Programming Cores}
\label{section:InstructiveTreeDecompositions}

In this work, we represent graphs of treewidth at most $\treewidthvalue$ using $\treewidthvalue$-instructive tree decompositions ($k$-ITDs) \cite{de2023width}. For each $\treewidthvalue\in \N$, consider the following alphabet
\begin{equation}
\label{equation:InstructiveAlphabet}
\begin{array}{l}
\abstractalphabet{\treewidthvalue} \; = \Big\{\leaftype,\; \introvertextype{\vertexone}, \introedgetype{\vertexone}{\vertextwo},
\\ \hspace{5.cm}
\forgetvertextype{\vertexone},
  \; \jointype
  \;:\; \vertexone,\vertextwo\in [\treewidthvalue+1], \vertexone\neq \vertextwo \Big\},
\end{array}
\end{equation}
\noindent where $\leaftype$ is a symbol of arity $0$, $\introvertextype{\vertexone}$, $\forgetvertextype{\vertexone}$
and $\introedgetype{\vertexone}{\vertextwo}$ are symbols of arity $1$, and $\jointype$ is a symbol of arity $2$. We call $\abstractalphabet{\treewidthvalue}$ the {\em $k$-instructive alphabet}. Intuitively, elements of $\abstractalphabet{\treewidthvalue}$ represent instructions that can be used for the construction of a graph of treewidth at most $\treewidthvalue$, together with a set $\abag\subseteq [\treewidthvalue+1]$ of {\em active labels}, where each active label labels exactly one vertex of the graph.

\begin{enumerate}
\item In the base case, the instruction $\leaftype$ creates an empty graph with an empty set of active labels.
\item Now, let $\agraph$ be a graph with set of active labels $\abag$.
\begin{enumerate}
\item For each $\vertexone\in [\treewidthvalue+1]\backslash \abag$, the instruction $\introvertextype{\vertexone}$ adds a new vertex
to $G$, labels this vertex with $\vertexone$, and adds $\vertexone$ to $\abag$.
\item For each $\vertexone\in \abag$, the
instruction $\forgetvertextype{\vertexone}$ erases the label from the current vertex labeled with $\vertexone$, and
removes $\vertexone$ from $\abag$. The intuition is that the label $\vertexone$ is now free and may be used later in the
creation of another vertex.
\item For each two distinct $\vertexone,\vertextwo\in \abag$, the instruction $\introedgetype{\vertexone}{\vertextwo}$
introduces a new edge between the current vertex labeled with $\vertexone$ and the current vertex labeled with $\vertextwo$.
We note that multiedges are allowed in our graphs.
\end{enumerate}
\item Finally, if $\agraph$ and $\agraph'$ are two graphs, each having $\abag$ as the set of active labels, then the instruction
$\jointype$ creates a new graph by identifying, for each $\vertexone\in \abag$, the vertex of $\agraph$ labeled
with $\vertexone$ with the vertex of $\agraph'$ labeled with $\vertexone$.
\end{enumerate}

Such a construction process can be formalized using a term $\abstractdecomposition$ over $\abstractalphabet{\treewidthvalue}$, which specifies an inductive construction from the leaves towards the root. More specifically, leaves are labeled with the  $\leaftype$ instruction, nodes with a single child are labeled with an instruction of type $\introvertextype{\vertexone}$,
$\forgetvertextype{\vertexone}$, or
$\introedgetype{\vertexone}{\vertextwo}$, and nodes with two children are labeled with the $\jointype$ instruction. We let $\decompositiongraph{\abstractdecomposition}$ be the graph associated with the root of $\abstractdecomposition$. We let $\topbag{\abstractdecomposition}$ be the set of active labels after processing all operations in $\abstractdecomposition$, and let
 $\topmap{\abstractdecomposition}:\topbag{\abstractdecomposition} \rightarrow \vertexset{\decompositiongraph{\abstractdecomposition}}$ be the map that sends each label in
$\topbag{\abstractdecomposition}$ to its corresponding vertex in $\decompositiongraph{\abstractdecomposition}$.
We refer to Appendix~\ref{appendix:GraphFromDecomposition} for a formal inductive definition of $\decompositiongraph{\abstractdecomposition}$.
 We note that not all terms
over $\abstractalphabet{\treewidthvalue}$ give rise to legal graphs. For instance, if the set $\abag$ does not contain a label $\vertexone$ then the instruction $\forgetvertextype{\vertexone}$ is not well defined. Similarly, if $\vertexone$ is already in $\abag$, then the instruction $\introvertextype{\vertexone}$ is not well defined. In order to specify the set of all terms over $\abstractalphabet{\treewidthvalue}$
that do correspond to graphs, we may use a tree automaton $\treeautomaton_{\treewidthvalue}$.
More specifically, we let $\treeautomaton_{\treewidthvalue} =  (\alphabet_k,\statestreeautomaton_k,\finalstatestreeautomaton_k,\transitionstreeautomaton_k)$ be the tree automaton where $\statestreeautomaton_k = \finalstatestreeautomaton_k = \powerset{[\treewidthvalue+1]}$, and

\[
\begin{array}{lcl}
\transitionstreeautomaton_k &=& \{ \leaftype \rightarrow \emptyset\}\;  \\
& \cup& \{\introvertextype{\vertexone}(\abag) \rightarrow \abag \cup \{\vertexone\} \;|\;  \vertexone\in [\treewidthvalue+1]\backslash \abag\} \\
&\cup& \{\forgetvertextype{\vertexone}(\abag) \rightarrow \abag\backslash \{\vertexone\} \;|\; \vertexone\in \abag\} \\
&\cup&\{\introedgetype{\vertexone}{\vertextwo}(\abag) \rightarrow \abag\;|\; \vertexone,\vertextwo\in \abag, \vertexone\neq \vertextwo\} \\
&\cup& \{\jointype(\abag,\abag')\rightarrow \abag \;|\; \abag = \abag'\}.
\end{array}
\]


We let $\allabstractdecompositionstreewidth{k}  = \lang(\treeautomaton_{\treewidthvalue})$ where $\lang(\treeautomaton_{\treewidthvalue})$ is the set of terms accepted by
$\treeautomaton_{\treewidthvalue}$. The terms in $\allabstractdecompositionstreewidth{k}$ are called {\em $\treewidthvalue$-instructive decompositions}. It turns out that graphs that can be represented by
$\treewidthvalue$-instructive tree decompositions are precisely the graphs of treewidth at most $\treewidthvalue$.

\begin{lemma}
[\cite{de2023width}]
\label{lemma:TreewidthIsTreelikeConfirmation}
Let $\agraph\in \allgraphs$ and $\treewidthvalue\in \N$. Then $\agraph$ has
treewidth at most $\treewidthvalue$ if and only if there exists a $\treewidthvalue$-instructive tree decomposition $\abstractdecomposition$ such that $\graphfunction(\abstractdecomposition) \simeq \agraph$.
\end{lemma}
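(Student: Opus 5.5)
We prove the two directions separately, in each case going through the standard notion of a nice tree decomposition of width at most $\treewidthvalue$ (bags of size at most $\treewidthvalue+1$).

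\textbf{From instructive decompositions to treewidth.} Let $\abstractdecomposition\in\allabstractdecompositionstreewidth{\treewidthvalue}$. We attach to each node $\anode$ of $\abstractdecomposition$ the bag $\topmapname$-image of the active labels at $\anode$, i.e.\ the set of vertices of $\decompositiongraph{\abstractdecomposition}$ that are labeled at the moment the subterm rooted at $\anode$ has been processed. Since the active label set is a subset of $[\treewidthvalue+1]$, every bag has at most $\treewidthvalue+1$ vertices, and the underlying tree of the term serves as the decomposition tree. Three properties must then be checked.
\begin{enumerate}
\item Every vertex lies in some bag, namely the bag at the node where it is introduced.
\item Every edge is covered: an edge created by $\introedgetype{\vertexone}{\vertextwo}$ has both endpoints active at that node.
\item For each vertex $v$, the nodes whose bag contains $v$ form a connected subtree.
\end{enumerate}
For the third property, observe that $v$ is created at a unique $\introvertextype{\vertexone}$ node. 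Going upward from there, $v$ stays in the bag until the corresponding $\forgetvertextype{\vertexone}$ node, or up to the root. At a $\jointype$ node, $v$ is identified with the vertex carrying the same label in the sibling branch, and that vertex has its own upward path from its own introduction node. By induction on the term, the set of nodes containing $v$ is therefore a union of such upward paths that meet at $\jointype$ nodes, which is connected. This requires a careful reading of the formal definition in Appendix~\ref{appendix:GraphFromDecomposition}, in particular that identification at joins is the only way vertices merge.

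\textbf{From treewidth to instructive decompositions.} Let $\agraph$ have treewidth at most $\treewidthvalue$. Take a nice tree decomposition of width at most $\treewidthvalue$ with empty leaf bags and an empty root bag, and in which every node is a leaf, introduce-vertex, forget-vertex, or join node. Choose a labeling $\ell$ of vertices by labels in $[\treewidthvalue+1]$ that is injective within every bag. Such a labeling exists by a greedy top-down assignment: when a vertex is introduced, at most $\treewidthvalue$ other vertices are in the bag, so a free label is available, and a join copies the labels of its children, which agree because the children's bags equal the join's bag. Edges are handled as follows: for each edge $e$, pick the node just below the topmost node whose bag contains both endpoints, typically directly before one endpoint is forgotten, and insert an $\introedgetype{\ell(u)}{\ell(v)}$ node there. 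Each edge is inserted exactly once, which preserves multiplicities of multiedges. Translating node types gives a term over $\abstractalphabet{\treewidthvalue}$. The automaton $\treeautomaton_{\treewidthvalue}$ accepts this term because the active label set at each node equals $\ell$ of that node's bag. Finally, an induction shows that $\decompositiongraph{\abstractdecomposition}$ is isomorphic to $\agraph$.

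\textbf{Main obstacle.} The delicate step is this last isomorphism argument. At join nodes we must show that identifying equally labeled vertices glues exactly the copies of the same vertex of $\agraph$ and nothing else. This uses the connectivity property of the tree decomposition together with injectivity of $\ell$ on bags. The argument is an induction maintaining the invariant that the subterm at node $t$ builds a graph isomorphic to the subgraph induced by the vertices below $t$, restricted to the edges assigned below $t$, with $\topmapname$ agreeing with $\ell^{-1}$ on the bag of $t$.
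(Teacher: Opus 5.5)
The paper does not prove this lemma; it is imported from \cite{de2023width}, so there is no in-text argument to compare against. Your route through nice tree decompositions is the standard one and is sound in outline, but two steps of the converse direction are justified incorrectly as written.

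\textbf{The labeling $\ell$.} Your description (``when a vertex is introduced'', ``a join copies the labels of its children, which agree because the children's bags equal the join's bag'') is really a bottom-up assignment. Bottom-up, the two subtrees of a join label a shared vertex independently, so equality of the bags as vertex sets says nothing about the labels. For example, with smallest-free choices, $x$ gets label $2$ in a branch where another vertex holds label $1$ when $x$ is introduced, and label $1$ in the other branch. Do the assignment genuinely top-down instead:
\begin{itemize}
\item Since the root bag is empty, connectivity gives each vertex $v$ a unique topmost node $X_v$ containing it, and the parent of $X_v$ is the unique forget node of $v$.
\item All other vertices of the bag of $X_v$ lie in the parent's bag and are already labeled, so pick a label unused by these at most $\treewidthvalue$ vertices.
\item Injectivity on every bag then follows by induction from the root down. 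Below a forget node the bag gains only the freshly labeled vertex, below an introduce node it shrinks, and a join hands the same bag to both children.
\end{itemize}
Equivalently, $\ell$ is a proper $(\treewidthvalue+1)$-coloring of the chordal graph obtained by turning every bag into a clique.

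\textbf{Edge placement.} The $\introedgetype{\ell(u)}{\ell(v)}$ node must sit immediately \emph{above} the topmost node $X$ whose bag contains both endpoints. This node is unique, since the nodes containing both endpoints form a nonempty subtree. So the edge node goes between $X$ and its parent, which is a forget node of $u$ or $v$ because the root bag is empty; this is your ``directly before one endpoint is forgotten''. Placing it below $X$, as literally stated, fails when $X$ introduces $u$ or $v$: the label is not yet active, so $\treeautomaton_{\treewidthvalue}$ rejects. It is also ambiguous when $X$ is a join.

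\textbf{The first direction.} A vertex of $\decompositiongraph{\abstractdecomposition}$ is in general glued from several vertices created at different $\introvertextype{\cdot}$ nodes, so ``unique'' is inaccurate, although your next sentence does handle this. Read the bags in $\decompositiongraph{\abstractdecomposition}$ through the composed child embeddings. Connectivity then follows from Lemma~\ref{lemma:join-child-embeddings} and the definition of $\mu$: embeddings are injective, and a join identifies two child vertices only when both are interface vertices with the same label. Add the fact that a vertex which leaves the active bag never re-enters it. Then, for two nodes containing $v$, the paths up to their least common ancestor stay inside the occurrence set.
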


We also use the join-free specialization of this representation for pathwidth. Let $\allabstractdecompositionspathwidth{\treewidthvalue}$ be the subset of $\allabstractdecompositionstreewidth{\treewidthvalue}$ consisting of terms with no occurrence of the join symbol $\jointype$. These are the $\treewidthvalue$-instructive path decompositions. As in the standard nice-decomposition representation, a graph has pathwidth at most $\treewidthvalue$ if and only if it is represented by some term in $\allabstractdecompositionspathwidth{\treewidthvalue}$. All search algorithms below specialize to pathwidth by using this join-free transition system, i.e., by omitting join successors.

Dynamic programming algorithms operating on tree decompositions can be formalized using the notion of an {\em instructive dynamic programming core} (instructive DP-core), as defined below.

\begin{definition}[Instructive DP-Cores]
\label{definition:DynamicSignature}
An \emph{instructive dynamic programming core} is a sequence of $6$-tuples
$\dpcore = \{(\alphabetclass_{\treewidthvalue},\allwitnesses_{\treewidthvalue},\finalwitnessgenericcore_{\treewidthvalue},\transitionsdpcore_{\treewidthvalue},\cleaningfunctioncore_{\treewidthvalue}, \invariantCore_{\treewidthvalue})\}_{\treewidthvalue\in \N}$ where for each $\treewidthvalue\in \N$,
\begin{enumerate}
\item $\alphabetclass_{\treewidthvalue}$ is the $\treewidthvalue$-instructive alphabet;
\item $\allwitnesses_{\treewidthvalue}\subseteq \{0,1\}^*$ is a decidable subset of $\{0,1\}^*$;
\item $\finalwitnessgenericcore_{\treewidthvalue}: \allwitnesses_{\treewidthvalue}\rightarrow \{\falsevalue,\truevalue\}$ is a function;
\item $\transitionsdpcore_{\treewidthvalue}$ is a set containing
\begin{itemize}
\item A {\em finite} subset $\initialsetgenericcore\subseteq \allwitnesses_{\treewidthvalue}$.
\item A function $\introvertexgeneric{\vertexone}:\allwitnesses_{\treewidthvalue}\rightarrow \finitepowerset{\allwitnesses_{\treewidthvalue}}$ for each $\vertexone\in [\treewidthvalue+1]$.
\item A function $\forgetvertexgeneric{\vertexone}:\allwitnesses_{\treewidthvalue} \rightarrow \finitepowerset{\allwitnesses_{\treewidthvalue}}$ for each
		$\vertexone\in [\treewidthvalue+1]$.
\item A function $\introedgegeneric{\vertexone}{\vertextwo}:\allwitnesses_{\treewidthvalue}\rightarrow\finitepowerset{\allwitnesses_{\treewidthvalue}}$ for each $\{\vertexone,\vertextwo\}\in \powersetchoosek{[\treewidthvalue+1]}{2}$.
\item A function $\joingenericcore:\allwitnesses_{\treewidthvalue} \times \allwitnesses_{\treewidthvalue} \rightarrow \finitepowerset{\allwitnesses_{\treewidthvalue}}$.
\end{itemize}
\item $\cleaningfunctioncore_{\treewidthvalue}:\finitepowerset{\allwitnesses_{\treewidthvalue}} \rightarrow \finitepowerset{\allwitnesses_{\treewidthvalue}}$ is a function;
\item $\invariantCore_{\treewidthvalue}:\finitepowerset{\allwitnesses_{\treewidthvalue}} \rightarrow \{0,1\}^*$ is a function.
\end{enumerate}
\end{definition}

For each $\treewidthvalue\in \N$, we let
$\dpcore[\treewidthvalue]= (\alphabetclass_{\treewidthvalue},\allwitnesses_{\treewidthvalue},\finalwitnessgenericcore_{\treewidthvalue},
\transitionsdpcore_{\treewidthvalue},\cleaningfunctioncore_{\treewidthvalue}, \invariantCore_{\treewidthvalue})$
denote the
$\treewidthvalue$-th tuple of $\dpcore$.
We may write $\dpcore[\treewidthvalue].\alphabetclass$ to denote the set
$\alphabetclass_{\treewidthvalue}$, $\dpcore[\treewidthvalue].\allwitnesses$ to denote the set $\allwitnesses_{\treewidthvalue}$,
and so on.
Intuitively, for each $\treewidthvalue$, $\dpcore[\treewidthvalue]$ is a description of a dynamic programming algorithm that operates on $\treewidthvalue$-instructive tree decompositions. Such an algorithm processes a $\treewidthvalue$-instructive tree decomposition $\abstractdecomposition$ from the leaves
towards the root, and assigns a set of local witnesses to each node of $\abstractdecomposition$, depending on which instruction labels the node and on the sets assigned to the children of the node.
The transition functions above are defined on individual witnesses. Whenever we apply a transition to finite witness sets, we use the following union liftings:
\[
\introvertexgeneric{\vertexone}(S)\defeq \bigcup_{\awitness\in S}\introvertexgeneric{\vertexone}(\awitness),
\]
\[
\forgetvertexgeneric{\vertexone}(S)\defeq \bigcup_{\awitness\in S}\forgetvertexgeneric{\vertexone}(\awitness),
\]
\[
\introedgegeneric{\vertexone}{\vertextwo}(S)\defeq \bigcup_{\awitness\in S}\introedgegeneric{\vertexone}{\vertextwo}(\awitness),
\]
\[
\joingenericcore(S,S')\defeq \bigcup_{\awitness\in S,\;\awitness'\in S'}\joingenericcore(\awitness,\awitness').
\]
All occurrences of $\introvertexgeneric{\vertexone}$, $\forgetvertexgeneric{\vertexone}$, $\introedgegeneric{\vertexone}{\vertextwo}$, and $\joingenericcore$ with finite witness sets as arguments use these lifted operations.
Some dynamic programming algorithms use a function that removes redundant local witnesses from the set of local witnesses constructed at each node. In our framework, this is formalized by the
function $\cleaningfunctioncore_{\treewidthvalue}$. Unless a DP-core explicitly specifies a nontrivial cleaning function, we use the identity cleaning function.
Finally, the function $\invariantCore_{\treewidthvalue}$ is used whenever we want to use dynamic programming algorithms to compute graph invariants. In this work, we will not be concerned
with the computation of invariants, and therefore, we assume that
$\invariantCore_{\treewidthvalue}$ is the Boolean function that assigns $1$ to a set of local witnesses if and only if it contains some final witness.
This process is formalized by the notion of {\em dynamization}, which we define below.

\begin{definition}[Dynamization]
\label{definition:Dynamization}
Let $\treewidthvalue\in \N$ and $\dpcore$ be an instructive DP-core.
The $\treewidthvalue$-dynamization of $\dpcore$ is the function
$\dynamizationfunctionname{\dpcore,\treewidthvalue}:\allabstractdecompositionstreewidth{\treewidthvalue}\rightarrow \finitepowerset{\dpcore[\treewidthvalue].\allwitnesses}$
inductively
defined as follows for each $\abstractdecomposition\in \allabstractdecompositionstreewidth{\treewidthvalue}$.
\begin{enumerate}
\setlength\itemsep{0em}
\item If $\abstractdecomposition = \leaftype$, then
$\dynamizationfunction{\dpcore}{\treewidthvalue}(\abstractdecomposition)=\dpcore[\treewidthvalue].\initialsetgeneric.$
\item If $\abstractdecomposition = \introvertextype{\vertexone}(\sigmaabstractdecomposition)$, then
$
\dynamizationfunction{\dpcore}{\treewidthvalue}(\abstractdecomposition)= 
\dpcore[\treewidthvalue].\cleaningfunctioncore(
\introvertextype{\vertexone}(
\dynamizationfunction{\dpcore}{\treewidthvalue}(\sigmaabstractdecomposition)
)
).
$
%
\item If $\abstractdecomposition = \forgetvertextype{\vertexone}(\sigmaabstractdecomposition)$, then
$
 \dynamizationfunction{\dpcore}{\treewidthvalue}(\abstractdecomposition)=
  \dpcore[\treewidthvalue].\cleaningfunctioncore(
  \forgetvertextype{\vertexone}(
 \dynamizationfunction{\dpcore}{\treewidthvalue}(\sigmaabstractdecomposition)
)
).
$
\item If $\abstractdecomposition = \introedgetype{\vertexone}{\vertextwo}(\sigmaabstractdecomposition)$, then
$\dynamizationfunction{\dpcore}{\treewidthvalue}(\abstractdecomposition)=
\dpcore[\treewidthvalue].\cleaningfunctioncore(
\introedgetype{\vertexone}{\vertextwo}(
\dynamizationfunction{\dpcore}{\treewidthvalue}(\sigmaabstractdecomposition))).$
%
\item If $\abstractdecomposition = \jointype(\sigmaabstractdecomposition_1,\sigmaabstractdecomposition_2)$, then
$
\dynamizationfunction{\dpcore}{\treewidthvalue}(\abstractdecomposition)=
\dpcore[\treewidthvalue].\cleaningfunctioncore(\jointype(
\dynamizationfunction{\dpcore}{\treewidthvalue}(\sigmaabstractdecomposition_1), \dynamizationfunction{\dpcore}{\treewidthvalue}(\sigmaabstractdecomposition_2)
)
).$
\end{enumerate}
\end{definition}

We say that $\dpcore[\treewidthvalue]$ accepts $\abstractdecomposition$ if
$\dynamizationfunction{\dpcore}{\treewidthvalue}(\abstractdecomposition)$ has a final local witness,
i.e. a local witness $\awitness$ with the property that $\dpcore[\treewidthvalue].\finalwitnessgenericcore(\awitness)=1$.
We let  $\dpcoregraphproperty{\dpcore[\treewidthvalue]} = \isomorphismclosure{\{\decompositiongraph{\abstractdecomposition}\;:\;
\abstractdecomposition \mbox{ is accepted by }\dpcore[\treewidthvalue]\}}$ be the
isomorphism closure of the set of graphs associated with terms accepted by $\dpcore[\treewidthvalue]$.
We note that $\dpcoregraphproperty{\dpcore[\treewidthvalue]}$ is a graph property, and that all graphs in  $\dpcoregraphproperty{\dpcore[\treewidthvalue]}$ have treewidth at most $\treewidthvalue$.
 We let $\dpcoregraphproperty{\dpcore}=\bigcup_{\treewidthvalue\in \N} \dpcoregraphproperty{\dpcore[\treewidthvalue]}$ be the graph property defined by $\dpcore$.

\begin{definition}[Coherency]
\label{definition:CoherencyDPCore}
Let $\dpcore$ be an instructive DP-core.
We say that $\dpcore$ is coherent if for each $\treewidthvalue,\treewidthvalue'\in \N$,
each $\abstractdecomposition\in \allabstractdecompositionstreewidth{\treewidthvalue}$, and
each $\abstractdecomposition'\in \allabstractdecompositionstreewidth{\treewidthvalue'}$
if $\decompositiongraph{\abstractdecomposition} \simeq \decompositiongraph{\abstractdecomposition'}$
then $\dpcore[\treewidthvalue]$ accepts $\abstractdecomposition$ if and only if $\dpcore[\treewidthvalue']$ accepts $\abstractdecomposition'$.
\end{definition}

Let $\dpcore$ be a coherent instructive DP-core, and $\treewidthvalue\in \N$. A $(\treewidthvalue,\dpcore)$-state is a pair of the form $(\abag,\witnessset)$ where $\abag\subseteq [\treewidthvalue+1]$ and $\witnessset\subseteq \dpcore[\treewidthvalue].\allwitnesses$. Such a state is said to be {\em $(\treewidthvalue,\dpcore)$-inconsistent}, if $\witnessset$ has no final local witness. The initial $(\treewidthvalue,\dpcore)$-state is the pair $(\emptyset,\dpcore[\treewidthvalue].\initialsetgeneric)$.

\begin{definition}[$(\treewidthvalue,\dpcore)$-Refutation]\label{definition:DPRefutation}
Let $\dpcore$ be an instructive DP-core with identity cleaning. A $(\treewidthvalue,\dpcore)$-refutation is a sequence of pairs
$\dprefutation \equiv (\abag_0,\witnessset_0)(\abag_1,\witnessset_1)\dots (\abag_m,\witnessset_m)$ satisfying the following conditions.
\begin{enumerate}
\setlength\itemsep{0.5em}
\item \label{sequence-relabeling-condition-oneA}
$(\abag_0,\witnessset_0) = (\emptyset,\dpcore[\treewidthvalue].\initialsetgeneric)$.
\item \label{sequence-relabeling-condition-twoA}
$(\abag_m,\witnessset_m)$ is $(\treewidthvalue,\dpcore)$-inconsistent.
\item \label{sequence-relabeling-condition-threeA}
For each $i\in [m]$, there is some $j\in \{0,\dots,i-1\}$, such that $(\abag_i,\witnessset_i)$ is equal to
one of the following pairs.
\begin{enumerate}
\item \label{sequence-IntroVertexA}
$( \abag_j\cup \{\vertexone\}, \introvertextype{\vertexone}(\witnessset_j))$ with $\vertexone\notin \abag_j$.
\item \label{sequence-ForgetVertexA}
$(\abag_j\setminus \{\vertexone\}, \forgetvertextype{\vertexone}(\witnessset_j))$ with $\vertexone\in \abag_j$.
\item \label{sequence-relabelinn-IntroEdgeA}
$(\abag_j, \introedgegeneric{\vertexone}{\vertextwo}(\witnessset_j))$ with distinct $\vertexone,\vertextwo\in \abag_j$.
\item \label{sequence-relabeling-JoinA}
$(\abag_j, \jointype(\witnessset_j,\witnessset_l))$ with $l\in \{0,\dots,i-1\}$ and $\abag_l=\abag_j$.
\end{enumerate}
\end{enumerate}
\end{definition}

Intuitively, a $(\treewidthvalue,\dpcore)$-refutation is a certificate that some inconsistent $(\treewidthvalue,\dpcore)$-state is reachable from the initial $(\treewidthvalue,\dpcore)$-state. This transition-only definition is the version used throughout the search algorithms, where the relevant DP-cores have identity cleaning. For a DP-core with non-identity cleaning, each transition output in Definition~\ref{definition:DPRefutation} has to be replaced by its image under the corresponding cleaning function. It turns out that if $\dpcore$ is a coherent instructive DP-core with identity cleaning, then constructing a $(\treewidthvalue,\dpcore)$-refutation is equivalent to showing that $\dpcoregraphproperty{\dpcore}$ does not contain all graphs of treewidth at most $\treewidthvalue$.

\begin{theorem}[\cite{de2023width}]
\label{theorem:EquivalenceRefutation}
Let $\dpcore$ be a coherent instructive DP-core with identity cleaning. Then there is a $(\treewidthvalue,\dpcore)$-refutation if and only if some graph of treewidth at most $\treewidthvalue$ does not belong to the graph property $\dpcoregraphproperty{\dpcore}$.
\end{theorem}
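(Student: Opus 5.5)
The plan is to set up a correspondence between refutation sequences and $\treewidthvalue$-instructive tree decompositions. For each refutation $\dprefutation=(\abag_0,\witnessset_0)\dots(\abag_m,\witnessset_m)$, we attach to every index $i$ a term $\abstractdecomposition_i\in\allabstractdecompositionstreewidth{\treewidthvalue}$. The aim is the invariant $\topbag{\abstractdecomposition_i}=\abag_i$ and $\dynamizationfunction{\dpcore}{\treewidthvalue}(\abstractdecomposition_i)=\witnessset_i$. Identity cleaning is what makes this clean: each clause of Definition~\ref{definition:Dynamization} is then literally the lifted transition applied to the child's witness set, exactly as in the clauses of Definition~\ref{definition:DPRefutation}.

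For the forward direction (refutation $\Rightarrow$ non-member), we build the terms by induction on $i$. We set $\abstractdecomposition_0=\leaftype$. If $(\abag_i,\witnessset_i)$ arises from index $j$ by case (a), (b) or (c), we put $\abstractdecomposition_i=\introvertextype{\vertexone}(\abstractdecomposition_j)$, $\forgetvertextype{\vertexone}(\abstractdecomposition_j)$ or $\introedgetype{\vertexone}{\vertextwo}(\abstractdecomposition_j)$ respectively. If it arises by case (d), we put $\abstractdecomposition_i=\jointype(\abstractdecomposition_j,\abstractdecomposition_l)$, taking fresh copies as in the definition of terms. The side conditions ($\vertexone\notin\abag_j$, $\vertexone\in\abag_j$, distinct $\vertexone,\vertextwo\in\abag_j$, $\abag_l=\abag_j$) are exactly the transition rules of the automaton $\treeautomaton_\treewidthvalue$, so each $\abstractdecomposition_i$ is accepted with state $\abag_i$, and the invariant follows.

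At $i=m$, $\witnessset_m$ has no final witness, so $\dpcore[\treewidthvalue]$ rejects $\abstractdecomposition_m$. Let $\agraph=\decompositiongraph{\abstractdecomposition_m}$; it has treewidth at most $\treewidthvalue$ by Lemma~\ref{lemma:TreewidthIsTreelikeConfirmation}. It remains to show $\agraph\notin\dpcoregraphproperty{\dpcore}$. Suppose $\agraph\simeq\decompositiongraph{\abstractdecomposition'}$ for some $\abstractdecomposition'$ accepted by $\dpcore[\treewidthvalue']$. Then coherency (Definition~\ref{definition:CoherencyDPCore}) forces $\dpcore[\treewidthvalue]$ to accept $\abstractdecomposition_m$, a contradiction.

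For the converse, let $\agraph$ have treewidth at most $\treewidthvalue$ and $\agraph\notin\dpcoregraphproperty{\dpcore}$. By Lemma~\ref{lemma:TreewidthIsTreelikeConfirmation} there is $\abstractdecomposition$ with $\decompositiongraph{\abstractdecomposition}\simeq\agraph$. Since $\dpcoregraphproperty{\dpcore}$ is isomorphism-closed, $\dpcore[\treewidthvalue]$ does not accept $\abstractdecomposition$. We then list the subterms of $\abstractdecomposition$ in a postorder (children before parents), with a $\leaftype$ subterm placed first. For each subterm $\sigma$ we emit the pair $(\topbag{\sigma},\dynamizationfunction{\dpcore}{\treewidthvalue}(\sigma))$. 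Each entry is justified by earlier entries through the matching clause (a)--(d); the automaton's transition conditions supply the side conditions, and the join clause uses the fact that both children have the same top bag. If $\abstractdecomposition$ has other leaves, their pairs coincide with $(\emptyset,\dpcore[\treewidthvalue].\initialsetgeneric)$, so we reuse index $0$ rather than adding new entries. The last entry, corresponding to the root, is inconsistent, which yields a refutation.

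The main obstacle is purely bookkeeping: stating precisely that $\topbag{\cdot}$ and $\dynamizationfunction{\dpcore}{\treewidthvalue}$ commute with the term constructors, including fresh copies in joins. This requires the formal definition of $\decompositiongraph{\cdot}$ from the appendix, together with the observation that identity cleaning makes Definitions~\ref{definition:Dynamization} and~\ref{definition:DPRefutation} align step by step. Coherency enters only to pass from rejection of the specific term $\abstractdecomposition_m$ to non-membership of the graph $\agraph$ in $\dpcoregraphproperty{\dpcore}$.
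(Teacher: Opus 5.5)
Your proof is correct and follows the same route as the paper. The paper imports this theorem from \cite{de2023width} without proof, but its proof of Theorem~\ref{theorem:premise-pruned-correctness}, and the identity-relabeling case of Theorem~\ref{theorem:counterexample-existence-relabel-refutation}, argue exactly as you do: build $\abstractdecomposition_i$ by induction on the refutation index with $\topbag{\abstractdecomposition_i}=\abag_i$ and $\dynamizationfunction{\dpcore}{\treewidthvalue}(\abstractdecomposition_i)=\witnessset_i$, invoke coherency only at the end, and conversely read a refutation off the bottom-up dynamization of the subterms. Your explicit postorder, and your reuse of index $0$ for further $\leaftype$ subterms (a repeated initial pair need not be derivable from earlier entries by a single rule), spell out a step the paper leaves implicit.
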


\section{Example: An Instructive DP-Core for Chromatic Number at Most \texorpdfstring{$\numbercolors$}{r}}
\label{section:ChromaticNumberAtMost}

Let $S$ be a finite set, and $\numbercolors\in \Nplus$. An $\numbercolors$-partition of $S$ is a partition of $S$ with at most $\numbercolors$ cells.
Let $\agraph$ be a graph. We say that $\agraph$ is $\numbercolors$-colorable if there is an $\numbercolors$-partition of
$\vertexset{\agraph}$ such that for each edge $\anedge\in \edgeset{\agraph}$, the endpoints of $\anedge$ belong to distinct cells.
We use the following notation for this property and its DP-core:
\[
\begin{array}{rcl}
\text{property} &:& \chromaticNumberAtMostProperty{\numbercolors},\\
\text{DP-core} &:& \chromaticnumberAtMostCore{\numbercolors}.
\end{array}
\]
We start by defining the corresponding local witnesses for each $\treewidthvalue\in \N$.

\begin{definition}
\label{definition:ChromaticLocalWitness}
Let $\treewidthvalue\in \N$. A $\chromaticnumberAtMostCore{\numbercolors}[\treewidthvalue]$ local witness is any $\numbercolors$-partition of a subset of $[\treewidthvalue+1]$.
\end{definition}



\begin{definition}
\label{definition:ColorableCore}
Let $\numbercolors\in \Nplus$. We let $\chromaticnumberAtMostCore{\numbercolors}$ be the instructive DP-core $\dpcore$ specified below. For each $\treewidthvalue\in \N$,
we define $\chromaticnumberAtMostCore{\numbercolors}[\treewidthvalue] = \dpcore[\treewidthvalue]$. We let $\vertexone,\vertextwo\in [\treewidthvalue+1]$,
$\awitness$ and $\awitness'$ be  $\chromaticnumberAtMostCore{\numbercolors}[\treewidthvalue]$ local witnesses, and $\witnessset$ be a set of such local witnesses.

\begin{enumerate}
\item \label{definition:ColorableCore-all-witness} $\dpcore[\treewidthvalue].\allwitnesses = \{\awitness: \awitness \text{ is a } \chromaticnumberAtMostCore{\numbercolors}[\treewidthvalue]  \text{ local witness }\}$.
\item \label{definition:ColorableCore-initial} $\dpcore[\treewidthvalue].\initialsetgenericcore = \{\emptyset\}$.
\item \label{definition:ColorableCore-intro-vertex}
The transition $\dpcore[\treewidthvalue].\introvertextype{\vertexone}(\awitness)$ is defined by the following cases.
\begin{enumerate}
\item If $\vertexone\in \bigcup_{p\in\awitness}p$, then
\[
\dpcore[\treewidthvalue].\introvertextype{\vertexone}(\awitness)=\emptyset.
\]
\item If $\vertexone\notin \bigcup_{p\in\awitness}p$ and $|\awitness|=\numbercolors$, then
\[
\dpcore[\treewidthvalue].\introvertextype{\vertexone}(\awitness)
=
\{(\awitness\setminus \{p\}) \cup \{p\cup\{\vertexone\}\} : p \in \awitness \}.
\]
\item If $\vertexone\notin \bigcup_{p\in\awitness}p$ and $|\awitness|<\numbercolors$, then
\[
\begin{aligned}
\dpcore[\treewidthvalue].\introvertextype{\vertexone}(\awitness)
= {}& \{\awitness\cup\{\{\vertexone\}\}\}\\
&{}\cup
\{(\awitness\setminus \{p\}) \cup \{p\cup\{\vertexone\}\} : p \in \awitness \}.
\end{aligned}
\]
\end{enumerate}
\item \label{definition:ColorableCore-forget-vertex} $\dpcore[\treewidthvalue].\forgetvertextype{\vertexone}(\awitness) =
\begin{cases}
\{\{p\setminus\{\vertexone\}\;:\; p\in \awitness \} \backslash \{\emptyset\}\} & \quad \text{if }\vertexone\in \bigcup_{p\in\awitness}p,\\
\emptyset & \quad \text{otherwise}
\end{cases}$
\footnote{There is at most one cell containing $\vertexone$. If this cell is a singleton, the whole cell is deleted from $\awitness$. If no cell contains $\vertexone$, the implementation returns no witness.}.
\item \label{definition:ColorableCore-intro-edge}$\dpcore[\treewidthvalue].\introedgegeneric{\vertexone}{\vertextwo}(\awitness) =
\begin{cases}
\{\awitness\} & \quad \text{if } \vertexone \text{ and }\vertextwo \text{ are not in a same cell,}\\
\emptyset & \quad \mbox{Otherwise.}
\end{cases}$
\item \label{definition:ColorableCore-join} $\dpcore[\treewidthvalue].\joingenericcore(\awitness,\awitness')=
\begin{cases}
\{\awitness\} &\quad \text{if }\awitness = \awitness' \text{,} \\
\emptyset &\quad \text{Otherwise.}
\end{cases}$
\item $\dpcore[\treewidthvalue].\finalwitnessgenericcore(\awitness)=1$ for every $\awitness\in \dpcore[\treewidthvalue].\allwitnesses$.

\item $\dpcore[\treewidthvalue].\cleaningfunctioncore(\witnessset)=\witnessset$ for every
$\witnessset \subseteq \dpcore[\treewidthvalue].\allwitnesses$.
\item $\dpcore[\treewidthvalue].\invariantCore(\witnessset)=
\begin{cases}
1 & \quad \text{ if } \witnessset \text{ has a final witness,} \\
0 & \quad \text{ Otherwise.}
\end{cases}
$
\end{enumerate}
\end{definition}

Next, we define a predicate relating $\treewidthvalue$-instructive tree decompositions
with local witnesses.

\begin{definition}
\label{definition:ColorableCorePredicate}
We let $\predicateColorable{\numbercolors}{\treewidthvalue}$ be the predicate that is true on a pair $(\abstractdecomposition,\awitness) \in \allabstractdecompositionstreewidth{\treewidthvalue}\times \chromaticnumberAtMostCore{\numbercolors}[\treewidthvalue].\allwitnesses$
if and only if the following conditions are satisfied.
\begin{enumerate}
    \item $\bigcup\limits_{\cell \in \awitness} \cell= \domain(\topmap{\abstractdecomposition})$.
\item There is a proper $\numbercolors$-coloring $\alpha$ of $\decompositiongraph{\abstractdecomposition}$, viewed as an $\numbercolors$-partition of $\vertexset{\decompositiongraph{\abstractdecomposition}}$, such that for every $\vertexone,\vertextwo \in \topbag{\abstractdecomposition}$,
$\topmap{\abstractdecomposition}(\vertexone)$ and $\topmap{\abstractdecomposition}(\vertextwo)$ belong to the same cell in $\alpha$ if and only if $\vertexone$ and $\vertextwo$ belong to the same cell in $\awitness$.
%
\end{enumerate}
\end{definition}

\begin{proposition}
\label{proposition:PredicateColorable}
For each $\abstractdecomposition\in \allabstractdecompositionstreewidth{\treewidthvalue}$
and each local witness $\awitness$,
\[
\begin{aligned}
&\awitness \in \dynamizationfunction{\cnc}{\treewidthvalue}(\abstractdecomposition)\\
&\Longleftrightarrow\quad
\cnp(\abstractdecomposition,\awitness)=\truevalue.
\end{aligned}
\]
\end{proposition}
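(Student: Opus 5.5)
We prove the equivalence by structural induction on $\abstractdecomposition\in \allabstractdecompositionstreewidth{\treewidthvalue}$, following the five cases of Definition~\ref{definition:Dynamization}. Since cleaning is the identity, in each case $\dynamizationfunction{\cnc}{\treewidthvalue}(\abstractdecomposition)$ is exactly the lifted transition applied to the children's sets. Throughout we use the formal inductive definition of $\decompositiongraph{\abstractdecomposition}$, $\topbag{\abstractdecomposition}$ and $\topmap{\abstractdecomposition}$ from Appendix~\ref{appendix:GraphFromDecomposition}. We also use that $\domain(\topmap{\abstractdecomposition})=\topbag{\abstractdecomposition}$ coincides with the automaton state reached by $\abstractdecomposition$. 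A useful reformulation of the predicate is the following: $\cnp(\abstractdecomposition,\awitness)$ holds iff $\awitness$ is precisely the partition of $\topbag{\abstractdecomposition}$ induced, via $\topmap{\abstractdecomposition}$, by some proper coloring $\alpha$ of $\decompositiongraph{\abstractdecomposition}$ with at most $\numbercolors$ cells. Call this the \emph{trace} of $\alpha$. Since $\topmap{\abstractdecomposition}$ is injective, the trace is a well-defined $\numbercolors$-partition of $\topbag{\abstractdecomposition}$.

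\textbf{Base case and unary cases.} For $\leaftype$, the graph is empty and $\topbag{}=\emptyset$. The only witness satisfying the predicate is $\emptyset$, the trace of the empty coloring, which matches $\initialsetgenericcore=\{\emptyset\}$.

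For $\introedgetype{\vertexone}{\vertextwo}(\sigmaabstractdecomposition)$, the graph gains one edge between $\topmap{\sigmaabstractdecomposition}(\vertexone)$ and $\topmap{\sigmaabstractdecomposition}(\vertextwo)$, and the bag is unchanged. A proper coloring of the new graph is exactly a proper coloring of the old graph that separates these two vertices. By the induction hypothesis and the trace characterization, this corresponds to keeping exactly those witnesses in which $\vertexone,\vertextwo$ lie in different cells.

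For $\introvertextype{\vertexone}(\sigmaabstractdecomposition)$, the bag satisfies $\vertexone\notin\topbag{\sigmaabstractdecomposition}$, so by the induction hypothesis no witness of $\sigmaabstractdecomposition$ contains $\vertexone$ and case (a) never fires. The new vertex is isolated. Proper colorings of the new graph are therefore proper colorings of the old graph together with a color for the new vertex, which is either an existing color or a fresh one (at most $\numbercolors$ cells). We split on whether that color appears on the bag:
\begin{itemize}
\item If the color appears on the bag, it is reflected by adding $\vertexone$ to a cell $p$.
\item If it does not, the trace gains the new singleton $\{\vertexone\}$. This is legal only when $|\awitness|<\numbercolors$, since the color must differ from all bag colors. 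Conversely, such a color can always be chosen: either one unused by the whole graph, or one used only off the bag, which is harmless because the new vertex is isolated. When $|\awitness|=\numbercolors$, the singleton option is correctly absent.
\end{itemize}

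For $\forgetvertextype{\vertexone}(\sigmaabstractdecomposition)$, the graph is unchanged and the bag loses $\vertexone$. The trace of $\alpha$ on the new bag is obtained from the old trace by deleting $\vertexone$ and discarding an emptied cell, which matches the transition. Both directions of the equivalence use the same $\alpha$.

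\textbf{Join case.} For $\jointype(\sigmaabstractdecomposition_1,\sigmaabstractdecomposition_2)$, the graph is obtained by gluing the two graphs along the common bag $\abag$. The proof has two directions.
\begin{itemize}
\item ($\Leftarrow$) A proper coloring $\alpha$ of the glued graph restricts to proper colorings of both parts. Both restrictions have the same trace $\awitness$, so by the induction hypothesis $\awitness$ lies in both children's sets, and $\jointype(\awitness,\awitness)=\{\awitness\}$.
\item ($\Rightarrow$) Given $\awitness$ in both sets, the induction hypothesis yields colorings $\alpha_1,\alpha_2$ of the two parts with trace $\awitness$. Their color classes must be matched so that they agree on the glued vertices.
\end{itemize}

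\textbf{Main obstacle.} The matching step in the ($\Rightarrow$) direction of the join is the main difficulty. We will fix an injection from the cells of $\awitness$ to $[\numbercolors]$, and recolor $\alpha_1$ and $\alpha_2$ by permuting their colors so that each cell of $\awitness$ receives its assigned color in both. This is possible since each coloring restricted to the bag induces exactly the partition $\awitness$. The remaining color classes of each part, which do not meet the bag, are then mapped bijectively onto the leftover colors. The union of the two recolored colorings is well defined on the shared vertices. It is proper because every edge lies in one of the two parts, and it uses at most $\numbercolors$ colors. The injectivity of $\topmap{}$ and the precise gluing semantics are taken from the appendix definition.
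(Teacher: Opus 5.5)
Your proposal is correct and follows the paper's proof closely. Both use structural induction with the same case analysis: introduce-vertex (add to an existing cell, or create a fresh singleton only when fewer than $r$ cells are used), forget, introduce-edge, and join, where gluing creates no cross-edges. The only difference is cosmetic: you recolor both child colorings to a fixed cell-to-color assignment, whereas the paper extends the partial color correspondence to a permutation of $[r]$ and applies it to the second coloring only; also, the color classes avoiding the bag need only be mapped injectively, not bijectively, onto the leftover colors.
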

\begin{proof}
We prove the claim by induction on $\abstractdecomposition$. For a proper $\numbercolors$-coloring $\alpha$ of $\decompositiongraph{\abstractdecomposition}$, write $\alpha|_{\topbag{\abstractdecomposition}}$ for the partition of $\topbag{\abstractdecomposition}$ in which two active labels are in the same cell exactly when their current vertices receive the same color. Thus $\cnp(\abstractdecomposition,\awitness)$ states precisely that $\awitness=\alpha|_{\topbag{\abstractdecomposition}}$ for some proper $\numbercolors$-coloring $\alpha$.

The leaf case is immediate: the only dynamized witness is the empty partition, and the empty graph has the empty coloring. For an introduce-vertex node $\introvertextype{\vertexone}(\sigmaabstractdecomposition)$, the new vertex is isolated and $\vertexone\notin\topbag{\sigmaabstractdecomposition}$ by the legality of instructive decompositions, so the guard in the transition is never triggered on a reachable predecessor witness. Starting from a coloring for a child witness, the transition either inserts $\vertexone$ into an existing color cell or, when fewer than $\numbercolors$ cells are used, creates a new singleton color cell; both choices exactly describe all possible restrictions of proper colorings after adding the isolated vertex. Conversely, any proper coloring of the parent restricts to a proper coloring of the child, and deleting $\vertexone$ from its active color cell gives a predecessor witness generated by the introduce-vertex transition.

For a forget-vertex node $\forgetvertextype{\vertexone}(\sigmaabstractdecomposition)$, the graph is unchanged and the active interface only removes $\vertexone$. Hence the transition, which deletes $\vertexone$ from its cell and removes an empty cell if necessary, exactly matches restriction of the same proper coloring to the smaller active bag. The converse uses the same coloring on the unchanged graph and re-inserts $\vertexone$ into the unique color cell determined by its color, or into a singleton cell if no remaining active label has that color.

For an introduce-edge node $\introedgegeneric{\vertexone}{\vertextwo}(\sigmaabstractdecomposition)$, the graph is obtained by adding the edge between the two current active vertices. A child coloring remains proper after this edge is added if and only if $\vertexone$ and $\vertextwo$ are in distinct cells of the active partition, which is exactly the transition condition. The converse follows because every proper coloring of the parent is also a proper coloring of the child and must color the two endpoints differently.

Finally consider a join node $\joingenericcore(\sigmaabstractdecomposition_1,\sigmaabstractdecomposition_2)$. The join transition keeps a witness exactly when the two child witnesses are equal. If both children have the same active partition $\awitness$, choose proper colorings for the two children witnessing this partition. After naming the colors by $[\numbercolors]$, the color-name correspondence on the active cells is a partial bijection, and we extend it to a permutation of $[\numbercolors]$; applying this permutation to the second coloring makes the two colorings agree on the identified active vertices. Their union is then a proper coloring of the joined graph: the two colorings agree on the identified bag vertices, and Lemma~\ref{lemma:join-no-cross-edges} rules out edges between non-interface vertices coming from different children. This union coloring induces $\awitness$ on the root bag. Conversely, any proper coloring of the joined graph restricts to proper colorings of both child graphs, and both restrictions induce the same active partition at the common bag. By the induction hypothesis, the equal child witnesses are present, and the join transition produces $\awitness$.
\end{proof}

The next corollary records the accepted-language consequence of
Proposition~\ref{proposition:PredicateColorable}.

\begin{corollary}
\label{corollary:CorrectnessColorability}
Let $\abstractdecomposition$ be a $\treewidthvalue$-instructive tree
decomposition. Then $\decompositiongraph{\abstractdecomposition}$
is $\numbercolors$-colorable if and only if $\abstractdecomposition\in
\accepteddecompositions{\chromaticnumberAtMostCore{\numbercolors}[\treewidthvalue]}$.
\end{corollary}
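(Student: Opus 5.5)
The plan is to derive the corollary directly from Proposition~\ref{proposition:PredicateColorable}, unfolding what acceptance means for this particular DP-core. By definition, $\abstractdecomposition\in\accepteddecompositions{\chromaticnumberAtMostCore{\numbercolors}[\treewidthvalue]}$ means that $\dynamizationfunction{\cnc}{\treewidthvalue}(\abstractdecomposition)$ contains a final local witness. Since $\finalwitnessgenericcore$ is identically $1$ in Definition~\ref{definition:ColorableCore}, this is equivalent to the dynamized set being nonempty. So it suffices to show that $\dynamizationfunction{\cnc}{\treewidthvalue}(\abstractdecomposition)\neq\emptyset$ if and only if $\decompositiongraph{\abstractdecomposition}$ is $\numbercolors$-colorable.

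For the forward direction, take some $\awitness$ in the dynamized set. Proposition~\ref{proposition:PredicateColorable} gives $\cnp(\abstractdecomposition,\awitness)=\truevalue$. Condition~2 of Definition~\ref{definition:ColorableCorePredicate} then asserts the existence of a proper $\numbercolors$-coloring of $\decompositiongraph{\abstractdecomposition}$, so the graph is $\numbercolors$-colorable.

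For the converse, let $\alpha$ be a proper $\numbercolors$-coloring, viewed as an $\numbercolors$-partition of $\vertexset{\decompositiongraph{\abstractdecomposition}}$. Define $\awitness=\alpha|_{\topbag{\abstractdecomposition}}$: two labels $\vertexone,\vertextwo\in\topbag{\abstractdecomposition}$ lie in the same cell iff $\topmap{\abstractdecomposition}(\vertexone)$ and $\topmap{\abstractdecomposition}(\vertextwo)$ lie in the same cell of $\alpha$. We then check three things:
\begin{enumerate}
\item $\awitness$ is a legitimate local witness, i.e.\ an $\numbercolors$-partition of a subset of $[\treewidthvalue+1]$. Its cells are nonempty, and there are at most $\numbercolors$ of them because $\topmap{\abstractdecomposition}$ maps distinct cells into distinct cells of $\alpha$.
\item Its union is exactly $\topbag{\abstractdecomposition}=\domain(\topmap{\abstractdecomposition})$, which is condition~1 of the predicate.
\item Condition~2 of the predicate holds with $\alpha$ itself as the witnessing coloring.
\end{enumerate}
Proposition~\ref{proposition:PredicateColorable} then places $\awitness$ in the dynamized set, which is therefore nonempty.

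No step is a real obstacle, since all the inductive work is in the proposition. The only point needing care is the converse, where one must confirm that the restricted partition really lies in $\allwitnesses_{\treewidthvalue}$, including the degenerate case $\topbag{\abstractdecomposition}=\emptyset$, where $\awitness=\emptyset$.
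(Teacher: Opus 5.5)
Your proof is correct and follows the paper's argument. Every witness of this core is final, so acceptance reduces to the dynamization being nonempty, and Proposition~\ref{proposition:PredicateColorable} turns that into the existence of a proper $\numbercolors$-coloring. Your explicit check that $\alpha|_{\topbag{\abstractdecomposition}}$ lies in $\allwitnesses_{\treewidthvalue}$ and satisfies both predicate conditions spells out the converse direction, which the paper's one-line proof leaves implicit.
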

\begin{proof}
All local witnesses of $\chromaticnumberAtMostCore{\numbercolors}[\treewidthvalue]$ are final. Thus $\abstractdecomposition$ is accepted exactly when its dynamization contains some witness. By Proposition~\ref{proposition:PredicateColorable}, this is equivalent to the existence of a proper $\numbercolors$-coloring of $\decompositiongraph{\abstractdecomposition}$ whose restriction to the active bag induces that witness.
\end{proof}

Since a $\chromaticnumberAtMostCore{\numbercolors}[\treewidthvalue]$ local witness is an $\numbercolors$-partition of a subset of  $[\treewidthvalue+1]$, we can represent such a partition using $O(\treewidthvalue\cdot\log(\numbercolors+1))$ bits.

\begin{observation}
\label{observation:ComplexityMeasuresColorability}
$\chromaticnumberAtMostCore{\numbercolors}[\treewidthvalue]$ has bit-length
$O(\treewidthvalue\cdot\log(\numbercolors+1))$.
\end{observation}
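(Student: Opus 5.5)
The plan is to make ``bit-length'' concrete and then bound the encoding of a single local witness. Following the convention of \cite{de2023width}, the bit-length of $\dpcore[\treewidthvalue]$ is the maximum number of bits needed to encode a local witness in $\dpcore[\treewidthvalue].\allwitnesses$ under a fixed encoding. So it suffices to exhibit an encoding of every $\numbercolors$-partition of a subset of $[\treewidthvalue+1]$ (Definition~\ref{definition:ChromaticLocalWitness}) by a string of length $O(\treewidthvalue\cdot\log(\numbercolors+1))$, and to check that the encoding is injective and decodable, so that $\allwitnesses_{\treewidthvalue}$ can be viewed as a decidable subset of $\{0,1\}^*$ as Definition~\ref{definition:DynamicSignature} requires.

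The encoding I would use represents a witness $\awitness$ as a vector $(c_1,\dots,c_{\treewidthvalue+1})$ with $c_i\in\{0,1,\dots,\numbercolors\}$. Here $c_i=0$ means that $i\notin\bigcup_{p\in\awitness}p$. Otherwise $c_i$ is the index of the cell containing $i$, where cells are numbered $1,2,\dots$ in the order of their smallest elements. Fixing this order removes the ambiguity of which cell gets which number, so each partition has exactly one vector. Since $\awitness$ has at most $\numbercolors$ cells, every nonzero index lies in $[\numbercolors]$. Each coordinate then fits in $\lceil\log_2(\numbercolors+1)\rceil$ bits, and concatenating the coordinates gives $(\treewidthvalue+1)\lceil\log_2(\numbercolors+1)\rceil = O(\treewidthvalue\cdot\log(\numbercolors+1))$ bits, using $\treewidthvalue+1\le 2\treewidthvalue$ for $\treewidthvalue\ge1$. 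The case $\treewidthvalue=0$ is absorbed by the usual convention that the $O$-bound carries an additive constant.

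To decode, read the vector back: the cells are the nonempty fibres $\{i : c_i=j\}$ for $j\in[\numbercolors]$. A string is a legal code exactly when its length is right and the first occurrences of the nonzero indices appear in the order $1,2,\dots$ with no gaps. That condition is clearly decidable, which gives decidability of $\allwitnesses_{\treewidthvalue}$.

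There is no real obstacle here. The only point needing care is the canonical ordering of cells, which makes the encoding injective, and possibly the convention for $\treewidthvalue=0$ and $\numbercolors=1$, where the $\log(\numbercolors+1)$ factor keeps the bound positive.
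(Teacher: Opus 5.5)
Your proposal is correct and is essentially the paper's argument: the paper simply notes that an $r$-partition of a subset of $[k+1]$ can be represented with $O(k\cdot\log(r+1))$ bits, and your canonical cell-index vector $(c_1,\dots,c_{k+1})$ is that representation made explicit. One small caveat on your $k=0$ aside: if the bound is meant uniformly in $r$, an additive constant does not absorb the $\lceil\log_2(r+1)\rceil$ bits your code uses there, but numbering cells by smallest element forces $c_i\le\min(i,r)$, so $c_1\in\{0,1\}$ and one bit suffices in that degenerate case.
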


\section{Width-Based ATP with Symmetry Breaking}
\label{WBATPSymmetryBreaking}
\label{sec:canonization}
In this section, we introduce our main technical result. More specifically, we introduce a width-based automated deduction algorithm endowed with a symmetry-breaking procedure that allows us to remove redundant states during the search for a counterexample to a given conjecture. At the core of our technique lies the notion of a {\em witness action}.
Intuitively, functions that satisfy the axioms of a witness action can be used to define permuted versions of local witnesses generated by a DP-core. This allows us to define the notion of a canonical form of a state generated during the search process. Instead of keeping track of all inferred states, we only keep their canonical forms. This leads to a significant reduction of the search space because states with the same canonical form are identified. Our main theorem (Theorem~\ref{theorem:ProvabilityPreserving}) states that this process preserves provability.

\subsection{Relabelings and Witness Actions}
A DP-state is indexed by an interface bag $\abag \subseteq [\treewidthvalue+1]$ and a set of local witnesses $\witnessset$. The labels in $[\treewidthvalue+1]$ are abstract names: permuting them does not change the underlying partial graph represented by an instructive tree decomposition prefix. Consequently, a naive search explores many states that are identical up to relabeling. State canonization quotients the search space by these relabelings by mapping each state to a canonical representative, so that each equivalence class is explored once.

A \emph{$k$-relabeling function} is an injective function $\relabelingfunction: \abag \rightarrow [\treewidthvalue+1]$, where $\abag \subseteq [\treewidthvalue+1]$. We let $\relabelclass_\treewidthvalue$ denote the set of all such functions.
We use injections (rather than permutations) because the active bag size varies across the decomposition; thus relabelings behave like partial permutations whose domain is the current bag.
%
Given a $\treewidthvalue$-relabeling function
$\relabelingfunction\in \relabelclass_{\treewidthvalue}$, and a subset $\abag\subseteq [\treewidthvalue+1]$, we let $\relabelingfunction(\abag) = \{\relabelingfunction(\vertexone)\;:\; \vertexone \in \abag\}$ be the image of $\abag$ under $\relabelingfunction$. Next, we define the notion of a witness action.

For relabeling functions $f,g\in \relabelclass_{\treewidthvalue}$, we write $f\circ g$ for the partial function with domain
\[
\{u\in \domain(g) : g(u)\in \domain(f)\}
\]
given by $(f\circ g)(u)\defeq f(g(u))$. In particular, if $g(\domain(g))\subseteq \domain(f)$ then $f\circ g$ is again a $\treewidthvalue$-relabeling function with domain $\domain(g)$.

\begin{lemma}[Injective extension]
\label{lemma:injective-extension}
Let $\abag\subseteq [\treewidthvalue+1]$ and let $\vertexone\in \abag$.
If $h:\abag\setminus\{\vertexone\}\to [\treewidthvalue+1]$ is injective, then there exists an injective extension $\hat h:\abag\to[\treewidthvalue+1]$ of $h$, i.e., $\hat h|_{\abag\setminus\{\vertexone\}}=h$.
\end{lemma}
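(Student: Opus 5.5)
The plan is a direct counting argument. Since $h$ is injective on $\abag\setminus\{\vertexone\}$, its image $h(\abag\setminus\{\vertexone\})$ has exactly $|\abag|-1$ elements. Because $\abag\subseteq[\treewidthvalue+1]$, we have $|\abag|\le \treewidthvalue+1$. Hence the image has at most $\treewidthvalue$ elements, and the complement
\[
C \defeq [\treewidthvalue+1]\setminus h(\abag\setminus\{\vertexone\})
\]
has size at least $(\treewidthvalue+1)-(|\abag|-1)\ge 1$. In particular $C$ is nonempty.

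Next, fix any element $c\in C$; for definiteness we may take $c=\min C$, which makes the construction deterministic. This is useful later if the extension is to be part of a canonical relabeling. Define $\hat h:\abag\to[\treewidthvalue+1]$ by $\hat h(\vertexone)\defeq c$ and $\hat h(x)\defeq h(x)$ for $x\in\abag\setminus\{\vertexone\}$. By construction $\hat h|_{\abag\setminus\{\vertexone\}}=h$.

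It remains to check injectivity, which reduces to two cases. Two distinct points of $\abag\setminus\{\vertexone\}$ have distinct images because $h$ is injective. The point $\vertexone$ cannot collide with any other point, since $c\notin h(\abag\setminus\{\vertexone\})$.

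There is no genuine obstacle. The only step that uses the hypotheses essentially is the cardinality bound showing $C\neq\emptyset$. That bound relies on both $\vertexone\in\abag$ and $\abag\subseteq[\treewidthvalue+1]$: together they guarantee that the image of $h$ misses at least one label in $[\treewidthvalue+1]$.
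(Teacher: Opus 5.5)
Your proof is correct and is essentially the paper's argument: the injective image of $\abag\setminus\{\vertexone\}$ has $|\abag|-1\le \treewidthvalue$ elements, so some label of $[\treewidthvalue+1]$ is missed and can be assigned to $\vertexone$ without breaking injectivity. Choosing $c=\min C$ is a harmless refinement that the paper does not need, since any free label works.
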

\begin{proof}
Since $h$ is injective, its image $h(\abag\setminus\{\vertexone\})$ has size $|\abag|-1\le \treewidthvalue$.
Therefore there exists a label $x\in [\treewidthvalue+1]\setminus h(\abag\setminus\{\vertexone\})$.
Define $\hat h(\vertexone)\defeq x$ and $\hat h(\vertextwo)\defeq h(\vertextwo)$ for every $\vertextwo\in \abag\setminus\{\vertexone\}$.
Then $\hat h$ is injective and extends $h$.
\end{proof}

\begin{lemma}[Injective image containment]
\label{lemma:injective-image-containment}
Let $f\in \relabelclass_{\treewidthvalue}$ and let $X,Y\subseteq \domain(f)$. If $f(X)\subseteq f(Y)$, then $X\subseteq Y$.
\end{lemma}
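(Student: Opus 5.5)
The argument is a direct element-chasing one that uses only the injectivity of $\relabelingfunction$ on its domain. We fix an arbitrary element $x\in X$ and aim to show $x\in Y$.

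Since $X\subseteq \domain(\relabelingfunction)$, the value $\relabelingfunction(x)$ is defined, so $\relabelingfunction(x)\in \relabelingfunction(X)$. The hypothesis $\relabelingfunction(X)\subseteq \relabelingfunction(Y)$ then gives $\relabelingfunction(x)\in \relabelingfunction(Y)$. By the definition of the image, $\relabelingfunction(Y)=\{\relabelingfunction(y)\;:\;y\in Y\}$, so there is some $y\in Y$ with $\relabelingfunction(y)=\relabelingfunction(x)$.

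Both $x$ and $y$ lie in $\domain(\relabelingfunction)$, because $X,Y\subseteq\domain(\relabelingfunction)$. A $\treewidthvalue$-relabeling function is injective by definition, so $\relabelingfunction(x)=\relabelingfunction(y)$ forces $x=y$. Hence $x\in Y$, and since $x\in X$ was arbitrary, $X\subseteq Y$.

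There is no real obstacle. The one point to state explicitly is that $y$ is taken inside $\domain(\relabelingfunction)$, so injectivity really does apply; this is exactly what the hypothesis $Y\subseteq \domain(\relabelingfunction)$ guarantees.
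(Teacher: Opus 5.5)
Your proof is correct and matches the paper's argument: pick $x\in X$, use $f(X)\subseteq f(Y)$ to obtain $y\in Y$ with $f(y)=f(x)$, and conclude $x=y\in Y$ by injectivity of $f$. Your explicit remark that $y$ lies in $\domain(f)$, so that injectivity applies, is a harmless extra detail the paper leaves implicit.
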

\begin{proof}
Let $x\in X$. Then $f(x)\in f(Y)$, so there exists $y\in Y$ such that $f(y)=f(x)$. Since $f$ is injective, we have $y=x$, and therefore $x\in Y$. Hence $X\subseteq Y$.
\end{proof}

Formally, the set $\dpcore[\treewidthvalue].\allwitnesses$ is a decidable subset of $\{0,1\}^\ast$, i.e., witnesses are binary strings. For readability, we freely identify each witness with the mathematical object it encodes (e.g., a partition, a map, a set), and we view the DP-core operations and witness actions as operating on these objects via a fixed encoding.

Each witness $\awitness\in \dpcore[\treewidthvalue].\allwitnesses$ encodes a finite object whose representation mentions a finite set of interface labels in $[\treewidthvalue+1]$.
We write $\mathrm{Lbl}_{\treewidthvalue}(\awitness)\subseteq [\treewidthvalue+1]$ for the set of labels occurring in $\awitness$.
For a finite witness set $\witnessset$ we define
\[
\mathrm{Lbl}_{\treewidthvalue}(\witnessset)\defeq \bigcup_{\awitness\in \witnessset}\mathrm{Lbl}_{\treewidthvalue}(\awitness).
\]
A $(\treewidthvalue,\dpcore)$-state $(\abag,\witnessset)$ is \emph{well-formed} if $\mathrm{Lbl}_{\treewidthvalue}(\witnessset)\subseteq \abag$.
We will only apply relabelings to well-formed states.
To ensure that well-formedness is preserved by reachability and by dynamization, we restrict attention to interface-respecting DP-cores (Definition~\ref{definition:interface-respecting-dpcore}); under this mild discipline, all refutations and all dynamization states are well-formed (Lemma~\ref{lemma:well-formedness-preserved}).

\begin{definition}[Interface-respecting DP-cores]
\label{definition:interface-respecting-dpcore}
Let $\dpcore$ be a DP-core and let $\treewidthvalue\in \N$. We say that $\dpcore[\treewidthvalue]$ is \emph{interface-respecting} if the following hold for every $\abag\subseteq [\treewidthvalue+1]$ and all $\witnessset,\witnessset_1,\witnessset_2\subseteq \dpcore[\treewidthvalue].\allwitnesses$.
\begin{enumerate}
\setlength\itemsep{0.25em}
\item \label{interface-respecting-initial}
$\mathrm{Lbl}_{\treewidthvalue}(\dpcore[\treewidthvalue].\initialsetgeneric)\subseteq \emptyset$.
\item \label{interface-respecting-introvertex}
If $\mathrm{Lbl}_{\treewidthvalue}(\witnessset)\subseteq \abag$ and $\vertexone\notin \abag$, then
$\mathrm{Lbl}_{\treewidthvalue}(\introvertextype{\vertexone}(\witnessset))\subseteq \abag\cup\{\vertexone\}$.
\item \label{interface-respecting-forgetvertex}
If $\mathrm{Lbl}_{\treewidthvalue}(\witnessset)\subseteq \abag$ and $\vertexone\in \abag$, then
$\mathrm{Lbl}_{\treewidthvalue}(\forgetvertextype{\vertexone}(\witnessset))\subseteq \abag\setminus\{\vertexone\}$.
\item \label{interface-respecting-introedge}
If $\mathrm{Lbl}_{\treewidthvalue}(\witnessset)\subseteq \abag$ and $\vertexone,\vertextwo\in \abag$, then
$\mathrm{Lbl}_{\treewidthvalue}(\introedgegeneric{\vertexone}{\vertextwo}(\witnessset))\subseteq \abag$.
\item \label{interface-respecting-join}
If $\mathrm{Lbl}_{\treewidthvalue}(\witnessset_1)\subseteq \abag$ and $\mathrm{Lbl}_{\treewidthvalue}(\witnessset_2)\subseteq \abag$, then
$\mathrm{Lbl}_{\treewidthvalue}(\jointype(\witnessset_1,\witnessset_2))\subseteq \abag$.
\item \label{interface-respecting-clean}
If $\mathrm{Lbl}_{\treewidthvalue}(\witnessset)\subseteq \abag$, then
$\mathrm{Lbl}_{\treewidthvalue}(\dpcore[\treewidthvalue].\cleaningfunctioncore(\witnessset))\subseteq \abag$.
\end{enumerate}
We say that $\dpcore$ is interface-respecting if $\dpcore[\treewidthvalue]$ is interface-respecting for every $\treewidthvalue\in \N$.
\end{definition}

\begin{lemma}[Well-formedness is preserved]
\label{lemma:well-formedness-preserved}
Let $\dpcore$ be an interface-respecting DP-core and let $\treewidthvalue\in \N$.
\begin{enumerate}
\setlength\itemsep{0.25em}
\item Every $(\treewidthvalue,\dpcore)$-refutation (Definition~\ref{definition:DPRefutation}) is well-formed, i.e., each state $(\abag_i,\witnessset_i)$ satisfies $\mathrm{Lbl}_{\treewidthvalue}(\witnessset_i)\subseteq \abag_i$.
\item For every $\treewidthvalue$-instructive tree decomposition $\abstractdecomposition$ and every subterm $\sigmaabstractdecomposition$ of $\abstractdecomposition$, the state
\[
\bigl(\topbag{\sigmaabstractdecomposition},\,\dynamizationfunction{\dpcore}{\treewidthvalue}(\sigmaabstractdecomposition)\bigr)
\]
is well-formed.
\end{enumerate}
\end{lemma}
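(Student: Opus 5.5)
Both parts go by induction, on the index $i$ of the refutation for the first and on the structure of $\sigmaabstractdecomposition$ for the second. The inductive step in each case is a direct application of the matching clause of Definition~\ref{definition:interface-respecting-dpcore}.

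For part~1, let $\dprefutation=(\abag_0,\witnessset_0)\dots(\abag_m,\witnessset_m)$ be a refutation. We show by strong induction on $i$ that $\mathrm{Lbl}_{\treewidthvalue}(\witnessset_i)\subseteq\abag_i$.

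The base case $i=0$ holds because $\witnessset_0=\dpcore[\treewidthvalue].\initialsetgeneric$, clause~\ref{interface-respecting-initial} gives $\mathrm{Lbl}_{\treewidthvalue}(\witnessset_0)\subseteq\emptyset$, and $\abag_0=\emptyset$.

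For $i\ge 1$, pick the index $j<i$ (and possibly $l<i$) given by condition~\ref{sequence-relabeling-condition-threeA} of Definition~\ref{definition:DPRefutation}. By the induction hypothesis, $\mathrm{Lbl}_{\treewidthvalue}(\witnessset_j)\subseteq\abag_j$, and also $\mathrm{Lbl}_{\treewidthvalue}(\witnessset_l)\subseteq\abag_l$ when $l$ is used. We then split on the four transition types.
\begin{itemize}
\item In the introduce-vertex case, the side condition $\vertexone\notin\abag_j$ is exactly the hypothesis of clause~\ref{interface-respecting-introvertex}. That clause yields $\mathrm{Lbl}_{\treewidthvalue}(\witnessset_i)\subseteq\abag_j\cup\{\vertexone\}=\abag_i$.
\item The forget-vertex case follows from clause~\ref{interface-respecting-forgetvertex}, using $\vertexone\in\abag_j$.
\item The introduce-edge case follows from clause~\ref{interface-respecting-introedge}, using $\vertexone,\vertextwo\in\abag_j$.
\item In the join case, $\abag_l=\abag_j$ lets us apply clause~\ref{interface-respecting-join} with the common bag $\abag=\abag_j=\abag_i$.
\end{itemize}
Since refutations are defined with identity cleaning, clause~\ref{interface-respecting-clean} is not needed here. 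If one adopts the cleaned variant mentioned after Definition~\ref{definition:DPRefutation}, clause~\ref{interface-respecting-clean} is applied once more on top of each case.

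For part~2, we induct on the structure of the subterm $\sigmaabstractdecomposition$ and follow Definition~\ref{definition:Dynamization} case by case. First we need the values of $\topbag{\cdot}$ along the construction. Every subterm of a legal decomposition is itself accepted by $\treeautomaton_{\treewidthvalue}$, and $\topbag{\sigmaabstractdecomposition}$ is precisely the automaton state reached at $\sigmaabstractdecomposition$, as the inductive definition in Appendix~\ref{appendix:GraphFromDecomposition} shows. Hence:
\begin{itemize}
\item $\topbag{\leaftype}=\emptyset$;
\item $\topbag{\introvertextype{\vertexone}(\sigma)}=\topbag{\sigma}\cup\{\vertexone\}$ with $\vertexone\notin\topbag{\sigma}$;
\item $\topbag{\forgetvertextype{\vertexone}(\sigma)}=\topbag{\sigma}\setminus\{\vertexone\}$ with $\vertexone\in\topbag{\sigma}$;
\item for an introduce-edge node the bag is unchanged and contains both $\vertexone$ and $\vertextwo$;
\item for a join node both children have the same bag, which is also the bag of the join.
\end{itemize}
With these facts in hand, each case is one application of the matching clause from \ref{interface-respecting-introvertex}--\ref{interface-respecting-join}, using the induction hypothesis on the child or children. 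Because Definition~\ref{definition:Dynamization} wraps every transition in $\cleaningfunctioncore$, each case then applies clause~\ref{interface-respecting-clean} with the same bag. The leaf case is clause~\ref{interface-respecting-initial}.

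No step is genuinely hard. The only point requiring care is the bookkeeping in part~2: one must justify that the side conditions of the interface-respecting clauses, namely membership or non-membership of $\vertexone$ and equality of the children's bags at a join, hold for $\topbag{\cdot}$. This follows from legality, that is, from $\sigmaabstractdecomposition\in\lang(\treeautomaton_{\treewidthvalue})$ together with the agreement between automaton states and top bags. Everything else is mechanical.
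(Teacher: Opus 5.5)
Your proof is correct and matches the paper's argument: part 1 is an induction along the refutation using the initial clause and the four transition clauses of the interface-respecting definition, and part 2 is a structural induction on the subterm using the same clauses followed by the cleaning clause. Your additional check that top bags agree with the legality side conditions (fresh or active label, equal bags at a join) spells out a step the paper leaves implicit.
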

\begin{proof}
For (1), let $\dprefutation=(\abag_0,\witnessset_0)\dots(\abag_m,\witnessset_m)$ be a $(\treewidthvalue,\dpcore)$-refutation. By Condition~\ref{interface-respecting-initial}, the initial state $(\abag_0,\witnessset_0)=(\emptyset,\dpcore[\treewidthvalue].\initialsetgeneric)$ is well-formed. For each step, if $(\abag_i,\witnessset_i)$ is obtained from earlier state(s) by one of the refutation rules (Definition~\ref{definition:DPRefutation}), then the corresponding interface-respecting condition (Definition~\ref{definition:interface-respecting-dpcore}.\ref{interface-respecting-introvertex}--\ref{interface-respecting-join}) shows that $\mathrm{Lbl}_{\treewidthvalue}(\witnessset_i)\subseteq \abag_i$. Hence all states are well-formed.

	For (2), follow the bottom-up computation of $\dynamizationfunction{\dpcore}{\treewidthvalue}$ along $\abstractdecomposition$ (Definition~\ref{definition:Dynamization}) and use induction on the term structure. The raw transition output has support contained in the new top bag by Conditions~\ref{interface-respecting-introvertex}--\ref{interface-respecting-join}; applying $\dpcore[\treewidthvalue].\cleaningfunctioncore$ preserves this inclusion by Condition~\ref{interface-respecting-clean}. The base case follows from Condition~\ref{interface-respecting-initial} and $\topbag{\leaftype}=\emptyset$.
\end{proof}



\begin{definition}[Witness Action]\label{definition:witnessaction}
Let $\dpcore$ be a DP-core and $k\in \N$. A witness action for $\dpcore[\treewidthvalue]$ is a partial function
$\action_{\dpcore}^{\treewidthvalue}$ from $\relabelclass_\treewidthvalue \times \dpcore[\treewidthvalue].\allwitnesses$ to $\dpcore[\treewidthvalue].\allwitnesses$, defined exactly on the supported pairs
\[
(\relabelingfunction,\awitness)
\quad\text{with}\quad
\mathrm{Lbl}_{\treewidthvalue}(\awitness)\subseteq \domain(\relabelingfunction).
\]
We say that $\action_{\dpcore}^{\treewidthvalue}$ is an action for $\dpcore[\treewidthvalue]$ if
the following conditions are satisfied for each $\relabelingfunction,\relabelingfunction'\in \relabelclass_{\treewidthvalue}$, and
each $\awitness \in \dpcore[\treewidthvalue].\allwitnesses$, whenever the displayed expressions are supported and hence defined.
Since $\relabelingfunction$ is injective, $\relabelingfunction^{-1}$ denotes the inverse map on its image $\relabelingfunction(\domain(\relabelingfunction))$ (so $\relabelingfunction^{-1}\in \relabelclass_\treewidthvalue$ with domain $\relabelingfunction(\domain(\relabelingfunction))$).
\begin{enumerate}
	    \item \label{condition:relabeldpcore-condition-one}
	    Finality invariance:
	    \[
	    \dpcore[\treewidthvalue].\finalwitnessgenericcore(\awitness)=1
	    \quad \text{if and only if}\quad
	    \dpcore[\treewidthvalue].\finalwitnessgenericcore(\action_{\dpcore}^{\treewidthvalue}(\relabelingfunction,\awitness))=1.
	    \]
	    \item \label{condition:relabeldpcore-condition-support-transport}
	    Label support transport:
	    \[
	    \mathrm{Lbl}_{\treewidthvalue}(\action_{\dpcore}^{\treewidthvalue}(\relabelingfunction,\awitness))
	    = \relabelingfunction(\mathrm{Lbl}_{\treewidthvalue}(\awitness)).
	    \]
	    \item Inverse relabeling: \label{condition:relabeldpcore-condition-two} \[\action_{\dpcore}^{\treewidthvalue}(\relabelingfunction^{-1},\action_{\dpcore}^{\treewidthvalue}(\relabelingfunction,\awitness)) = \awitness.\]
	    \item Composition: \label{condition:relabeldpcore-condition-three} \[\action_{\dpcore}^{\treewidthvalue}(\relabelingfunction\circ \relabelingfunction',\awitness) =
	    \action_{\dpcore}^{\treewidthvalue}(\relabelingfunction,\action_{\dpcore}^{\treewidthvalue}(\relabelingfunction',\awitness)).\]
	    \item \label{condition:relabeldpcore-condition-extension-invariance}
	    Extension invariance:
	    if $\relabelingfunction'\in \relabelclass_{\treewidthvalue}$ satisfies $\domain(\relabelingfunction)\subseteq \domain(\relabelingfunction')$ and $\relabelingfunction'|_{\domain(\relabelingfunction)}=\relabelingfunction$, then
		    \[
		    \action_{\dpcore}^{\treewidthvalue}(\relabelingfunction',\awitness)=\action_{\dpcore}^{\treewidthvalue}(\relabelingfunction,\awitness).
		    \]

We extend $\action_{\dpcore}^{\treewidthvalue}$ to witness sets pointwise whenever the set is supported by the relabeling: for each $\relabelingfunction\in \relabelclass_{\treewidthvalue}$ and each $\witnessset\subseteq \dpcore[\treewidthvalue].\allwitnesses$ with $\mathrm{Lbl}_{\treewidthvalue}(\witnessset)\subseteq\domain(\relabelingfunction)$, we let
\[
	\action_{\dpcore}^{\treewidthvalue}(\relabelingfunction,\witnessset)\defeq
	\{\action_{\dpcore}^{\treewidthvalue}(\relabelingfunction,\awitness)\;:\; \awitness\in \witnessset\}.
	\]
	We use the same notation for the witness-level action and its pointwise extension, as the intended meaning is clear from the type of the second argument.
	Accordingly, the transition equivariance axioms below are equalities of finite witness sets whenever a DP-transition produces a finite set; the action on such an output is the pointwise extension just defined.

			    \item DP-core Consistency Axioms:
				    \begin{enumerate}
				    \item \label{condition:relabeldpcore-condition-four}
				    For each $\vertexone\in \domain(\relabelingfunction)$,
				    \[
				    \begin{aligned}
				    \action_{\dpcore}^{\treewidthvalue}(\relabelingfunction,\introvertextype{\vertexone}(\awitness))
				    &=
				    \introvertextype{\relabelingfunction(\vertexone)}\bigl(\action_{\dpcore}^{\treewidthvalue}(\relabelingfunction,\awitness)\bigr).
				    \end{aligned}
				    \]

		    \item \label{condition:relabeldpcore-condition-five}
				    For each $\vertexone\in \domain(\relabelingfunction)$,
				    \[
				    \begin{aligned}
				    \action_{\dpcore}^{\treewidthvalue}(\relabelingfunction,\forgetvertextype{\vertexone}(\awitness))
				    &=
				    \forgetvertextype{\relabelingfunction(\vertexone)}\bigl(\action_{\dpcore}^{\treewidthvalue}(\relabelingfunction,\awitness)\bigr).
				    \end{aligned}
				    \]
		    \item \label{condition:relabeldpcore-condition-six}
				    For each two distinct $\vertexone,\vertextwo\in \domain(\relabelingfunction)$,
				    \[
				    \begin{aligned}
				    \action_{\dpcore}^{\treewidthvalue}(\relabelingfunction,\introedgegeneric{\vertexone}{\vertextwo}(\awitness))
				    &=
				    \introedgegeneric{\relabelingfunction(\vertexone)}{\relabelingfunction(\vertextwo)}\\
				    &\qquad\bigl(\action_{\dpcore}^{\treewidthvalue}(\relabelingfunction,\awitness)\bigr).
				    \end{aligned}
				    \]
	    \item \label{condition:relabeldpcore-condition-seven}
				    For each $\awitness'\in  \dpcore[\treewidthvalue].\allwitnesses$ with $\mathrm{Lbl}_{\treewidthvalue}(\awitness')\subseteq \domain(\relabelingfunction)$,
				    \[
				    \begin{aligned}
				    \action_{\dpcore}^{\treewidthvalue}(\relabelingfunction,\jointype(\awitness,\awitness'))
				    &=
				    \jointype\Bigl(\action_{\dpcore}^{\treewidthvalue}(\relabelingfunction,\awitness),\\
				    &\qquad\action_{\dpcore}^{\treewidthvalue}(\relabelingfunction,\awitness')\Bigr).
				    \end{aligned}
				    \]
			    \end{enumerate}
\end{enumerate}
\end{definition}

\begin{lemma}
\label{lemma:witness-action-finality-sets}
Let $\dpcore$ be a DP-core, let $\action_{\dpcore}^{\treewidthvalue}$ be a witness action for $\dpcore[\treewidthvalue]$, and let $\relabelingfunction\in \relabelclass_{\treewidthvalue}$. For every finite witness set $\witnessset\in \finitepowerset{\dpcore[\treewidthvalue].\allwitnesses}$ with $\mathrm{Lbl}_{\treewidthvalue}(\witnessset)\subseteq \domain(\relabelingfunction)$, the set $\witnessset$ contains a final witness if and only if $\action_{\dpcore}^{\treewidthvalue}(\relabelingfunction,\witnessset)$ contains a final witness.
\end{lemma}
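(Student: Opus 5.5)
The plan is to unfold the pointwise extension of the action and apply Condition~\ref{condition:relabeldpcore-condition-one} (finality invariance) witness by witness, proving each direction separately.

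First we check that everything is defined. Since $\mathrm{Lbl}_{\treewidthvalue}(\witnessset)\subseteq \domain(\relabelingfunction)$, every $\awitness\in\witnessset$ satisfies $\mathrm{Lbl}_{\treewidthvalue}(\awitness)\subseteq\mathrm{Lbl}_{\treewidthvalue}(\witnessset)\subseteq\domain(\relabelingfunction)$. Hence each pair $(\relabelingfunction,\awitness)$ is supported, and the set $\action_{\dpcore}^{\treewidthvalue}(\relabelingfunction,\witnessset)=\{\action_{\dpcore}^{\treewidthvalue}(\relabelingfunction,\awitness):\awitness\in\witnessset\}$ is well defined.

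For the forward direction, suppose $\awitness\in\witnessset$ is final, i.e.\ $\dpcore[\treewidthvalue].\finalwitnessgenericcore(\awitness)=1$. By finality invariance, $\dpcore[\treewidthvalue].\finalwitnessgenericcore(\action_{\dpcore}^{\treewidthvalue}(\relabelingfunction,\awitness))=1$. This witness lies in $\action_{\dpcore}^{\treewidthvalue}(\relabelingfunction,\witnessset)$ by definition of the pointwise extension, so that set contains a final witness.

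For the converse, suppose $\awitness^\ast\in\action_{\dpcore}^{\treewidthvalue}(\relabelingfunction,\witnessset)$ is final. By definition of the pointwise extension, $\awitness^\ast=\action_{\dpcore}^{\treewidthvalue}(\relabelingfunction,\awitness)$ for some $\awitness\in\witnessset$. The right-to-left implication of finality invariance, applied to this supported pair, gives $\dpcore[\treewidthvalue].\finalwitnessgenericcore(\awitness)=1$. Thus $\witnessset$ contains a final witness.

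No step is a real obstacle; the only point needing care is the support check, which ensures every expression above is defined. Note also that the converse uses the biconditional in finality invariance directly, so we never need the inverse-relabeling axiom.
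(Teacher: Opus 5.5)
Your proposal is correct and follows the paper's proof. Like the paper, you unfold the pointwise extension, so that every element of $\action_{\dpcore}^{\treewidthvalue}(\relabelingfunction,\witnessset)$ is $\action_{\dpcore}^{\treewidthvalue}(\relabelingfunction,\awitness)$ for some $\awitness\in\witnessset$, and then apply the biconditional finality-invariance axiom witness by witness; your explicit support check is a harmless addition the paper leaves implicit.
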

\begin{proof}
By definition of the pointwise extension, a witness $\awitness$ belongs to $\action_{\dpcore}^{\treewidthvalue}(\relabelingfunction,\witnessset)$ if and only if $\awitness=\action_{\dpcore}^{\treewidthvalue}(\relabelingfunction,\awitness')$ for some $\awitness'\in \witnessset$.
Finality invariance at the witness level (Condition~\ref{condition:relabeldpcore-condition-one}) implies that $\awitness'$ is final if and only if $\action_{\dpcore}^{\treewidthvalue}(\relabelingfunction,\awitness')$ is final. The claim follows.
\end{proof}

\begin{lemma}[Identity relabeling]
\label{lemma:witness-action-identity}
Let $\dpcore$ be a DP-core, let $\action_{\dpcore}^{\treewidthvalue}$ be a witness action for $\dpcore[\treewidthvalue]$, and let $\abag\subseteq [\treewidthvalue+1]$. Let $\mathrm{id}_{\abag}:\abag\to [\treewidthvalue+1]$ be the identity injection, i.e., $\mathrm{id}_{\abag}(u)=u$ for all $u\in \abag$. If $\awitness\in \dpcore[\treewidthvalue].\allwitnesses$ satisfies $\mathrm{Lbl}_{\treewidthvalue}(\awitness)\subseteq \abag$, then $\action_{\dpcore}^{\treewidthvalue}(\mathrm{id}_{\abag},\awitness)=\awitness$. Consequently, for each $\witnessset\subseteq \dpcore[\treewidthvalue].\allwitnesses$ with $\mathrm{Lbl}_{\treewidthvalue}(\witnessset)\subseteq \abag$, we have $\action_{\dpcore}^{\treewidthvalue}(\mathrm{id}_{\abag},\witnessset)=\witnessset$.
\end{lemma}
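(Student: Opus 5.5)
We need $\action(\mathrm{id}_{\abag},\awitness)=\awitness$ for every $\awitness$ with $\mathrm{Lbl}_{\treewidthvalue}(\awitness)\subseteq\abag$. The axioms of Definition~\ref{definition:witnessaction} contain no identity axiom, so we derive it from the inverse and composition axioms. The only subtlety is checking that every expression we write is supported, i.e., that the relevant labels lie in the domain of the relabeling. Since $\mathrm{Lbl}_{\treewidthvalue}(\awitness)\subseteq\abag=\domain(\mathrm{id}_{\abag})$, the pair $(\mathrm{id}_{\abag},\awitness)$ is supported. Set $\awitness'\defeq\action_{\dpcore}^{\treewidthvalue}(\mathrm{id}_{\abag},\awitness)$. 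By label support transport (Condition~\ref{condition:relabeldpcore-condition-support-transport}),
\[
\mathrm{Lbl}_{\treewidthvalue}(\awitness')=\mathrm{id}_{\abag}(\mathrm{Lbl}_{\treewidthvalue}(\awitness))=\mathrm{Lbl}_{\treewidthvalue}(\awitness)\subseteq\abag .
\]
Hence $(\mathrm{id}_{\abag},\awitness')$ is also supported.

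First, the identity injection is its own inverse: $\mathrm{id}_{\abag}^{-1}$ has domain $\mathrm{id}_{\abag}(\abag)=\abag$ and is the identity there, so $\mathrm{id}_{\abag}^{-1}=\mathrm{id}_{\abag}$. The inverse relabeling axiom (Condition~\ref{condition:relabeldpcore-condition-two}) therefore gives $\action_{\dpcore}^{\treewidthvalue}(\mathrm{id}_{\abag},\awitness')=\awitness$. Second, $\mathrm{id}_{\abag}\circ\mathrm{id}_{\abag}$ has domain $\{u\in\abag:u\in\abag\}=\abag$ and equals $\mathrm{id}_{\abag}$. The composition axiom (Condition~\ref{condition:relabeldpcore-condition-three}) then yields
\[
\awitness'=\action_{\dpcore}^{\treewidthvalue}(\mathrm{id}_{\abag}\circ\mathrm{id}_{\abag},\awitness)=\action_{\dpcore}^{\treewidthvalue}(\mathrm{id}_{\abag},\awitness').
\]
Combining the two displays gives $\awitness'=\awitness$, which is the witness-level claim. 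Alternatively, applying the inverse axiom directly yields $\action(\mathrm{id}_{\abag},\action(\mathrm{id}_{\abag},\awitness))=\awitness$, so the map $\action(\mathrm{id}_{\abag},\cdot)$ is an involution on supported witnesses. The composition axiom shows it is idempotent, and an idempotent involution is the identity. I will present this version.

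The set-level statement follows pointwise. If $\mathrm{Lbl}_{\treewidthvalue}(\witnessset)\subseteq\abag$, then every $\awitness\in\witnessset$ satisfies $\mathrm{Lbl}_{\treewidthvalue}(\awitness)\subseteq\abag$. By the definition of the pointwise extension, $\action(\mathrm{id}_{\abag},\witnessset)=\{\action(\mathrm{id}_{\abag},\awitness):\awitness\in\witnessset\}=\witnessset$.

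The main obstacle is only bookkeeping about partial functions. We must confirm that $\mathrm{id}_{\abag}^{-1}$ and $\mathrm{id}_{\abag}\circ\mathrm{id}_{\abag}$ are literally equal to $\mathrm{id}_{\abag}$ as elements of $\relabelclass_{\treewidthvalue}$, including their domains. We must also confirm that the intermediate witness $\awitness'$ remains supported, which is exactly what support transport provides. If one preferred to relate $\mathrm{id}_{\abag}$ to identities on other domains, extension invariance (Condition~\ref{condition:relabeldpcore-condition-extension-invariance}) would handle that, but it is not needed here.
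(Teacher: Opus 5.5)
Your proof is correct and is essentially the paper's argument: composition with $f=f'=\mathrm{id}_{\abag}$ gives $\action_{\dpcore}^{\treewidthvalue}(\mathrm{id}_{\abag},\awitness)=\action_{\dpcore}^{\treewidthvalue}(\mathrm{id}_{\abag},\action_{\dpcore}^{\treewidthvalue}(\mathrm{id}_{\abag},\awitness))$. Inverse relabeling with $\mathrm{id}_{\abag}^{-1}=\mathrm{id}_{\abag}$ then collapses the right-hand side to $\awitness$, and the set version follows pointwise. Your additional check, via label support transport, that the intermediate witness is still supported by $\mathrm{id}_{\abag}$ is a detail the paper leaves implicit, and it is correct.
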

\begin{proof}
Let $\awitness$ be as stated. By composition (Condition~\ref{condition:relabeldpcore-condition-three}) with $\relabelingfunction=\relabelingfunction'=\mathrm{id}_{\abag}$,
\[
\action_{\dpcore}^{\treewidthvalue}(\mathrm{id}_{\abag},\awitness)=\action_{\dpcore}^{\treewidthvalue}(\mathrm{id}_{\abag},\action_{\dpcore}^{\treewidthvalue}(\mathrm{id}_{\abag},\awitness)).
\]
By inverse relabeling (Condition~\ref{condition:relabeldpcore-condition-two}) and $\mathrm{id}_{\abag}^{-1}=\mathrm{id}_{\abag}$, the right-hand side equals $\awitness$. The statement for witness sets follows by the pointwise extension of $\action_{\dpcore}^{\treewidthvalue}$.
\end{proof}

Intuitively, a witness action lifts an injective relabeling of the interface to the witness domain. Condition~\ref{condition:relabeldpcore-condition-one} ensures that finality (and hence acceptance) is invariant under relabeling. Condition~\ref{condition:relabeldpcore-condition-support-transport} specifies how label support is transported: the labels mentioned by a witness are mapped by the relabeling. Conditions~\ref{condition:relabeldpcore-condition-two}--\ref{condition:relabeldpcore-condition-three} state that relabeling is invertible on its image and respects composition, so $\action_{\dpcore}^{\treewidthvalue}$ behaves like an action. Condition~\ref{condition:relabeldpcore-condition-extension-invariance} (extension invariance) guarantees that $\action_{\dpcore}^{\treewidthvalue}(\relabelingfunction,\awitness)$ depends only on the values of $\relabelingfunction$ on the labels that actually occur in $\awitness$. Finally, Conditions~\ref{condition:relabeldpcore-condition-four}--\ref{condition:relabeldpcore-condition-seven} express equivariance with respect to the DP transitions: relabeling commutes with $\introvertextype{\cdot}$, $\forgetvertextype{\cdot}$, $\introedgetype{\cdot}{\cdot}$, and $\jointype$.

As an example, consider the DP-core $\chromaticnumberAtMostCore{\numbercolors}$ from Section~\ref{section:ChromaticNumberAtMost}, where witnesses are $\numbercolors$-partitions of subsets of $[\treewidthvalue+1]$.
Let $\action_{\chromaticnumberAtMostCore{\numbercolors}}^\treewidthvalue$ be the action defined as follows for each $\relabelingfunction \in \relabelclass_\treewidthvalue$ and each witness $\awitness$:
\[
\action_{\chromaticnumberAtMostCore{\numbercolors}}^\treewidthvalue(\relabelingfunction, \awitness) = \left\{ \relabelingfunction(c) \mid c \in \awitness \right\},
\]
where $\relabelingfunction(c) = \{ \relabelingfunction(\vertexone) \mid \vertexone \in c \}$ for a cell $c\in \awitness$.

\begin{lemma}
\label{lemma:witness-action-colorability}
For each $\treewidthvalue\in \N$ and each $\numbercolors\in \Nplus$, the function $\action_{\chromaticnumberAtMostCore{\numbercolors}}^\treewidthvalue$ defined above is a witness action for $\chromaticnumberAtMostCore{\numbercolors}[\treewidthvalue]$ in the sense of Definition~\ref{definition:witnessaction}.
\end{lemma}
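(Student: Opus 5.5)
The plan is to verify the conditions of Definition~\ref{definition:witnessaction} one at a time, using the fact that the action applies $\relabelingfunction$ elementwise inside every cell. First we fix the label support: for a partition $\awitness$ we take $\mathrm{Lbl}_{\treewidthvalue}(\awitness)=\bigcup_{c\in\awitness}c$. The action is then defined exactly when every label in a cell lies in $\domain(\relabelingfunction)$.

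We first record that $\action(\relabelingfunction,\awitness)$ is again a witness. Since $\relabelingfunction$ is injective, the images $\relabelingfunction(c)$ of distinct (disjoint, nonempty) cells are disjoint and nonempty. The number of cells is therefore unchanged, so the result is an $\numbercolors$-partition of $\relabelingfunction(\bigcup\awitness)\subseteq[\treewidthvalue+1]$.

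The structural conditions follow directly.
\begin{itemize}
\item Finality invariance is trivial, because every witness is final.
\item Support transport holds because $\bigcup_c \relabelingfunction(c)=\relabelingfunction(\bigcup_c c)$.
\item Inverse relabeling holds because $\relabelingfunction^{-1}(\relabelingfunction(c))=c$ for each cell.
\item Composition holds because $(\relabelingfunction\circ\relabelingfunction')(c)=\relabelingfunction(\relabelingfunction'(c))$ whenever both sides are supported. We must check that supportedness of the right-hand side matches the domain of the partial composition, and this follows from support transport.
\item Extension invariance holds because only the values of $\relabelingfunction$ on $\bigcup\awitness$ are ever used.
\end{itemize}

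The main work is the four equivariance axioms, which we prove by case analysis mirroring Definition~\ref{definition:ColorableCore}.

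\textbf{Introduce-vertex.} Injectivity gives $\vertexone\in\bigcup\awitness$ iff $\relabelingfunction(\vertexone)\in\bigcup\relabelingfunction(\awitness)$. The relabeling also preserves $|\awitness|$, so the same case of the definition applies on both sides. In the first case both sides are empty. In the other cases, the map $p\mapsto\relabelingfunction(p)$ is a bijection from the cells of $\awitness$ to those of $\relabelingfunction(\awitness)$. Applying $\relabelingfunction$ to $(\awitness\setminus\{p\})\cup\{p\cup\{\vertexone\}\}$ gives $(\relabelingfunction(\awitness)\setminus\{\relabelingfunction(p)\})\cup\{\relabelingfunction(p)\cup\{\relabelingfunction(\vertexone)\}\}$. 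Likewise, the new singleton $\{\vertexone\}$ maps to $\{\relabelingfunction(\vertexone)\}$. Hence the two output sets coincide.

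\textbf{Forget-vertex.} Here we use $\relabelingfunction(p\setminus\{\vertexone\})=\relabelingfunction(p)\setminus\{\relabelingfunction(\vertexone)\}$, which is valid by injectivity. Empty cells correspond to empty cells under this identity.

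\textbf{Introduce-edge.} Injectivity gives that $\vertexone,\vertextwo$ share a cell of $\awitness$ iff $\relabelingfunction(\vertexone),\relabelingfunction(\vertextwo)$ share a cell of $\relabelingfunction(\awitness)$. So the guard is preserved and both sides agree.

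\textbf{Join.} The action is injective on supported witnesses, by inverse relabeling. Hence $\awitness=\awitness'$ iff $\action(\relabelingfunction,\awitness)=\action(\relabelingfunction,\awitness')$, and both sides of the join axiom agree.

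The step I expect to need the most care is the introduce-vertex axiom. The issue is a domain subtlety: the axiom requires $\vertexone\in\domain(\relabelingfunction)$, while the output witnesses contain $\vertexone$, so they are indeed supported, but the input $\awitness$ need not contain $\vertexone$. I would handle this with extension invariance, so that the action on $\awitness$ may be computed with $\relabelingfunction$ or with its restriction interchangeably. I would also record explicitly that the guard case ($\vertexone$ already present) transforms consistently, so that the case split in Definition~\ref{definition:ColorableCore} matches on both sides of the axiom.
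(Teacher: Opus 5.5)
Your proposal is correct and follows essentially the same route as the paper: you check the structural axioms directly from the cellwise definition of the action, then verify the four equivariance axioms case by case, using that injectivity of $f$ preserves membership of $u$, cell-sharing, and the number of cells, and that the action is injective (by inverse relabeling) for the join. The introduce-vertex ``domain subtlety'' you anticipate is not actually an issue: $\mathrm{Lbl}_k(w)\subseteq\mathrm{dom}(f)$ and $u\in\mathrm{dom}(f)$ already make both sides defined with $f$ itself, so no appeal to extension invariance is needed.
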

\begin{proof}
Let $\relabelingfunction\in \relabelclass_\treewidthvalue$ and let $\awitness$ be a $\numbercolors$-partition with $\mathrm{Lbl}_{\treewidthvalue}(\awitness)\subseteq \domain(\relabelingfunction)$. Since $\relabelingfunction$ is injective, the images of the cells of $\awitness$ remain nonempty and pairwise disjoint; hence the relabeled witness is again a $\numbercolors$-partition.

Finality invariance holds because, by definition of $\chromaticnumberAtMostCore{\numbercolors}[\treewidthvalue]$, every local witness is final. For label support transport, note that the label support of a partition is its carrier, so
\[
\mathrm{Lbl}_{\treewidthvalue}\big(\action_{\chromaticnumberAtMostCore{\numbercolors}}^\treewidthvalue(\relabelingfunction,\awitness)\big)
= \bigcup_{c\in \awitness}\relabelingfunction(c)
= \relabelingfunction\Big(\bigcup_{c\in \awitness}c\Big)
= \relabelingfunction\big(\mathrm{Lbl}_{\treewidthvalue}(\awitness)\big).
\]
	Inverse relabeling and composition follow from $\relabelingfunction^{-1}(\relabelingfunction(c))=c$ and $(\relabelingfunction\circ \relabelingfunction')(c)=\relabelingfunction(\relabelingfunction'(c))$ for each cell $c\in \awitness$, and extension invariance holds because $\awitness$ mentions only labels in $\domain(\relabelingfunction)$.

	For the DP-core consistency axioms, write $\Phi(\cdot)\defeq \action_{\chromaticnumberAtMostCore{\numbercolors}}^\treewidthvalue(\relabelingfunction,\cdot)$ and check each transition.
	\begin{itemize}
\item \textbf{Introduce vertex.} The condition $\vertexone\in\bigcup_{c\in\awitness}c$ is equivalent to $\relabelingfunction(\vertexone)\in\bigcup_{c\in\Phi(\awitness)}c$, so the new guard is preserved by relabeling. In the nonempty-output case, each witness in $\introvertextype{\vertexone}(\awitness)$ is obtained either by inserting $\vertexone$ into some existing cell $c\in\awitness$, or (if $|\awitness|<\numbercolors$) by adding a new singleton cell $\{\vertexone\}$. Applying $\relabelingfunction$ to labels sends these constructions exactly to the corresponding ones for $\introvertextype{\relabelingfunction(\vertexone)}(\action_{\chromaticnumberAtMostCore{\numbercolors}}^\treewidthvalue(\relabelingfunction,\awitness))$.
\item \textbf{Forget vertex.} The condition $\vertexone\in\bigcup_{c\in\awitness}c$ is equivalent to $\relabelingfunction(\vertexone)\in\bigcup_{c\in\Phi(\awitness)}c$. In the nonempty-output case, removing $\vertexone$ from each cell and deleting empty cells commutes with applying $\relabelingfunction$, and yields $\forgetvertextype{\relabelingfunction(\vertexone)}(\action_{\chromaticnumberAtMostCore{\numbercolors}}^\treewidthvalue(\relabelingfunction,\awitness))$.
\item \textbf{Introduce edge.} For $\vertexone,\vertextwo\in \domain(\relabelingfunction)$, the labels $\vertexone$ and $\vertextwo$ lie in the same cell of $\awitness$ if and only if $\relabelingfunction(\vertexone)$ and $\relabelingfunction(\vertextwo)$ lie in the same cell of $\action_{\chromaticnumberAtMostCore{\numbercolors}}^\treewidthvalue(\relabelingfunction,\awitness)$. Hence the $\introedgegeneric{\vertexone}{\vertextwo}$ filter condition is preserved under relabeling.
	\item \textbf{Join.} The join transition returns $\{\awitness\}$ if $\awitness=\awitness'$ and $\emptyset$ otherwise. Since $\Phi$ is injective on partitions, we have $\awitness=\awitness'$ if and only if $\Phi(\awitness)=\Phi(\awitness')$. Therefore relabeling commutes with join.
	\end{itemize}
Therefore $\action_{\chromaticnumberAtMostCore{\numbercolors}}^\treewidthvalue$ satisfies Conditions~\ref{condition:relabeldpcore-condition-four}--\ref{condition:relabeldpcore-condition-seven}.
\end{proof}

\subsection{Relabeled Refutations}

A standard $(\treewidthvalue,\dpcore)$-refutation (Definition~\ref{definition:DPRefutation}) shows that an inconsistent state is reachable from the initial state by DP-operations. In the presence of symmetry, we also want to allow intermediate states to be renamed (canonized) on the fly. Relabeled refutations formalize exactly this: each inference step applies a DP-operation and then (optionally, via the identity relabeling) applies an admissible relabeling transported to witnesses by the action.



We next define relabeled refutations (Definition~\ref{definition:RelabeledRefutation}), which allow intermediate states to be renamed via relabelings transported to witnesses by the action.
In the join clause, we additionally allow an internal permutation of the shared interface bag to align the second witness set with the labeling used by the first before applying $\jointype$.
For a bag $\abag\subseteq [\treewidthvalue+1]$, we write $\mathrm{Perm}(\abag)$ for the set of all bijections $\pi:\abag\to \abag$.

\begin{definition}[$\relabelsequence$-Relabeled Refutation]\label{definition:RelabeledRefutation}
Let $\relabelsequence = (\relabelingfunction_1,\dots,\relabelingfunction_m)$ be a sequence of relabeling functions in $\relabelclass_\treewidthvalue$ and $\dpcore$ be a DP-core. An $\relabelsequence$-relabeled $(\treewidthvalue,\dpcore)$-refutation is a sequence of pairs
$$\dprefutation \equiv (\abag_0,\witnessset_0)(\abag_1,\witnessset_1)\dots (\abag_m,\witnessset_m)$$ satisfying the following conditions:
\begin{enumerate}
\setlength\itemsep{0.5em}
\item \label{sequence-relabeling-condition-one}
$(\abag_0,\witnessset_0) = (\emptyset,\dpcore[\treewidthvalue].\initialsetgeneric)$.
\item \label{sequence-relabeling-condition-two}
$(\abag_m,\witnessset_m)$ is $(\treewidthvalue,\dpcore)$-inconsistent, i.e., $\witnessset_m$ has no final local witness.
	\item \label{sequence-relabeling-condition-wellformed}
	For each $i\in\{0,1,\dots,m\}$, the state $(\abag_i,\witnessset_i)$ is well-formed, i.e., $\witnessset_i\in \finitepowerset{\dpcore[\treewidthvalue].\allwitnesses}$ and $\mathrm{Lbl}_{\treewidthvalue}(\witnessset_i)\subseteq \abag_i$.
					\item \label{sequence-relabeling-condition-three}
					For each $i\in [m]$, there is some $j\in \{0\}\cup[i-1]$ such that $(\abag_i,\witnessset_i)$ is equal to one of the following pairs.
					In each alternative, $\relabelingfunction_i$ is taken with domain equal to the pre-relabeling output bag displayed in the first component: respectively $\abag_j\cup\{\vertexone\}$, $\abag_j\setminus\{\vertexone\}$, $\abag_j$, and $\abag_j$.
				\begin{enumerate}
	\item \label{sequence-relabeling-IntroVertex}
	$(\relabelingfunction_i(\abag_j\cup \{\vertexone\}), \action_{\dpcore}^{\treewidthvalue}(\relabelingfunction_i,\introvertextype{\vertexone}(\witnessset_j)))$
for some $\vertexone\notin \abag_j$.
\item \label{sequence-relabeling-ForgetVertex}
$(\relabelingfunction_i(\abag_j\setminus \{\vertexone\}), \action_{\dpcore}^{\treewidthvalue}(\relabelingfunction_i,\forgetvertextype{\vertexone}(\witnessset_j)))$
for some $\vertexone\in \abag_j$.
\item \label{sequence-relabelinn-IntroEdge}
$(\relabelingfunction_i(\abag_j), \action_{\dpcore}^{\treewidthvalue}(\relabelingfunction_i,\introedgegeneric{\vertexone}{\vertextwo}(\witnessset_j)))$
 for some distinct $\vertexone,\vertextwo\in \abag_j$.
	\item \label{sequence-relabeling-Join}
	$(\relabelingfunction_i(\abag_j), \action_{\dpcore}^{\treewidthvalue}(\relabelingfunction_i,\jointype(\witnessset_j,\action_{\dpcore}^{\treewidthvalue}(\pi,\witnessset_l))))$
	for some $l\in \{0\}\cup[i-1]$ with $\abag_l=\abag_j$ and some $\pi\in \mathrm{Perm}(\abag_j)$.

		\end{enumerate}
		\end{enumerate}
		\end{definition}

	If the sequence $\dprefutation$ in Definition \ref{definition:RelabeledRefutation} satisfies Conditions
	\ref{sequence-relabeling-condition-one}, \ref{sequence-relabeling-condition-wellformed}, and \ref{sequence-relabeling-condition-three}, but not Condition
	\ref{sequence-relabeling-condition-two}, then we say that $\dprefutation$ is a \emph{semi $\relabelsequence$-relabeled $(\treewidthvalue,\dpcore)$-refutation}.
Below, given an instructive tree decomposition $\abstractdecomposition$, we let
$\allsubterms(\abstractdecomposition)$ denote the set of all subterm occurrences of $\abstractdecomposition$ (equivalently, the nodes of the term tree together with the rooted subterm at each node).

An $\relabelsequence$-relabeled refutation generalizes Definition~\ref{definition:DPRefutation} by allowing relabelings and witness actions to be applied at intermediate steps, without changing the existence of refutations. Lemma~\ref{lemma:tree decomposition-existence-sequence-relabel-semi-refutation} below is the technical bridge: it shows that semi-relabeled refutations still arise from the dynamization of some instructive tree decomposition (up to relabeling), and therefore still yield concrete counterexamples. In this section we restrict to DP-cores with identity cleaning, so dynamization coincides with iterating the DP-transitions without an additional cleaning step.

		\begin{restatable}{lemma}{treedecexistssemiref}
		\label{lemma:tree decomposition-existence-sequence-relabel-semi-refutation}
		Let $\dpcore$ be a DP-core with identity cleaning and equipped with a witness action, and let $\relabelsequence=(\relabelingfunction_1,\dots,\relabelingfunction_m)$ be a sequence of $k$-relabeling functions. Let  $\dprefutation =(\abag_0,\witnessset_0)(\abag_1,\witnessset_1)\dots(\abag_m,\witnessset_m)$ be a semi $\relabelsequence$-relabeled $(\treewidthvalue,\dpcore)$-refutation, and $g:\abag_m\to [\treewidthvalue+1]$ be a $\treewidthvalue$-relabeling.
	Then, there is a $\treewidthvalue$-instructive tree decomposition $\abstractdecomposition_{\dprefutation}\in \allabstractdecompositionstreewidth{\treewidthvalue}$ and a function
	$\subtermToBag: \allsubterms(\abstractdecomposition_{\dprefutation}) \to \finitepowerset{\dpcore[\treewidthvalue].\allwitnesses}$
	such that the following conditions are satisfied for each subterm $\abstractdecomposition$ of $\abstractdecomposition_{\dprefutation}$.
\begin{enumerate}
    \item\label{existenceZero} $\topbag{\abstractdecomposition_{\dprefutation}} = g(\abag_m)$.
    \item\label{existenceOne} If $\abstractdecomposition=\abstractdecomposition_{\dprefutation}$, then $\subtermToBag(\abstractdecomposition) = \action_{\dpcore}^{\treewidthvalue}(g,\witnessset_m)$.
    \item\label{existenceTwo} if $\abstractdecomposition = \leaftype$, then
    $\subtermToBag(\abstractdecomposition) = \dpcore[\treewidthvalue].\initialsetgeneric$.
    \item\label{existenceThree} if $\abstractdecomposition = \introvertextype{\vertexone}(\abstractdecomposition_1)$ for some subterm $\abstractdecomposition_1$, then
    $\subtermToBag(\abstractdecomposition) = \introvertextype{\vertexone}(\subtermToBag(\abstractdecomposition_1))$.
    \item\label{existenceFour} if $\abstractdecomposition = \forgetvertextype{\vertexone}(\abstractdecomposition_1)$ for some subterm $\abstractdecomposition_1$, then
    $\subtermToBag(\abstractdecomposition) = \forgetvertextype{\vertexone}(\subtermToBag(\abstractdecomposition_1))$
    \item\label{existenceFive} if $\abstractdecomposition = \introedgegeneric{\vertexone}{\vertextwo}(\abstractdecomposition_1)$ for some subterm $\abstractdecomposition_1$, then
    $\subtermToBag(\abstractdecomposition) = \introedgegeneric{\vertexone}{\vertextwo}(\subtermToBag(\abstractdecomposition_1))$.
    \item\label{existenceSix} if $\abstractdecomposition = \jointype(\abstractdecomposition_1,\abstractdecomposition_2)$ for some subterms $\abstractdecomposition_1$ and $\abstractdecomposition_2$, then $\subtermToBag(\abstractdecomposition) = \jointype(\subtermToBag(\abstractdecomposition_1),\subtermToBag(\abstractdecomposition_2))$.
\end{enumerate}
\end{restatable}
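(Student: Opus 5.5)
We argue by strong induction on $m$, the index of the last state of the semi-refutation, proving the statement simultaneously for every $\treewidthvalue$-relabeling $g$ with domain $\abag_m$. Throughout we also carry the invariant that $\subtermToBag(\abstractdecomposition)$ equals the dynamization $\dynamizationfunction{\dpcore}{\treewidthvalue}(\abstractdecomposition)$. This holds automatically, because conditions~\ref{existenceTwo}--\ref{existenceSix} are exactly the clauses of Definition~\ref{definition:Dynamization} with identity cleaning. Hence we may simply \emph{define} $\subtermToBag$ as the dynamization. The real content is then conditions~\ref{existenceZero} and~\ref{existenceOne}, namely $\topbag{\abstractdecomposition_\dprefutation}=g(\abag_m)$ and $\dynamizationfunction{\dpcore}{\treewidthvalue}(\abstractdecomposition_\dprefutation)=\action_{\dpcore}^{\treewidthvalue}(g,\witnessset_m)$.

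\textbf{Base case} ($m=0$). Take $\abstractdecomposition_\dprefutation=\leaftype$. Then $g$ is the empty injection and $\witnessset_0$ has empty label support by Definition~\ref{definition:interface-respecting-dpcore}. By Lemma~\ref{lemma:witness-action-identity}, $\action_{\dpcore}^{\treewidthvalue}(g,\witnessset_0)=\witnessset_0=\dpcore[\treewidthvalue].\initialsetgeneric$, using extension invariance to identify the empty map with $\mathrm{id}_\emptyset$. Here well-formedness of the states comes from Condition~\ref{sequence-relabeling-condition-wellformed}.

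\textbf{Inductive step.} Look at how $(\abag_m,\witnessset_m)$ was produced from $j$ (and, for a join, also $l$) with $j,l<m$. Truncating the sequence at $j$ gives a shorter semi-refutation, so the induction hypothesis applies with any relabeling of $\abag_j$.

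For an introduce-vertex step, set $h\defeq g\circ\relabelingfunction_m$, an injection on $\abag_j\cup\{\vertexone\}$. Apply induction at $j$ with $h|_{\abag_j}$ to get $\abstractdecomposition_j$, and set $\abstractdecomposition=\introvertextype{h(\vertexone)}(\abstractdecomposition_j)$. This is legal because $h(\vertexone)\notin h(\abag_j)$ by injectivity. Then we compute
\[
\introvertextype{h(\vertexone)}(\action_{\dpcore}^{\treewidthvalue}(h|_{\abag_j},\witnessset_j))=\action_{\dpcore}^{\treewidthvalue}(h,\introvertextype{\vertexone}(\witnessset_j))=\action_{\dpcore}^{\treewidthvalue}(g,\action_{\dpcore}^{\treewidthvalue}(\relabelingfunction_m,\introvertextype{\vertexone}(\witnessset_j))).
\]
The first equality uses extension invariance followed by Condition~\ref{condition:relabeldpcore-condition-four}. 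The second uses composition. The bag is $h(\abag_j\cup\{\vertexone\})=g(\abag_m)$.

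The forget-vertex and introduce-edge steps are handled the same way using Conditions~\ref{condition:relabeldpcore-condition-five} and~\ref{condition:relabeldpcore-condition-six}. For forget, $h\defeq g\circ\relabelingfunction_m$ is only defined on $\abag_j\setminus\{\vertexone\}$. We use Lemma~\ref{lemma:injective-extension} to extend it to $\hat h$ on $\abag_j$, apply induction with $\hat h$, and then put $\forgetvertextype{\hat h(\vertexone)}$ on top. Extension invariance shows the result does not depend on the choice of extension.

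For a join step, set $h\defeq g\circ\relabelingfunction_m$ on $\abag_j$. Apply induction at $j$ with $h$ and at $l$ with $h\circ\pi$; both have domain $\abag_l=\abag_j$. The two resulting terms have the same top bag $h(\abag_j)$, so $\jointype$ of them is legal. Condition~\ref{condition:relabeldpcore-condition-seven} together with composition gives
\[
\action_{\dpcore}^{\treewidthvalue}(h\circ\pi,\witnessset_l)=\action_{\dpcore}^{\treewidthvalue}(h,\action_{\dpcore}^{\treewidthvalue}(\pi,\witnessset_l)),
\]
and this matches clause~\ref{sequence-relabeling-Join}. The join step also explains why induction must quantify over all relabelings $g$: the same earlier index may be reused with different relabelings, and since term children are fresh copies, reuse causes no problem.

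\textbf{Main obstacle.} The delicate part is the bookkeeping of supports and domains. Every use of composition, inverse and equivariance axioms requires the pairs to be supported, i.e.\ $\mathrm{Lbl}_{\treewidthvalue}\subseteq\domain$. We need to check the following:
\begin{itemize}
\item well-formedness of $\witnessset_j$ from Condition~\ref{sequence-relabeling-condition-wellformed};
\item the support of the transition outputs, via the interface-respecting conditions or label support transport;
\item that the compositions $g\circ\relabelingfunction_m$ have the full intended domain, which holds because $\relabelingfunction_m$ maps its domain onto $\abag_m=\domain(g)$.
\end{itemize}
I expect the forget case and the join case to need the most care, since there the domains shrink or must be aligned by $\pi$. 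Extension invariance is the tool I would use first to reconcile mismatched domains.
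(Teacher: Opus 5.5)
Your proposal is correct and follows essentially the same route as the paper: induction on $m$ quantified over all relabelings $g$, with $h=g\circ\relabelingfunction_m$ in each case, an injective extension $\hat h$ reconciled by extension invariance in the forget case, and the pair $h$, $h\circ\pi$ in the join case. Defining the subterm map directly as the dynamization, which is legitimate under identity cleaning, only merges a step the paper postpones to Theorem~\ref{theorem:counterexample-existence-relabel-refutation}. One small correction: the lemma does not assume interface-respectingness, so the empty support of $\witnessset_0$ and the support of each transition output must come from well-formedness ($\mathrm{Lbl}_{\treewidthvalue}(\witnessset_0)\subseteq\abag_0=\emptyset$) and from the fact that $\action_{\dpcore}^{\treewidthvalue}(\relabelingfunction_m,\cdot)$ is defined on that output, as in the paper.
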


	\begin{proof}
	We prove the lemma by induction on $m$.

Base case ($m=0$): Then $\dprefutation=(\abag_0,\witnessset_0)$ and by Condition~\ref{sequence-relabeling-condition-one} we have $(\abag_0,\witnessset_0)=(\emptyset,\dpcore[\treewidthvalue].\initialsetgeneric)$.
Let $\abstractdecomposition_{\dprefutation}\defeq \leaftype$ and set $\subtermToBag(\leaftype)\defeq \dpcore[\treewidthvalue].\initialsetgeneric$.
Since $\abag_0=\emptyset$, the relabeling $g$ is the unique empty function. For every witness $\awitness$ with $\mathrm{Lbl}_{\treewidthvalue}(\awitness)=\emptyset$, applying Definition~\ref{definition:witnessaction}.\ref{condition:relabeldpcore-condition-two} and Definition~\ref{definition:witnessaction}.\ref{condition:relabeldpcore-condition-three} yields $\action_{\dpcore}^{\treewidthvalue}(g,\awitness)=\awitness$; hence (by pointwise extension) $\action_{\dpcore}^{\treewidthvalue}(g,\witnessset_0)=\witnessset_0$.
Moreover, $\topbag{\leaftype}=\emptyset=g(\abag_0)$, so Condition~\ref{existenceZero} holds. Therefore Conditions~\ref{existenceZero}--\ref{existenceSix} hold.

Induction step: Assume $m\ge 1$ and that the statement holds for all semi-relabeled refutations of length $<m$.
Let $\dprefutation=(\abag_0,\witnessset_0)\dots(\abag_m,\witnessset_m)$ be a semi $\relabelsequence$-relabeled $(\treewidthvalue,\dpcore)$-refutation and fix $g:\abag_m\to [\treewidthvalue+1]$.
By Condition~\ref{sequence-relabeling-condition-three} of Definition~\ref{definition:RelabeledRefutation} (applied to $i=m$), there exists $j\in \{0\}\cup[m-1]$ such that $(\abag_m,\witnessset_m)$ is obtained from $(\abag_j,\witnessset_j)$ by one of the rules (and in the join case also some $l\in \{0\}\cup[m-1]$). We treat the cases.

\begin{enumerate}
\item \textbf{Introduce vertex.}
Assume
\[
(\abag_m,\witnessset_m)
=
\bigl(\relabelingfunction_m(\abag_j\cup\{\vertexone\}),\,\action_{\dpcore}^{\treewidthvalue}(\relabelingfunction_m,\introvertextype{\vertexone}(\witnessset_j))\bigr)
	\]
	for some $\vertexone\notin \abag_j$.
	Let
	\[
	h \defeq g\circ \bigl(\relabelingfunction_m|_{\abag_j\cup\{\vertexone\}}\bigr),
	\]
	which is well-defined because $\relabelingfunction_m(\abag_j\cup\{\vertexone\})=\abag_m=\domain(g)$.
	Apply the induction hypothesis to the prefix $(\abag_0,\witnessset_0)\dots(\abag_j,\witnessset_j)$ with relabeling $h|_{\abag_j}$.
	This yields $\abstractdecomposition_j$ and $\subtermToBag_j$ satisfying Conditions~\ref{existenceZero}--\ref{existenceSix} and such that
	\[
	\subtermToBag_j(\abstractdecomposition_j)=\action_{\dpcore}^{\treewidthvalue}(h|_{\abag_j},\witnessset_j).
	\]
	Define $\abstractdecomposition_{\dprefutation}\defeq \introvertextype{h(\vertexone)}(\abstractdecomposition_j)$.
	Extend $\subtermToBag_j$ by setting
	\[
	\subtermToBag(\abstractdecomposition_{\dprefutation})\defeq \introvertextype{h(\vertexone)}(\subtermToBag_j(\abstractdecomposition_j)).
	\]
	By the induction hypothesis, $\abstractdecomposition_j\in \allabstractdecompositionstreewidth{\treewidthvalue}$ and $\topbag{\abstractdecomposition_j}=h(\abag_j)$ (Condition~\ref{existenceZero} applied to the prefix).
	Since $\vertexone\notin \abag_j$ and $h$ is injective, we have
	\[
	h(\vertexone)\notin \topbag{\abstractdecomposition_j}.
	\]
	Hence $\introvertextype{h(\vertexone)}(\abstractdecomposition_j)\in \allabstractdecompositionstreewidth{\treewidthvalue}$ and
	\[
	\topbag{\abstractdecomposition_{\dprefutation}}
	= h(\abag_j\cup\{\vertexone\})
	= g(\abag_m).
	\]
		Thus Condition~\ref{existenceZero} holds, and Conditions~\ref{existenceTwo}--\ref{existenceSix} are immediate.
		For Condition~\ref{existenceOne}, since the refutation is well-formed (Condition~\ref{sequence-relabeling-condition-wellformed}), $\mathrm{Lbl}_{\treewidthvalue}(\witnessset_j)\subseteq \abag_j$.
		Therefore Definition~\ref{definition:witnessaction}.\ref{condition:relabeldpcore-condition-extension-invariance} gives
		\[
		\action_{\dpcore}^{\treewidthvalue}(h|_{\abag_j},\witnessset_j)=\action_{\dpcore}^{\treewidthvalue}(h,\witnessset_j).
		\]
		Let $S^\star\defeq \introvertextype{\vertexone}(\witnessset_j)$.
		Since $(\abag_m,\witnessset_m)$ is well-formed and $\abag_m=\relabelingfunction_m(\abag_j\cup\{\vertexone\})$, we have $\mathrm{Lbl}_{\treewidthvalue}(\witnessset_m)\subseteq \relabelingfunction_m(\abag_j\cup\{\vertexone\})$.
	Moreover, $\witnessset_m=\action_{\dpcore}^{\treewidthvalue}(\relabelingfunction_m,S^\star)$ by assumption, and this action application is well-defined by Condition~\ref{sequence-relabeling-condition-three}; hence label support transport (Definition~\ref{definition:witnessaction}.\ref{condition:relabeldpcore-condition-support-transport}, applied pointwise to sets) yields
	\[
	\mathrm{Lbl}_{\treewidthvalue}(\witnessset_m)=\relabelingfunction_m(\mathrm{Lbl}_{\treewidthvalue}(S^\star)).
	\]
	Therefore $\relabelingfunction_m(\mathrm{Lbl}_{\treewidthvalue}(S^\star))\subseteq \relabelingfunction_m(\abag_j\cup\{\vertexone\})$, and since $\relabelingfunction_m$ is injective we conclude $\mathrm{Lbl}_{\treewidthvalue}(S^\star)\subseteq \abag_j\cup\{\vertexone\}$ by Lemma~\ref{lemma:injective-image-containment}.
	Hence extension invariance yields $\action_{\dpcore}^{\treewidthvalue}(\relabelingfunction_m|_{\abag_j\cup\{\vertexone\}},S^\star)=\action_{\dpcore}^{\treewidthvalue}(\relabelingfunction_m,S^\star)=\witnessset_m$, and
		\[
		\begin{aligned}
		\subtermToBag(\abstractdecomposition_{\dprefutation})
		&= \introvertextype{h(\vertexone)}(\action_{\dpcore}^{\treewidthvalue}(h,\witnessset_j))\\
		&= \action_{\dpcore}^{\treewidthvalue}(h,S^\star)\\
		&= \action_{\dpcore}^{\treewidthvalue}\bigl(g,\,\action_{\dpcore}^{\treewidthvalue}(\relabelingfunction_m|_{\abag_j\cup\{\vertexone\}},S^\star)\bigr)\\
		&= \action_{\dpcore}^{\treewidthvalue}(g,\witnessset_m),
		\end{aligned}
		\]
	where the second equality follows from Definition~\ref{definition:witnessaction}.\ref{condition:relabeldpcore-condition-four} and the third from Definition~\ref{definition:witnessaction}.\ref{condition:relabeldpcore-condition-three}.

\item \textbf{Forget vertex.}
Assume $(\abag_m,\witnessset_m)=(\relabelingfunction_m(\abag_j\setminus\{\vertexone\}),\,\action_{\dpcore}^{\treewidthvalue}(\relabelingfunction_m,\forgetvertextype{\vertexone}(\witnessset_j)))$ for some $\vertexone\in \abag_j$.
Let $h\defeq g\circ \relabelingfunction_m|_{\abag_j\setminus\{\vertexone\}}$, which is well-defined since $\relabelingfunction_m(\abag_j\setminus\{\vertexone\})=\abag_m$.
Choose any injective extension $\hat h:\abag_j\to[\treewidthvalue+1]$ of $h$ (which exists by Lemma~\ref{lemma:injective-extension}).
Apply the induction hypothesis to the prefix ending at $j$ with relabeling $\hat h$ to obtain $\abstractdecomposition_j$ and $\subtermToBag_j$ with $\subtermToBag_j(\abstractdecomposition_j)=\action_{\dpcore}^{\treewidthvalue}(\hat h,\witnessset_j)$.
Define $\abstractdecomposition_{\dprefutation}\defeq \forgetvertextype{\hat h(\vertexone)}(\abstractdecomposition_j)$ and extend $\subtermToBag_j$ by setting $\subtermToBag(\abstractdecomposition_{\dprefutation})\defeq \forgetvertextype{\hat h(\vertexone)}(\subtermToBag_j(\abstractdecomposition_j))$.
By the induction hypothesis, $\abstractdecomposition_j\in \allabstractdecompositionstreewidth{\treewidthvalue}$ and $\topbag{\abstractdecomposition_j}=\hat h(\abag_j)$.
Since $\vertexone\in \abag_j$, we have $\hat h(\vertexone)\in \topbag{\abstractdecomposition_j}$, so $\forgetvertextype{\hat h(\vertexone)}(\abstractdecomposition_j)\in \allabstractdecompositionstreewidth{\treewidthvalue}$ and
\[
\topbag{\abstractdecomposition_{\dprefutation}}
= \hat h(\abag_j\setminus\{\vertexone\})
= h(\abag_j\setminus\{\vertexone\})
= g(\abag_m).
\]
Thus Condition~\ref{existenceZero} holds, and Conditions~\ref{existenceTwo}--\ref{existenceSix} are immediate.
For Condition~\ref{existenceOne}, Definition~\ref{definition:witnessaction}.\ref{condition:relabeldpcore-condition-five} implies
\[
	\subtermToBag(\abstractdecomposition_{\dprefutation})
	= \action_{\dpcore}^{\treewidthvalue}(\hat h,\forgetvertextype{\vertexone}(\witnessset_j)).
	\]
	Let $S^\star\defeq \forgetvertextype{\vertexone}(\witnessset_j)$.
	Since $(\abag_m,\witnessset_m)$ is well-formed (Condition~\ref{sequence-relabeling-condition-wellformed}) and $\abag_m=\relabelingfunction_m(\abag_j\setminus\{\vertexone\})$, we have
	\[
	\mathrm{Lbl}_{\treewidthvalue}(\witnessset_m)\subseteq \relabelingfunction_m(\abag_j\setminus\{\vertexone\}).
	\]
	Moreover, $\witnessset_m=\action_{\dpcore}^{\treewidthvalue}(\relabelingfunction_m,S^\star)$ by assumption, and this action application is well-defined by Condition~\ref{sequence-relabeling-condition-three}; hence label support transport (Definition~\ref{definition:witnessaction}.\ref{condition:relabeldpcore-condition-support-transport}, applied pointwise to sets) yields
	\[
	\mathrm{Lbl}_{\treewidthvalue}(\witnessset_m)=\relabelingfunction_m(\mathrm{Lbl}_{\treewidthvalue}(S^\star)).
	\]
	Therefore $\relabelingfunction_m(\mathrm{Lbl}_{\treewidthvalue}(S^\star))\subseteq \relabelingfunction_m(\abag_j\setminus\{\vertexone\})$, and since $\relabelingfunction_m$ is injective we conclude $\mathrm{Lbl}_{\treewidthvalue}(S^\star)\subseteq \abag_j\setminus\{\vertexone\}=\domain(h)$ by Lemma~\ref{lemma:injective-image-containment}.
	Hence extension invariance yields $\action_{\dpcore}^{\treewidthvalue}(\hat h,S^\star)=\action_{\dpcore}^{\treewidthvalue}(h,S^\star)$, and Definition~\ref{definition:witnessaction}.\ref{condition:relabeldpcore-condition-three} gives
		\[
		\begin{aligned}
		\subtermToBag(\abstractdecomposition_{\dprefutation})
		&= \action_{\dpcore}^{\treewidthvalue}(\hat h,S^\star)\\
		&= \action_{\dpcore}^{\treewidthvalue}(h,S^\star)\\
		&= \action_{\dpcore}^{\treewidthvalue}\bigl(g,\action_{\dpcore}^{\treewidthvalue}(\relabelingfunction_m,S^\star)\bigr)\\
		&= \action_{\dpcore}^{\treewidthvalue}(g,\witnessset_m).
		\end{aligned}
		\]

\item \textbf{Introduce edge.}
Assume
\[
(\abag_m,\witnessset_m)
=
\bigl(\relabelingfunction_m(\abag_j),\,\action_{\dpcore}^{\treewidthvalue}(\relabelingfunction_m,\introedgegeneric{\vertexone}{\vertextwo}(\witnessset_j))\bigr)
\]
for some distinct $\vertexone,\vertextwo\in \abag_j$.
Let $h\defeq g\circ \relabelingfunction_m|_{\abag_j}$ and apply the induction hypothesis to the prefix ending at $j$ with relabeling $h$, obtaining $\abstractdecomposition_j$ and $\subtermToBag_j$ with $\subtermToBag_j(\abstractdecomposition_j)=\action_{\dpcore}^{\treewidthvalue}(h,\witnessset_j)$.
Define $\abstractdecomposition_{\dprefutation}\defeq \introedgegeneric{h(\vertexone)}{h(\vertextwo)}(\abstractdecomposition_j)$ and set $\subtermToBag(\abstractdecomposition_{\dprefutation})\defeq \introedgegeneric{h(\vertexone)}{h(\vertextwo)}(\subtermToBag_j(\abstractdecomposition_j))$.
By the induction hypothesis, $\abstractdecomposition_j\in \allabstractdecompositionstreewidth{\treewidthvalue}$ and $\topbag{\abstractdecomposition_j}=h(\abag_j)$.
Since $\vertexone,\vertextwo\in \abag_j$, we have $h(\vertexone),h(\vertextwo)\in \topbag{\abstractdecomposition_j}$, so $\introedgegeneric{h(\vertexone)}{h(\vertextwo)}(\abstractdecomposition_j)\in \allabstractdecompositionstreewidth{\treewidthvalue}$ and $\topbag{\abstractdecomposition_{\dprefutation}}=\topbag{\abstractdecomposition_j}=h(\abag_j)=g(\abag_m)$.
Thus Condition~\ref{existenceZero} holds.
Let $S^\star\defeq \introedgegeneric{\vertexone}{\vertextwo}(\witnessset_j)$.
Since $(\abag_m,\witnessset_m)$ is well-formed and $\abag_m=\relabelingfunction_m(\abag_j)$, we have
\[
\mathrm{Lbl}_{\treewidthvalue}(\witnessset_m)\subseteq \relabelingfunction_m(\abag_j).
\]
Moreover, $\witnessset_m=\action_{\dpcore}^{\treewidthvalue}(\relabelingfunction_m,S^\star)$, so label support transport gives
\[
\mathrm{Lbl}_{\treewidthvalue}(\witnessset_m)=\relabelingfunction_m(\mathrm{Lbl}_{\treewidthvalue}(S^\star)).
\]
By Lemma~\ref{lemma:injective-image-containment}, $\mathrm{Lbl}_{\treewidthvalue}(S^\star)\subseteq \abag_j=\domain(h)$.
Hence
\[
\begin{aligned}
\subtermToBag(\abstractdecomposition_{\dprefutation})
&= \introedgegeneric{h(\vertexone)}{h(\vertextwo)}(\action_{\dpcore}^{\treewidthvalue}(h,\witnessset_j))\\
&= \action_{\dpcore}^{\treewidthvalue}(h,S^\star)\\
&= \action_{\dpcore}^{\treewidthvalue}\bigl(g,\action_{\dpcore}^{\treewidthvalue}(\relabelingfunction_m|_{\abag_j},S^\star)\bigr)\\
&= \action_{\dpcore}^{\treewidthvalue}(g,\witnessset_m),
\end{aligned}
\]
where the second equality is the introduce-edge equivariance axiom and the third is composition. The remaining conditions are immediate.

\item \textbf{Join.}
Assume $(\abag_m,\witnessset_m)=(\relabelingfunction_m(\abag_j),\,\action_{\dpcore}^{\treewidthvalue}(\relabelingfunction_m,\jointype(\witnessset_j,\action_{\dpcore}^{\treewidthvalue}(\pi,\witnessset_l))))$ for some $l\in \{0\}\cup[m-1]$ with $\abag_l=\abag_j$ and $\pi\in\mathrm{Perm}(\abag_j)$.
Let $h\defeq g\circ \relabelingfunction_m|_{\abag_j}$.
Apply the induction hypothesis to the prefixes ending at $j$ and $l$ with relabelings $h$ and $h\circ \pi$, obtaining decompositions $\abstractdecomposition_j,\abstractdecomposition_l$ and maps $\subtermToBag_j,\subtermToBag_l$ with
\[
\subtermToBag_j(\abstractdecomposition_j)=\action_{\dpcore}^{\treewidthvalue}(h,\witnessset_j),
\qquad
\subtermToBag_l(\abstractdecomposition_l)=\action_{\dpcore}^{\treewidthvalue}(h\circ \pi,\witnessset_l).
\]
Define $\abstractdecomposition_{\dprefutation}\defeq \jointype(\abstractdecomposition_j,\abstractdecomposition_l)$, using fresh copies of the two child terms if necessary; combine $\subtermToBag_j$ and $\subtermToBag_l$ on the respective subterms, and set
\[
\subtermToBag(\abstractdecomposition_{\dprefutation}) \defeq \jointype(\subtermToBag_j(\abstractdecomposition_j),\subtermToBag_l(\abstractdecomposition_l)).
\]
By the induction hypothesis, $\abstractdecomposition_j,\abstractdecomposition_l\in \allabstractdecompositionstreewidth{\treewidthvalue}$, $\topbag{\abstractdecomposition_j}=h(\abag_j)$, and $\topbag{\abstractdecomposition_l}=(h\circ \pi)(\abag_l)=(h\circ \pi)(\abag_j)=h(\abag_j)$.
Hence $\jointype(\abstractdecomposition_j,\abstractdecomposition_l)\in \allabstractdecompositionstreewidth{\treewidthvalue}$ and $\topbag{\abstractdecomposition_{\dprefutation}}=h(\abag_j)=g(\abag_m)$, so Condition~\ref{existenceZero} holds.
Because $(\abag_l,\witnessset_l)$ is well-formed and $\pi$ is a permutation of $\abag_j=\abag_l$, label support transport gives
\[
\mathrm{Lbl}_{\treewidthvalue}(\action_{\dpcore}^{\treewidthvalue}(\pi,\witnessset_l))\subseteq \abag_j.
\]
Let $S^\star\defeq \jointype(\witnessset_j,\action_{\dpcore}^{\treewidthvalue}(\pi,\witnessset_l))$.
Since $(\abag_m,\witnessset_m)$ is well-formed, $\abag_m=\relabelingfunction_m(\abag_j)$, and $\witnessset_m=\action_{\dpcore}^{\treewidthvalue}(\relabelingfunction_m,S^\star)$, label support transport and Lemma~\ref{lemma:injective-image-containment} imply
\[
\mathrm{Lbl}_{\treewidthvalue}(S^\star)\subseteq \abag_j=\domain(h).
\]
Therefore
\[
\begin{aligned}
\subtermToBag(\abstractdecomposition_{\dprefutation})
&= \jointype(\action_{\dpcore}^{\treewidthvalue}(h,\witnessset_j),\action_{\dpcore}^{\treewidthvalue}(h\circ \pi,\witnessset_l))\\
&= \jointype(\action_{\dpcore}^{\treewidthvalue}(h,\witnessset_j),\action_{\dpcore}^{\treewidthvalue}(h,\action_{\dpcore}^{\treewidthvalue}(\pi,\witnessset_l)))\\
&= \action_{\dpcore}^{\treewidthvalue}(h,S^\star)\\
&= \action_{\dpcore}^{\treewidthvalue}\bigl(g,\action_{\dpcore}^{\treewidthvalue}(\relabelingfunction_m|_{\abag_j},S^\star)\bigr)\\
&= \action_{\dpcore}^{\treewidthvalue}(g,\witnessset_m),
\end{aligned}
\]
where the second equality uses composition, the third uses join equivariance, and the fourth uses composition again. This proves Condition~\ref{existenceOne}, and the remaining conditions are immediate.
\end{enumerate}
\end{proof}

Theorem~\ref{theorem:counterexample-existence-relabel-refutation} states that if $\dpcore$ is a coherent DP-core, then from each relabeled $(\treewidthvalue,\dpcore)$-refutation, one can extract a $\treewidthvalue$-instructive tree decomposition whose graph does not belong to $\dpcoregraphproperty{\dpcore}$.

Theorem~\ref{theorem:EquivalenceRefutation} corresponds to the special case of Theorem~\ref{theorem:counterexample-existence-relabel-refutation} in which every relabeling function is the identity, so no symmetry rewriting is permitted during the derivation. Theorem~\ref{theorem:counterexample-existence-relabel-refutation} shows that allowing admissible relabelings at intermediate steps does not change the existence of refutations (and hence counterexamples), while enabling the symmetry-aware search procedures developed in this section.

\begin{restatable}{theorem}{theoremcounterexampleexistencerelabelingref}\label{theorem:counterexample-existence-relabel-refutation}
Let $\relabelsequence=(\relabelingfunction_1,\dots,\relabelingfunction_m)$ be a sequence of relabeling functions, and let $\dpcore$ be a coherent DP-core with identity cleaning and equipped with a witness action. If there is an $\relabelsequence$-relabeled $(\treewidthvalue,\dpcore)$-refutation, then there exists a $\treewidthvalue$-instructive tree decomposition $\abstractdecomposition \in \allabstractdecompositionstreewidth{\treewidthvalue}$ where $\decompositiongraph{\abstractdecomposition}\notin \dpcoregraphproperty{\dpcore}$.
\end{restatable}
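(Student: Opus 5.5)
The plan is to combine Lemma~\ref{lemma:tree decomposition-existence-sequence-relabel-semi-refutation} with the definition of dynamization and the finality invariance of the witness action. Let $\dprefutation=(\abag_0,\witnessset_0)\dots(\abag_m,\witnessset_m)$ be an $\relabelsequence$-relabeled $(\treewidthvalue,\dpcore)$-refutation. Forgetting Condition~\ref{sequence-relabeling-condition-two}, it is in particular a semi $\relabelsequence$-relabeled refutation. We apply Lemma~\ref{lemma:tree decomposition-existence-sequence-relabel-semi-refutation} with $g\defeq \mathrm{id}_{\abag_m}$ (any injection would do). This yields $\abstractdecomposition\defeq\abstractdecomposition_{\dprefutation}\in\allabstractdecompositionstreewidth{\treewidthvalue}$ and a map $\subtermToBag$ satisfying Conditions~\ref{existenceZero}--\ref{existenceSix}.

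The first step is to show that $\subtermToBag(\sigmaabstractdecomposition)=\dynamizationfunction{\dpcore}{\treewidthvalue}(\sigmaabstractdecomposition)$ for every subterm $\sigmaabstractdecomposition$ of $\abstractdecomposition$. We argue by structural induction on $\sigmaabstractdecomposition$. Conditions~\ref{existenceTwo}--\ref{existenceSix} are exactly the recursive clauses of Definition~\ref{definition:Dynamization}, once cleaning is the identity. The leaf case is Condition~\ref{existenceTwo}, and each unary or join case follows by rewriting the children with the induction hypothesis.

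Applying this at the root and using Condition~\ref{existenceOne}, we get $\dynamizationfunction{\dpcore}{\treewidthvalue}(\abstractdecomposition)=\action_{\dpcore}^{\treewidthvalue}(g,\witnessset_m)$. This action is defined because $(\abag_m,\witnessset_m)$ is well-formed (Condition~\ref{sequence-relabeling-condition-wellformed}), so $\mathrm{Lbl}_{\treewidthvalue}(\witnessset_m)\subseteq\abag_m=\domain(g)$. Since $\witnessset_m$ has no final witness, Lemma~\ref{lemma:witness-action-finality-sets} shows that $\action_{\dpcore}^{\treewidthvalue}(g,\witnessset_m)$ has none either. With $g$ the identity one could instead cite Lemma~\ref{lemma:witness-action-identity} directly. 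Hence $\dpcore[\treewidthvalue]$ does not accept $\abstractdecomposition$.

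It remains to conclude $\decompositiongraph{\abstractdecomposition}\notin\dpcoregraphproperty{\dpcore}$, which is where coherency enters. Suppose instead that $\decompositiongraph{\abstractdecomposition}\in\dpcoregraphproperty{\dpcore}$. Then there are some $\treewidthvalue'$ and some $\abstractdecomposition'$ accepted by $\dpcore[\treewidthvalue']$ with $\decompositiongraph{\abstractdecomposition'}\simeq\decompositiongraph{\abstractdecomposition}$, since the isomorphism closure only adds isomorphic copies. By Definition~\ref{definition:CoherencyDPCore}, $\dpcore[\treewidthvalue]$ would then accept $\abstractdecomposition$, a contradiction.

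The only delicate point I foresee is the identification $\subtermToBag=\dynamizationfunction{\dpcore}{\treewidthvalue}$. Here one must ensure that the lemma's conditions hold at every subterm occurrence, including the fresh copies introduced at joins, and that the identity cleaning makes the lifted transitions coincide with the dynamization clauses. Both are supplied by the lemma statement and the hypotheses, so I expect this step to be routine bookkeeping rather than a real obstacle.
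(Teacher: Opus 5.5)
Your proposal is correct and follows the paper's proof essentially step for step: apply Lemma~\ref{lemma:tree decomposition-existence-sequence-relabel-semi-refutation} with $g=\mathrm{id}_{\abag_m}$, identify $\subtermToBag$ with $\dynamizationfunction{\dpcore}{\treewidthvalue}$ by structural induction using identity cleaning, deduce that the root witness set has no final witness, and conclude by coherency. The only cosmetic difference is in that middle step. The paper invokes Lemma~\ref{lemma:witness-action-identity} to get $\action_{\dpcore}^{\treewidthvalue}(\mathrm{id}_{\abag_m},\witnessset_m)=\witnessset_m$, whereas your appeal to Lemma~\ref{lemma:witness-action-finality-sets} works for any injection $g$.
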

\begin{proof}
Let $\dprefutation= (\abag_0,\witnessset_0)(\abag_1,\witnessset_1)\dots (\abag_{m},\witnessset_m)$ be a $\relabelsequence$-relabeled $(\treewidthvalue,\dpcore)$-refutation.
Let $\mathrm{id}_{\abag_m}:\abag_m\to[\treewidthvalue+1]$ be the identity relabeling on $\abag_m$.
By Lemma~\ref{lemma:tree decomposition-existence-sequence-relabel-semi-refutation} applied to $\mathrm{id}_{\abag_m}$, there is a $\treewidthvalue$-instructive tree decomposition $\abstractdecomposition$ and a map
$\subtermToBag:\allsubterms(\abstractdecomposition) \to \finitepowerset{\dpcore[\treewidthvalue].\allwitnesses}$ such that Conditions~\ref{existenceOne}--\ref{existenceSix} hold. By Definition~\ref{definition:Dynamization}, Conditions~\ref{existenceTwo}--\ref{existenceSix} imply (by structural induction on subterms) that $\subtermToBag(\abstractdecomposition)=\dynamizationfunction{\dpcore}{\treewidthvalue}(\abstractdecomposition)$.
By the identity-cleaning hypothesis, this structural induction uses only the lifted DP-transitions appearing in Conditions~\ref{existenceTwo}--\ref{existenceSix}.
Moreover, by Lemma~\ref{lemma:witness-action-identity}, we have $\action_{\dpcore}^{\treewidthvalue}(\mathrm{id}_{\abag_m},\witnessset_m)=\witnessset_m$. Therefore,
\[
\dynamizationfunction{\dpcore}{\treewidthvalue}(\abstractdecomposition)
=\subtermToBag(\abstractdecomposition)
=\action_{\dpcore}^{\treewidthvalue}(\mathrm{id}_{\abag_m},\witnessset_m)
=\witnessset_m.
\]
	Since $\dprefutation$ is an $\relabelsequence$-relabeled $(\treewidthvalue,\dpcore)$-refutation, the set $\witnessset_m$ contains no final local witness, and hence $\abstractdecomposition$ is not accepted by $\dpcore[\treewidthvalue]$.
	We claim that $\decompositiongraph{\abstractdecomposition}\notin \dpcoregraphproperty{\dpcore}$.
	Indeed, suppose for contradiction that $\decompositiongraph{\abstractdecomposition}\in \dpcoregraphproperty{\dpcore}$.
	By definition of $\dpcoregraphproperty{\dpcore}$, there exist some $\treewidthvalue'\in\N$ and some $\abstractdecomposition'\in \allabstractdecompositionstreewidth{\treewidthvalue'}$ such that $\decompositiongraph{\abstractdecomposition'}\simeq \decompositiongraph{\abstractdecomposition}$ and $\abstractdecomposition'$ is accepted by $\dpcore[\treewidthvalue']$.
	By coherency (Definition~\ref{definition:CoherencyDPCore}), acceptance is invariant under graph isomorphism across widths, so $\abstractdecomposition$ would be accepted by $\dpcore[\treewidthvalue]$, contradicting that $\witnessset_m=\dynamizationfunction{\dpcore}{\treewidthvalue}(\abstractdecomposition)$ contains no final witness.
	Therefore $\decompositiongraph{\abstractdecomposition}\notin \dpcoregraphproperty{\dpcore}$.

\end{proof}

\subsection{Canonical Pairs and Canonized Refutations}
\label{subsection:CanonicalRefutation}

Let $U$ be an ordered set of elements, and let $X\subseteq U$ be finite. We let $\vectorfromset(X)$ be the vector obtained by
ordering the elements of $X$ from the smallest
value to the largest value. For instance, if $X = \{2,3,5\}$, then $\vectorfromset(X) = (2,3,5)$.
Given two such subsets $X,X'\subseteq U$, we say
that $X<X'$ if $\vectorfromset(X)$ is
lexicographically smaller than $\vectorfromset(X')$.
We use the standard lexicographic order on finite sequences: compare the first position where they differ, and if one sequence is a strict prefix of the other then the shorter sequence is smaller.

Let $\dpcore$ be a DP-core and fix an arbitrary
order for the set $\dpcore[\treewidthvalue].\allwitnesses$. For instance,
this order can be simply the lexicographic order on strings. We say that a pair $(\abag,\witnessset)$
is smaller than $(\abag',\witnessset')$ if the pair
$(\vectorfromset(\abag), \vectorfromset(\witnessset))$ is lexicographically smaller than the pair
$(\vectorfromset(\abag'), \vectorfromset(\witnessset'))$.
Any fixed total order yields a correct canonization procedure; the specific order only affects which representative is chosen.

\begin{definition}[Canonical Pair]\label{definition:canonicalpair}
Let $\abag\subseteq [\treewidthvalue+1]$ and let $\witnessset\in \finitepowerset{\dpcore[\treewidthvalue].\allwitnesses}$ be such that $(\abag,\witnessset)$ is well-formed, i.e., $\mathrm{Lbl}_{\treewidthvalue}(\witnessset)\subseteq \abag$.
$$\canonic_{\dpcore}^k(\abag,\witnessset) = \min \{(\relabelingfunction(\abag),\action_{\dpcore}^{\treewidthvalue}(\relabelingfunction,\witnessset))\;|\; \relabelingfunction\in \relabelclass_\treewidthvalue, \mathit{\domain(\relabelingfunction) = \abag}\}.$$
We call any function $f:\abag\rightarrow [\treewidthvalue+1]$ where the minimum in the above equation is achieved a
{\em canonical relabeling of $(\abag,\witnessset)$}.
We say that $(\abag,\witnessset)$ is \emph{canonical} if $\canonic_{\dpcore}^k(\abag,\witnessset)=(\abag,\witnessset)$.
\end{definition}

\begin{lemma}[Canonical representatives are orbit-invariant]
\label{lemma:canonical-orbit-invariant}
Let $\dpcore[\treewidthvalue]$ be equipped with a witness action. Let $(\abag,\witnessset)$ be well-formed and let $f:\abag\to[\treewidthvalue+1]$ be injective. If
\[
(\abag',\witnessset')=(f(\abag),\action_{\dpcore}^{\treewidthvalue}(f,\witnessset)),
\]
then
\[
\canonic_{\dpcore}^k(\abag',\witnessset')
=
\canonic_{\dpcore}^k(\abag,\witnessset).
\]
\end{lemma}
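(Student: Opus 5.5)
The plan is to show that the two sets over which the minimum in Definition~\ref{definition:canonicalpair} is taken coincide. Write
\[
O(\abag,\witnessset)\defeq\{(\relabelingfunction(\abag),\action_{\dpcore}^{\treewidthvalue}(\relabelingfunction,\witnessset)) : \relabelingfunction\in\relabelclass_\treewidthvalue,\ \domain(\relabelingfunction)=\abag\}.
\]
It suffices to prove $O(\abag',\witnessset')=O(\abag,\witnessset)$, since $\canonic_{\dpcore}^k$ is the minimum of this finite set under a fixed total order. First I will check that $(\abag',\witnessset')$ is well-formed, so that $\canonic_{\dpcore}^k(\abag',\witnessset')$ is defined. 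Label support transport (Condition~\ref{condition:relabeldpcore-condition-support-transport}), applied pointwise, gives $\mathrm{Lbl}_{\treewidthvalue}(\witnessset')=f(\mathrm{Lbl}_{\treewidthvalue}(\witnessset))\subseteq f(\abag)=\abag'$.

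For the inclusion $O(\abag',\witnessset')\subseteq O(\abag,\witnessset)$, take any injective $g$ with $\domain(g)=\abag'=f(\abag)$. Then $g\circ f$ is a $\treewidthvalue$-relabeling with domain exactly $\abag$, and $(g\circ f)(\abag)=g(\abag')$. By composition (Condition~\ref{condition:relabeldpcore-condition-three}), $\action_{\dpcore}^{\treewidthvalue}(g\circ f,\witnessset)=\action_{\dpcore}^{\treewidthvalue}(g,\action_{\dpcore}^{\treewidthvalue}(f,\witnessset))=\action_{\dpcore}^{\treewidthvalue}(g,\witnessset')$. These expressions are supported because $\mathrm{Lbl}(\witnessset)\subseteq\abag$ and $\mathrm{Lbl}(\witnessset')\subseteq\abag'$. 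So every element of $O(\abag',\witnessset')$ lies in $O(\abag,\witnessset)$.

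For the reverse inclusion, take any injective $h$ with $\domain(h)=\abag$. Set $g\defeq h\circ f^{-1}$, where $f^{-1}$ is the inverse on $f(\abag)=\abag'$. Then $g$ is injective with domain $\abag'$, and $g(\abag')=h(\abag)$. Composition gives $\action(g,\witnessset')=\action(h\circ f^{-1},\action(f,\witnessset))=\action(h,\action(f^{-1},\action(f,\witnessset)))$. Inverse relabeling (Condition~\ref{condition:relabeldpcore-condition-two}) then reduces this to $\action(h,\witnessset)$. All intermediate applications are supported, again using label support transport for $\witnessset'$.

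The main point of care is this bookkeeping: each composed relabeling must have exactly the domain required in the definition of $O(\cdot)$, and each action application must be supported. One way to handle it is to apply composition pointwise to witnesses $\awitness\in\witnessset$. There, $\mathrm{Lbl}(\awitness)\subseteq\abag=\domain(f)$ and $f(\domain f)=\domain(g)$, so $g\circ f$ is a genuine total relabeling on $\abag$. Once the two orbit sets are equal, their minima agree, which proves the lemma.
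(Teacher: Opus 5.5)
Your proposal is correct and follows essentially the same argument as the paper: both show that the two candidate sets in Definition~\ref{definition:canonicalpair} coincide. One inclusion uses $g\mapsto g\circ f$ with the composition axiom, and the other uses $h\mapsto h\circ f^{-1}$ with composition plus inverse relabeling. Your explicit check, via label support transport, that $(\abag',\witnessset')$ is well-formed (so its canonical form is defined) is bookkeeping the paper leaves implicit.
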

\begin{proof}
The two minima in Definition~\ref{definition:canonicalpair} are taken over the same relabeling orbit. Indeed, for every relabeling $h:\abag'\to[\treewidthvalue+1]$, the composition $h\circ f:\abag\to[\treewidthvalue+1]$ is a relabeling and, by the composition axiom,
\[
(h(\abag'),\action_{\dpcore}^{\treewidthvalue}(h,\witnessset'))
=
((h\circ f)(\abag),\action_{\dpcore}^{\treewidthvalue}(h\circ f,\witnessset)).
\]
Conversely, for every relabeling $g:\abag\to[\treewidthvalue+1]$, the map $g\circ f^{-1}:\abag'\to[\treewidthvalue+1]$ is a relabeling and inverse relabeling plus composition give
\[
((g\circ f^{-1})(\abag'),\action_{\dpcore}^{\treewidthvalue}(g\circ f^{-1},\witnessset'))
=
(g(\abag),\action_{\dpcore}^{\treewidthvalue}(g,\witnessset)).
\]
Thus the two sets of candidate representatives are equal, and their lexicographic minima are equal.
\end{proof}

By Lemma~\ref{lemma:well-formedness-preserved}, if $\dpcore$ is interface-respecting then all pairs arising in refutations and in dynamization are well-formed, and therefore $\canonic_{\dpcore}^k$ is defined for all pairs considered in this section. Moreover, since the bag component is compared before the witness-set component, canonization normalizes bags: any two canonical pairs with bags of the same size have the same bag component.

Intuitively, given a fixed action $\action_{\dpcore}^{\treewidthvalue}$ the canonical form
$\canonic_{\dpcore}^k(\abag,\witnessset)$
of a pair $(\abag,\witnessset)$ is the lexicographically smallest pair obtained by relabeling the elements of $\abag$ and each string $\awitness\in \witnessset$ according to some relabeling function $\relabelingfunction:\abag\rightarrow [\treewidthvalue+1]$. A \emph{canonical relabeling} of $(\abag,\witnessset)$ is any function $\relabelingfunction:\abag\rightarrow [\treewidthvalue +1]$
with the property that $(\relabelingfunction(\abag),\action_{\dpcore}^{\treewidthvalue}(\relabelingfunction,\witnessset))  = \canonic_{\dpcore}^k(\abag,\witnessset)$.

For example, consider $k=3$, bag $\abag=\{2,4\}$, and a singleton witness set $\witnessset=\{\gamma\}$ where (for example, in the max-degree core) $\gamma$ is a map with $\gamma(2)=0$ and $\gamma(4)=1$.
Let $f_1(2)=1,f_1(4)=2$ and $f_2(2)=2,f_2(4)=1$.
Then $\action_{\dpcore}^{\treewidthvalue}(f_1,\witnessset)=\{\gamma_1\}$ where $\gamma_1(1)=0,\gamma_1(2)=1$, while $\action_{\dpcore}^{\treewidthvalue}(f_2,\witnessset)=\{\gamma_2\}$ where $\gamma_2(1)=1,\gamma_2(2)=0$.
Canonization selects the lexicographically smallest among these relabeled pairs.

Next, we define the notion of a canonized refutation. Intuitively, such a refutation is obtained by canonizing the $(\treewidthvalue,\dpcore)$-pairs occurring in a
$(\treewidthvalue,\dpcore)$-refutation.

	\begin{definition}[Canonized Refutation]\label{definition:canonizedrefutation}
	Let $\dpcore$ be a coherent DP-core, and $\treewidthvalue\in \N$, $F=(f_1,\dots,f_m)$ be a sequence of relabelings, and
	$\dprefutation \equiv (\abag_0,\witnessset_0)(\abag_1,\witnessset_1)\dots (\abag_m,\witnessset_m)$
	be an $F$-relabeled $(\treewidthvalue,\dpcore)$-refutation. We say that $\dprefutation$ is canonized if for each $i\in \{0,1,\dots,m\}$, the pair $(\abag_i,\witnessset_i)$ is canonical.
	\end{definition}

Canonized refutations formalize quotienting the state space by interface relabelings: after each DP step we replace the resulting state by its canonical representative. Witness actions are the key requirement that makes this sound: they ensure that relabelings commute with the DP transitions, so canonizing ``on the fly'' preserves reachability of inconsistent states.

Our main theorem (Theorem~\ref{theorem:ProvabilityPreserving}) states that if $\dpcore$ is a coherent interface-respecting DP-core with identity cleaning and equipped with a witness action, then the existence of a canonized
$(\treewidthvalue,\dpcore)$-refutation is equivalent to the existence of a graph of treewidth
at most $\treewidthvalue$ that does not belong to the property
$\dpcoregraphproperty{\dpcore}$ defined by $\dpcore$.

\begin{restatable}{theorem}{theoremprovabilitypreserving}
\label{theorem:ProvabilityPreserving}
Let $\dpcore$ be a coherent interface-respecting DP-core with identity cleaning and equipped with a witness action and $\treewidthvalue\in \N$.
Then there is a canonized $(\treewidthvalue,\dpcore)$-refutation if and only if some graph of treewidth at most
$\treewidthvalue$ does not belong to $\dpcoregraphproperty{\dpcore}$.
\end{restatable}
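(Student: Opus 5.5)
The plan is to prove the two directions separately. The forward direction is essentially already done. The backward direction converts an ordinary refutation into a canonized one by canonizing each state on the fly.

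\emph{Soundness.} Suppose a canonized $(\treewidthvalue,\dpcore)$-refutation exists. By Definition~\ref{definition:canonizedrefutation} it is in particular an $F$-relabeled refutation. Theorem~\ref{theorem:counterexample-existence-relabel-refutation} then yields $\abstractdecomposition\in\allabstractdecompositionstreewidth{\treewidthvalue}$ with $\decompositiongraph{\abstractdecomposition}\notin\dpcoregraphproperty{\dpcore}$. Since $\decompositiongraph{\abstractdecomposition}$ has treewidth at most $\treewidthvalue$ (Lemma~\ref{lemma:TreewidthIsTreelikeConfirmation}), this direction is finished.

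\emph{Completeness.} Suppose some graph of treewidth at most $\treewidthvalue$ is not in $\dpcoregraphproperty{\dpcore}$. By Theorem~\ref{theorem:EquivalenceRefutation} there is an ordinary refutation $(\abag_0,\witnessset_0)\dots(\abag_m,\witnessset_m)$, which is well-formed by Lemma~\ref{lemma:well-formedness-preserved}. I define $(\abag_i',\witnessset_i')\defeq\canonic_{\dpcore}^k(\abag_i,\witnessset_i)$ and fix a canonical relabeling $c_i$ with $(\abag_i',\witnessset_i')=(c_i(\abag_i),\action_{\dpcore}^{\treewidthvalue}(c_i,\witnessset_i))$.

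Three facts about this sequence need checking.
\begin{itemize}
\item Each new state is canonical. This follows from Lemma~\ref{lemma:canonical-orbit-invariant}: $\canonic(\abag_i',\witnessset_i')=\canonic(\abag_i,\witnessset_i)=(\abag_i',\witnessset_i')$.
\item Each new state is well-formed, by label support transport.
\item The last state stays inconsistent, by Lemma~\ref{lemma:witness-action-finality-sets}.
\end{itemize}
The initial state must also be handled. The state $(\emptyset,\initialsetgeneric)$ is canonical, because the only relabeling with empty domain is the empty one, and it acts as the identity on $\initialsetgeneric$ by Lemma~\ref{lemma:witness-action-identity} together with interface-respecting Condition~\ref{interface-respecting-initial}.

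\emph{The main step} is showing that each canonized state is derivable from earlier canonized states by a rule of Definition~\ref{definition:RelabeledRefutation}, with a suitable $f_i$. Take the introduce-vertex case, $(\abag_i,\witnessset_i)=(\abag_j\cup\{\vertexone\},\introvertextype{\vertexone}(\witnessset_j))$. I would apply the rule to $(\abag_j',\witnessset_j')$ with the label $c_j'(\vertexone)$, where $c_j'$ is an injective extension of $c_j$ to $\abag_j\cup\{\vertexone\}$ given by Lemma~\ref{lemma:injective-extension}. Note that $c_j'(\vertexone)\notin\abag_j'$ by injectivity.

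Equivariance (Condition~\ref{condition:relabeldpcore-condition-four}) together with extension invariance gives
\[
\introvertextype{c_j'(\vertexone)}(\witnessset_j')=\action_{\dpcore}^{\treewidthvalue}(c_j',\witnessset_i).
\]
I then set $f_i\defeq c_i\circ (c_j')^{-1}$ on $c_j'(\abag_i)$. By the composition axiom and the support bound $\mathrm{Lbl}(\witnessset_i)\subseteq\abag_i$, the output of the rule is exactly $(c_i(\abag_i),\action_{\dpcore}^{\treewidthvalue}(c_i,\witnessset_i))=(\abag_i',\witnessset_i')$.

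The forget-vertex and introduce-edge cases are the same argument with $c_j$ itself. For forget-vertex, I use $\forgetvertextype{c_j(\vertexone)}$, apply extension invariance to pass from $c_j$ to its restriction, and take $f_i= c_i\circ c_j^{-1}|$.

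The join case is where I expect the only real subtlety. The two parent states $(\abag_j,\witnessset_j)$ and $(\abag_l,\witnessset_l)$ share the bag $\abag_j=\abag_l$, but their canonical relabelings $c_j,c_l$ may differ. Since the bag component is minimized first, $\abag_j'=\abag_l'$, so the rule's side condition holds. I then take the permutation $\pi\defeq c_j\circ c_l^{-1}\in\mathrm{Perm}(\abag_j')$. Composition gives $\action_{\dpcore}^{\treewidthvalue}(\pi,\witnessset_l')=\action_{\dpcore}^{\treewidthvalue}(c_j,\witnessset_l)$. Join equivariance then yields
\[
\jointype\bigl(\witnessset_j',\action_{\dpcore}^{\treewidthvalue}(\pi,\witnessset_l')\bigr)=\action_{\dpcore}^{\treewidthvalue}\bigl(c_j,\jointype(\witnessset_j,\witnessset_l)\bigr),
\]
and I finish with $f_i=c_i\circ c_j^{-1}$ as in the other cases. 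Throughout, the care needed is in tracking domains, so that every action application is supported; this uses Lemma~\ref{lemma:injective-image-containment} and well-formedness, exactly as in the proof of Lemma~\ref{lemma:tree decomposition-existence-sequence-relabel-semi-refutation}.
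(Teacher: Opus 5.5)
Your proposal is correct and matches the paper's proof essentially step for step. Soundness goes through Theorem~\ref{theorem:counterexample-existence-relabel-refutation}. Completeness canonizes each state of an ordinary refutation with a canonical relabeling $c_i$ and re-derives it with $f_i=c_i\circ c_j^{-1}$, using an injective extension of $c_j$ for introduce-vertex and $\pi=c_j\circ c_l^{-1}$ for join. You also explicitly verify two points the paper treats as immediate: that $(\emptyset,\dpcore[\treewidthvalue].\initialsetgeneric)$ is canonical, and that each canonized pair is canonical via Lemma~\ref{lemma:canonical-orbit-invariant}.
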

\begin{proof}
\noindent($\Rightarrow$)
If there is a canonized $(\treewidthvalue,\dpcore)$-refutation, then in particular there is an $\relabelsequence$-relabeled $(\treewidthvalue,\dpcore)$-refutation. By Theorem~\ref{theorem:counterexample-existence-relabel-refutation}, there exists a $\treewidthvalue$-instructive tree decomposition $\abstractdecomposition$ with $\decompositiongraph{\abstractdecomposition}\notin \dpcoregraphproperty{\dpcore}$.

\noindent($\Leftarrow$)
Assume that some graph of treewidth at most $\treewidthvalue$ does not belong to $\dpcoregraphproperty{\dpcore}$. By Theorem~\ref{theorem:EquivalenceRefutation}, there exists a $(\treewidthvalue,\dpcore)$-refutation
\[
\dprefutation = (\abag_0,\witnessset_0)(\abag_1,\witnessset_1)\dots (\abag_{m},\witnessset_m)
\]
in the sense of Definition~\ref{definition:DPRefutation}.
Since $\dpcore$ is interface-respecting, Lemma~\ref{lemma:well-formedness-preserved} implies that $\dprefutation$ is well-formed, and therefore each application of $\action_{\dpcore}^{\treewidthvalue}(\cdot,\witnessset_i)$ below is well-defined.
For each $i\in \{0,1,\dots,m\}$ choose a relabeling function $c_i:\abag_i\to[\treewidthvalue+1]$ that attains the minimum in Definition~\ref{definition:canonicalpair}, and write
\[
(\bar\abag_i,\bar\witnessset_i)\defeq (c_i(\abag_i),\action_{\dpcore}^{\treewidthvalue}(c_i,\witnessset_i))
=\canonic_{\dpcore}^k(\abag_i,\witnessset_i).
\]
By construction, every $(\bar\abag_i,\bar\witnessset_i)$ is canonical.
Moreover, since $\witnessset_m$ contains no final witness, Lemma~\ref{lemma:witness-action-finality-sets} implies that also $\bar\witnessset_m$ contains no final witness.

We claim that $\bar\dprefutation\defeq (\bar\abag_0,\bar\witnessset_0)\dots(\bar\abag_m,\bar\witnessset_m)$ is an $\relabelsequence$-relabeled refutation for a suitable relabeling sequence $\relabelsequence$, and hence a canonized refutation.
Fix $i\in [m]$ and let $j\in \{0\}\cup[i-1]$ (and in the join case also $l\in \{0\}\cup[i-1]$) witness the derivation of $(\abag_i,\witnessset_i)$ from earlier state(s) in Definition~\ref{definition:DPRefutation}. We show that $(\bar\abag_i,\bar\witnessset_i)$ is derivable from $(\bar\abag_j,\bar\witnessset_j)$ (and $(\bar\abag_l,\bar\witnessset_l)$) according to Definition~\ref{definition:RelabeledRefutation}. We use the DP-core consistency axioms (Definition~\ref{definition:witnessaction}.\ref{condition:relabeldpcore-condition-four}--\ref{condition:relabeldpcore-condition-seven}) and the composition/inverse axioms (Definition~\ref{definition:witnessaction}.\ref{condition:relabeldpcore-condition-two}--\ref{condition:relabeldpcore-condition-three}), applied pointwise to witness sets.

\begin{enumerate}
\item \textbf{Introduce vertex.}
If $(\abag_i,\witnessset_i)=(\abag_j\cup\{\vertexone\},\introvertextype{\vertexone}(\witnessset_j))$ with $\vertexone\notin \abag_j$, choose an injective extension $\hat c_j:\abag_j\cup\{\vertexone\}\to[\treewidthvalue+1]$ of $c_j$ (which exists by Lemma~\ref{lemma:injective-extension}) and let $\vertexone'\defeq \hat c_j(\vertexone)$ (so $\vertexone'\notin \bar\abag_j$). Set $f_i\defeq c_i\circ \hat c_j^{-1}$. Then $\bar\abag_i=f_i(\bar\abag_j\cup\{\vertexone'\})$ and, by the intro-vertex axiom and composition,
\[
\bar\witnessset_i
=\action_{\dpcore}^{\treewidthvalue}(c_i,\introvertextype{\vertexone}(\witnessset_j))
=\action_{\dpcore}^{\treewidthvalue}(f_i,\introvertextype{\vertexone'}(\bar\witnessset_j)).
\]

\item \textbf{Forget vertex.}
If $(\abag_i,\witnessset_i)=(\abag_j\setminus\{\vertexone\},\forgetvertextype{\vertexone}(\witnessset_j))$ with $\vertexone\in \abag_j$, let $\vertexone'\defeq c_j(\vertexone)$ and define
\[
f_i:\bar\abag_j\setminus\{\vertexone'\}\to[\treewidthvalue+1],
\qquad
f_i(x)\defeq c_i(c_j^{-1}(x))
\ \text{ for each }\ x\in \bar\abag_j\setminus\{\vertexone'\}.
\]
Let $S^\star\defeq \forgetvertextype{\vertexone}(\witnessset_j)$. By interface-respectingness (Definition~\ref{definition:interface-respecting-dpcore}.\ref{interface-respecting-forgetvertex}), we have $\mathrm{Lbl}_{\treewidthvalue}(S^\star)\subseteq \abag_j\setminus\{\vertexone\}$.
Moreover, for each $x\in \abag_j\setminus\{\vertexone\}$ we have $c_i(x)=f_i(c_j(x))$, and therefore extension invariance and composition yield
\[
\action_{\dpcore}^{\treewidthvalue}(c_i,S^\star)
=\action_{\dpcore}^{\treewidthvalue}\bigl(f_i\circ (c_j|_{\abag_j\setminus\{\vertexone\}}),S^\star\bigr)
=\action_{\dpcore}^{\treewidthvalue}\bigl(f_i,\action_{\dpcore}^{\treewidthvalue}(c_j,S^\star)\bigr).
\]
By the forget-vertex axiom (Condition~\ref{condition:relabeldpcore-condition-five}),
\[
\action_{\dpcore}^{\treewidthvalue}(c_j,S^\star)
=\forgetvertextype{\vertexone'}(\action_{\dpcore}^{\treewidthvalue}(c_j,\witnessset_j))
=\forgetvertextype{\vertexone'}(\bar\witnessset_j).
\]
Then $\bar\abag_i=f_i(\bar\abag_j\setminus\{\vertexone'\})$ and, using the forget-vertex axiom and composition,
\[
\bar\witnessset_i
=\action_{\dpcore}^{\treewidthvalue}(c_i,\forgetvertextype{\vertexone}(\witnessset_j))
=\action_{\dpcore}^{\treewidthvalue}(f_i,\forgetvertextype{\vertexone'}(\bar\witnessset_j)).
\]

\item \textbf{Introduce edge.}
If $(\abag_i,\witnessset_i)=(\abag_j,\introedgegeneric{\vertexone}{\vertextwo}(\witnessset_j))$ with $\vertexone,\vertextwo\in \abag_j$, let $\vertexone'\defeq c_j(\vertexone)$, $\vertextwo'\defeq c_j(\vertextwo)$ and set $f_i\defeq c_i\circ c_j^{-1}$. Then $\bar\abag_i=f_i(\bar\abag_j)$ and by the intro-edge axiom and composition,
\[
\bar\witnessset_i
=\action_{\dpcore}^{\treewidthvalue}(c_i,\introedgegeneric{\vertexone}{\vertextwo}(\witnessset_j))
=\action_{\dpcore}^{\treewidthvalue}(f_i,\introedgegeneric{\vertexone'}{\vertextwo'}(\bar\witnessset_j)).
\]

\item \textbf{Join.}
If $(\abag_i,\witnessset_i)=(\abag_j,\jointype(\witnessset_j,\witnessset_l))$ for some $l\in \{0\}\cup[i-1]$ with $\abag_l=\abag_j$, note that $\bar\abag_l=\bar\abag_j$ because both are canonical bags and $|\abag_l|=|\abag_j|$. Let $\pi\defeq c_j\circ c_l^{-1}$; then $\pi\in \mathrm{Perm}(\bar\abag_j)$. Set $f_i\defeq c_i\circ c_j^{-1}$. Using the join axiom, inverse, and composition we obtain
\[
\bar\witnessset_i
=\action_{\dpcore}^{\treewidthvalue}(c_i,\jointype(\witnessset_j,\witnessset_l))
=\action_{\dpcore}^{\treewidthvalue}\bigl(f_i,\jointype(\bar\witnessset_j,\action_{\dpcore}^{\treewidthvalue}(\pi,\bar\witnessset_l))\bigr),
\]
which matches the join clause of Definition~\ref{definition:RelabeledRefutation}.
\end{enumerate}

Thus $\bar\dprefutation$ is a canonized $(\treewidthvalue,\dpcore)$-refutation.
\end{proof}

\subsection{Symmetry-Aware Search}\label{subsec:symmetry-search}

We now give a symmetry-aware state-space search procedure for a fixed width parameter $\treewidthvalue$.
Here $\dpcore$ denotes a $\treewidthvalue$-abstract DP-core; in general, $\dpcore$ may itself be obtained as a \emph{$\treewidthvalue$-abstract DP-combination} (i.e., by combining several DP-cores according to a Boolean formula) as in~\cite{de2023width}.
Here ``$\treewidthvalue$-abstract'' means that the DP-core (or combination) is specialized to the fixed interface label universe $[\treewidthvalue+1]$.

We explore the implicit transition system of $(\treewidthvalue,\dpcore)$-states by a standard breadth-first search, starting from the initial state.
Each generated successor is immediately canonized, so the search runs directly on canonical representatives in the quotient space induced by relabelings.
Algorithm~\ref{algorithm:InclusionTestCombination} specifies the exploration and returns a refutation trace as soon as an inconsistent state is reached (otherwise it reports that inclusion holds).
In the join case, the current state is combined with previously processed states with the same bag (in both argument orders), considering all permutations of the shared interface.

	\begin{algorithm}[H]
	\caption{Relabeled Inclusion Test}
	\label{algorithm:InclusionTestCombination}
	\footnotesize
	\Input{A finite, coherent, interface-respecting $\treewidthvalue$-abstract DP-combination $\dpcore$ with identity cleaning and equipped with a witness action}
		\Output{A canonized relabeled $(\treewidthvalue,\dpcore)$-refutation trace $R$ if there exists $G\in \allgraphstreewidth{\treewidthvalue}$ with $G\notin \dpcoregraphproperty{\dpcore}$, and "Inclusion Holds", otherwise.}
	$R \gets []$\;
	$Seen \gets \emptyset$\;
	$Init \gets \canonic_{\dpcore}^k(\emptyset,\dpcore[\treewidthvalue].\initialsetgeneric)$\;
	$Y \gets [Init]$\;
	$Seen \gets Seen \cup \{Init\}$\;
	\While{$Y\neq \emptyset$}{
		Remove the next pair $(\abag,\mathbf{\witnessset})$ from $Y$ and append it to $R$\;
	\eIf{$(\abag,\mathbf{\witnessset})$
	    is an inconsistent $(\treewidthvalue,\dpcore)$-state,}{
	    \Return $R$\;
	}{
		\tcp{Introduce-vertex successors}
		\ForEach{$\vertexone\in [\treewidthvalue+1]\setminus \abag$}{
			$X \leftarrow (\abag\cup \{\vertexone\},\dpcore.\introvertextype{\vertexone}(\mathbf{\witnessset}))$ \;
			$\bar{X} \gets \canonic_{\dpcore}^k(X)$\;
			\If{$\bar{X} \notin Seen$}{
				$Seen \gets Seen \cup \{\bar{X}\}$\;
				Append $\bar{X}$ to $Y$\;
			}
		}
		\tcp{Forget-vertex successors}
		\ForEach{$\vertexone\in \abag$}{
			$X\leftarrow (\abag\setminus \{\vertexone\},\dpcore.\forgetvertextype{\vertexone}(\mathbf{\witnessset}))$ \;
			$\bar{X} \gets \canonic_{\dpcore}^k(X)$\;
			\If{$\bar{X} \notin Seen$}{
				$Seen \gets Seen \cup \{\bar{X}\}$\;
				Append $\bar{X}$ to $Y$\;
			}
		}
		\tcp{Introduce-edge successors}
		\ForEach{$\vertexone,\vertextwo\in \abag$ with $\vertexone\neq \vertextwo$}{
			$X\leftarrow (\abag,\dpcore.\introedgegeneric{\vertexone}{\vertextwo}(\mathbf{\witnessset}))$ \;
			$\bar{X} \gets \canonic_{\dpcore}^k(X)$\;
			\If{$\bar{X} \notin Seen$}{
				$Seen \gets Seen \cup \{\bar{X}\}$\;
				Append $\bar{X}$ to $Y$\;
			}
		}
		\tcp{Join successors with all earlier same-bag states (both argument orders)}
		\ForEach{$(\abag',\mathbf{\witnessset}')\in R$ with $\abag= \abag'$}{
			\ForEach{$\pi\in \mathrm{Perm}(\abag)$}{
				$X\leftarrow (\abag, \dpcore.\jointype(\mathbf{\witnessset},\action_{\dpcore}^{\treewidthvalue}(\pi,\mathbf{\witnessset}')))$\;
				$\bar{X} \gets \canonic_{\dpcore}^k(X)$\;
				\If{$\bar{X} \notin Seen$}{
					$Seen \gets Seen \cup \{\bar{X}\}$\;
					Append $\bar{X}$ to $Y$\;
				}
				$X\leftarrow (\abag, \dpcore.\jointype(\mathbf{\witnessset}',\action_{\dpcore}^{\treewidthvalue}(\pi,\mathbf{\witnessset})))$\;
				$\bar{X} \gets \canonic_{\dpcore}^k(X)$\;
				\If{$\bar{X} \notin Seen$}{
					$Seen \gets Seen \cup \{\bar{X}\}$\;
					Append $\bar{X}$ to $Y$\;
				}
			}
		}
	}
	}
	\Return "Inclusion Holds"
	\end{algorithm}

		\begin{lemma}[Algorithm~\ref{algorithm:InclusionTestCombination} returns a canonized refutation]
		\label{lemma:algorithm-returns-canonized-refutation}
			Assume $\dpcore$ is coherent, interface-respecting, has identity cleaning, and is equipped with a witness action.
	If Algorithm~\ref{algorithm:InclusionTestCombination} returns a list $R$ (i.e., it returns from the inconsistent-pair branch), then there exists a relabeling sequence $\relabelsequence$ such that $R$ is a canonized $(\treewidthvalue,\dpcore)$-refutation in the sense of Definition~\ref{definition:canonizedrefutation}.
		\end{lemma}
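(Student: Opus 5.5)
The plan is to take the returned list $R=(\abag_0,\witnessset_0)\dots(\abag_m,\witnessset_m)$, exactly in the order in which pairs were removed from $Y$, and build the relabeling sequence $\relabelsequence=(\relabelingfunction_1,\dots,\relabelingfunction_m)$ from the canonical relabelings chosen when each pair was first generated. The proof then checks the four conditions of Definition~\ref{definition:RelabeledRefutation} together with canonicity (Definition~\ref{definition:canonizedrefutation}).

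\textbf{Step 1: the endpoints.} The first element of $R$ is $Init=\canonic_{\dpcore}^k(\emptyset,\dpcore[\treewidthvalue].\initialsetgeneric)$. The only relabeling with domain $\emptyset$ is the empty map, and by Condition~\ref{interface-respecting-initial} we have $\mathrm{Lbl}_{\treewidthvalue}(\dpcore[\treewidthvalue].\initialsetgeneric)=\emptyset$. Hence Lemma~\ref{lemma:witness-action-identity} gives $Init=(\emptyset,\dpcore[\treewidthvalue].\initialsetgeneric)$, which is Condition~\ref{sequence-relabeling-condition-one}. The last pair of $R$ is the one on which the algorithm returned, so it is inconsistent by the test in the loop; this is Condition~\ref{sequence-relabeling-condition-two}.

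\textbf{Step 2: parents precede children.} Every pair in $R$ other than the first was appended to $Y$ as some $\bar X=\canonic_{\dpcore}^k(X)$. Here $X$ was computed while processing a pair $(\abag,\witnessset)$ that had already been appended to $R$, and, in the join case, also from some $(\abag',\witnessset')$ already in $R$; this may be the current pair itself. Since $Y$ is FIFO and every pair in $R$ was dequeued, $\bar X$ is appended to $R$ strictly after all of these pairs. So the parent indices $j$ and $l$ lie in $\{0\}\cup[i-1]$.

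We then choose $\relabelingfunction_i$ to be a canonical relabeling of $X$. Its domain is exactly the bag of $X$, namely $\abag\cup\{\vertexone\}$, $\abag\setminus\{\vertexone\}$, $\abag$, or $\abag$. Comparing with the four alternatives of Condition~\ref{sequence-relabeling-condition-three} shows that each case matches syntactically:
\begin{itemize}
\item For introduce vertex, forget vertex and introduce edge, $(\abag_i,\witnessset_i)$ equals the displayed pair with $j$ the index of the processed state.
\item For the first join order, take $j$ to be the current state, $l$ the earlier state $(\abag',\witnessset')$, and $\pi$ the permutation used by the algorithm.
\item For the swapped join order, exchange the roles of $j$ and $l$.
\end{itemize}
In both join orders, $\pi\in\mathrm{Perm}(\abag)$ and the bags of the two arguments coincide.

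\textbf{Step 3: well-formedness and canonicity.} Well-formedness (Condition~\ref{sequence-relabeling-condition-wellformed}) is proved by induction along $R$, and this is the step I expect to need the most care, because canonization is only defined on well-formed pairs. Assume every earlier pair is well-formed.
\begin{itemize}
\item The raw successor $X$ is well-formed by the relevant clause of Definition~\ref{definition:interface-respecting-dpcore}. In the join case one first notes that $\action_{\dpcore}^{\treewidthvalue}(\pi,\witnessset')$ has labels in $\pi(\abag)=\abag$, by label support transport.
\item Label support transport then shows that $\bar X$ is well-formed.
\end{itemize}
Finally, each pair in $R$ has the form $\canonic_{\dpcore}^k(Y)$ for some well-formed $Y$. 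By Lemma~\ref{lemma:canonical-orbit-invariant}, applied with the canonical relabeling $f$ of $Y$, we get $\canonic_{\dpcore}^k(\canonic_{\dpcore}^k(Y))=\canonic_{\dpcore}^k(Y)$, so every pair in $R$ is canonical. This yields the claim.
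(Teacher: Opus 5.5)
Your proposal is correct and follows essentially the same route as the paper. Like the paper, you establish the well-formedness invariant for every raw successor and its canonization, then take a canonical relabeling of each raw successor $X$ as $f_i$, so that each entry of $R$ matches the corresponding clause of Definition~\ref{definition:RelabeledRefutation}. You are in fact slightly more careful than the paper on two points it asserts without comment: you check that the first entry is literally the initial pair (the empty relabeling acts as the identity), and you derive canonicity of every entry from the idempotence of canonization given by Lemma~\ref{lemma:canonical-orbit-invariant}.
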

		\begin{proof}
		First note the following invariant: every pair inserted into $Seen$ or $Y$ is well-formed, and every raw successor is well-formed before it is canonized.
		The initial raw pair $(\emptyset,\dpcore[\treewidthvalue].\initialsetgeneric)$ is well-formed by Definition~\ref{definition:interface-respecting-dpcore}.\ref{interface-respecting-initial}, and its canonical representative is well-formed by label support transport.
		Assume next that the processed pair $(\abag,\witnessset)$ and, in the join case, the earlier pair $(\abag,\witnessset')$ are well-formed.
		For introduce vertex, forget vertex, and introduce edge, the corresponding interface-respecting clauses imply that the raw successor witness set has support contained in the raw successor bag.
		For join, if $\pi\in \mathrm{Perm}(\abag)$, then $\action_{\dpcore}^{\treewidthvalue}(\pi,\witnessset')$ has support contained in $\abag$ by label support transport, and the join clause of interface-respectingness gives support contained in $\abag$ for the joined witness set.
		Finally, applying $\canonic_{\dpcore}^k$ only relabels a well-formed state, so the result is well-formed again.
		Thus all generated pairs are well-formed and every call to $\canonic_{\dpcore}^k$ is well-defined.
		By construction, every pair appended to $R$ is canonical, because it is obtained as $\canonic_{\dpcore}^k(\abag,\witnessset)$ for some pair $(\abag,\witnessset)$.
	The first element of $R$ is $\canonic_{\dpcore}^k(\emptyset,\dpcore[\treewidthvalue].\initialsetgeneric)$, which equals the initial canonical pair.
	If the algorithm returns $R$, then its last element is inconsistent by the return condition.

		It remains to show that $R$ satisfies the derivation condition of Definition~\ref{definition:RelabeledRefutation}.
		Consider any pair $(\bar\abag,\bar\witnessset)$ in $R$ after the initial element.
		By inspection of the update rules, it is of the form $\canonic_{\dpcore}^k(X)$ where $X$ is obtained from some earlier pair in $R$ by applying one DP-operation (and in the join case, also using a previously processed join partner and a permutation of the shared bag).
		Write $X=(\abag_X,\witnessset_X)$.
		Let $f$ be a canonical relabeling witnessing $\canonic_{\dpcore}^k(X)=(f(\abag_X),\action_{\dpcore}^{\treewidthvalue}(f,\witnessset_X))$ as in Definition~\ref{definition:canonicalpair}.
		Then $(\bar\abag,\bar\witnessset)=(f(\abag_X),\action_{\dpcore}^{\treewidthvalue}(f,\witnessset_X))$ is derivable from the same earlier pair(s) by the corresponding clause of Definition~\ref{definition:RelabeledRefutation}.

	We can therefore choose, for each non-initial element of $R$, some such relabeling function $f_i$ witnessing its derivation from earlier element(s), yielding a relabeling sequence $\relabelsequence$ for which $R$ is an $\relabelsequence$-relabeled refutation.
	Since all pairs in $R$ are canonical, $R$ is canonized.
	\end{proof}

	\begin{theorem}[Correctness of Algorithm~\ref{algorithm:InclusionTestCombination}]
	\label{theorem:algorithm-correctness}
	Assume $\dpcore$ is finite, coherent, interface-respecting, has identity cleaning, and is equipped with a witness action.
	Then Algorithm~\ref{algorithm:InclusionTestCombination} terminates.
	Moreover, it returns ``Inclusion Holds'' if and only if $\allgraphstreewidth{\treewidthvalue}\subseteq \dpcoregraphproperty{\dpcore}$, and otherwise it returns a canonized $(\treewidthvalue,\dpcore)$-refutation.
	\end{theorem}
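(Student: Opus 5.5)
Three things need to be shown: termination, soundness of a returned trace, and completeness. Soundness is handled by Lemma~\ref{lemma:algorithm-returns-canonized-refutation} and Theorem~\ref{theorem:ProvabilityPreserving}, so the real work is completeness.

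\emph{Termination.} Since $\dpcore$ is finite, the witness set $\dpcore[\treewidthvalue].\allwitnesses$ available at width $\treewidthvalue$ is finite. Hence there are only finitely many well-formed pairs $(\abag,\witnessset)$ with $\abag\subseteq[\treewidthvalue+1]$ and $\witnessset\in\finitepowerset{\dpcore[\treewidthvalue].\allwitnesses}$, and so only finitely many canonical pairs. Every element appended to $Y$ is a canonical pair that was not already in $Seen$, and is then added to $Seen$ at once. So each canonical pair enters $Y$ at most once, and the while loop runs finitely many times. Each iteration does finitely many operations: the loops over labels, over $R$, and over $\mathrm{Perm}(\abag)$ are all finite.

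\emph{Soundness.} Suppose the algorithm returns a list $R$. By Lemma~\ref{lemma:algorithm-returns-canonized-refutation}, $R$ is a canonized $(\treewidthvalue,\dpcore)$-refutation. Theorem~\ref{theorem:ProvabilityPreserving} then gives a graph of treewidth at most $\treewidthvalue$ outside $\dpcoregraphproperty{\dpcore}$, so inclusion fails. Contrapositively, if inclusion holds the algorithm cannot return a trace, and by termination it must return ``Inclusion Holds''.

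\emph{Completeness.} Now assume inclusion fails. By Theorem~\ref{theorem:EquivalenceRefutation} there is a plain refutation $(\abag_0,\witnessset_0)\dots(\abag_m,\witnessset_m)$, and we use the canonical pairs $(\bar\abag_i,\bar\witnessset_i)=\canonic_{\dpcore}^k(\abag_i,\witnessset_i)$ from the proof of Theorem~\ref{theorem:ProvabilityPreserving}. Suppose, for contradiction, that the algorithm exits with ``Inclusion Holds''. Then it has processed every element of $Seen$, so the final $R$ equals $Seen$ and contains no inconsistent pair. We show by induction on $i$ that $\bar X_i:=(\bar\abag_i,\bar\witnessset_i)\in Seen$.
\begin{itemize}
\item For $i=0$ this is the initialisation.
\item For a unary step from $j$, the pair $\bar X_j$ was processed, and the algorithm generated the successor along the relabeled label $\vertexone'$ (respectively $\vertexone',\vertextwo'$) from the case analysis in Theorem~\ref{theorem:ProvabilityPreserving}. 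That raw successor equals $(f_i^{-1}(\bar\abag_i),\action_{\dpcore}^{\treewidthvalue}(f_i^{-1},\bar\witnessset_i))$. By Lemma~\ref{lemma:canonical-orbit-invariant} its canonical form is $\bar X_i$, so $\bar X_i\in Seen$.
\item For a join step with partners $j,l$, whichever of $\bar X_j,\bar X_l$ was processed later finds the other already in $R$, and $\bar\abag_j=\bar\abag_l$. The algorithm tries both argument orders and every $\pi\in\mathrm{Perm}(\bar\abag_j)$. In particular it tries the $\pi=c_j\circ c_l^{-1}$ from the proof of Theorem~\ref{theorem:ProvabilityPreserving} in the order $(\bar\witnessset_j,\bar\witnessset_l)$, or $\pi^{-1}$ in the reversed order. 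Orbit invariance again puts $\bar X_i$ in $Seen$.
\end{itemize}
Hence $\bar X_m\in Seen=R$. By Lemma~\ref{lemma:witness-action-finality-sets}, $\bar X_m$ is inconsistent, and when it was processed the algorithm would have returned. This contradicts the assumption.

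The main obstacle is the join case, specifically when $\bar X_l$ is processed after $\bar X_j$. The algorithm then computes $\jointype(\bar\witnessset_l,\action_{\dpcore}^{\treewidthvalue}(\pi',\bar\witnessset_j))$ and not the order used in the refutation. The first thing I will try is the choice $\pi'=\pi^{-1}$. Write $\rho=\action_{\dpcore}^{\treewidthvalue}$. The join axiom together with composition/inverse should give $\jointype(\bar\witnessset_l,\rho(\pi^{-1},\bar\witnessset_j))=\rho(\pi^{-1},\jointype(\rho(\pi,\bar\witnessset_l),\bar\witnessset_j))$. The remaining difficulty is the argument order inside $\jointype$, which is generally not symmetric. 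To handle it, I will rely on the processing order. The element processed second always pairs with the earlier one in both orders, since the algorithm's loop emits both $\jointype(\witnessset,\cdot)$ and $\jointype(\witnessset',\cdot)$. So the order $(\bar\witnessset_j,\rho(\pi,\bar\witnessset_l))$ is always produced, and the symmetry issue disappears.
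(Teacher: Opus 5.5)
Your proof is correct and follows the paper's argument. Termination comes from the finiteness of the canonical states, soundness comes from Lemma~\ref{lemma:algorithm-returns-canonized-refutation} together with Theorem~\ref{theorem:ProvabilityPreserving}, and completeness comes from showing inductively that every state of a canonized refutation is eventually generated. The only cosmetic difference is that you build that refutation by canonizing a plain one, which inlines the ($\Leftarrow$) direction of Theorem~\ref{theorem:ProvabilityPreserving}, whereas the paper starts from an arbitrary canonized refutation and invokes the theorem at the end.

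Your final resolution of the join case is right: when $\bar X_l$ is dequeued later, the algorithm's second emitted successor $\jointype(\bar\witnessset_j,\action_{\dpcore}^{\treewidthvalue}(\pi,\bar\witnessset_l))$ is the one you need. You should drop the earlier aside about using $\pi^{-1}$ in the reversed order. That yields a state of the form $\jointype(\bar\witnessset_l,\cdot)$, and since $\jointype$ is not symmetric its canonical form need not be $\bar X_i$.
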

		\begin{proof}
		Since $\dpcore$ is finite and $\treewidthvalue$ is fixed, there are only finitely many canonical $(\treewidthvalue,\dpcore)$-states.
		The algorithm inserts a canonical state into $Seen$ at most once, so it can process only finitely many canonical states; hence the algorithm terminates.

	If the algorithm returns a list $R$, then by Lemma~\ref{lemma:algorithm-returns-canonized-refutation} there exists a relabeling sequence for which $R$ is a canonized $(\treewidthvalue,\dpcore)$-refutation.
	By Theorem~\ref{theorem:ProvabilityPreserving}, this implies $\allgraphstreewidth{\treewidthvalue}\nsubseteq \dpcoregraphproperty{\dpcore}$, so returning a refutation is sound.

			Otherwise the algorithm returns ``Inclusion Holds'', meaning that the exploration finishes without ever reaching an inconsistent canonical state.
			We claim that in this case no canonized $(\treewidthvalue,\dpcore)$-refutation exists.
			Assume for contradiction that there exists a canonized $(\treewidthvalue,\dpcore)$-refutation
			\[
			R^\star = (\abag_0,\witnessset_0)\cdots(\abag_m,\witnessset_m).
			\]
			We show by induction on $i\in \{0,\dots,m\}$ that $(\abag_i,\witnessset_i)$ is eventually generated by the algorithm.
			The base case $i=0$ holds because $(\abag_0,\witnessset_0)$ is the initial canonical pair, which equals $Init$.
			For the induction step, consider the derivation of $(\abag_i,\witnessset_i)$ from earlier pair(s) as in Definition~\ref{definition:RelabeledRefutation}.
			If it is derived by an introduce-vertex, forget-vertex, or introduce-edge step from some $(\abag_j,\witnessset_j)$ with $j<i$, then when the algorithm processes $(\abag_j,\witnessset_j)$ it enumerates the corresponding operation (over all eligible labels), canonizes the resulting successor, and therefore generates $(\abag_i,\witnessset_i)$ by Lemma~\ref{lemma:canonical-orbit-invariant}, since $(\abag_i,\witnessset_i)$ is already canonical.
			If it is derived by a join step from $(\abag_j,\witnessset_j)$ and $(\abag_\ell,\witnessset_\ell)$ with $j,\ell<i$ and some $\pi\in\mathrm{Perm}(\abag_j)$, then by the induction hypothesis both premise pairs are eventually dequeued; let $(\abag_s,\witnessset_s)$ be the one dequeued later.
			At that point the other premise is already contained in $R$, and the join loop considers all permutations and both argument orders, so it generates the corresponding join successor and inserts its canonization, which equals $(\abag_i,\witnessset_i)$ by Lemma~\ref{lemma:canonical-orbit-invariant}.
			This completes the induction.
			In particular, $(\abag_m,\witnessset_m)$ is eventually generated; since it is inconsistent, the algorithm would return from the inconsistent branch, contradicting that it returns ``Inclusion Holds''.
			Therefore no canonized refutation exists, and by Theorem~\ref{theorem:ProvabilityPreserving} we conclude $\allgraphstreewidth{\treewidthvalue}\subseteq \dpcoregraphproperty{\dpcore}$.
			Conversely, if $\allgraphstreewidth{\treewidthvalue}\subseteq \dpcoregraphproperty{\dpcore}$, then by Theorem~\ref{theorem:ProvabilityPreserving} there is no canonized refutation, so no inconsistent canonical state is reachable and the algorithm cannot return from the inconsistent branch; therefore it returns ``Inclusion Holds''.
			\end{proof}

\section{Early Pruning}
\label{section:EarlyPruning}

Our second main contribution is an early-pruning procedure that can be applied in the study of conjectures of the form $\graphproperty_1\rightarrow \graphproperty_2$, where $\graphproperty_1$ is closed under subgraphs. This reduces the search space even more because it allows us to avoid the computation of states that do not contribute to the search for a counterexample.

A graph property $\graphproperty$ is said to be closed under subgraphs if
whenever a graph $\agraph$ belongs to $\graphproperty$, we have that
every subgraph of $\agraph$ also belongs to $\graphproperty$.
For example, the property of being $3$-colorable is closed under subgraphs.
In contrast, the property of having minimum degree at least $d$ is not closed under subgraphs.
If the graph property $\dpcoregraphproperty{\dpcore}$ specified by a given coherent DP-core $\dpcore$ is of the form $\dpcoregraphproperty{\dpcore_1}\rightarrow \dpcoregraphproperty{\dpcore_2}$ for coherent DP-cores $\dpcore_1$ and $\dpcore_2$, such that $\dpcoregraphproperty{\dpcore_1}$ is closed under subgraphs, then when running our inference algorithm to
determine whether some graph of treewidth at most $\treewidthvalue$ is not contained in $\dpcoregraphproperty{\dpcore}$ we may prune the search earlier.
The following simple, but crucial observation is the basis of our specialized search.

\begin{observation}\label{remark:closed-under-subgraph}
Let $\graphproperty$ be a graph property closed under subgraphs. Let $\abstractdecomposition$ and $\abstractdecomposition'$ be $\treewidthvalue$-instructive tree decompositions such that $\abstractdecomposition$ is a subterm of $\abstractdecomposition'$. Then, if $\decompositiongraph{\abstractdecomposition}\notin \graphproperty$, then  $\decompositiongraph{\abstractdecomposition'}\notin\graphproperty$.
\end{observation}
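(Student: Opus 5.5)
The plan is to reduce the observation to a statement about graphs: whenever $\abstractdecomposition$ is a subterm of $\abstractdecomposition'$, the graph $\decompositiongraph{\abstractdecomposition}$ embeds as a subgraph of $\decompositiongraph{\abstractdecomposition'}$. Once this is available the observation is immediate by contraposition. If $\decompositiongraph{\abstractdecomposition'}\in\graphproperty$, then closure under subgraphs puts every subgraph of it in $\graphproperty$. By closure under isomorphism, every graph that embeds as a subgraph of it is also in $\graphproperty$, since such a graph is isomorphic to its image. In particular $\decompositiongraph{\abstractdecomposition}\in\graphproperty$.

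To prove the embedding claim I will use induction on the length of the path in the term tree from the root of $\abstractdecomposition'$ down to the occurrence of $\abstractdecomposition$. Embeddings compose, since injective maps preserving incidence in both directions are closed under composition. It therefore suffices to handle one step: if $\abstractdecomposition'=\asymbol(\dots,\abstractdecomposition_1,\dots)$ with $\abstractdecomposition_1$ an immediate child, then $\decompositiongraph{\abstractdecomposition_1}$ embeds into $\decompositiongraph{\abstractdecomposition'}$. I go through the instruction types using the inductive definition of $\decompositiongraph{\cdot}$ from Appendix~\ref{appendix:GraphFromDecomposition}. For $\forgetvertextype{\vertexone}$ the graph is unchanged and only the active-label map $\topmap{\cdot}$ shrinks, so the identity is an embedding. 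For $\introvertextype{\vertexone}$ one fresh isolated vertex is added, and for $\introedgetype{\vertexone}{\vertextwo}$ one fresh edge is added between existing vertices. In both cases the inclusion of the old graph is injective, and incidence is preserved in both directions: the old edges keep exactly their old endpoints, and the new edge is not in the image.

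The join case $\jointype(\abstractdecomposition_1,\abstractdecomposition_2)$ is where I expect the only real care to be needed. Here the graph is formed from the disjoint union $\decompositiongraph{\abstractdecomposition_1}\uplus\decompositiongraph{\abstractdecomposition_2}$ by identifying, for each $\vertexone\in\topbag{\abstractdecomposition_1}=\topbag{\abstractdecomposition_2}$, the two vertices carrying label $\vertexone$. I must check that the composite of renaming, inclusion into the union, and the quotient map is injective on the vertices of each side. This holds because the quotient only merges pairs consisting of one vertex from each side. Within one side, the vertices $\topmap{\abstractdecomposition_i}(\vertexone)$ for distinct labels are distinct, since each active label labels exactly one vertex, so no two vertices from the same side are merged. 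Edges are never identified, so the edge map is injective. The incidence equivalence holds because the edges coming from side $i$ are incident, after the quotient, exactly to the images of their original endpoints. The precise argument depends on how the appendix realizes the identification, but whatever renaming it uses, both children, the left one $\abstractdecomposition_1$ and the right one $\abstractdecomposition_2$, should embed in this way.

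Combining the single-step embeddings along the path from $\abstractdecomposition'$ to the occurrence of $\abstractdecomposition$ gives the required embedding, and the observation follows as described above.
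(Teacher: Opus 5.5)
Your proposal is correct and follows the paper's route: the paper derives the observation from Lemma~\ref{lemma:subterm-subgraph} (the graph of a subterm embeds into the graph of the whole term), proves that lemma by composing one-step embeddings along the term path with the join step handled by Lemma~\ref{lemma:join-child-embeddings}, and then finishes with subgraph closure plus isomorphism invariance, exactly as you do. One justification in your join case needs tightening. The fact that distinct active labels name distinct vertices is the injectivity of the interface map (Lemma~\ref{lemma:itd-interface-map-injective}, which holds because every introduce step creates a fresh vertex); it does not follow from ``each active label labels exactly one vertex,'' which only says the map is well defined.
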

\begin{proof}
Assume that $\decompositiongraph{\abstractdecomposition}\notin \graphproperty$. Since $\abstractdecomposition$ is a subterm of $\abstractdecomposition'$, Lemma~\ref{lemma:subterm-subgraph} yields that $\decompositiongraph{\abstractdecomposition}$ is isomorphic to a subgraph of $\decompositiongraph{\abstractdecomposition'}$. Suppose $\decompositiongraph{\abstractdecomposition'}\in \graphproperty$. Since $\graphproperty$ is closed under subgraphs, that subgraph belongs to $\graphproperty$; because graph properties are taken up to isomorphism, $\decompositiongraph{\abstractdecomposition}\in \graphproperty$. This contradicts the assumption that $\decompositiongraph{\abstractdecomposition}\notin \graphproperty$.
\end{proof}

Observation \ref{remark:closed-under-subgraph} implies that
in order to determine whether there is a graph of treewidth
at most $\treewidthvalue$ that does not belong to
$\dpcoregraphproperty{\dpcore}$, instead of searching for
inconsistent $(\treewidthvalue,\dpcore)$-pairs, we may instead search for inconsistent {\em $(\treewidthvalue,\dpcore_1,\dpcore_2)$-triples}. Such a triple
is a triple of the form $(\abag,\witnessset_1,\witnessset_2)$ satisfying the following properties:

\begin{enumerate}
    \item $(\abag,\witnessset_1)$ is a $(\treewidthvalue,\dpcore_1)$-pair,
    \item $(\abag,\witnessset_2)$ is a
    $(\treewidthvalue,\dpcore_2)$-pair,
    \item $\witnessset_1$ has a final local witness for $\dpcore_1$
    but $\witnessset_2$ does not have a final local witness for
    $\dpcore_2$.
\end{enumerate}

This allows a more efficient search because, since
$\dpcoregraphproperty{\dpcore_1}$ is assumed to be closed under subgraphs,
as soon as we have reached a $(\treewidthvalue,\dpcore_1,\dpcore_2)$-triple $(\abag,\witnessset_1,\witnessset_2)$ where $(\abag,\witnessset_1)$ is an inconsistent $(\treewidthvalue,\dpcore_1)$-pair, we know that no triple $(\abag,\witnessset_1',\witnessset_2')$ derived from
$(\abag,\witnessset_1,\witnessset_2)$ will be inconsistent (because $\witnessset_1'$ does not contain a final witness for $\dpcore_1$).
Therefore, we do not need to consider $(\treewidthvalue,\dpcore_1,\dpcore_2)$-triples derived from
$(\abag,\witnessset_1,\witnessset_2)$. We formalize this idea with the premise-pruned search described below.

\subsection{Premise Search Algorithm}
\label{section:premise-search-appendix}

Let $\dpcore_1$ and $\dpcore_2$ be coherent DP-cores with identity cleaning and fix $\treewidthvalue\in \N$. We write $P_i \defeq \dpcoregraphproperty{\dpcore_i}$ for $i\in\{1,2\}$ and assume that $P_1$ is closed under subgraphs.

A \emph{$(\treewidthvalue,\dpcore_1,\dpcore_2)$-state} is a triple $(\abag,\witnessset_1,\witnessset_2)$ where $\abag\subseteq [\treewidthvalue+1]$ and $\witnessset_i \in \finitepowerset{\dpcore_i[\treewidthvalue].\allwitnesses}$ for each $i\in\{1,2\}$. The initial state is $(\emptyset,\dpcore_1[\treewidthvalue].\initialsetgeneric,\dpcore_2[\treewidthvalue].\initialsetgeneric)$.

We say that a state $(\abag,\witnessset_1,\witnessset_2)$ \emph{satisfies the premise} if $\witnessset_1$ contains a final local witness for $\dpcore_1[\treewidthvalue]$. We say that it is \emph{inconsistent} if it satisfies the premise but $\witnessset_2$ contains no final local witness for $\dpcore_2[\treewidthvalue]$.
Equivalently (for reachable states), this means that the partial graph constructed so far belongs to $P_1$, since each reachable state arises from the dynamization of some subterm of an instructive tree decomposition.
For example, if $P_1$ is triangle-freeness, once a partial construction already contains a triangle then no extension can restore triangle-freeness, so we can prune immediately.

From a state $(\abag,\witnessset_1,\witnessset_2)$ we may derive successor states using the following deduction rules.
\begin{enumerate}
\item \textbf{Introduce vertex.} For each $\vertexone\in [\treewidthvalue+1]\setminus \abag$,
\[
(\abag\cup\{\vertexone\},\, \dpcore_1[\treewidthvalue].\introvertextype{\vertexone}(\witnessset_1),\, \dpcore_2[\treewidthvalue].\introvertextype{\vertexone}(\witnessset_2)).
\]
\item \textbf{Forget vertex.} For each $\vertexone\in \abag$,
\[
(\abag\setminus\{\vertexone\},\, \dpcore_1[\treewidthvalue].\forgetvertextype{\vertexone}(\witnessset_1),\, \dpcore_2[\treewidthvalue].\forgetvertextype{\vertexone}(\witnessset_2)).
\]
\item \textbf{Introduce edge.} For each two distinct $\vertexone,\vertextwo\in \abag$,
\[
(\abag,\, \dpcore_1[\treewidthvalue].\introedgegeneric{\vertexone}{\vertextwo}(\witnessset_1),\, \dpcore_2[\treewidthvalue].\introedgegeneric{\vertexone}{\vertextwo}(\witnessset_2)).
\]
\item \textbf{Join.} For any other derived state $(\abag,\witnessset_1',\witnessset_2')$ with the same bag $\abag$,
\[
(\abag,\, \dpcore_1[\treewidthvalue].\jointype(\witnessset_1,\witnessset_1'),\, \dpcore_2[\treewidthvalue].\jointype(\witnessset_2,\witnessset_2')).
\]
\end{enumerate}

We expand only derived states that satisfy the premise. States failing the premise may be recorded to avoid duplicate work, but no successors are generated from them. This pruning is complete: if there exists a $\treewidthvalue$-instructive tree decomposition $\abstractdecomposition$ such that $\decompositiongraph{\abstractdecomposition}\in P_1$ but $\decompositiongraph{\abstractdecomposition}\notin P_2$, then every subterm $\sigma$ of $\abstractdecomposition$ satisfies $\decompositiongraph{\sigma}\in P_1$ by Observation~\ref{remark:closed-under-subgraph}. Hence every intermediate state in a derivation of $\abstractdecomposition$ satisfies the premise.

We explore the reachable state space by a standard breadth-first search over this transition system, starting from the initial state.
Algorithm~\ref{algorithm:PremisePrunedInclusionTest} specifies the premise-pruned exploration.

\begin{algorithm}[H]
\caption{Premise-Pruned Inclusion Test}
\label{algorithm:PremisePrunedInclusionTest}
\footnotesize
\Input{Finite coherent $\treewidthvalue$-abstract DP-cores $\dpcore_1,\dpcore_2$ with identity cleaning and with $P_1=\dpcoregraphproperty{\dpcore_1}$ closed under subgraphs}
\Output{A reachability trace $R$ ending in an inconsistent $(\treewidthvalue,\dpcore_1,\dpcore_2)$-state if there exists $G\in \allgraphstreewidth{\treewidthvalue}$ with $G\in P_1$ and $G\notin P_2$, and "Inclusion Holds", otherwise.}
$R \gets []$\;
$Seen \gets \emptyset$\;
$Init \gets (\emptyset,\dpcore_1[\treewidthvalue].\initialsetgeneric,\dpcore_2[\treewidthvalue].\initialsetgeneric)$\;
$Y \gets [Init]$\;
$Seen \gets Seen \cup \{Init\}$\;
\While{$Y\neq \emptyset$}{
		Remove the next triple $(\abag,\mathbf{\witnessset}_1,\mathbf{\witnessset}_2)$ from $Y$ and append it to $R$\;
		\If{$(\abag,\mathbf{\witnessset}_1,\mathbf{\witnessset}_2)$ is inconsistent}{
			\Return $R$\;
		}
	\If{$(\abag,\mathbf{\witnessset}_1,\mathbf{\witnessset}_2)$ satisfies the premise}{
		\tcp{Introduce-vertex successors}
		\ForEach{$\vertexone\in [\treewidthvalue+1]\setminus \abag$}{
			$X \leftarrow (\abag\cup \{\vertexone\},\dpcore_1[\treewidthvalue].\introvertextype{\vertexone}(\mathbf{\witnessset}_1),\dpcore_2[\treewidthvalue].\introvertextype{\vertexone}(\mathbf{\witnessset}_2))$\;
			\If{$X \notin Seen$}{
				$Seen \gets Seen \cup \{X\}$\;
				Append $X$ to $Y$\;
			}
		}
		\tcp{Forget-vertex successors}
		\ForEach{$\vertexone\in \abag$}{
			$X \leftarrow (\abag\setminus \{\vertexone\},\dpcore_1[\treewidthvalue].\forgetvertextype{\vertexone}(\mathbf{\witnessset}_1),\dpcore_2[\treewidthvalue].\forgetvertextype{\vertexone}(\mathbf{\witnessset}_2))$\;
			\If{$X \notin Seen$}{
				$Seen \gets Seen \cup \{X\}$\;
				Append $X$ to $Y$\;
			}
		}
		\tcp{Introduce-edge successors}
		\ForEach{$\vertexone,\vertextwo\in \abag$ with $\vertexone\neq \vertextwo$}{
			$X \leftarrow (\abag,\dpcore_1[\treewidthvalue].\introedgegeneric{\vertexone}{\vertextwo}(\mathbf{\witnessset}_1),\dpcore_2[\treewidthvalue].\introedgegeneric{\vertexone}{\vertextwo}(\mathbf{\witnessset}_2))$\;
			\If{$X \notin Seen$}{
				$Seen \gets Seen \cup \{X\}$\;
				Append $X$ to $Y$\;
			}
		}
		\tcp{Join successors with all earlier same-bag states (both argument orders)}
		\ForEach{$(\abag',\mathbf{\witnessset}'_1,\mathbf{\witnessset}'_2)\in R$ with $\abag=\abag'$}{
			\If{$(\abag',\mathbf{\witnessset}'_1,\mathbf{\witnessset}'_2)$ satisfies the premise}{
				$X \leftarrow (\abag,\dpcore_1[\treewidthvalue].\jointype(\mathbf{\witnessset}_1,\mathbf{\witnessset}'_1),\dpcore_2[\treewidthvalue].\jointype(\mathbf{\witnessset}_2,\mathbf{\witnessset}'_2))$\;
				\If{$X \notin Seen$}{
					$Seen \gets Seen \cup \{X\}$\;
					Append $X$ to $Y$\;
				}
				$X \leftarrow (\abag,\dpcore_1[\treewidthvalue].\jointype(\mathbf{\witnessset}'_1,\mathbf{\witnessset}_1),\dpcore_2[\treewidthvalue].\jointype(\mathbf{\witnessset}'_2,\mathbf{\witnessset}_2))$\;
				\If{$X \notin Seen$}{
					$Seen \gets Seen \cup \{X\}$\;
					Append $X$ to $Y$\;
				}
			}
		}
	}
}
\Return "Inclusion Holds"\;
\end{algorithm}

We prove correctness next. If the premise-pruned search reaches an inconsistent state, then we can extract a $\treewidthvalue$-instructive tree decomposition $\abstractdecomposition$ such that $\decompositiongraph{\abstractdecomposition}\in P_1$ and $\decompositiongraph{\abstractdecomposition}\notin P_2$, i.e., a counterexample to $P_1\rightarrow P_2$. Conversely, any such counterexample yields a reachable inconsistent state, and all intermediate states satisfy the premise by Observation~\ref{remark:closed-under-subgraph}.

\begin{theorem}[Premise-pruned search correctness]
\label{theorem:premise-pruned-correctness}
Assume that $\dpcore_1$ and $\dpcore_2$ are finite coherent DP-cores with identity cleaning and that $P_1$ is closed under subgraphs. Then Algorithm~\ref{algorithm:PremisePrunedInclusionTest} terminates. Moreover:
\begin{itemize}
\setlength\itemsep{0em}
\item \textnormal{\textbf{Soundness:}} If an inconsistent $(\treewidthvalue,\dpcore_1,\dpcore_2)$-state is reachable from the initial state using the deduction rules above, then there exists a $\treewidthvalue$-instructive tree decomposition $\abstractdecomposition\in \allabstractdecompositionstreewidth{\treewidthvalue}$ such that $\decompositiongraph{\abstractdecomposition}\in P_1$ and $\decompositiongraph{\abstractdecomposition}\notin P_2$.
\item \textnormal{\textbf{Completeness:}} If there exists $\abstractdecomposition\in \allabstractdecompositionstreewidth{\treewidthvalue}$ such that $\decompositiongraph{\abstractdecomposition}\in P_1$ and $\decompositiongraph{\abstractdecomposition}\notin P_2$, then an inconsistent state is reachable, and moreover there is such a reachability derivation in which every intermediate state satisfies the premise.
\end{itemize}
Consequently, Algorithm~\ref{algorithm:PremisePrunedInclusionTest} returns ``Inclusion Holds'' if and only if no such counterexample exists.
\end{theorem}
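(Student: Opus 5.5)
Termination is handled first; the main work is a simulation lemma relating derivations over triples to subterms of instructive decompositions. Because $\dpcore_1,\dpcore_2$ are finite and $\treewidthvalue$ is fixed, there are finitely many bags $\abag\subseteq[\treewidthvalue+1]$ and finitely many witness sets. Hence there are finitely many triples. Each triple enters $Seen$ at most once, so $Y$ receives finitely many elements. Each dequeued element creates finitely many successors, since the join loop ranges over the finite list $R$. So the algorithm terminates.

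For \textbf{soundness}, I would build a ``product'' DP-core $\dpcore_1\times\dpcore_2$ whose witnesses are tagged pairs: witnesses of $\dpcore_1$ with tag $1$ and witnesses of $\dpcore_2$ with tag $2$. Transitions act componentwise. It is simpler, however, to reprove the tree-construction argument directly, as in Lemma~\ref{lemma:tree decomposition-existence-sequence-relabel-semi-refutation} with every relabeling equal to the identity. The claim is, by induction on the derivation, that for every reachable triple $(\abag,\witnessset_1,\witnessset_2)$ there is $\abstractdecomposition\in\allabstractdecompositionstreewidth{\treewidthvalue}$ with $\topbag{\abstractdecomposition}=\abag$ and $\witnessset_i=\dynamizationfunction{\dpcore_i}{\treewidthvalue}(\abstractdecomposition)$ for $i=1,2$. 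Each case is immediate from Definition~\ref{definition:Dynamization} and identity cleaning:
\begin{itemize}
\item for introduce-vertex, forget-vertex and introduce-edge, apply the same instruction to the term from the hypothesis; legality follows from the guards on $\vertexone,\vertextwo$;
\item for join, form $\jointype(\abstractdecomposition,\abstractdecomposition')$ from the two hypothesis terms, which have equal top bags.
\end{itemize}
The same term serves both cores simultaneously, and this is the point to check. Applying the claim to an inconsistent state gives a term that $\dpcore_1[\treewidthvalue]$ accepts and $\dpcore_2[\treewidthvalue]$ rejects. Coherency, arguing exactly as in the proof of Theorem~\ref{theorem:counterexample-existence-relabel-refutation}, then gives $\decompositiongraph{\abstractdecomposition}\in P_1$ and $\decompositiongraph{\abstractdecomposition}\notin P_2$.

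For \textbf{completeness}, take such a $\abstractdecomposition$ and list its subterm occurrences in post-order. Map each subterm $\sigma$ to the triple $(\topbag{\sigma},\dynamizationfunction{\dpcore_1}{\treewidthvalue}(\sigma),\dynamizationfunction{\dpcore_2}{\treewidthvalue}(\sigma))$. Consecutive entries then follow the deduction rules, so the list is a derivation ending in a state that is inconsistent, because $\abstractdecomposition$ is accepted by $\dpcore_1$ (coherency and $\decompositiongraph{\abstractdecomposition}\in P_1$) and rejected by $\dpcore_2$. Every $\sigma$ satisfies $\decompositiongraph{\sigma}\in P_1$ by Observation~\ref{remark:closed-under-subgraph} (applied contrapositively), so by coherency $\dpcore_1[\treewidthvalue]$ accepts $\sigma$. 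Thus every intermediate state satisfies the premise.

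The main obstacle is connecting this derivation to the algorithm's actual exploration. Two mismatches matter: the algorithm deduplicates through $Seen$, and it joins only with earlier elements of $R$. I would show by induction along the derivation that every state in it is eventually dequeued, unless the algorithm has already returned an inconsistent state. The premise guarantee ensures that each of these states gets expanded when dequeued. For a join, take whichever premise is dequeued later: at that moment the other premise is already in $R$ and satisfies the premise, and both argument orders are tried, mirroring the join argument in Theorem~\ref{theorem:algorithm-correctness}.

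Combining soundness of returned traces (every element of $R$ is reachable) with this completeness yields the final equivalence: the algorithm returns ``Inclusion Holds'' if and only if no counterexample exists.
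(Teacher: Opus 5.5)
Your proposal is correct and follows essentially the same route as the paper. Termination comes from finiteness of the triples together with the $Seen$ set. Soundness builds one instructive decomposition serving both cores by induction on the derivation, then invokes coherence. Completeness uses the bottom-up triples of the subterms of $\tau$, with the premise holding at every subterm by subgraph-closure and coherence. Your explicit argument that every state of this derivation is eventually dequeued spells out what the paper compresses into a single sentence; it covers deduplication via $Seen$, taking the later-dequeued join partner, and trying both argument orders.
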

\begin{proof}
Since $\dpcore_1[\treewidthvalue].\allwitnesses$ and $\dpcore_2[\treewidthvalue].\allwitnesses$ are finite, there are only finitely many triples $(\abag,\witnessset_1,\witnessset_2)$ with $\abag\subseteq[\treewidthvalue+1]$ and $\witnessset_i\subseteq \dpcore_i[\treewidthvalue].\allwitnesses$.
Algorithm~\ref{algorithm:PremisePrunedInclusionTest} inserts a triple into $Seen$ at most once, and therefore terminates.

For soundness, let $(\abag,\witnessset_1,\witnessset_2)$ be a reachable inconsistent state. Fix a reachability derivation
$(\abag_0,\witnessset_{1,0},\witnessset_{2,0}),\dots,(\abag_m,\witnessset_{1,m},\witnessset_{2,m})=(\abag,\witnessset_1,\witnessset_2)$
starting from the initial state and applying the deduction rules.

We construct, by induction on $i\in\{0,\dots,m\}$, a $\treewidthvalue$-instructive tree decomposition $\abstractdecomposition_i$ such that
$\topbag{\abstractdecomposition_i}=\abag_i$ and
\[
\dynamizationfunction{\dpcore_1}{\treewidthvalue}(\abstractdecomposition_i)=\witnessset_{1,i}
\quad\text{and}\quad
\dynamizationfunction{\dpcore_2}{\treewidthvalue}(\abstractdecomposition_i)=\witnessset_{2,i}.
\]
For $i=0$, take $\abstractdecomposition_0=\leaftype$. For the step, if state $i$ is obtained from some earlier state $j$ by an introduce-vertex/forget-vertex/introduce-edge rule, define $\abstractdecomposition_i$ to be the corresponding constructor applied to $\abstractdecomposition_j$; if state $i$ is obtained from earlier states $j,\ell$ by the join rule, define $\abstractdecomposition_i\defeq \jointype(\abstractdecomposition_j,\abstractdecomposition_\ell)$. The side conditions in the deduction rules ensure that each constructor is legal, hence $\abstractdecomposition_i\in \allabstractdecompositionstreewidth{\treewidthvalue}$. The equalities on dynamization follow directly from Definition~\ref{definition:Dynamization} and the induction hypothesis.

Applying this at $i=m$, let $\abstractdecomposition\defeq \abstractdecomposition_m$. Since the state $(\abag,\witnessset_1,\witnessset_2)$ is inconsistent, $\witnessset_1$ contains a final witness of $\dpcore_1[\treewidthvalue]$ while $\witnessset_2$ contains no final witness of $\dpcore_2[\treewidthvalue]$. Hence $\abstractdecomposition$ is accepted by $\dpcore_1[\treewidthvalue]$ and not accepted by $\dpcore_2[\treewidthvalue]$. By definition of $P_1=\dpcoregraphproperty{\dpcore_1}$, this implies $\decompositiongraph{\abstractdecomposition}\in P_1$. Moreover, if $\decompositiongraph{\abstractdecomposition}\in P_2=\dpcoregraphproperty{\dpcore_2}$ then coherence of $\dpcore_2$ would imply that $\abstractdecomposition$ is accepted by $\dpcore_2[\treewidthvalue]$, a contradiction. Therefore $\decompositiongraph{\abstractdecomposition}\notin P_2$.

	For completeness, assume there exists $\abstractdecomposition$ with $\decompositiongraph{\abstractdecomposition}\in P_1$ and $\decompositiongraph{\abstractdecomposition}\notin P_2$. Since $\dpcore_1$ and $\dpcore_2$ are coherent, $\abstractdecomposition$ is accepted by $\dpcore_1[\treewidthvalue]$ and not accepted by $\dpcore_2[\treewidthvalue]$. Consider the bottom-up computation of $\dynamizationfunction{\dpcore_1}{\treewidthvalue}$ and $\dynamizationfunction{\dpcore_2}{\treewidthvalue}$ on $\abstractdecomposition$. For each subterm $\sigmaabstractdecomposition$ of $\abstractdecomposition$, let
	\[
	T(\sigmaabstractdecomposition)\defeq \bigl(\topbag{\sigmaabstractdecomposition},\,\dynamizationfunction{\dpcore_1}{\treewidthvalue}(\sigmaabstractdecomposition),\,\dynamizationfunction{\dpcore_2}{\treewidthvalue}(\sigmaabstractdecomposition)\bigr).
	\]
	By Definition~\ref{definition:Dynamization}, these triples form a reachability derivation to the root triple $T(\abstractdecomposition)$, which is inconsistent.

			Finally, every intermediate triple satisfies the premise: if $\sigmaabstractdecomposition$ is a subterm of $\abstractdecomposition$, then Lemma~\ref{lemma:subterm-subgraph} yields that $\decompositiongraph{\sigmaabstractdecomposition}$ is isomorphic to a subgraph of $\decompositiongraph{\abstractdecomposition}$. Since $P_1$ is closed under subgraphs, graph properties are taken up to isomorphism, and $\decompositiongraph{\abstractdecomposition}\in P_1$, we have $\decompositiongraph{\sigmaabstractdecomposition}\in P_1$. Equivalently, $\decompositiongraph{\sigmaabstractdecomposition}\in \dpcoregraphproperty{\dpcore_1}$, so coherence of $\dpcore_1$ implies that $\sigmaabstractdecomposition$ is accepted by $\dpcore_1[\treewidthvalue]$. Hence $\dynamizationfunction{\dpcore_1}{\treewidthvalue}(\sigmaabstractdecomposition)$ contains a final witness, i.e., the premise holds at every intermediate state.
			Algorithm~\ref{algorithm:PremisePrunedInclusionTest} enumerates all successors of every dequeued premise-satisfying state and records each generated state at most once. Therefore the premise-satisfying derivation above is eventually generated, and the algorithm returns an inconsistent trace. Conversely, if the algorithm returns such a trace, the soundness argument gives a counterexample. If the finite queue is exhausted, no reachable inconsistent state exists, and by completeness no counterexample exists.
\end{proof}

\subsection{Canonized Premise Search}
\label{subsection:canonized-premise-search}

The implementation used in the experiments combines the premise-pruned transition system above with the canonization procedure of Section~\ref{subsec:symmetry-search}. We formalize this combined version as follows. Assume that $\dpcore_1$ and $\dpcore_2$ are interface-respecting DP-cores with identity cleaning and with witness actions. For a relabeling function $f:\abag\to[\treewidthvalue+1]$, define the product action on triples by
\[
f\cdot(\abag,\witnessset_1,\witnessset_2)
\defeq
\bigl(f(\abag),\action_{\dpcore_1}^{\treewidthvalue}(f,\witnessset_1),
\action_{\dpcore_2}^{\treewidthvalue}(f,\witnessset_2)\bigr).
\]
Fix total orders on $\dpcore_1[\treewidthvalue].\allwitnesses$ and $\dpcore_2[\treewidthvalue].\allwitnesses$, and order triples lexicographically by their bag vector and the two witness-set vectors. For every well-formed triple, define
\[
\canonic_{\dpcore_1,\dpcore_2}^{k}(\abag,\witnessset_1,\witnessset_2)
\defeq
\min\{f\cdot(\abag,\witnessset_1,\witnessset_2)\mid f\in\relabelclass_{\treewidthvalue},\ \domain(f)=\abag\}.
\]
Because the two component actions preserve final witnesses, both premise satisfaction and inconsistency are invariant under this canonical replacement.

\begin{lemma}[Product canonical representatives are orbit-invariant]
\label{lemma:product-canonical-orbit-invariant}
Let $(\abag,\witnessset_1,\witnessset_2)$ be a well-formed triple and let $f:\abag\to[\treewidthvalue+1]$ be injective. If
\[
(\abag',\witnessset'_1,\witnessset'_2)=f\cdot(\abag,\witnessset_1,\witnessset_2),
\]
then
\[
\canonic_{\dpcore_1,\dpcore_2}^{k}(\abag',\witnessset'_1,\witnessset'_2)
=
\canonic_{\dpcore_1,\dpcore_2}^{k}(\abag,\witnessset_1,\witnessset_2).
\]
\end{lemma}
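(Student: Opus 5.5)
I will mirror the proof of Lemma~\ref{lemma:canonical-orbit-invariant}, now for the product action. The aim is to show that the two sets minimized in the definition of $\canonic_{\dpcore_1,\dpcore_2}^{k}$ coincide. One set is the orbit $\{h\cdot(\abag',\witnessset'_1,\witnessset'_2)\mid \domain(h)=\abag'\}$ and the other is $\{g\cdot(\abag,\witnessset_1,\witnessset_2)\mid \domain(g)=\abag\}$. Since both minima are taken with respect to the same fixed lexicographic order on triples, equality of the candidate sets gives equality of the minima.

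First I check that the left-hand side is well-defined, i.e.\ that $(\abag',\witnessset'_1,\witnessset'_2)$ is well-formed. For each $i\in\{1,2\}$, label support transport (Definition~\ref{definition:witnessaction}.\ref{condition:relabeldpcore-condition-support-transport}, applied pointwise) gives $\mathrm{Lbl}_{\treewidthvalue}(\witnessset'_i)=f(\mathrm{Lbl}_{\treewidthvalue}(\witnessset_i))\subseteq f(\abag)=\abag'$.

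Next comes the inclusion from the $\abag'$-orbit into the $\abag$-orbit. Take any injective $h:\abag'\to[\treewidthvalue+1]$. Because $f(\abag)=\abag'=\domain(h)$, the composition $h\circ f$ is a relabeling with domain $\abag$, and $(h\circ f)(\abag)=h(\abag')$. Applying the composition axiom (Condition~\ref{condition:relabeldpcore-condition-three}) in each component separately, for both $\action_{\dpcore_1}^{\treewidthvalue}$ and $\action_{\dpcore_2}^{\treewidthvalue}$, gives $h\cdot(f\cdot(\abag,\witnessset_1,\witnessset_2))=(h\circ f)\cdot(\abag,\witnessset_1,\witnessset_2)$.

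For the reverse inclusion, take any $g$ with domain $\abag$ and set $h\defeq g\circ f^{-1}$, which has domain $\abag'$. Composition in each component yields $h\cdot(\abag',\witnessset'_1,\witnessset'_2)=(g\circ f^{-1}\circ f)\cdot(\abag,\witnessset_1,\witnessset_2)$. Here $f^{-1}\circ f=\mathrm{id}_{\abag}$, so $g\circ f^{-1}\circ f=g$ literally as partial functions. Alternatively, one can first use the inverse axiom (Condition~\ref{condition:relabeldpcore-condition-two}) to write $\witnessset_i=\action_{\dpcore_i}^{\treewidthvalue}(f^{-1},\witnessset'_i)$ and then apply composition. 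Either route shows that this element equals $g\cdot(\abag,\witnessset_1,\witnessset_2)$.

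The only delicate point is bookkeeping. The composition axiom must be applied to supported pairs, and the composed partial functions must have exactly the domains $\abag$ or $\abag'$. Both requirements follow from well-formedness together with $f(\abag)=\abag'$, and extension invariance (Condition~\ref{condition:relabeldpcore-condition-extension-invariance}) covers any mismatch between the domain of a relabeling and the support of a witness set. Beyond this, the proof is a componentwise repetition of Lemma~\ref{lemma:canonical-orbit-invariant}.
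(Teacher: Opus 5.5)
Your proposal is correct and follows the paper's proof: both show that the two candidate sets in the definition of the product canonical form coincide, using $h\circ f$ in one direction and $g\circ f^{-1}$ in the other, with the composition and inverse axioms applied componentwise. Your explicit check that the primed triple is well-formed, via label support transport, is a small detail the paper leaves implicit.
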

\begin{proof}
This is the componentwise version of Lemma~\ref{lemma:canonical-orbit-invariant}. For every relabeling $h:\abag'\to[\treewidthvalue+1]$, the product action of $h$ on the primed triple equals the product action of $h\circ f$ on the original triple, by composition in both component witness actions. Conversely, every relabeling $g:\abag\to[\treewidthvalue+1]$ is obtained from the primed side as $g\circ f^{-1}$, using inverse relabeling componentwise. Hence the two sets of candidate triples in the definition of $\canonic_{\dpcore_1,\dpcore_2}^{k}$ are equal, and so their minima are equal.
\end{proof}

\begin{lemma}[Product relabeled traces are realizable]
\label{lemma:product-relabeled-traces-realizable}
Let $\dpcore_1$ and $\dpcore_2$ be DP-cores with identity cleaning and witness actions.
Consider a finite well-formed product trace
\[
(\abag_0,\witnessset_{1,0},\witnessset_{2,0}),\dots,
(\abag_m,\witnessset_{1,m},\witnessset_{2,m})
\]
obtained by the componentwise relabeled rules of Definition~\ref{definition:RelabeledRefutation}, using the same constructor, indices, relabeling, and join-bag permutation in both components at each step.
Then, for every relabeling $g:\abag_m\to[\treewidthvalue+1]$, there is a $\treewidthvalue$-instructive tree decomposition $\abstractdecomposition$ such that
\[
\topbag{\abstractdecomposition}=g(\abag_m),
\]
and, for each $q\in\{1,2\}$,
\[
\dynamizationfunction{\dpcore_q}{\treewidthvalue}(\abstractdecomposition)
=
\action_{\dpcore_q}^{\treewidthvalue}(g,\witnessset_{q,m}).
\]
\end{lemma}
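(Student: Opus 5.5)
We mirror the proof of Lemma~\ref{lemma:tree decomposition-existence-sequence-relabel-semi-refutation}, carrying both components along a single term. The proof is by induction on $m$, for all relabelings $g:\abag_m\to[\treewidthvalue+1]$ simultaneously. The key observation is that the tree decomposition built in that proof depends only on the bags, the constructor, the indices $j,l$, the relabelings $\relabelingfunction_i$, the join permutation $\pi$, and $g$. It never depends on the witness sets. Since the product trace uses the same constructor, indices, relabeling and permutation in both components, a single term $\abstractdecomposition$ serves both cores. Identity cleaning lets us identify the dynamization with iterated lifted transitions, exactly as in Theorem~\ref{theorem:counterexample-existence-relabel-refutation}.

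\textbf{Base case.} For $m=0$ we take $\abstractdecomposition=\leaftype$. Then $\topbag{\leaftype}=\emptyset=g(\emptyset)$. The empty relabeling acts trivially on label-free witnesses, by the inverse and composition axioms as in that proof. Hence $\dynamizationfunction{\dpcore_q}{\treewidthvalue}(\leaftype)=\dpcore_q[\treewidthvalue].\initialsetgeneric=\action_{\dpcore_q}^{\treewidthvalue}(g,\witnessset_{q,0})$ for $q=1,2$.

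\textbf{Inductive step.} We examine the last rule and set $h\defeq g\circ\relabelingfunction_m$, restricted to the pre-relabeling bag.
\begin{itemize}
\item For introduce-edge and for join, we call the induction hypothesis on the prefix ending at $j$ with $h|_{\abag_j}$. In the join case we also call it on the prefix ending at $l$ with $h\circ\pi$. Both calls are made for the product trace, so each returns one term valid for both components.
\item For introduce-vertex, we call the hypothesis with $h|_{\abag_j}$.
\item For forget-vertex, we call it with an injective extension $\hat h$ of $h$ to $\abag_j$, which exists by Lemma~\ref{lemma:injective-extension}. This $\hat h$ is chosen once and used for both components.
\end{itemize}
We then apply the constructor relabeled by $h$ (respectively $\hat h$). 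For join, we use fresh copies of the two child terms. Legality of the term and the identity $\topbag{\abstractdecomposition}=g(\abag_m)$ are verified exactly as before, since these checks involve only bags.

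\textbf{Witness equality.} For each $q$ separately, we repeat the chain of equalities from the single-core lemma. First, well-formedness of state $m$, label support transport and Lemma~\ref{lemma:injective-image-containment} give that the raw output $S_q^\star$ has labels inside the pre-relabeling bag. Then we apply, in order, extension invariance, the relevant equivariance axiom, composition, and composition again with $\relabelingfunction_m$. This yields $\dynamizationfunction{\dpcore_q}{\treewidthvalue}(\abstractdecomposition)=\action_{\dpcore_q}^{\treewidthvalue}(g,\witnessset_{q,m})$.

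\textbf{Main obstacle.} The delicate point is making sure nothing component-specific enters the construction of $\abstractdecomposition$. Two choices need care: the extension $\hat h$ in the forget case, and the child relabelings $h$ and $h\circ\pi$ in the join case. Each must be a single choice shared by both components. It must also work for both components, which holds because extension invariance makes each component's action independent of how $h$ is extended. Finally, well-formedness must hold componentwise, i.e.\ $\mathrm{Lbl}(\witnessset_{q,i})\subseteq\abag_i$ for each $q$. This is exactly what the hypothesis ``well-formed product trace'' supplies.
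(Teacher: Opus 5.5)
Your proposal is correct and follows the paper's proof. The paper likewise re-runs the induction of Lemma~\ref{lemma:tree decomposition-existence-sequence-relabel-semi-refutation} on the product trace: the constructed decomposition depends only on the shared constructor, indices, relabelings, join permutation and $g$, support containment is checked componentwise, and identity cleaning identifies the result with both dynamizations. Your remark that the forget-case extension $\hat h$ must be chosen once for both components, which is harmless by extension invariance, makes explicit a detail the paper leaves implicit.
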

\begin{proof}
This is Lemma~\ref{lemma:tree decomposition-existence-sequence-relabel-semi-refutation} applied simultaneously to the two components.
The product trace uses one underlying constructor at each step, so the induction in that lemma constructs the same instructive decomposition for both components.
The well-definedness checks for relabelings and join permutations are componentwise: support containment follows in each component exactly as in Lemma~\ref{lemma:tree decomposition-existence-sequence-relabel-semi-refutation}, and the same relabeling $g$ is applied to the common final bag.
Identity cleaning ensures that the constructed subterm witness sets coincide with the dynamizations of $\dpcore_1$ and $\dpcore_2$.
\end{proof}

The \emph{ISO-BFS-premise} search is Algorithm~\ref{algorithm:PremisePrunedInclusionTest} with the following two modifications. First, the initial triple and every generated successor triple are replaced by their canonical representative under $\canonic_{\dpcore_1,\dpcore_2}^{k}$ before lookup in $Seen$ and insertion in $Y$. Second, in the join case the algorithm ranges over all earlier processed triples with the same bag, all permutations of the shared bag, and both argument orders, exactly as in Algorithm~\ref{algorithm:InclusionTestCombination}, applying the corresponding relabeling to the second join argument before the componentwise join. The pathwidth version is obtained by omitting join successors.

\begin{theorem}[Correctness of ISO-BFS-premise]
\label{theorem:iso-bfs-premise-correctness}
Assume that $\dpcore_1$ and $\dpcore_2$ are finite, coherent, interface-respecting DP-cores with identity cleaning and witness actions, and that $P_1=\dpcoregraphproperty{\dpcore_1}$ is closed under subgraphs. Then ISO-BFS-premise terminates. Moreover, it returns ``Inclusion Holds'' if and only if there is no graph $G$ of treewidth at most $\treewidthvalue$ such that $G\in P_1$ and $G\notin P_2$; otherwise it returns a canonized premise-pruned trace ending in an inconsistent triple.
\end{theorem}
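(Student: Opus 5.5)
The plan is to combine the proof of Theorem~\ref{theorem:algorithm-correctness}, which handles canonization, with the proof of Theorem~\ref{theorem:premise-pruned-correctness}, which handles pruning.

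\emph{Termination.} Finiteness of the witness sets gives finitely many canonical triples. Each canonical triple is inserted into $Seen$ at most once, so the search terminates.

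\emph{Soundness.} Suppose the search returns a trace. First I would verify, exactly as in Lemma~\ref{lemma:algorithm-returns-canonized-refutation}, that every generated triple is well-formed. This uses interface-respectingness of both cores, plus label support transport for the permuted join argument. Each non-initial element of the trace is $f\cdot X$ for a canonical relabeling $f$ of a raw successor $X$. Here $X$ is obtained from earlier trace elements by one constructor, applied componentwise with the same labels and, for joins, the same permutation $\pi$. So the trace is a well-formed product trace in the sense of Lemma~\ref{lemma:product-relabeled-traces-realizable}.

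Applying that lemma with $g=\mathrm{id}_{\abag_m}$ and using Lemma~\ref{lemma:witness-action-identity}, I get a $\treewidthvalue$-instructive decomposition $\abstractdecomposition$ with $\dynamizationfunction{\dpcore_q}{\treewidthvalue}(\abstractdecomposition)=\witnessset_{q,m}$ for $q=1,2$. The final triple is inconsistent. So $\abstractdecomposition$ is accepted by $\dpcore_1[\treewidthvalue]$ and rejected by $\dpcore_2[\treewidthvalue]$. Then $\decompositiongraph{\abstractdecomposition}\in P_1$ by definition, and $\decompositiongraph{\abstractdecomposition}\notin P_2$ by coherence of $\dpcore_2$, as in the proof of Theorem~\ref{theorem:counterexample-existence-relabel-refutation}.

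\emph{Completeness.} Suppose a counterexample $G$ exists. By Lemma~\ref{lemma:TreewidthIsTreelikeConfirmation} I pick $\abstractdecomposition$ with $\decompositiongraph{\abstractdecomposition}\simeq G$. I then take the unrelabeled derivation $T(\sigmaabstractdecomposition)$ over subterms from the proof of Theorem~\ref{theorem:premise-pruned-correctness}, in which every triple satisfies the premise. I would canonize it as in the $(\Leftarrow)$ direction of Theorem~\ref{theorem:ProvabilityPreserving}, choosing a canonical relabeling $c_\sigmaabstractdecomposition$ for each triple. The four step cases carry over componentwise, because both actions satisfy the same axioms and the relabelings $f_i$ and $\pi$ are defined from the $c$'s alone, independently of the component. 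For joins I use that equal-size canonical bags coincide, since the bag is compared first in the product order too.

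Premise satisfaction and inconsistency are preserved under canonical replacement by Lemma~\ref{lemma:witness-action-finality-sets} applied to each component. Then I induct along this canonized derivation, as in Theorem~\ref{theorem:algorithm-correctness}, to show that every triple is generated. The premise-satisfying predecessors are dequeued and therefore expanded. For the join case, the later-dequeued partner sees the other in $R$, and the other also satisfies the premise, so the join guard is passed. All permutations and both argument orders are tried. By Lemma~\ref{lemma:product-canonical-orbit-invariant}, the canonized successor equals the target canonical triple.

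The final triple is inconsistent, so the search returns. It might return earlier, but by soundness any returned trace still yields a counterexample, which gives the ``if and only if''.

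The main obstacle is the join step in the completeness induction. There I must check that the single permutation $\pi=c_j\circ c_l^{-1}$ simultaneously aligns both components, and that extension invariance applies in each component, using that both cores are interface-respecting. I expect this to follow by repeating the Theorem~\ref{theorem:ProvabilityPreserving} calculation verbatim in each coordinate.
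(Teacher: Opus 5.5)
Your proposal is correct and takes essentially the same route as the paper. Termination comes from finiteness of canonical triples; soundness comes from recognizing the returned trace as a product relabeled trace and applying Lemma~\ref{lemma:product-relabeled-traces-realizable} with the identity relabeling, together with coherence; completeness comes from canonizing the premise-satisfying subterm derivation of Theorem~\ref{theorem:premise-pruned-correctness} and invoking Lemma~\ref{lemma:product-canonical-orbit-invariant}. You make explicit several points the paper leaves implicit: the single permutation $\pi=c_j\circ c_l^{-1}$ serves both components, the join partner passes the premise guard, and the search may return early.
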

\begin{proof}
Finiteness gives only finitely many canonical triples, and the algorithm inserts each of them into $Seen$ at most once, so termination follows.

Soundness is the componentwise analogue of Lemma~\ref{lemma:algorithm-returns-canonized-refutation}. Every returned trace is obtained from the initial triple by applying one of the lifted DP-transitions to the two components and then applying a product-action relabeling; in the join case, the same bag permutation is used in both components. Thus the returned trace is a product relabeled trace in the sense of Lemma~\ref{lemma:product-relabeled-traces-realizable}. Applying that lemma to the identity relabeling on the final bag gives one $\treewidthvalue$-instructive tree decomposition whose two dynamizations are the two witness sets in the final triple. Since the final triple is inconsistent, this decomposition is accepted by $\dpcore_1$ and rejected by $\dpcore_2$. Coherence then gives a graph $G$ of treewidth at most $\treewidthvalue$ with $G\in P_1$ and $G\notin P_2$.

For completeness, suppose such a graph exists and choose a $\treewidthvalue$-instructive tree decomposition $\abstractdecomposition$ representing it. The proof of Theorem~\ref{theorem:premise-pruned-correctness} gives a premise-satisfying derivation of an inconsistent triple from the subterms of $\abstractdecomposition$. Canonizing each triple in that derivation preserves premise satisfaction and inconsistency, and the product-action commutation laws ensure that every canonized successor is among the successors enumerated by ISO-BFS-premise. Whenever the enumerated successor differs from the canonized derivation only by a relabeling of the interface, Lemma~\ref{lemma:product-canonical-orbit-invariant} identifies their canonical representatives. Thus the breadth-first exploration must eventually generate an inconsistent canonical triple unless it has already returned one. Therefore, if the algorithm exhausts the search and returns ``Inclusion Holds'', no such graph exists.
\end{proof}

\section{Reed's Conjecture Parameterized by Treewidth}
\label{ReedConjecture}

In this section, we provide a concrete example of how dynamic programming algorithms can be used to provide asymptotic upper bounds on the time complexity of verifying whether a given graph-theoretic conjecture is valid on the class of graphs of width at most $\treewidthvalue$. More specifically, we analyze the following well-known conjecture due to Reed \cite{reed1998omega}, which establishes an upper bound on the chromatic number $\chi(\agraph)$ of a triangle-free graph $\agraph$ in terms of the maximum degree $\maxDegree(\agraph)$ of $\agraph$.

\begin{conjecture}
\label{conjecture:Reed-triangle-free}
For any simple, triangle-free, undirected graph $\agraph$,
$\chromaticnumber(\agraph)\leq \left\lceil \frac{\maxDegree(\agraph) + 3}{2}\right\rceil. $
\end{conjecture}

Note that graphs of treewidth at most $\treewidthvalue$ are $(\treewidthvalue+1)$-colorable \cite{bodlaender2005equitable}. The following theorem due to Dvo{\v{r}}\'{a}k and Kawarabayashi establishes a better upper bound for the chromatic number of triangle-free graphs in terms of treewidth.

\begin{theorem}[\cite{dvovrak2017triangle}]
\label{theorem:treewidthTrianglFreeChromaticNumber}
For any triangle-free graph $\agraph$ of treewidth $\leq \treewidthvalue$, $\chromaticnumber(\agraph) \leq \left\lceil \frac{\treewidthvalue + 3}{2}\right\rceil$.
\end{theorem}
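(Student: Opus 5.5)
The plan is a minimal-counterexample argument driven by a leaf of a tree decomposition. The tools from this paper (instructive decompositions, Lemma~\ref{lemma:TreewidthIsTreelikeConfirmation}) are not needed; only the standard notion of tree decomposition is used. Set $c\defeq\left\lceil \frac{\treewidthvalue+3}{2}\right\rceil$, so $2c\ge \treewidthvalue+3$. Suppose the statement fails, and choose a triangle-free graph $\agraph$ of treewidth at most $\treewidthvalue$ with $\chromaticnumber(\agraph)>c$ and with $|\vertexset{\agraph}|$ minimal. Multiedges are irrelevant for colouring, so we may assume $\agraph$ is simple. By minimality $\agraph$ is vertex-critical: every proper induced subgraph is triangle-free, has treewidth at most $\treewidthvalue$, and hence is $c$-colourable. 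Consequently every vertex has degree at least $c$, since otherwise a $c$-colouring of $\agraph-v$ extends to $v$.

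Next take a tree decomposition of width at most $\treewidthvalue$ with no bag contained in a neighbouring bag. Pick a leaf bag $B$ with neighbouring bag $B'$, and let $S\defeq B\setminus B'\neq\emptyset$. Every $v\in S$ occurs only in $B$, so $N(v)\subseteq B\setminus\{v\}$ and $|N(v)|\ge c$. Triangle-freeness makes each $N(v)$ an independent set. It also makes $S$ independent: if $u,v\in S$ were adjacent, $N(u)$ and $N(v)$ would be disjoint subsets of $B$, giving $|B|\ge 2c\ge \treewidthvalue+3$, a contradiction. Hence $N(v)\subseteq T\defeq B\setminus S$ for each $v\in S$. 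Since $|T|\le \treewidthvalue+1-|S|$ and $2c>|T|$, any two of the sets $N(v)$ intersect, and each covers more than half of $T$.

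The final step reduces to a smaller graph and lifts its colouring back. Fix $v\in S$ and form $\bgraph$ from $\agraph-S$ by identifying $N(v)$ (or a suitable large subset $I\subseteq N(v)\cap\bigcap_{w\in S}N(w)$) into a single vertex $z$. Replacing $I$ by $z$ in every bag keeps width at most $\treewidthvalue$, and because $I$ is independent no loops arise. If $\bgraph$ is triangle-free, minimality gives a $c$-colouring of $\bgraph$. Unfolding it colours $\agraph-S$ so that $I$ is monochromatic, so each $N(w)$ with $w\in S$ sees at most $|N(w)|-|I|+1$ colours. To make this at most $c-1$, I would choose $I$ via the counting above, using $|N(w)|\le|T|\le \treewidthvalue+1-|S|$ and $2c\ge\treewidthvalue+3$. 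Then every vertex of the independent set $S$ has a free colour, contradicting $\chromaticnumber(\agraph)>c$.

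The main obstacle is that identification can create a triangle. Such a triangle is $z,a,b$ with $ab$ an edge and $a\sim x$, $b\sim y$ for distinct $x,y\in I$, that is, a $5$-cycle $v\,x\,a\,b\,y$ in $\agraph$. My first attempt would be to identify only two nonadjacent vertices $x,y\in N(v)$, and then iterate in pairs. For this I would choose a pair that is not joined by a path of length three; if every pair is so joined, I would exploit that all the relevant vertices $a,b$ lie in bags sharing the separator $T$, giving a bag-size contradiction. Failing that, the fallback is a direct Kempe-chain recolouring argument inside $T$ that frees a colour on $N(v)$.
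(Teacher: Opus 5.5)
You have a genuine gap, and it sits exactly at the step you flag. The paper does not prove this theorem; it quotes it from Dvo{\v{r}}\'{a}k and Kawarabayashi \cite{dvovrak2017triangle}, so your argument has to stand on its own.

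Everything before the reduction is sound. Some details need fixing.
\begin{itemize}
\item You should treat the single-bag case. The disjoint-neighbourhood count handles it immediately.
\item The final colour count works if you identify all of $I=N(v)$. Then every $w\in S$ sees at most $|N(w)\setminus N(v)|+1\le |T|-c+1\le \treewidthvalue-c+1\le c-2$ colours.
\item Restricting $I$ to $\bigcap_{w\in S}N(w)$ is not supported by your counting. For $|S|\ge 3$ that intersection can be smaller than the $|N(w)|-c+2$ you require.
\end{itemize}
The real problem is that nothing guarantees $(\agraph-S)/I$ is triangle-free, and that step carries the whole content of the theorem. If it were automatic, your argument would already be a complete few-line proof from the minimum-degree condition.

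A triangle through $z$ is a path $x\,a\,b\,y$ of length three between distinct $x,y\in I$, i.e.\ a $5$-cycle through $v$. Triangle-freeness of $\agraph$ says nothing against such cycles. None of your proposed remedies closes this.
\begin{itemize}
\item \emph{Bag-size contradiction.} The internal vertices $a,b$ only need to be adjacent to $x$ and $y$. They need not lie in $B$, and can sit in components of $\agraph-B$ attached through $T$. So the width bound gives no control over how many pairs of $N(v)$ are joined by such paths, and no bag-size contradiction is available.
\item \emph{Pairwise identification.} Each $N(w)$ needs $|N(w)|-c+1$ colour coincidences, which can be as many as $\lfloor(\treewidthvalue-1)/2\rfloor$. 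After each merge, the merged vertex inherits the union of the neighbourhoods, so the set of forbidden partners only grows. You give no invariant that survives the iteration.
\item \emph{Kempe chains.} This fallback is only named. Kempe chains run through all of $\agraph-S$, not just $T$. Nothing in your setup prevents every pair of colours on $N(v)$ from being chain-connected.
\end{itemize}
Closing the argument needs a genuinely new ingredient. One option is a stronger inductive statement controlling which colour patterns a colouring can induce on a bag, so that colourings on the two sides of the separator $T$ can be glued. The proposal does not yet contain such an ingredient.
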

Therefore, in order to prove that every graph of treewidth at most $\treewidthvalue$ satisfies
Conjecture~\ref{conjecture:Reed-triangle-free}, it is enough to consider graphs whose maximum degree is at most $\treewidthvalue-1$: if $\maxDegree(\agraph)\ge \treewidthvalue$, then Theorem~\ref{theorem:treewidthTrianglFreeChromaticNumber} implies the conjecture. More precisely, it suffices to show that for each $s\in \{0,\dots,\treewidthvalue-1\}$, every graph of treewidth at most $\treewidthvalue$ and maximum degree at most $s$ has chromatic number at most $\lceil\frac{s + 3}{2}\rceil$. For particular values of $\treewidthvalue$ this range can sometimes be shortened further; for example, when $\treewidthvalue=5$, the case $s=4$ is already discharged by Theorem~\ref{theorem:treewidthTrianglFreeChromaticNumber}, since both bounds are equal to $4$.

Now, let $\chromaticNumberAtMostProperty{\numbercolors}$ denote the graph property consisting of all graphs that are $\numbercolors$-colorable, $\maxDegreeAtLeastProperty{\maxdegreeparameter}$
denote the graph property consisting of all graphs that have maximum degree at least $\maxdegreeparameter$, $\simpleCliqueNumberAtLeastProperty{\cliqueparameter}$ be the property consisting of all graphs that have clique number at least $\cliqueparameter$, and $\hasMultiEdgeProperty$ be the property consisting of all graphs that have some multiple edges. Let
$\reedProperty(s)$ be the graph property

\begin{equation}
\label{equation:ReedProperty}
\begin{aligned}
\reedProperty(s) \equiv{}&
\bigl(\neg \hasMultiEdgeProperty
\wedge \neg \simpleCliqueNumberAtLeastProperty{3}\\
&{}\wedge \neg \maxDegreeAtLeastProperty{s+1}\bigr)\\
&{}\rightarrow \chromaticNumberAtMostProperty{\lceil (s+3)/2 \rceil}.
\end{aligned}
\end{equation}

Here, $\neg \hasMultiEdgeProperty$ enforces simplicity and $\neg \simpleCliqueNumberAtLeastProperty{3}$ enforces triangle-freeness.
Moreover, $\neg \maxDegreeAtLeastProperty{s+1}$ is equivalent to $\maxDegree(\agraph)\le s$, and the conclusion asserts $\lceil (s+3)/2 \rceil$-colorability.
The premise $\neg \hasMultiEdgeProperty \wedge \neg \simpleCliqueNumberAtLeastProperty{3}\wedge \neg \maxDegreeAtLeastProperty{s+1}$ is closed under subgraphs, so the early-pruning procedure from Section~\ref{section:EarlyPruning} applies when searching for counterexamples to $\reedProperty(s)$.
Indeed, subgraphs of simple graphs are simple, subgraphs cannot create a triangle, and taking subgraphs cannot increase the maximum degree.

Then determining whether all graphs of treewidth at most $\treewidthvalue$ satisfy Reed's conjecture is equivalent to determining whether for each $s\in \{0,1,\dots, \treewidthvalue-1\}$,
the set of all graphs of treewidth at most $\treewidthvalue$
is contained in property $\reedProperty(s)$. Now,
$\reedProperty(s)$ is a Boolean
combination of four properties
represented by finite DP-cores of polynomial bit-length in $\treewidthvalue$ (see Theorem~\ref{theorem:CoreExistence} below). The simple-clique core is used only together with the simplicity mask $\neg\hasMultiEdgeProperty$, which is the setting in which its correctness statement applies. The specification of each of the four DP-cores can be found in the appendix. Such implementations are of independent interest, because each such core may be viewed as an algorithm that takes a $\treewidthvalue$-instructive tree decomposition $\abstractdecomposition$ as input and decides whether the graph $\decompositiongraph{\abstractdecomposition}$ associated with $\abstractdecomposition$ satisfies the corresponding property represented by the core.

%
%
\begin{theorem}[DP-Cores for Reed's Conjecture]
\label{theorem:CoreExistence}
For each $\maxdegreeparameter,\numbercolors\in \Nplus$ and each $\cliqueparameter\in \Nplus$ with $\cliqueparameter\ge 2$, there exist finite, interface-respecting instructive DP-cores with the cleaning functions specified in their definitions and witness actions, denoted
\[
\begin{gathered}
\maxDegreeAtLeastCore{\maxdegreeparameter},\quad
\chromaticnumberAtMostCore{\numbercolors},\\
\simpleCliqueNumberAtLeastCore{\cliqueparameter},\quad
\hasMultiEdgeCore,
\end{gathered}
\]
satisfying the following properties.
	\begin{enumerate}
		\item $\maxDegreeAtLeastCore{\maxdegreeparameter}$ has bit-length $O(\treewidthvalue\cdot \log(\maxdegreeparameter+1))$ and
	$\dpcoregraphproperty{\maxDegreeAtLeastCore{\maxdegreeparameter}} = \maxDegreeAtLeastProperty{\maxdegreeparameter}$.
			\item $\chromaticnumberAtMostCore{\numbercolors}$ has bit-length $O(\treewidthvalue\cdot\log(\numbercolors+1))$ and
		\[
		\dpcoregraphproperty{\chromaticnumberAtMostCore{\numbercolors}}
		= \chromaticNumberAtMostProperty{\numbercolors}.
		\]
	\item $\simpleCliqueNumberAtLeastCore{\cliqueparameter}$ has bit-length  $O(\treewidthvalue\cdot \log \cliqueparameter)$ and its defined graph property agrees with $\simpleCliqueNumberAtLeastProperty{\cliqueparameter}$ after intersecting both sides with $\neg \hasMultiEdgeProperty$.
	\item $\hasMultiEdgeCore$ has bit-length  $O(\treewidthvalue^2)$ and $\dpcoregraphproperty{\hasMultiEdgeCore} = \hasMultiEdgeProperty$
	\end{enumerate}
	The cores in items 1, 2, and 4 are coherent. The simple-clique core is used below only under the mask $\neg\hasMultiEdgeProperty$, and item 3 is the coherent simple-graph property needed for Reed's formula.
	\end{theorem}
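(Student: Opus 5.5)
The plan is constructive: for each of the four properties I would write down an explicit DP-core, following the template of Section~\ref{section:ChromaticNumberAtMost}. For each core I would then verify four things in turn: (i) a witness-level predicate characterizing the dynamization, (ii) the bit-length bound, (iii) interface-respectingness together with a witness action, and (iv) coherence. Item 2 is already mostly done. Proposition~\ref{proposition:PredicateColorable} and Corollary~\ref{corollary:CorrectnessColorability} give correctness, Observation~\ref{observation:ComplexityMeasuresColorability} gives the bit-length, and Lemma~\ref{lemma:witness-action-colorability} gives the witness action. Interface-respectingness is immediate because every transition only manipulates partitions of the current bag. Coherence follows because acceptance is equivalent to $\numbercolors$-colorability of $\decompositiongraph{\abstractdecomposition}$, which is an isomorphism-invariant statement independent of $\treewidthvalue$.

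For the remaining cores I would choose witnesses as follows.
\begin{itemize}
\item For $\maxDegreeAtLeastCore{\maxdegreeparameter}$, a witness is a pair $(\maxdegreefound,\cliquemap)$. Here $\cliquemap$ maps each active label to its current degree truncated at $\maxdegreeparameter$, and $\maxdegreefound$ is a bit recording whether some vertex (active or forgotten) has reached degree $\maxdegreeparameter$. IntroEdge increments both endpoints. Join adds the two maps and ORs the bits. This is correct because under the join the only identified vertices are bag vertices and, by Lemma~\ref{lemma:join-no-cross-edges}, no edges cross between the two sides.
\item For $\hasMultiEdgeCore$, a witness is a bit together with a symmetric relation on active labels recording which bag pairs already carry an edge. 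This costs $O(\treewidthvalue^2)$ bits. A repeated IntroEdge, or a join in which both sides contain the same pair, sets the bit.
\item For $\simpleCliqueNumberAtLeastCore{\cliqueparameter}$, I would nondeterministically guess a subset of the bag as clique members. A witness records a found bit, the selected labels, the number of already-forgotten selected vertices, and which selected bag pairs have been confirmed adjacent. One may forget a selected vertex only if it is adjacent to all other selected vertices, and such a vertex is then never revisited. Final means that the found bit is set, which happens once $\cliqueparameter$ selected vertices have been confirmed pairwise adjacent.
\end{itemize}

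In each case the correctness proof is an induction on $\abstractdecomposition$ mirroring Proposition~\ref{proposition:PredicateColorable}. Coherence then follows from the characterization, since each characterization refers only to the graph up to isomorphism. Witness actions are defined by relabeling the label-indexed components. The axioms of Definition~\ref{definition:witnessaction} are checked exactly as in Lemma~\ref{lemma:witness-action-colorability}, because every transition depends only on equalities and memberships among labels, and these are preserved by injective relabelings.

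I expect the clique core to be the main obstacle. Its forget rule must use the fact that in a tree decomposition every clique lies entirely within some bag at some moment. This holds here because IntroEdge only joins vertices that are simultaneously active, so every edge of a clique appears while both endpoints are in the bag. But a multiedge could confirm the same pair twice, and a join could double-count a pair. That is why item 3 is only claimed modulo $\neg\hasMultiEdgeProperty$, and I would prove the characterization only for simple graphs. For coherence I would argue on the intersection with simple graphs, which is all that Reed's formula uses. The bit-length claim $O(\treewidthvalue\log\cliqueparameter)$ needs the adjacency data compressed. I would store, for each selected active label, a counter of confirmed selected neighbours bounded by $\cliqueparameter$, instead of storing a pair matrix, and accept once every selected vertex reaches count $\cliqueparameter-1$. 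This gives $O(\treewidthvalue\log\cliqueparameter)$ bits, and under simplicity the counts are exact.
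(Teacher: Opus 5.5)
Your overall route is the paper's. That proof assembles the four appendix cores and cites, for each, a predicate-induction correctness corollary, a bit-length count, a relabeling witness action, and isomorphism invariance for coherence, restricted to simple graphs for the clique core. Your max-degree and multi-edge witnesses are harmless variants of the paper's. Your compressed clique witness, with counters on the selected active labels plus a selected/forgotten count, is exactly the paper's pair $(\gamma,s)$. The gap is in the one step you single out as the main obstacle: the guard on forgetting a selected clique vertex. You also give no join rule for the clique core.

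As stated, ``forget a selected vertex only if it is adjacent to all other selected vertices'' admits a false positive. Take $\omega=3$ and the term $\jointype(\tau,\tau)$ with $\tau=\forgetvertextype{2}(\introedgetype{1}{2}(\introvertextype{2}(\introvertextype{1}(\leaftype))))$. Its graph is a path $v_1\,a\,v_2$.

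In each child, select both vertices. After the edge, $v_i$ is adjacent to every other selected vertex, so your rule lets you forget it. Each child then stores counter $1$ on $a$ and one forgotten selected vertex. At the join, $a$'s counter becomes $2=\omega-1$ and the selected total is $3=\omega$. Forgotten vertices carry no counter, so your acceptance test sees only active counters and accepts a triangle-free graph.

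The missing ingredient has two parts. First, a selected vertex may be forgotten only when its counter already equals $\omega-1$; with at most $\omega$ selected vertices in a simple graph, this forces the whole clique to be selected. This is the paper's guard $s=\omega$, $\gamma(u)=\omega-1$. Second, the join must reject when both children contain forgotten selected vertices. That rejection is sound because such vertices are non-adjacent by Lemma~\ref{lemma:join-no-cross-edges}.

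Completeness survives the strict guard for a simpler reason than the Helly-type fact you invoke. Every edge at a vertex is introduced below that vertex's forget node. So when a clique vertex is forgotten, all other clique vertices have already been selected and all its clique edges have been counted. Your pairwise co-activity argument would not by itself give a common bag anyway without the Helly property for subtrees.
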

\begin{proof}
Take the four cores defined in Section~\ref{section:ChromaticNumberAtMost} and in Appendix~\ref{appendix:DPCores}. Each definition is an instructive DP-core; the cleaning function is the identity except for the simple-clique core, whose cleaning function keeps only the found witness once such a witness is present. The transition rules mention only the current active labels, remove forgotten labels from the local data, and use label sets as the support of local witnesses; hence the cores are interface-respecting.

For $\maxDegreeAtLeastCore{\maxdegreeparameter}$, Corollary~\ref{corollary:CorrectnessMaxDegree} gives
$\dpcoregraphproperty{\maxDegreeAtLeastCore{\maxdegreeparameter}}=\maxDegreeAtLeastProperty{\maxdegreeparameter}$, and Observation~\ref{observation:ComplexityMeasuresMaxDegree} gives bit-length $O(\treewidthvalue\cdot\log(\maxdegreeparameter+1))$. Lemma~\ref{lemma:witness-action-maxdegree} gives the witness action. Coherence follows from the correctness corollary because maximum degree is invariant under graph isomorphism.

For $\chromaticnumberAtMostCore{\numbercolors}$, Corollary~\ref{corollary:CorrectnessColorability} gives
\[
\dpcoregraphproperty{\chromaticnumberAtMostCore{\numbercolors}}
= \chromaticNumberAtMostProperty{\numbercolors}.
\]
Observation~\ref{observation:ComplexityMeasuresColorability} gives bit-length $O(\treewidthvalue\cdot\log(\numbercolors+1))$. Lemma~\ref{lemma:witness-action-colorability} gives the witness action. Coherence follows because $\numbercolors$-colorability is invariant under graph isomorphism.

For $\simpleCliqueNumberAtLeastCore{\cliqueparameter}$, Corollary~\ref{corollary:CorrectnessSimpleCliqueNumberAtLeast} proves correctness on simple graphs, and this is exactly the equality after intersecting both sides with $\neg\hasMultiEdgeProperty$. Observation~\ref{observation:ComplexityMeasuresClique} gives bit-length $O(\treewidthvalue\cdot\log\cliqueparameter)$, and Lemma~\ref{lemma:witness-action-clique} gives the witness action. The simple-graph restriction is essential here: the Reed formula uses this core only together with $\neg\hasMultiEdgeProperty$, so the stated property equality is the one needed in Equation~\eqref{equation:ReedProperty}. Coherence on this restricted use follows because clique existence in simple graphs is invariant under graph isomorphism.

For $\hasMultiEdgeCore$, Corollary~\ref{corollary:CorrectnessMultiEdge} gives
$\dpcoregraphproperty{\hasMultiEdgeCore}=\hasMultiEdgeProperty$, Observation~\ref{observation:ComplexityMeasuresHasMultiEdge} gives bit-length $O(\treewidthvalue^2)$, and Lemma~\ref{lemma:witness-action-multiedge} gives the witness action. Coherence follows because the existence of multiple edges is invariant under graph isomorphism. Finiteness of all four cores follows from the displayed bit-length bounds for fixed $\treewidthvalue$ and fixed parameters.
\end{proof}
\begin{lemma}[Identity-cleaning simple-clique variant]
\label{lemma:identity-cleaning-simple-clique}
Let $\simpleCliqueNumberAtLeastCore{\cliqueparameter}^{\mathrm{id}}$ be obtained from $\simpleCliqueNumberAtLeastCore{\cliqueparameter}$ by replacing its cleaning function by the identity function and leaving all transition rules unchanged.
For every $\treewidthvalue$-instructive tree decomposition $\abstractdecomposition$ whose graph is simple,
\begin{align*}
\abstractdecomposition
&\in\accepteddecompositions{\simpleCliqueNumberAtLeastCore{\cliqueparameter}[\treewidthvalue]}\\
&\Longleftrightarrow
\abstractdecomposition
\in\accepteddecompositions{\simpleCliqueNumberAtLeastCore{\cliqueparameter}^{\mathrm{id}}[\treewidthvalue]}.
\end{align*}
\end{lemma}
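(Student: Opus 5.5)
We do not have the definition of the simple-clique core in front of us (it sits in Appendix~\ref{appendix:DPCores}). We rely only on what the proof of Theorem~\ref{theorem:CoreExistence} says about it: its cleaning function keeps only the found witness once such a witness is present, and otherwise acts as the identity. We also assume, as is standard for this core, that there is a distinguished ``found'' witness $\awitness_{\mathrm{f}}$. This witness is the only final witness, and every transition maps $\awitness_{\mathrm{f}}$ to $\{\awitness_{\mathrm{f}}\}$. For join, we take $\joingenericcore(\awitness_{\mathrm{f}},\awitness')$ and $\joingenericcore(\awitness',\awitness_{\mathrm{f}})$ to contain $\awitness_{\mathrm{f}}$, so a found clique persists upward.

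Write $\Gamma$ and $\Gamma^{\mathrm{id}}$ for the two dynamizations at width $\treewidthvalue$. The plan is to prove by structural induction on subterms $\sigmaabstractdecomposition$ of $\abstractdecomposition$ the invariant
\[
\Gamma(\sigmaabstractdecomposition)=\begin{cases}\{\awitness_{\mathrm{f}}\} & \text{if } \awitness_{\mathrm{f}}\in\Gamma^{\mathrm{id}}(\sigmaabstractdecomposition),\\ \Gamma^{\mathrm{id}}(\sigmaabstractdecomposition) & \text{otherwise.}\end{cases}
\]
Equivalently, $\Gamma(\sigmaabstractdecomposition)=\cleaningfunctioncore(\Gamma^{\mathrm{id}}(\sigmaabstractdecomposition))$. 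Since acceptance means containing a final witness, and $\awitness_{\mathrm{f}}$ is the only final one, the invariant at the root gives the claimed equivalence immediately.

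The leaf case is trivial, because cleaning does not change $\initialsetgenericcore$ unless it already contains $\awitness_{\mathrm{f}}$. For a unary node with transition $O$, suppose first that the child satisfies $\awitness_{\mathrm{f}}\notin\Gamma^{\mathrm{id}}$. Then $\Gamma(\text{child})=\Gamma^{\mathrm{id}}(\text{child})$, so the two raw outputs coincide, and cleaning yields exactly $\cleaningfunctioncore$ of the identity-cleaning output. Suppose instead that $\awitness_{\mathrm{f}}\in\Gamma^{\mathrm{id}}(\text{child})$. Then $\Gamma(\text{child})=\{\awitness_{\mathrm{f}}\}$, whose image is $\{\awitness_{\mathrm{f}}\}$. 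The raw identity output $O(\Gamma^{\mathrm{id}}(\text{child}))$ contains $\awitness_{\mathrm{f}}$ by absorption, so its cleaning is also $\{\awitness_{\mathrm{f}}\}$.

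The join case is handled the same way using the union lifting. If neither child contains $\awitness_{\mathrm{f}}$, the two inputs agree and the outputs agree. If some child contains $\awitness_{\mathrm{f}}$, then absorption of join places $\awitness_{\mathrm{f}}$ in both raw outputs, provided the other child's witness set is nonempty. Both outputs then clean to $\{\awitness_{\mathrm{f}}\}$.

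The main obstacle is the boundary situation where one side is empty. If a child's set is empty, the lifted join returns $\emptyset$ in both systems, and the invariant still holds. The real difficulty is to verify absorption of $\awitness_{\mathrm{f}}$ from the appendix definitions: in particular, that forgetting labels or introducing edges never discards the found witness. This step is also where simplicity of the graph may be needed. If the transitions only guarantee absorption on witnesses reachable from simple graphs, for example because an edge-introduction guard rejects repeated edges, I would strengthen the induction with the predicate characterization used in Corollary~\ref{corollary:CorrectnessSimpleCliqueNumberAtLeast}. The argument would then run as follows. By that corollary, for simple $\decompositiongraph{\abstractdecomposition}$ acceptance by the cleaned core is equivalent to $\decompositiongraph{\abstractdecomposition}$ containing an $\cliqueparameter$-clique. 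Separately, the transition rules used in that corollary's proof never consult the cleaning, so the same predicate argument shows that the identity-cleaning core accepts exactly when such a clique exists. Taken together, the two directions give the equivalence without needing an explicit invariant.
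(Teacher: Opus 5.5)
Your proof is correct and follows the same route as the paper: structural induction using that $\cliquefound$ is the only final witness, that every transition absorbs it, and that cleaning is the identity on sets without $\cliquefound$. Your invariant $\Gamma(\sigmaabstractdecomposition)=\cleaningfunctioncore(\Gamma^{\mathrm{id}}(\sigmaabstractdecomposition))$ is the precise form of the paper's ``contain $\cliquefound$ at exactly the same subterms'', and your treatment of an empty join partner fills in a detail the paper skips. The absorption you worried about holds unconditionally in the appendix definitions: each transition, including $\joingenericcore$ with either argument equal to $\cliquefound$, returns $\{\cliquefound\}$. So simplicity is never used, and your fallback via Corollary~\ref{corollary:CorrectnessSimpleCliqueNumberAtLeast} is unnecessary.
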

\begin{proof}
In the simple-clique core the only final witness is $\cliquefound$, and every transition preserves $\cliquefound$ once it appears.
The displayed cleaning function is the identity on witness sets not containing $\cliquefound$ and maps every witness set containing $\cliquefound$ to $\{\cliquefound\}$.
Therefore, by induction over the term, the cleaned and identity-cleaned computations contain $\cliquefound$ at exactly the same subterms.
Acceptance depends only on the presence of $\cliquefound$, so the two variants accept the same simple graphs.
\end{proof}

The simple-clique cleaning function above is the one used by the TreeWidzard implementation. For applications of the search theorems that are stated under identity cleaning, we use the identity-cleaning variant from Lemma~\ref{lemma:identity-cleaning-simple-clique}.

\begin{lemma}[Masked simple-clique coherence for Reed]
\label{lemma:masked-simple-clique-coherence}
For each $s\in\N$, the Boolean combination in Equation~\eqref{equation:ReedProperty} is coherent when the triangle-freeness conjunct is represented by $\simpleCliqueNumberAtLeastCore{3}$ only under the simultaneous mask $\neg\hasMultiEdgeProperty$.
\end{lemma}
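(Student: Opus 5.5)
The plan is to unfold what coherence of the Boolean combination means and to use the mask to remove the only source of non-coherence, namely non-simple graphs. The combined core (a product of the four cores, as in \cite{de2023width}) accepts $\abstractdecomposition\in\allabstractdecompositionstreewidth{\treewidthvalue}$ exactly when the Boolean formula of Equation~\eqref{equation:ReedProperty} evaluates to true on the four acceptance bits of $\abstractdecomposition$ for $\hasMultiEdgeCore[\treewidthvalue]$, $\simpleCliqueNumberAtLeastCore{3}[\treewidthvalue]$, $\maxDegreeAtLeastCore{s+1}[\treewidthvalue]$ and $\chromaticnumberAtMostCore{\lceil (s+3)/2\rceil}[\treewidthvalue]$. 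So it suffices to show that this truth value depends only on the isomorphism type of $\decompositiongraph{\abstractdecomposition}$. Accordingly, fix $\abstractdecomposition\in\allabstractdecompositionstreewidth{\treewidthvalue}$ and $\abstractdecomposition'\in\allabstractdecompositionstreewidth{\treewidthvalue'}$ with $\decompositiongraph{\abstractdecomposition}\simeq\decompositiongraph{\abstractdecomposition'}$.

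The first step handles the three unconditionally coherent cores. By Theorem~\ref{theorem:CoreExistence}, items 1, 2 and 4 are coherent, and their accepted graphs are exactly $\maxDegreeAtLeastProperty{s+1}$, $\chromaticNumberAtMostProperty{\lceil (s+3)/2\rceil}$ and $\hasMultiEdgeProperty$, respectively. Hence the multi-edge, max-degree and colorability acceptance bits agree on $\abstractdecomposition$ and $\abstractdecomposition'$.

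The second step is a case split on the common multi-edge bit. If both graphs have a multiple edge, then $\neg\hasMultiEdgeProperty$ is false on both. The premise conjunction is therefore false and the implication is true on both, whatever the clique bits are; this is exactly where the mask is used. If both graphs are simple, we invoke Corollary~\ref{corollary:CorrectnessSimpleCliqueNumberAtLeast}, namely the correctness of the simple-clique core on simple graphs. It shows that $\abstractdecomposition$ is accepted by $\simpleCliqueNumberAtLeastCore{3}[\treewidthvalue]$ if and only if $\decompositiongraph{\abstractdecomposition}$ contains a triangle, and likewise for $\abstractdecomposition'$. Since containing a triangle is invariant under isomorphism, the two clique bits agree. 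If one needs identity cleaning, Lemma~\ref{lemma:identity-cleaning-simple-clique} transfers this to the identity-cleaning variant on simple graphs. In either case all bits that influence the formula coincide, so the combination accepts $\abstractdecomposition$ if and only if it accepts $\abstractdecomposition'$.

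The main obstacle is a formal one. We must make sure that coherence of a Boolean combination is evaluated bitwise on acceptance of the component cores, so that a non-coherent component can be tolerated when its bit is irrelevant. I would state this explicitly from the product-core definition of \cite{de2023width} before doing the case analysis. A secondary point is that the simple-clique correctness corollary must be stated per decomposition, for every width $\treewidthvalue$, and not only as a property equality.
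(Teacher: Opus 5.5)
Your proof is correct and follows the paper's argument: the multi-edge, max-degree and colorability bits agree on decompositions of isomorphic graphs, and a case split on simplicity shows that for multigraphs the mask $\neg\hasMultiEdgeProperty$ falsifies the premise, while for simple graphs Corollary~\ref{corollary:CorrectnessSimpleCliqueNumberAtLeast} makes the clique bit isomorphism-invariant. You also make explicit that the combination's acceptance is evaluated bitwise on the component acceptance bits, which the paper leaves implicit; your secondary concern is already covered, since that corollary is stated per decomposition at every width $\treewidthvalue$.
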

\begin{proof}
Let two instructive decompositions represent isomorphic graphs. The properties
$\hasMultiEdgeProperty$, $\maxDegreeAtLeastProperty{s+1}$, and
$\chromaticNumberAtMostProperty{\lceil(s+3)/2\rceil}$ are invariant under graph
isomorphism by Corollaries~\ref{corollary:CorrectnessMultiEdge},
\ref{corollary:CorrectnessMaxDegree}, and~\ref{corollary:CorrectnessColorability}.
If the represented graph has a multiple edge, then the premise of
Equation~\eqref{equation:ReedProperty} is false because of the conjunct
$\neg\hasMultiEdgeProperty$, so the value of the simple-clique component is
irrelevant. If the represented graph has no multiple edge, then it is simple,
and Corollary~\ref{corollary:CorrectnessSimpleCliqueNumberAtLeast} identifies
the simple-clique core with the ordinary clique property on both isomorphic
graphs. Thus the whole Boolean formula has the same truth value on isomorphic
represented graphs.
\end{proof}

For the Reed searches, the four component cores are combined by the standard DP-combination construction of~\cite{de2023width}, applied to the identity-cleaning variants when a search theorem assumes identity cleaning. The construction uses product witness sets and componentwise witness actions, so it preserves finiteness, interface-respectingness, identity cleaning, and the stated bit-length bound up to the sum of the component bit-lengths. Coherence of the resulting combination follows from the component correctness statements above, Lemma~\ref{lemma:masked-simple-clique-coherence}, and the Boolean formula in Equation~\eqref{equation:ReedProperty}; the simple-clique component is evaluated only on graphs satisfying the simultaneous $\neg\hasMultiEdgeProperty$ premise.

Therefore, as a consequence of Theorem~\ref{theorem:ProvabilityPreserving}, Theorem~\ref{theorem:CoreExistence}, and Equation~\eqref{equation:ReedProperty} we have the following corollary.
\begin{corollary}
\label{corollary:ComplexityReed} For each $\treewidthvalue\in \N$, Reed's conjecture for triangle-free graphs can be tested in time double-exponential in $O(\treewidthvalue^2)$.
\end{corollary}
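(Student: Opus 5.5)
The plan is to combine three ingredients: the reduction of Reed's conjecture on width-$\treewidthvalue$ graphs to finitely many inclusion tests, the DP-cores of Theorem~\ref{theorem:CoreExistence} and their Boolean combination, and a count of the states explored by Algorithm~\ref{algorithm:InclusionTestCombination}.

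\textbf{Reduction.} By Theorem~\ref{theorem:treewidthTrianglFreeChromaticNumber}, it suffices to decide, for each $s\in\{0,\dots,\treewidthvalue-1\}$, whether $\allgraphstreewidth{\treewidthvalue}\subseteq \reedProperty(s)$. This gives $\treewidthvalue$ separate tests. For each one, I would form the DP-combination $\dpcore_s$ of the four cores from Theorem~\ref{theorem:CoreExistence}, with parameters $\maxdegreeparameter=s+1\le \treewidthvalue$, $\numbercolors=\lceil (s+3)/2\rceil\le \treewidthvalue+2$ and $\cliqueparameter=3$. The simple-clique core is replaced by its identity-cleaning variant via Lemma~\ref{lemma:identity-cleaning-simple-clique}, which preserves acceptance on simple graphs, the only graphs on which that component is evaluated. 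Coherence of $\dpcore_s$ comes from Lemma~\ref{lemma:masked-simple-clique-coherence}. Interface-respectingness and the witness action come componentwise from the cited lemmas.

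\textbf{Witness size.} A combined witness is a tuple of component witnesses, so its bit-length is the sum of the component bit-lengths:
\[
O(\treewidthvalue\log\treewidthvalue)+O(\treewidthvalue\log\treewidthvalue)+O(\treewidthvalue)+O(\treewidthvalue^2)=O(\treewidthvalue^2),
\]
with the multiedge core dominating. Hence $|\dpcore_s[\treewidthvalue].\allwitnesses|\le 2^{O(\treewidthvalue^2)}$. The number of states $(\abag,\witnessset)$ is then at most
\[
2^{\treewidthvalue+1}\cdot 2^{2^{O(\treewidthvalue^2)}}=2^{2^{O(\treewidthvalue^2)}}.
\]

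\textbf{Running time.} By Theorem~\ref{theorem:algorithm-correctness}, Algorithm~\ref{algorithm:InclusionTestCombination} decides the inclusion correctly, and by Theorem~\ref{theorem:ProvabilityPreserving} it returns a refutation exactly when a counterexample of treewidth at most $\treewidthvalue$ exists. The algorithm processes each canonical state at most once. The cost per processed state is bounded as follows.
\begin{itemize}
\item Introduce-vertex, forget-vertex and introduce-edge successors: $O(\treewidthvalue^2)$ choices of labels, each costing time polynomial in $|\witnessset|\le 2^{O(\treewidthvalue^2)}$.
\item Canonization: minimization over at most $(\treewidthvalue+1)!$ relabelings, each applied to a set of at most $2^{O(\treewidthvalue^2)}$ witnesses.
\item Join successors: at most $2^{2^{O(\treewidthvalue^2)}}$ earlier partners times $(\treewidthvalue+1)!$ permutations.
\end{itemize}
The join loop dominates. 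Multiplying, the total time is at most
\[
\bigl(2^{2^{O(\treewidthvalue^2)}}\bigr)^2\cdot 2^{O(\treewidthvalue\log\treewidthvalue)}\cdot 2^{O(\treewidthvalue^2)}=2^{2^{O(\treewidthvalue^2)}},
\]
and summing over the $\treewidthvalue$ values of $s$ keeps this bound.

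\textbf{Main obstacle.} The delicate part is not the counting but checking the hypotheses of Theorem~\ref{theorem:ProvabilityPreserving} for the combined core. The hypotheses are coherence, identity cleaning and the existence of a witness action, and the simple-clique component satisfies them only under the $\neg\hasMultiEdgeProperty$ mask. I would argue this at the level of the Boolean formula. On non-simple graphs the premise of $\reedProperty(s)$ is false regardless of the clique component. On simple graphs, Lemma~\ref{lemma:identity-cleaning-simple-clique} and Corollary~\ref{corollary:CorrectnessSimpleCliqueNumberAtLeast} make the clique component agree with the true property. Together these give coherence and correctness of the combined acceptance. If needed, I would alternatively invoke Algorithm~\ref{algorithm:PremisePrunedInclusionTest} and Theorem~\ref{theorem:premise-pruned-correctness} with $P_1$ the subgraph-closed premise, which avoids relying on witness actions and yields the same double-exponential state bound.
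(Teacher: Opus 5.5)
Your proposal is correct and follows the paper's proof: reduce to the $k$ inclusion tests for $\reedProperty(s)$ with $s<k$ using the triangle-free treewidth bound, add the component bit-lengths to get $O(k^2)$ bits per combined witness and hence $2^{2^{O(k^2)}}$ states, then invoke the correctness theorems for the canonized or premise-pruned search. Your explicit per-state cost accounting (the join loop over pairs of states and the at most $(k+1)!$ relabelings) is a more detailed form of what the paper absorbs into the same $2^{2^{O(k^2)}}$ bound.
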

\begin{proof}
For a fixed $s\in\{0,\dots,\treewidthvalue-1\}$, the Reed DP-combination uses the four component cores from Theorem~\ref{theorem:CoreExistence} with
\[
\numbercolors=\left\lceil\frac{s+3}{2}\right\rceil\le \treewidthvalue+1.
\]
The total bit-length of one combined local witness is bounded by the sum of the component bounds:
\[
O(\treewidthvalue\log(s+2))+
O(\treewidthvalue\log(\numbercolors+1))+
O(\treewidthvalue\log 4)+
O(\treewidthvalue^2)
=O(\treewidthvalue^2).
\]
Hence the combined witness universe has size at most $2^{O(\treewidthvalue^2)}$, and a search state is a finite subset of this universe, so the number of states is at most $2^{2^{O(\treewidthvalue^2)}}$.
The finite alphabet operations, relabeling choices, and the factor of $\treewidthvalue$ choices for $s$ are absorbed in this bound.
Theorem~\ref{theorem:algorithm-correctness} and Theorem~\ref{theorem:iso-bfs-premise-correctness} therefore give a decision procedure within time $2^{2^{O(\treewidthvalue^2)}}$.
\end{proof}
Note that the DP-core $\hasMultiEdgeCore$ is deterministic, and therefore has multiplicity $1$ (Section~\ref{subsec:multiplicity}).

\section{Experimental Evaluation}
\label{section:Experiments}

We provide an experimental evaluation of our
approach. We have implemented our width-based automated theorem proving framework in a tool called \textsc{TreeWidzard}~\cite{treewidzard}. The tool provides an interface that facilitates the implementation of dynamic programming algorithms parameterized by treewidth and pathwidth, and their integration for width-based automated theorem proving. We evaluate the implementation on the task of producing counterexamples for false graph-theoretic statements, and on verifying Reed's conjecture on graphs of small treewidth and pathwidth.

In the implementation, the corresponding TreeWidzard property-file names are as follows:
\begin{center}
\begin{tabular}{ll}
$\chromaticnumberAtMostCore{\numbercolors}$ & \texttt{ChromaticNumber\_AtMost}\\
$\maxDegreeAtLeastCore{d}$ & \texttt{MaximumDegree\_AtLeast}\\
$\simpleCliqueNumberAtLeastCore{\omega}$ & \texttt{SimpleCliqueNumber\_AtLeast}\\
$\hasMultiEdgeCore$ & \texttt{HasMultipleEdges}
\end{tabular}
\end{center}

For reproducibility, the experimental artifacts corresponding to this full version (all generated counterexample instances, the corresponding instructive tree decompositions and refutation traces, and the scripts used to generate the experimental tables) are available at \url{https://github.com/mlgorithm/state-canonization}.

\subsection{Constructing Counterexamples to False Statements}
We evaluate the framework along three questions: \emph{Effectiveness}---can we automatically construct counterexamples for false statements under small width bounds? \emph{Efficiency}---how much do canonization and premise pruning reduce the explored state space (states, time, memory)? \emph{Bottlenecks}---which DP-cores dominate the search, measured via multiplicity? When the corresponding data are available, we compare four search strategies: BFS (baseline), ISO-BFS (canonization), BFS-premise (early pruning), and ISO-BFS-premise (both techniques combined).

\subsubsection{Chromatic Number vs Pathwidth}

As stated in Theorem~\ref{theorem:treewidthTrianglFreeChromaticNumber}, triangle-free graphs of treewidth at most $k$ have chromatic number at
most $\lceil (k+3)/2 \rceil$. Dvo{\v{r}}\'{a}k and Kawarabayashi~\cite{dvovrak2017triangle} show that this
bound is tight for every $k \ge 1$: for each $k$ there exists a triangle-free graph of treewidth at
most $k$ with chromatic number $\lceil (k+3)/2 \rceil$. Since the pathwidth of a graph is lower
bounded by its treewidth, this also implies that triangle-free graphs of pathwidth at most $k$
have chromatic number at most $\lceil (k+3)/2 \rceil$. For a fixed width bound we decrease the
right-hand side by one and search for a counterexample, producing a graph together with a
refutation trace whenever one exists.

In Figure \ref{fig:counterexample-graph-triangle-free-14-vertices}, we depict a triangle-free graph with $14$ vertices, $27$ edges and chromatic number $4$. The drawing is schematic; the adjacency list and the path decomposition are exact and were generated by the tool. The graph has pathwidth $4$ and is therefore a counterexample to the strengthened statement that triangle-free graphs of pathwidth at most $4$ have chromatic number at most $3$. Note that it follows from~\cite{dvovrak2017triangle} that this statement is true for graphs of pathwidth at most $3$, and our search terminates without counterexamples in this case.

\begin{figure}[H]
\centering
\begin{subfigure}{0.3\textwidth}
\centering
\includegraphics[width=\linewidth]{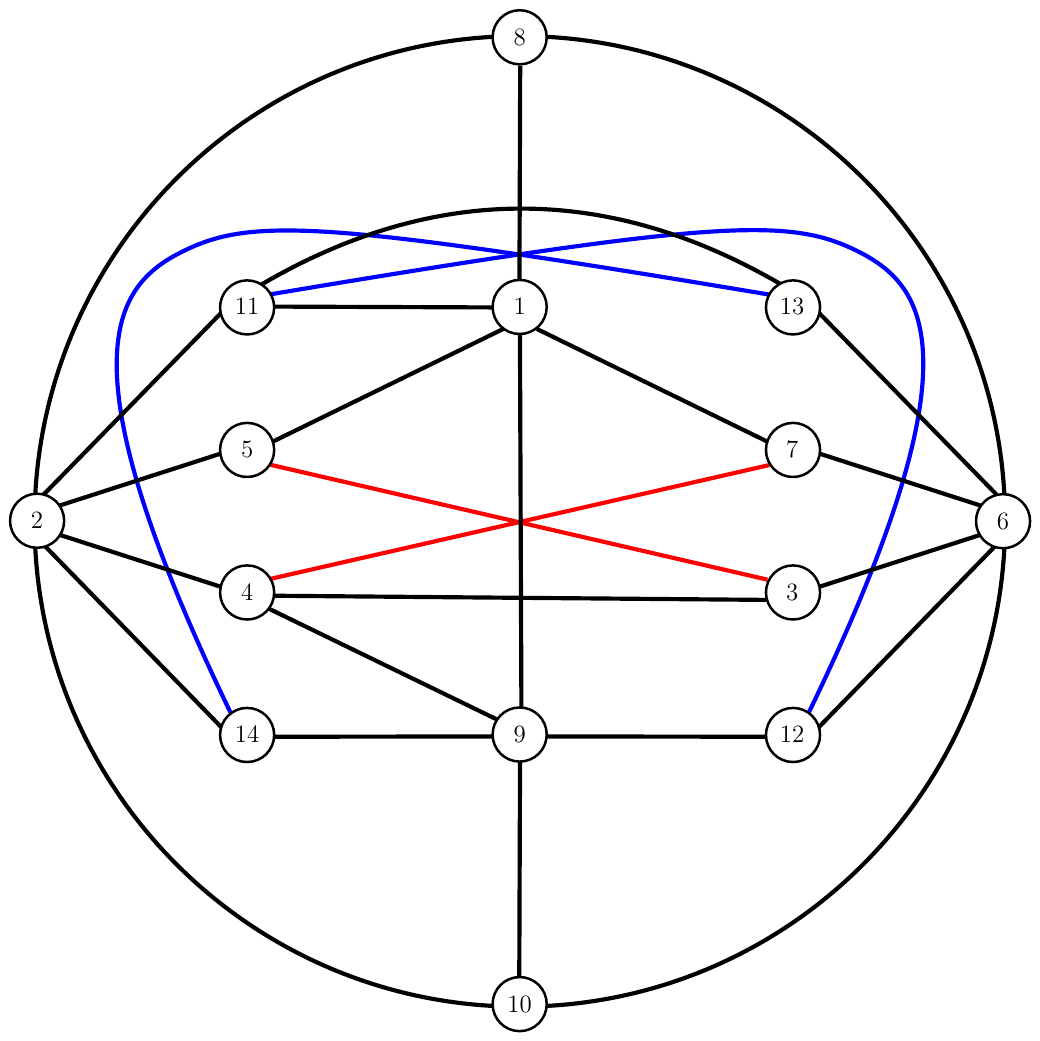}
\end{subfigure}
\hfill
\begin{subfigure}{0.3\textwidth}
\centering
{\scriptsize\setlength{\tabcolsep}{2pt}
\begin{tabular}{@{}r@{:\ }l@{}}
1  & [5,7,8,9,11] \\
2  & [4,5,8,10,11,14] \\
3  & [4,5,6] \\
4  & [2,3,7,9] \\
5  & [1,2,3] \\
6  & [3,7,8,10,12,13] \\
7  & [1,4,6] \\
8  & [1,2,6] \\
9  & [1,4,10,12,14] \\
10 & [2,6,9] \\
11 & [1,2,12,13] \\
12 & [6,9,11] \\
13 & [6,11,14] \\
14 & [2,9,13] \\
\end{tabular}}
\end{subfigure}
\hfill
\begin{subfigure}{0.3\textwidth}
{\scriptsize
\begin{tabular}{@{}l@{}}
$\{1\}$ \\
$\{1,2\}$ \\
$\{1,2,3\}$ \\
$\{1,2,3,4\}$ \\
$\{1,2,3,4,5\}$ \\
$\{1,2,3,4,6\}$ \\
$\{1,2,4,6,7\}$ \\
$\{1,2,4,6,8\}$ \\
$\{1,2,4,6,9\}$ \\
$\{1,2,6,9,10\}$ \\
$\{1,2,6,9,11\}$ \\
$\{2,6,9,11,12\}$ \\
$\{2,6,9,11,13\}$ \\
$\{2,9,13,14\}$ \\
\end{tabular}}
\end{subfigure}
\caption{A triangle-free graph with 14 vertices, pathwidth $4$, and chromatic number 4 (left), together with its adjacency list (middle), and a path decomposition of width 4 (right).}
\label{fig:counterexample-graph-triangle-free-14-vertices}
\end{figure}

\subsubsection{Chromatic Number vs Pathwidth Plus Maximum Degree}

Reed's conjecture for triangle-free graphs states the chromatic number of a graph of maximum degree $\Delta$ is at most $\lceil\frac{ \Delta+3}{2}\rceil$ \cite{reed1998omega}.
In Figure \ref{fig:counterexample-graph-triangle-free-22-vertices}, we depict a triangle-free graph of pathwidth $4$, maximum degree $4$, and chromatic number $4$. It is therefore a counterexample to the false strengthening that triangle-free graphs of maximum degree at most $4$ have chromatic number at most $3$, and it confirms that the bound in Reed's conjecture cannot be improved to $\lceil\frac{ \Delta+3}{2}\rceil-1$. Moreover, pathwidth $4$ is minimal for such a counterexample: by Theorem~\ref{theorem:treewidthTrianglFreeChromaticNumber}, every triangle-free graph of pathwidth at most $3$ is $3$-colorable.

\begin{figure}[H]
\centering
\begin{subfigure}{0.3\textwidth}
\centering

\includegraphics[width=\linewidth]{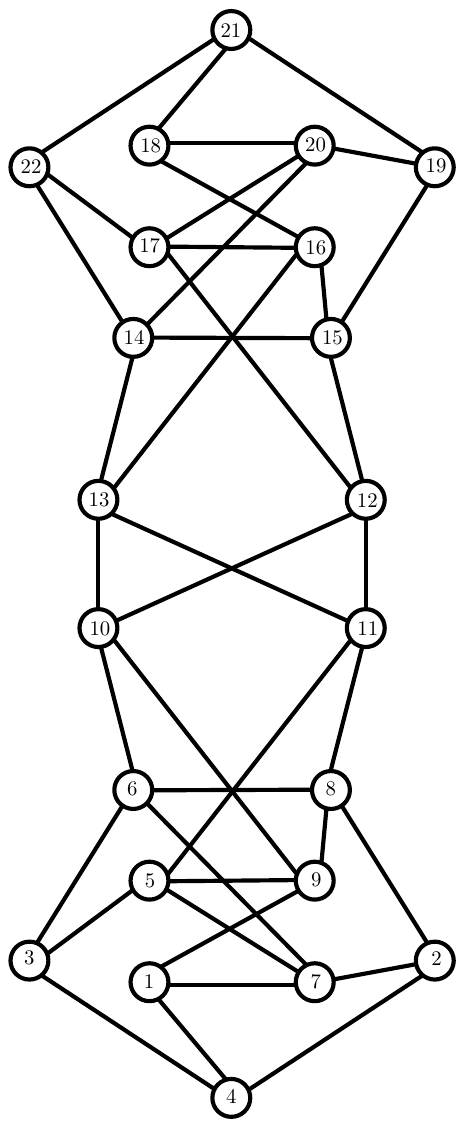}

\end{subfigure}
\hfill
\begin{subfigure}{0.3\textwidth}
\centering
{\scriptsize\setlength{\tabcolsep}{2pt}
\begin{tabular}{@{}r@{:\ }l@{}}
1  & [4,7,9] \\
2  & [4,7,8] \\
3  & [4,5,6] \\
4  & [1,2,3] \\
5  & [3,7,9,11] \\
6  & [3,7,8,10] \\
7  & [1,2,5,6] \\
8  & [2,6,9,11] \\
9  & [1,5,8,10] \\
10 & [6,9,12,13] \\
11 & [5,8,12,13] \\
12 & [10,11,15,17] \\
13 & [10,11,14,16] \\
14 & [13,15,20,22] \\
15 & [12,14,16,19] \\
16 & [13,15,17,18] \\
17 & [12,16,20,22] \\
18 & [16,20,21] \\
19 & [15,20,21] \\
20 & [14,17,18,19] \\
21 & [18,19,22] \\
22 & [14,17,21] \\
\end{tabular}}
	\end{subfigure}
	\hfill
	\begin{subfigure}{0.3\textwidth}
	\centering
	{\scriptsize
	\begin{tabular}{@{}l@{}}
	$\{1,2,3,4,5\}$ \\
	$\{1,2,3,5,6\}$ \\
	$\{1,2,5,6,7\}$ \\
	$\{1,2,5,6,8\}$ \\
	$\{1,5,6,8,9\}$ \\
	$\{5,6,8,9,10\}$ \\
	$\{5,8,10,11\}$ \\
	$\{10,11,12\}$ \\
	$\{10,11,12,13\}$ \\
	$\{12,13,14\}$ \\
	$\{12,13,14,15\}$ \\
	$\{12,13,14,15,16\}$ \\
	$\{12,14,15,16,17\}$ \\
	$\{14,15,16,17,18\}$ \\
	$\{14,15,17,18,19\}$ \\
	$\{14,17,18,19,20\}$ \\
	$\{14,17,18,19,21\}$ \\
	$\{14,17,21,22\}$ \\
	\end{tabular}}
	\end{subfigure}
\caption{A triangle-free graph with 22 vertices, pathwidth $4$, chromatic number 4, and maximum degree 4 (left), together with its adjacency list (middle), and a path decomposition of width 4 (right).
}
\label{fig:counterexample-graph-triangle-free-22-vertices}
\end{figure}

\subsection{Automatically Validating Reed's Triangle-Free Conjecture on Graphs of Small Pathwidth and Treewidth}

When testing a graph-theoretic conjecture on the class of graphs of pathwidth/treewidth at most $\treewidthvalue$, our deduction algorithm either produces a counterexample or exhausts the search space; in the latter case, the conjecture is valid on that class.

When combining both the symmetry-breaking technique introduced in Section~\ref{WBATPSymmetryBreaking} with the early-pruning technique introduced in Section~\ref{section:EarlyPruning}, our software targets Conjecture~\ref{conjecture:Reed-triangle-free} in the class of graphs of pathwidth at most $5$. This bounded case also follows from standard analytic ingredients: pathwidth at most $5$ implies treewidth at most $5$, so Theorem~\ref{theorem:treewidthTrianglFreeChromaticNumber} gives $\chromaticnumber(\agraph)\le 4$ for triangle-free graphs; for $\maxDegree(\agraph)\ge 4$ this is Reed's bound, while for $\maxDegree(\agraph)\le 3$ Brooks' theorem gives $\chromaticnumber(\agraph)\le 3$ \cite{Brooks1941}. Thus we use the pathwidth-$5$ case as a benchmark for the automated framework rather than as a novelty claim for the bounded theorem.
For the automated search in this case, it therefore suffices to consider graphs of maximum degree at most $3$.

In the rerun of the pathwidth-$5$, maximum-degree-$3$ configuration, the search using both techniques produced $746{,}187$ states, consumed about $29$ GB of memory, and terminated normally in approximately $41.4$ hours ($\sim 1.7$ days) on the Saga supercomputer, with $52$ CPUs allocated to a single task (see Table~\ref{table:conjecture-proof-experiments}). The search confirms that every triangle-free simple graph of pathwidth at most $5$ and maximum degree at most $3$ is $3$-colorable. Without canonization or premise pruning, this configuration remains prohibitively large within the budget allocated to this run.
\begin{table}[H]
    \centering
    \begin{tabular}{| c | c | c | c | c | c | c | c |}\hline
        pw & $\Delta$ & States & Memory & Time & Cores & Threads & Multiplicity\\ \hline
        5 & 3 & $746{,}187$ & $29$ GB & $41.4$ h & $52$ & $104$ & $122,1,52,1$\\ \hline
\end{tabular}
\caption{Rerun of Reed's conjecture on graphs of pathwidth at most $5$ and maximum degree $3$. The search terminated with the property satisfied. Multiplicity order: \texttt{ChromaticNumber\_AtMost}, \texttt{MaximumDegree\_AtLeast}, \texttt{SimpleCliqueNumber\_AtLeast}, \texttt{HasMultipleEdges} (see Section~\ref{subsec:multiplicity}).}
\label{table:conjecture-proof-experiments}
\end{table}

For graphs of pathwidth at most $4$, we performed a comparison between the original deduction algorithm introduced in \cite{de2023width} and the algorithm augmented only with the symmetry-breaking procedure, only with the early-pruning procedure, and with both (see Table~\ref{table:States}). For graphs of treewidth at most $4$, we report the two symmetry-aware variants used in our treewidth runs. Both improvements, when applied in isolation, significantly decrease the number of states considered during the search process. Nevertheless, when combining both approaches, the reduction of the search space was substantial. For example, in the case of pathwidth $3$ and maximum degree $2$, the method with no improvement produced about $24.5$ million states ($24{,}482{,}279$). The method with only the symmetry-breaking improvement produced about $1$ million states ($1{,}031{,}545$), the method with only the early-pruning improvement produced $1916$ states, and the method with both improvements produced only $141$ states. We provide comparative information on the number of states (Table~\ref{table:States}), time used (Table~\ref{table:Time}), and memory usage (Table~\ref{table:Memory}) for the reported algorithms.
In the tables below, \emph{States} denotes the number of distinct states inserted into the search's \texttt{Seen} set (after canonization, when enabled).
The \emph{Memory} column reports peak batch-task memory usage from the SLURM accounting logs. Entries for very small jobs with no detailed accounting block are reported below $1$ KB.

\noindent
Unless stated otherwise, experiments were run with a 24-hour timeout on the Saga supercomputer. Standard runs used about 160 GB of allocated memory. Runs that previously hit the memory cap were re-executed on expanded bigmem profiles with about 512 GB of allocated memory.

\begin{table}[H]
	    \centering
	    \resizebox{\textwidth}{!}{%
	    \begin{tabularx}{\textwidth}{|p{1.2cm}|X|X|X|X|X|X|X|X|X|X|}
	    \hline
	    \multicolumn{1}{|c}{ }& \multicolumn{ 3 }{c|}{\thead{\footnotesize{ BFS  } }}& \multicolumn{ 3 }{c|}{\thead{\footnotesize{ BFS-premise } }}\\ \hline
	    \diagbox[width=1.2cm]{\footnotesize pw}{\footnotesize $\Delta$} & \tiny{2} & \tiny{3} & \tiny{4}  & \tiny{2} & \tiny{3} & \tiny{4} \\ \hline
    \tiny{  pw = 2 } & \tiny{2503} & \tiny{4814} & \tiny{9877} & \tiny{149} & \tiny{341} & \tiny{617} \\ \hline
    \tiny{  pw = 3 } & \tiny{24482279} & \tiny{24908274} & \tiny{5193467} & \tiny{1916} & \tiny{9850} & \tiny{27720}\\ \hline
    \tiny{  pw = 4 } & \tiny{\textcolor{blue}{$>40982845^*$}} & \tiny{\textcolor{blue}{$>24741126^*$}} & \tiny{\textcolor{blue}{$>33517713^*$}} & \tiny{29184} & \tiny{1156954} & \tiny{2438694} \\ \hline
    \multicolumn{1}{|c}{ }& \multicolumn{ 3 }{c|}{\thead{\footnotesize{ ISO-BFS  } }}& \multicolumn{ 3 }{c|}{\thead{\footnotesize{ ISO-BFS-premise } }}\\ \hline
    \tiny{  pw = 2 } & \tiny{520} & \tiny{945} & \tiny{1843} & \tiny{38} & \tiny{76} & \tiny{128} \\ \hline
    \tiny{  pw = 3 } & \tiny{1031545} & \tiny{1050960} & \tiny{223920} & \tiny{141} & \tiny{579} & \tiny{1451} \\ \hline
    \tiny{  pw = 4 } & \tiny{\textcolor{blue}{$>9449990$}} & \tiny{\textcolor{blue}{$>9903864$}} & \tiny{\textcolor{blue}{$>7945250$}} & \tiny{486} & \tiny{12375} & \tiny{24494}  \\ \hline
    \multicolumn{1}{|c}{ }& \multicolumn{ 3 }{c|}{\thead{\footnotesize{ ISO-BFS  } }}& \multicolumn{ 3 }{c|}{\thead{\footnotesize{ ISO-BFS-premise } }}\\ \hline
    \tiny{  tw = 2 } & \tiny{\textcolor{blue}{$>$25351}} & \tiny{\textcolor{blue}{$>$25960}} & \tiny{\textcolor{blue}{$>$27632}} & \tiny{57} & \tiny{153} & \tiny{330} \\ \hline
	    \tiny{  tw = 3 } & \tiny{\textcolor{blue}{$>$69621}} & \tiny{\textcolor{blue}{$>$71183}} & \tiny{\textcolor{blue}{$>$72891}} & \tiny{194} & \tiny{1268} & \tiny{4080} \\ \hline
	    \tiny{  tw = 4 } & \tiny{\textcolor{blue}{$>$71858}} & \tiny{\textcolor{blue}{$>$73532}} & \tiny{\textcolor{blue}{$>$75426}} & \tiny{616} & \tiny{\textcolor{blue}{$>$4942}} & \tiny{\textcolor{blue}{$>$13438 }} \\ \hline
\end{tabularx}%
}
\caption{Number of states generated by our program when testing Reed's $(\omega,\Delta,\chi)$-conjecture for $\omega = 2$ (triangle-free case) on graphs of constant pathwidth and treewidth. Entries prefixed with $>$ indicate runs that did not terminate within the timeout, and entries marked with $^*$ indicate runs that exceeded the memory limit. }
\label{table:States}
\end{table}

Canonization and premise pruning are complementary, and their combination yields the largest reduction in explored states (e.g., for pathwidth $3$ and $\Delta=2$ the number of explored states drops from $24{,}482{,}279$ to $141$).

\begin{table}[H]
	    \centering
	    \resizebox{\textwidth}{!}{%
	    \begin{tabularx}{\textwidth}{|p{1.2cm}|X|X|X|X|X|X|X|}
	    \hline
	    \multicolumn{1}{|c|}{ }& \multicolumn{ 3 }{c|}{\thead{\footnotesize{ BFS  } }}& \multicolumn{ 3 }{c|}{\thead{\footnotesize{ BFS-premise } }}\\ \hline
	    \diagbox[width=1.2cm]{\footnotesize pw}{\footnotesize $\Delta$} & \tiny{2} & \tiny{3} & \tiny{4} & \tiny{2} & \tiny{3} & \tiny{4} \\ \hline
    \tiny{  pw = 2 } & \tiny{0.0 s} & \tiny{0.0 s} & \tiny{0.0 s}  & \tiny{0.0 s} & \tiny{0.0 s} & \tiny{0.0 s} \\ \hline
    \tiny{  pw = 3 } & \tiny{2548.0 s} & \tiny{2618.0 s} & \tiny{535.0 s} & \tiny{0.0 s} & \tiny{1.0 s} & \tiny{1.0 s}  \\ \hline
    \tiny{  pw = 4 } & \tiny{\textcolor{blue}{$>$1.2 h $^*$}} & \tiny{\textcolor{blue}{$>$1.1 h $^*$}} & \tiny{\textcolor{blue}{$>$1.2 h $^*$}} &  \tiny{4.0 s}& \tiny{178.0 s} & \tiny{476.0 s} \\ \hline
    %
    \multicolumn{1}{|c}{ }& \multicolumn{ 3 }{c|}{\thead{\footnotesize{ ISO-BFS  } }}& \multicolumn{ 3 }{c|}{\thead{\footnotesize{ ISO-BFS-premise } }}\\ \hline
    \tiny{  pw = 2 } & \tiny{0.0 s} & \tiny{0.0 s} & \tiny{0.0 s} & \tiny{0.0 s} & \tiny{0.0 s} & \tiny{0.0 s} \\ \hline
    \tiny{  pw = 3 } & \tiny{876.0 s} & \tiny{890.0 s} & \tiny{275.0 s}   & \tiny{0.0 s} & \tiny{1.0 s} & \tiny{1.0 s} \\ \hline
    \tiny{  pw = 4 } & \tiny{\textcolor{blue}{$>$24.0 h}} & \tiny{\textcolor{blue}{$>$24.0 h}} & \tiny{\textcolor{blue}{$>$24.0 h}}  & \tiny{5.0 s} & \tiny{151.0 s} & \tiny{349.0 s}  \\ \hline

    \multicolumn{1}{|c}{ }& \multicolumn{ 3 }{c|}{\thead{\footnotesize{ ISO-BFS  } }}& \multicolumn{ 3 }{c|}{\thead{\footnotesize{ ISO-BFS-premise } }}\\ \hline
	    \tiny{  tw = 2 } & \tiny{\textcolor{blue}{$>$24.0 h}} & \tiny{\textcolor{blue}{$>$24.0 h}} & \tiny{\textcolor{blue}{$>$24.0 h}} & \tiny{0.0 s} & \tiny{3.0 s} & \tiny{12.0 s} \\ \hline
	    \tiny{  tw = 3 } & \tiny{\textcolor{blue}{$>$24.0 h}} & \tiny{\textcolor{blue}{$>$24.0 h}} & \tiny{\textcolor{blue}{$>$24.0 h}} & \tiny{63.0 s} & \tiny{1.1 h} & \tiny{11.1 h} \\ \hline
	    \tiny{  tw = 4 } & \tiny{\textcolor{blue}{$>$24.0 h}} & \tiny{\textcolor{blue}{$>$24.0 h}} & \tiny{\textcolor{blue}{$>$24.0 h}} & \tiny{6.5 h} & \tiny{\textcolor{blue}{$>$24.0 h}} & \tiny{\textcolor{blue}{$>$24.0 h}} \\ \hline
	    \end{tabularx}%
	    }
\caption{Reed's Conjecture. Time required to test the conjecture for distinct values of maximum degree, pathwidth, and treewidth. Entries prefixed with $>$ indicate runs that did not terminate within the timeout, and entries marked with $^*$ indicate runs that exceeded the memory limit. Time format: h is an hour and s is a second. }
\label{table:Time}
\end{table}

The time savings mirror the reductions in state space: in several regimes, the baseline strategies hit time limits while the combined strategy terminates within seconds or minutes.

\begin{table}[H]
	    \centering
	    \resizebox{\textwidth}{!}{%
	    \begin{tabularx}{\textwidth}{|p{1.2cm}|X|X|X|X|X|X|X|}
	    \hline
	    \multicolumn{1}{|c|}{ }& \multicolumn{ 3 }{c|}{\thead{\footnotesize{ BFS  } }}& \multicolumn{ 3 }{c|}{\thead{\footnotesize{ BFS-premise } }}\\ \hline
	    \diagbox[width=1.2cm]{\footnotesize pw}{\footnotesize $\Delta$} & \tiny{2} & \tiny{3} & \tiny{4} & \tiny{2} & \tiny{3} & \tiny{4} \\ \hline
    \tiny{  pw = 2 } & \tiny{$<$ 1 KB} & \tiny{$<$ 1 KB} & \tiny{$<$ 1 KB}  & \tiny{$<$ 1 KB} & \tiny{$<$ 1 KB} & \tiny{$<$ 1 KB} \\ \hline
    \tiny{  pw = 3 } & \tiny{147.8 GB} & \tiny{151.4 GB} & \tiny{39.6 GB} & \tiny{$<$ 1 KB} & \tiny{$<$ 1 KB} & \tiny{0.2 GB}  \\ \hline
    \tiny{  pw = 4 } & \tiny{\textcolor{blue}{$>$512 GB $^*$}} & \tiny{\textcolor{blue}{$>$512 GB $^*$}} & \tiny{\textcolor{blue}{$>$512 GB $^*$}} &  \tiny{0.4 GB}& \tiny{17.6 GB} & \tiny{42.9 GB} \\ \hline
    %
    \multicolumn{1}{|c}{ }& \multicolumn{ 3 }{c|}{\thead{\footnotesize{ ISO-BFS  } }}& \multicolumn{ 3 }{c|}{\thead{\footnotesize{ ISO-BFS-premise } }}\\ \hline
    \tiny{  pw = 2 } & \tiny{$<$ 1 KB} & \tiny{$<$ 1 KB} & \tiny{$<$ 1 KB} & \tiny{$<$ 1 KB} & \tiny{$<$ 1 KB} & \tiny{$<$ 1 KB} \\ \hline
    \tiny{  pw = 3 } & \tiny{6.3 GB} & \tiny{6.4 GB} & \tiny{1.7 GB}   & \tiny{$<$ 1 KB} & \tiny{$<$ 1 KB} & \tiny{0 GB} \\ \hline
    \tiny{  pw = 4 } & \tiny{\textcolor{blue}{$>$173.6 GB}} & \tiny{\textcolor{blue}{$>$179.3 GB}} & \tiny{\textcolor{blue}{$>$178.1 GB}}  & \tiny{0 GB} & \tiny{0.2 GB} & \tiny{0.5 GB}  \\ \hline

    \multicolumn{1}{|c}{ }& \multicolumn{ 3 }{c|}{\thead{\footnotesize{ ISO-BFS  } }}& \multicolumn{ 3 }{c|}{\thead{\footnotesize{ ISO-BFS-premise } }}\\ \hline
    \tiny{  tw = 2 } & \tiny{\textcolor{blue}{$>$4.3 GB}} & \tiny{\textcolor{blue}{$>$4.3 GB}} & \tiny{\textcolor{blue}{$>$4.3 GB}} & \tiny{$<$ 1 KB} & \tiny{0 GB} & \tiny{0 GB} \\ \hline
	    \tiny{  tw = 3 } & \tiny{\textcolor{blue}{$>$76.9 GB}} & \tiny{\textcolor{blue}{$>$76.7 GB}} & \tiny{\textcolor{blue}{$>$78.7 GB}} & \tiny{0 GB} & \tiny{0 GB} & \tiny{0.1 GB} \\ \hline
	    \tiny{  tw = 4 } & \tiny{\textcolor{blue}{$>$91.3 GB}} & \tiny{\textcolor{blue}{$>$93 GB}} & \tiny{\textcolor{blue}{$>$96.1 GB}} & \tiny{0 GB} & \tiny{\textcolor{blue}{$>$0.2 GB}} & \tiny{\textcolor{blue}{$>$1 GB}} \\ \hline
	    \end{tabularx}%
	    }
\caption{Reed's Conjecture. Memory consumption to test the conjecture for distinct values of maximum degree, pathwidth, and treewidth. Entries prefixed with $>$ indicate runs that did not terminate within the timeout, and entries marked with $^*$ indicate runs that exceeded the memory limit.}
\label{table:Memory}
\end{table}

Canonization and premise pruning substantially reduce memory footprints, often turning runs that exceed memory limits into runs using less than a gigabyte.

\subsection{Multiplicity}\label{subsec:multiplicity}

The multiplicity of a DP-core with respect to a search strategy measures the maximum number of local witnesses produced by the core during the construction of states. Formally, for each explored state and each successor-generation step, we record the size of that component core's lifted transition output on the current witness set (or pair of witness sets for joins), and the table entry is the maximum of this quantity over the run. Thus the reported values are state-level lifted-output maxima, not merely $\max_w|\delta(w)|$ over single witnesses. Table~\ref{table:Multiplicity} reports these multiplicities. Each entry has four components corresponding to the multiplicities of the cores \texttt{ChromaticNumber\_AtMost}, \texttt{MaximumDegree\_AtLeast}, \texttt{SimpleCliqueNumber\_AtLeast}, and \texttt{HasMultipleEdges}, respectively. Note that \texttt{MaximumDegree\_AtLeast} and \texttt{HasMultipleEdges} are deterministic DP-cores, and therefore the multiplicity of these cores is $1$ regardless of the value of treewidth/pathwidth.
Higher multiplicity directly increases the branching factor in the state graph (more successor witnesses per step), which amplifies the benefit of canonization and premise pruning.

\begin{table}[H]
\centering
\resizebox{\textwidth}{!}{%
\begin{tabularx}{\textwidth}{|p{1.2cm}|X|X|X|X|X|X|X|}
\hline
\multicolumn{1}{|c|}{ }& \multicolumn{ 3 }{c|}{\thead{\footnotesize{ BFS  } }}& \multicolumn{ 3 }{c|}{\thead{\footnotesize{ BFS-premise } }}\\ \hline
\diagbox[width=1.2cm]{\footnotesize pw}{\footnotesize $\Delta$} & \tiny{2} & \tiny{3} & \tiny{4} & \tiny{2} & \tiny{3} & \tiny{4} \\ \hline
\tiny{  pw = 2 } & \tiny{5,1,12,1} & \tiny{5,1,12,1} & \tiny{5,1,12,1}  & \tiny{5,1,9,1} & \tiny{5,1,9,1} & \tiny{5,1,9,1} \\ \hline
\tiny{  pw = 3 } & \tiny{14,1,26,1} & \tiny{14,1,26,1} & \tiny{15,1,26,1} & \tiny{14,1,18,1} & \tiny{14,1,18,1} & \tiny{15,1,18,1}  \\ \hline
\tiny{  pw = 4 } & \tiny{\textcolor{blue}{$>$41,1,38,1 $^*$}} & \tiny{\textcolor{blue}{$>$41,1,38,1 $^*$}} & \tiny{\textcolor{blue}{$>$51,1,38,1 $^*$}} &  \tiny{41,1,30,1}& \tiny{41,1,32,1} & \tiny{51,1,32,1} \\ \hline
%
\multicolumn{1}{|c}{ }& \multicolumn{ 3 }{c|}{\thead{\footnotesize{ ISO-BFS  } }}& \multicolumn{ 3 }{c|}{\thead{\footnotesize{ ISO-BFS-premise } }}\\ \hline
\tiny{  pw = 2 } & \tiny{5,1,12,1} & \tiny{5,1,12,1} & \tiny{5,1,12,1} & \tiny{5,1,9,1} & \tiny{5,1,9,1} & \tiny{5,1,9,1} \\ \hline
\tiny{  pw = 3 } & \tiny{14,1,26,1} & \tiny{14,1,26,1} & \tiny{15,1,26,1}   & \tiny{14,1,18,1} & \tiny{14,1,18,1} & \tiny{15,1,18,1} \\ \hline
\tiny{  pw = 4 } & \tiny{\textcolor{blue}{$>$41,1,42,1}} & \tiny{\textcolor{blue}{$>$41,1,42,1}} & \tiny{\textcolor{blue}{$>$51,1,42,1}}  & \tiny{41,1,30,1} & \tiny{41,1,32,1} & \tiny{51,1,32,1}  \\ \hline

\multicolumn{1}{|c}{ }& \multicolumn{ 3 }{c|}{\thead{\footnotesize{ ISO-BFS  } }}& \multicolumn{ 3 }{c|}{\thead{\footnotesize{ ISO-BFS-premise } }}\\ \hline
\tiny{  tw = 2 } & \tiny{\textcolor{blue}{$>$5,1,12,1}} & \tiny{\textcolor{blue}{$>$5,1,12,1}} & \tiny{\textcolor{blue}{$>$5,1,12,1}} & \tiny{5,1,11,1} & \tiny{5,1,11,1} & \tiny{5,1,11,1} \\ \hline
\tiny{  tw = 3 } & \tiny{\textcolor{blue}{$>$14,1,19,1}} & \tiny{\textcolor{blue}{$>$14,1,19,1}} & \tiny{\textcolor{blue}{$>$15,1,19,1}} & \tiny{ 14,1,19,1} & \tiny{14,1,21,1} & \tiny{15,1,21,1} \\ \hline
\tiny{  tw = 4 } & \tiny{\textcolor{blue}{$>$41,1,29,1}} & \tiny{\textcolor{blue}{$>$41,1,29,1}} & \tiny{\textcolor{blue}{$>$51,1,29,1}} & \tiny{41,1,31,1} & \tiny{\textcolor{blue}{$>$41,1,32,1}} & \tiny{\textcolor{blue}{$>$51,1,35,1}} \\ \hline
\end{tabularx}%
}
\caption{Reed's Conjecture. Multiplicity: \texttt{ChromaticNumber\_AtMost}, \texttt{MaximumDegree\_AtLeast}, \texttt{SimpleCliqueNumber\_AtLeast}, \texttt{HasMultipleEdges}. Entries prefixed with $>$ indicate runs that did not terminate within the timeout, and entries marked with $^*$ indicate runs that exceeded the memory limit.}
\label{table:Multiplicity}
\end{table}

The multiplicity values show where branching occurs. It is concentrated in the chromatic-number and simple-clique cores. The maximum-degree and multi-edge cores have multiplicity $1$.

\section{Conclusion}
\label{section:Conclusion}

In this work, we took initial steps toward evaluating the width-based automated theorem proving approach introduced in~\cite{de2023width} in practice. We implemented this approach in \textsc{TreeWidzard} and introduced two complementary search reductions: state canonization and early pruning for implications with subgraph-closed premises. Together, these techniques drastically reduce the explored state space in practice. Our experiments on coloring statements for triangle-free graphs demonstrate orders-of-magnitude reductions, automatically generate counterexamples to false strengthenings, and enable automated bounded-width validation of Reed's triangle-free conjecture on graphs of pathwidth at most $5$ and treewidth at most $3$.

It is worth highlighting the modularity of our approach. While the implementation of instructive dynamic programming cores requires specialized knowledge on the part of the programmer, the use of such cores once they have been implemented is straightforward. For instance, in our framework, the critical case for testing Reed's conjecture on graphs of pathwidth at most $5$ is the case where the degree is at most $3$. More precisely, we test the conjecture that all simple, triangle-free graphs of maximum degree at most $3$ have chromatic number at most $3$. This conjecture is stated in TreeWidzard property-file syntax as follows, where the first four lines each correspond to a graph property, and the last line corresponds to the conjecture being tested.

{\small
\begin{verbatim}
x := MaximumDegree_AtLeast(4)
y := SimpleCliqueNumber_AtLeast(3)
z := HasMultipleEdges()
w := ChromaticNumber_AtMost(3)
Formula
NOT x AND NOT y AND NOT z IMPLIES w
\end{verbatim}
}

The idea is that dynamic programming cores deciding graph properties are implemented as plugins that need to be implemented only once by a specialist and then used without difficulty by graph theorists. We believe that our approach has the potential to create a productive exchange of ideas between the community of researchers working on parameterized complexity theory and researchers working in automated theorem proving. In essence, our framework shows that three decades of accumulated knowledge obtained in the development of faster width-based parameterized algorithms for model checking may be put into use in the context of automated theorem proving.

\FloatBarrier

\section*{Acknowledgements}
The computations were performed on the Saga supercomputer, on resources provided by Sigma2~--~the National Infrastructure for High-Performance Computing and Data Storage in Norway (\url{https://www.sigma2.no/}), under project \texttt{nn9535k}.

\bibliographystyle{plain}
\bibliography{bibliography}
	\clearpage
		\appendix
		\clearpage
\refstepcounter{section}
\section*{Summary of Notation}
\addcontentsline{toc}{section}{\protect\numberline{\thesection}Summary of Notation}
\label{appendix:notation-quickref}

\begin{table}[H]
\centering
\footnotesize
\begin{tabularx}{\textwidth}{@{}lX@{}}
\hline
\textbf{Symbol} & \textbf{Meaning} \\
\hline
$\N$ & natural numbers $\{0,1,2,\dots\}$ \\
$\Nplus$ & positive natural numbers $\N\setminus\{0\}$ \\
$[n]$ & $\{1,\dots,n\}$ (with $[0]=\emptyset$) \\
$\finitepowerset{X}$, $\powerset{X}$ & finite subsets / all subsets of $X$ \\
$G=(\vertexset{G},\edgeset{G},\incidencerelation{G})$ & graph with vertex set $\vertexset{G}$, edge set $\edgeset{G}$, incidence relation $\incidencerelation{G}$ \\
$\vertexset{G}$, $\edgeset{G}$, $\incidencerelation{G}$ & vertex set / edge set / incidence relation of $G$ \\
$|G|$ & $|\vertexset{G}|+|\edgeset{G}|$ \\
$\emptyset$ & empty graph $(\emptyset,\emptyset,\emptyset)$ \\
$G\uplus H$ & disjoint union (Section~\ref{section:Preliminaries}) \\
$G\sim H$ & graphs $G$ and $H$ are isomorphic \\
$H$ embeds as a subgraph of $G$ & there exists an embedding $(\phi,\nu)$ (Section~\ref{section:Preliminaries}) \\
$P\subseteq \allgraphs$ & graph property (closed under isomorphism) \\
$\isomorphismclosure{\mathcal{S}}$ & isomorphism closure of a set of graphs $\mathcal{S}$ \\
$\neg P$, $P\wedge Q$, $P\vee Q$, $P\to Q$ & Boolean operations on graph properties (Section~\ref{section:Preliminaries}) \\
\hline
\end{tabularx}
\caption{Core graph and set notation.}
\label{table:notation}
\end{table}

\begin{table}[H]
\centering
\footnotesize
\begin{tabularx}{\textwidth}{@{}lX@{}}
\hline
\textbf{Symbol} & \textbf{Meaning} \\
\hline
$(\Sigma,\arity)$ & ranked alphabet and arity function \\
$\Terms{\Sigma}$ & terms over ranked alphabet $\Sigma$ \\
$\abstractalphabet{k}$ & $k$-instructive alphabet (Equation~\eqref{equation:InstructiveAlphabet}) \\
$\leaftype$ & leaf constructor (arity 0) \\
$\introvertextype{u}$, $\forgetvertextype{u}$ & introduce/forget vertex with interface label $u$ (arity 1) \\
$\introedgetype{u}{v}$ & introduce an edge between active labels $u$ and $v$ (arity 1) \\
$\jointype$ & join constructor (arity 2) \\
$\treeautomaton_k$ & automaton recognizing legal $k$-instructive terms (Section~\ref{section:InstructiveTreeDecompositions}) \\
$\lang(\treeautomaton_k)$ & set of terms accepted by $\treeautomaton_k$ \\
$\allabstractdecompositionstreewidth{k}$ & $k$-instructive decompositions (legal terms over $\abstractalphabet{k}$) \\
$\allabstractdecompositionspathwidth{k}$ & join-free $k$-instructive decompositions (instructive path decompositions) \\
$\decompositiongraph{\tau}$ & graph constructed by a decomposition term $\tau$ \\
$B(\tau)$ & bag/interface labels active at the root of $\tau$ \\
$\theta[\tau]$ & map from active labels in $B(\tau)$ to their corresponding vertices in $\decompositiongraph{\tau}$ \\
$\allgraphstreewidth{k}$, $\allgraphspathwidth{k}$ & graphs of treewidth/pathwidth at most $k$ \\
\hline
\end{tabularx}
\caption{Terms and instructive decompositions.}
\label{table:notation-decompositions}
\end{table}

\begin{table}[H]
\centering
\footnotesize
\begin{tabularx}{\textwidth}{@{}lX@{}}
\hline
\textbf{Symbol} & \textbf{Meaning} \\
\hline
$D[k]$ & DP-core specialized to width bound $k$ \\
$W_k$ & witness universe of $D[k]$ (a decidable subset of $\{0,1\}^\ast$) \\
$D[k].\initialsetgeneric$ & initial witness set at leaves \\
$D[k].\finalwitnessgenericcore$ & final predicate on witnesses \\
$\introvertexgeneric{\cdot}$, $\forgetvertexgeneric{\cdot}$, $\introedgegeneric{\cdot}{\cdot}$, $\joingenericcore$ & DP transitions (lifted to witness sets by union) \\
$\dynamizationfunction{\dpcore}{k}(\tau)$ & witness set computed by DP-core $\dpcore[k]$ on term $\tau$ (Definition~\ref{definition:Dynamization}) \\
$\dpcoregraphproperty{D}$ & graph property defined by DP-core $D$ \\
$(b,S)$ & DP-state: bag $b\subseteq[k+1]$ and a finite witness set $S$ \\
inconsistent state & witness set contains no final witness \\
$(k,D)$-refutation & reachability trace ending in an inconsistent state (Definition~\ref{definition:DPRefutation}) \\
$\beta_D(k,n)$ & max bit-length of useful witnesses of $D$ at width $k$ and term size $n$ \\
\hline
\end{tabularx}
\caption{DP-cores, dynamization, and refutations.}
\label{table:notation-dpcores}
\end{table}

\begin{table}[H]
\centering
\footnotesize
\begin{tabularx}{\textwidth}{@{}lX@{}}
\hline
\textbf{Symbol} & \textbf{Meaning} \\
\hline
$\relabelclass_k$ & $k$-relabeling functions $f:b\to[k+1]$ (injective, with variable domain $b$) \\
$\mathrm{Lbl}_k(w)$, $\mathrm{Lbl}_k(S)$ & interface labels mentioned by witness $w$ and by witness set $S$ \\
$\action_{\dpcore}^{k}(f,S)$ & witness action of relabeling $f$ applied to witnesses/witness sets \\
$\mathrm{Perm}(b)$ & permutations of a fixed bag $b$ \\
$\canonic_{\dpcore}^{k}(b,S)$ & canonical representative of the relabeling orbit of $(b,S)$ \\
$P_1\to P_2$ & implication property; premise pruning applies when $P_1$ is subgraph-closed \\
\hline
\end{tabularx}
\caption{Relabelings, canonization, and premise pruning.}
\label{table:notation-canonization}
\end{table}

\FloatBarrier

		\section[Graph Associated with a k-Instructive Tree Decomposition]{Graph Associated with a $\treewidthvalue$-Instructive Tree Decomposition}
\label{appendix:GraphFromDecomposition}

This appendix makes explicit the semantics of the constructors in the instructive alphabet by defining, for each $\treewidthvalue$-instructive decomposition $\abstractdecomposition$, the associated graph $\decompositiongraph{\abstractdecomposition}$, the current interface bag $\topbag{\abstractdecomposition}$, and the interface map $\topmap{\abstractdecomposition}$.

\begin{definition}[Constructor semantics for $\decompositiongraph{\cdot}$, $\topbag{\cdot}$, and $\topmap{\cdot}$]
\label{definition:itd-constructor-semantics}
Let $\treewidthvalue\in \N$ and let $\abstractdecomposition\in \allabstractdecompositionstreewidth{\treewidthvalue}$.
\begin{enumerate}
\setlength\itemsep{0em}
\item[(a)] If $\abstractdecomposition = \leaftype$, then $\decompositiongraph{\abstractdecomposition}$ is the empty graph, $\topbag{\abstractdecomposition}=\emptyset$, and $\topmap{\abstractdecomposition}$ is the empty function.
\item[(b)] If $\abstractdecomposition = \introvertextype{\vertexone}(\sigmaabstractdecomposition)$, then $\vertexone\notin \topbag{\sigmaabstractdecomposition}$, $\topbag{\abstractdecomposition}=\topbag{\sigmaabstractdecomposition}\cup\{\vertexone\}$, and $\decompositiongraph{\abstractdecomposition}$ is obtained from $\decompositiongraph{\sigmaabstractdecomposition}$ by adding one new isolated vertex $x\defeq \min(\Nplus\setminus \vertexset{\decompositiongraph{\sigmaabstractdecomposition}})$. Moreover, $\topmap{\abstractdecomposition}|_{\topbag{\sigmaabstractdecomposition}}=\topmap{\sigmaabstractdecomposition}$ and $\topmap{\abstractdecomposition}(\vertexone)=x$.
\item[(c)] If $\abstractdecomposition = \forgetvertextype{\vertexone}(\sigmaabstractdecomposition)$, then $\vertexone\in \topbag{\sigmaabstractdecomposition}$, $\topbag{\abstractdecomposition}=\topbag{\sigmaabstractdecomposition}\setminus\{\vertexone\}$, $\decompositiongraph{\abstractdecomposition}=\decompositiongraph{\sigmaabstractdecomposition}$, and $\topmap{\abstractdecomposition}=\topmap{\sigmaabstractdecomposition}|_{\topbag{\abstractdecomposition}}$.
\item[(d)] If $\abstractdecomposition = \introedgetype{\vertexone}{\vertextwo}(\sigmaabstractdecomposition)$, then $\vertexone\neq\vertextwo$, $\{\vertexone,\vertextwo\}\subseteq \topbag{\sigmaabstractdecomposition}$, $\topbag{\abstractdecomposition}=\topbag{\sigmaabstractdecomposition}$, $\topmap{\abstractdecomposition}=\topmap{\sigmaabstractdecomposition}$, and $\decompositiongraph{\abstractdecomposition}$ is obtained from $\decompositiongraph{\sigmaabstractdecomposition}$ by adding one new edge $e\defeq \min(\Nplus\setminus \edgeset{\decompositiongraph{\sigmaabstractdecomposition}})$ with endpoints $\{\topmap{\sigmaabstractdecomposition}(\vertexone),\topmap{\sigmaabstractdecomposition}(\vertextwo)\}$.
\item[(e)] If $\abstractdecomposition = \jointype(\sigmaabstractdecomposition_1,\sigmaabstractdecomposition_2)$, then $\topbag{\sigmaabstractdecomposition_1}=\topbag{\sigmaabstractdecomposition_2}=\topbag{\abstractdecomposition}\defeq \abag$. Let $G_i\defeq \decompositiongraph{\sigmaabstractdecomposition_i}$ and $\theta_i\defeq \topmap{\sigmaabstractdecomposition_i}$ for $i\in\{1,2\}$, and let $H\defeq G_1\uplus G_2$ be their disjoint union (Section~\ref{section:Preliminaries}). Define $\mu:\vertexset{H}\to \N$ by
\[
		\mu(x)\defeq
		\begin{cases}
		2\theta_1(u) & \text{if } x = 2\theta_2(u)+1 \text{ for some } u\in \abag,\\
		x & \text{otherwise.}
		\end{cases}
		\]
		By Lemma~\ref{lemma:itd-interface-map-injective} below, $\theta_2$ is injective on $\abag$, hence the label $u$ is unique whenever the first case applies and $\mu$ is well-defined.
			We define $\decompositiongraph{\abstractdecomposition}$ as the graph with vertex set $\mu(\vertexset{H})$, edge set $\edgeset{H}$, and incidence relation $\{(e,\mu(v)):(e,v)\in \incidencerelation{H}\}$, and set $\topmap{\abstractdecomposition}(u)\defeq 2\theta_1(u)$ for each $u\in \abag$. In particular, each child graph embeds as a subgraph of $\decompositiongraph{\abstractdecomposition}$ (Lemma~\ref{lemma:join-child-embeddings}), and for each $u\in \abag$ the identified vertex is $\topmap{\abstractdecomposition}(u)$.
		In other words, $\jointype$ glues the two child graphs along the shared interface labels $\abag$, while keeping all non-interface vertices disjoint and preserving all edges from both children.
		\end{enumerate}
\end{definition}

\begin{lemma}[Interface map injectivity]
\label{lemma:itd-interface-map-injective}
Let $\treewidthvalue\in \N$ and let $\abstractdecomposition\in \allabstractdecompositionstreewidth{\treewidthvalue}$.
Then the interface map $\topmap{\abstractdecomposition}:\topbag{\abstractdecomposition}\to \vertexset{\decompositiongraph{\abstractdecomposition}}$ is injective.
\end{lemma}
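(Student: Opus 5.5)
The plan is a structural induction on $\abstractdecomposition$, following the constructor semantics of Definition~\ref{definition:itd-constructor-semantics}. We prove the slightly stronger invariant that $\topmap{\abstractdecomposition}$ is an injective map from $\topbag{\abstractdecomposition}$ into $\vertexset{\decompositiongraph{\abstractdecomposition}}$. This means its values are genuine vertices of the constructed graph, which we need in order to speak about fresh vertices later.

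\textbf{Routine cases.}
\begin{itemize}
\item \emph{Leaf.} $\topmap{\leaftype}$ is the empty function, which is trivially injective.
\item \emph{Introduce vertex.} $\topmap{\abstractdecomposition}$ agrees with the injective map $\topmap{\sigmaabstractdecomposition}$ on $\topbag{\sigmaabstractdecomposition}$. It sends the new label $\vertexone$ to $x=\min(\Nplus\setminus \vertexset{\decompositiongraph{\sigmaabstractdecomposition}})$. By the invariant, every old value lies in $\vertexset{\decompositiongraph{\sigmaabstractdecomposition}}$, so $x$ differs from all of them and injectivity is preserved. The range condition holds because $x$ is added to the vertex set.
\item \emph{Forget vertex.} The new map is a restriction of an injective map, and the graph is unchanged.
\item \emph{Introduce edge.} Neither the map nor the vertex set changes.
\end{itemize}

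\textbf{Join case.} This is the only case needing care, because the map is defined through the gluing function $\mu$. Here $\topmap{\abstractdecomposition}(u)=2\theta_1(u)$, where $\theta_1$ is injective by the induction hypothesis applied to $\sigmaabstractdecomposition_1$. Since $x\mapsto 2x$ is injective, $\topmap{\abstractdecomposition}$ is injective on $\abag$.

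For the range condition, note that $\theta_1(u)\in\vertexset{G_1}$. Hence $2\theta_1(u)\in \vertexset{G_1^{(0)}}\subseteq\vertexset{H}$. This point is even, so it is not of the form $2\theta_2(u')+1$, and therefore $\mu$ fixes it. Consequently $2\theta_1(u)\in\mu(\vertexset{H})=\vertexset{\decompositiongraph{\abstractdecomposition}}$.

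\textbf{Avoiding circularity.} The definition of $\mu$ at a join node relies on $\theta_2$ being injective. This is exactly the induction hypothesis applied to the child $\sigmaabstractdecomposition_2$. So the argument is a simultaneous induction: at each node, we establish that $\decompositiongraph{\cdot}$ is well defined and that the invariant holds. The join case at a node only uses the invariant for its strictly smaller children.

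I expect no real obstacle. The only points to state explicitly are the range invariant, which is needed for the freshness argument in the introduce-vertex case, and this well-foundedness remark for the join case.
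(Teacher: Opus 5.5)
Your proposal is correct and follows the paper's proof: structural induction over the five constructors, with the join case reducing to injectivity of $u\mapsto 2\theta_1(u)$. Two points the paper leaves implicit are made precise in your version: the explicit range invariant, and the observation that defining $\mu$ at a join node only needs the claim for the child $\sigma_2$, so the definition and the lemma are established together by a single induction.
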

\begin{proof}
We proceed by structural induction on $\abstractdecomposition$.
The claim is trivial for $\leaftype$, since the domain is empty.
For $\introvertextype{\vertexone}(\sigmaabstractdecomposition)$, the map $\topmap{\abstractdecomposition}$ extends $\topmap{\sigmaabstractdecomposition}$ and maps the fresh label $\vertexone$ to the fresh vertex $x\notin \vertexset{\decompositiongraph{\sigmaabstractdecomposition}}$; since $\topmap{\sigmaabstractdecomposition}(\topbag{\sigmaabstractdecomposition})\subseteq \vertexset{\decompositiongraph{\sigmaabstractdecomposition}}$, we have $x\notin \mathrm{im}(\topmap{\sigmaabstractdecomposition})$, hence injectivity is preserved.
For $\forgetvertextype{\vertexone}(\sigmaabstractdecomposition)$, the map is a restriction of $\topmap{\sigmaabstractdecomposition}$, so injectivity is preserved.
For $\introedgetype{\vertexone}{\vertextwo}(\sigmaabstractdecomposition)$, the interface map is unchanged.
Finally, for $\jointype(\sigmaabstractdecomposition_1,\sigmaabstractdecomposition_2)$ we have $\topmap{\abstractdecomposition}(u)=2\theta_1(u)$ for $u\in \topbag{\abstractdecomposition}$, where $\theta_1=\topmap{\sigmaabstractdecomposition_1}$ is injective by the induction hypothesis; multiplying by $2$ preserves injectivity, so $\topmap{\abstractdecomposition}$ is injective.
\end{proof}

\begin{lemma}[Join preserves the graph axioms]
\label{lemma:join-preserves-graph-axioms}
Let $\treewidthvalue\in \N$ and let $\abstractdecomposition=\jointype(\sigmaabstractdecomposition_1,\sigmaabstractdecomposition_2)\in \allabstractdecompositionstreewidth{\treewidthvalue}$.
Let $G_i\defeq \decompositiongraph{\sigmaabstractdecomposition_i}$ and $\theta_i\defeq \topmap{\sigmaabstractdecomposition_i}$ for $i\in\{1,2\}$, and let $H=G_1\uplus G_2$ and $\mu$ be as in Definition~\ref{definition:itd-constructor-semantics}(e).
Then for every edge $\anedge\in \edgeset{H}$, the set
\[
\{\,\mu(v) : (\anedge,v)\in \incidencerelation{H}\,\}
\]
has cardinality $2$. In particular, $\decompositiongraph{\abstractdecomposition}$ is a graph in the sense of Section~\ref{section:Preliminaries} (no loops are created by the identification).
\end{lemma}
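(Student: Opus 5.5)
The plan is to fix an edge $\anedge\in\edgeset{H}$ and trace it back to the child graph it came from. Then I will show that the two endpoints, which are distinct in the child, stay distinct under $\mu$.

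By the definition of disjoint union, $\edgeset{H}=\edgeset{G_1^{(0)}}\cup\edgeset{G_2^{(1)}}$, and these two sets are disjoint by parity. The incidence pairs of $\anedge$ in $H$ therefore all come from a single renamed copy. If $\anedge=2e+0$ with $e\in\edgeset{G_1}$, its incidences are $(2e,2v)$ for the two distinct endpoints $v\in\edgeendpointsname(e)$ in $G_1$. If $\anedge=2e+1$ with $e\in\edgeset{G_2}$, its incidences are $(2e+1,2v+1)$ for the two distinct endpoints of $e$ in $G_2$. Since $G_1$ and $G_2$ are graphs (by structural induction the semantics produces graphs, and I will take that as the ambient induction hypothesis), each edge has exactly two distinct endpoints in $H$. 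So the set in question has cardinality at most $2$, and it suffices to show that $\mu$ is injective on these two endpoints.

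For the case $i=0$, both endpoints are even. By definition, $\mu$ only changes odd vertices of the form $2\theta_2(u)+1$, so it fixes both endpoints, and $2v\neq 2v'$.

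For the case $i=1$, write the endpoints as $x=2v+1$ and $x'=2v'+1$ with $v\neq v'$, and split into three subcases.
\begin{itemize}
\item If neither endpoint is of the form $2\theta_2(u)+1$, both are fixed by $\mu$ and remain distinct.
\item If exactly one is, say $x=2\theta_2(u)+1$, then $\mu(x)=2\theta_1(u)$ is even while $\mu(x')=x'$ is odd, so they differ.
\item If both are, then $x=2\theta_2(u)+1$ and $x'=2\theta_2(u')+1$. Here $v\neq v'$ gives $\theta_2(u)\neq\theta_2(u')$, hence $u\neq u'$. By Lemma~\ref{lemma:itd-interface-map-injective} applied to $\sigmaabstractdecomposition_1$, $\theta_1(u)\neq\theta_1(u')$, and so $\mu(x)=2\theta_1(u)\neq 2\theta_1(u')=\mu(x')$.
\end{itemize}

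In every case the image has exactly two elements. Hence each edge of $\decompositiongraph{\abstractdecomposition}$ is incident to exactly two vertices and no loop is created, and $\decompositiongraph{\abstractdecomposition}$ is a graph. The only nontrivial step is the subcase where both endpoints are interface vertices, and it reduces to injectivity of $\theta_1$, which is supplied by Lemma~\ref{lemma:itd-interface-map-injective}. The well-definedness of $\mu$ itself, via injectivity of $\theta_2$, was already noted in Definition~\ref{definition:itd-constructor-semantics}(e).
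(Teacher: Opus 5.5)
Your proof is correct and follows the paper's argument essentially step for step. Parity confines each edge of $H$ to one renamed copy, and even endpoints are fixed by $\mu$. The odd case splits into the same three subcases, with the both-interface subcase settled by injectivity of $\theta_1$ (Lemma~\ref{lemma:itd-interface-map-injective}). Your explicit use of the hypothesis that $G_1$ and $G_2$ are graphs, so that each edge has two distinct endpoints in $H$, is left implicit in the paper but is the right ambient assumption.
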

\begin{proof}
Let $\anedge\in \edgeset{H}$. Since $H$ is the disjoint union of two renamed copies, the edge $\anedge$ belongs to exactly one child copy.
If $\anedge\in \edgeset{G_1^{(0)}}$, then both endpoints of $\anedge$ are even vertices and $\mu$ fixes all even vertices, so the two endpoints remain distinct.
Otherwise $\anedge\in \edgeset{G_2^{(1)}}$, hence both endpoints are odd vertices.
If neither endpoint is of the form $2\theta_2(u)+1$ with $u\in \topbag{\abstractdecomposition}$, then $\mu$ fixes both endpoints and they remain distinct.
If exactly one endpoint is of that form, then one endpoint maps to an even vertex and the other remains odd, so they are distinct.
If both endpoints are of that form, say $2\theta_2(u)+1$ and $2\theta_2(v)+1$, then $u\neq v$ (otherwise the endpoints would be equal), and $\mu$ maps them to $2\theta_1(u)$ and $2\theta_1(v)$.
By Lemma~\ref{lemma:itd-interface-map-injective}, $\theta_1$ is injective on the bag, hence $\theta_1(u)\neq \theta_1(v)$ and therefore $2\theta_1(u)\neq 2\theta_1(v)$.
Thus in all cases the two endpoints of $\anedge$ remain distinct under $\mu$, as claimed.
\end{proof}

\noindent
We choose the smallest unused vertex/edge identifiers in (b) and (d) to make the construction canonical (i.e., deterministic rather than only defined up to isomorphism).

The join constructor glues two independently constructed graphs along a shared interface bag.
Formally, we (i) take the disjoint union $H=G_1\uplus G_2$, and then (ii) identify, for each interface label $u$, the two copies of the interface vertex $\theta_1(u)$ and $\theta_2(u)$ via the map $\mu$.
Only interface vertices are identified, and edges are never merged; consequently, if both children contain an edge between the same pair of interface vertices then the join creates a multi-edge, while vertices outside the interface remain disjoint across children (Lemma~\ref{lemma:join-no-cross-edges}).

\begin{lemma}[Child embeddings in joins]
\label{lemma:join-child-embeddings}
Let $\treewidthvalue\in \N$ and let $\abstractdecomposition=\jointype(\sigmaabstractdecomposition_1,\sigmaabstractdecomposition_2)\in \allabstractdecompositionstreewidth{\treewidthvalue}$.
Write $G\defeq \decompositiongraph{\abstractdecomposition}$ and $G_i\defeq \decompositiongraph{\sigmaabstractdecomposition_i}$ for $i\in\{1,2\}$, and let $\mu$ be as in Definition~\ref{definition:itd-constructor-semantics}(e).
Then each child graph embeds as a subgraph of $G$:
\begin{enumerate}
\item The maps $\phi_1:\vertexset{G_1}\to \vertexset{G}$ and $\nu_1:\edgeset{G_1}\to \edgeset{G}$ defined by $\phi_1(v)\defeq 2v$ and $\nu_1(e)\defeq 2e$ form an embedding of $G_1$ into $G$.
\item The maps $\phi_2:\vertexset{G_2}\to \vertexset{G}$ and $\nu_2:\edgeset{G_2}\to \edgeset{G}$ defined by $\phi_2(v)\defeq \mu(2v+1)$ and $\nu_2(e)\defeq 2e+1$ form an embedding of $G_2$ into $G$.
\end{enumerate}
Moreover, for each $u\in \topbag{\abstractdecomposition}$ we have $\phi_1(\topmap{\sigmaabstractdecomposition_1}(u))=\phi_2(\topmap{\sigmaabstractdecomposition_2}(u))=\topmap{\abstractdecomposition}(u)$.
\end{lemma}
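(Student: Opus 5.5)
The plan is to verify the embedding axioms directly from the constructor semantics in Definition~\ref{definition:itd-constructor-semantics}(e), i.e.\ from how $G$ is built out of $H=G_1\uplus G_2$ via $\mu$. Recall that $\vertexset{G}=\mu(\vertexset{H})$, $\edgeset{G}=\edgeset{H}$, and $\incidencerelation{G}=\{(e,\mu(v)):(e,v)\in\incidencerelation{H}\}$.

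\textbf{First child.} By construction of $G_1^{(0)}$, the map $v\mapsto 2v$ sends $\vertexset{G_1}$ to even vertices of $H$, and $\mu$ fixes even vertices. Hence $\phi_1(v)=2v=\mu(2v)\in\vertexset{G}$, and $\phi_1$ is injective. Likewise, $\nu_1(e)=2e\in\edgeset{H}=\edgeset{G}$ is injective.

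For incidence, suppose $(e,v)\in\incidencerelation{G_1}$. Then $(2e,2v)\in\incidencerelation{H}$, so $(2e,\mu(2v))=(2e,2v)\in\incidencerelation{G}$. Conversely, suppose $(2e,y)\in\incidencerelation{G}$. Then $y=\mu(w)$ for some $(2e,w)\in\incidencerelation{H}$. Since the even edge $2e$ lies in $G_1^{(0)}$, we have $w=2v'$ with $(e,v')\in\incidencerelation{G_1}$, and so $y=2v'$. If moreover $y=2v$, then $v=v'$. This gives the biconditional.

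\textbf{Second child.} Here $\phi_2(v)=\mu(2v+1)\in\vertexset{G}$ and $\nu_2(e)=2e+1$. Edge injectivity is immediate. The main point is injectivity of $\phi_2$, which I expect to be the only real obstacle. I would split into cases.
\begin{itemize}
\item If neither $v$ nor $w$ is an interface vertex (not in the image of $\theta_2$), both values are distinct odd numbers.
\item If exactly one is an interface vertex, one value is even and the other odd, so they differ.
\item If both are, say $v=\theta_2(u)$ and $w=\theta_2(u')$, then $\phi_2(v)=2\theta_1(u)$ and $\phi_2(w)=2\theta_1(u')$. Equality forces $u=u'$ by injectivity of $\theta_1$ (Lemma~\ref{lemma:itd-interface-map-injective}), hence $v=w$. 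The uniqueness of the label $u$ uses injectivity of $\theta_2$, as already noted in the definition.
\end{itemize}

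For incidence, the forward direction again transports $(e,v)$ to $(2e+1,\mu(2v+1))$. For the converse, suppose $(2e+1,y)\in\incidencerelation{G}$. Then $y=\mu(w)$ with $(2e+1,w)\in\incidencerelation{H}$, where $w=2v'+1$ and $(e,v')\in\incidencerelation{G_2}$. So $y=\phi_2(v')$. If $y=\phi_2(v)$, then injectivity of $\phi_2$ gives $v=v'$.

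\textbf{Interface compatibility.} Finally, for $u\in\topbag{\abstractdecomposition}$ we have $\phi_1(\theta_1(u))=2\theta_1(u)$. On the other side, $\phi_2(\theta_2(u))=\mu(2\theta_2(u)+1)=2\theta_1(u)$ by the first case of $\mu$. Both equal $\topmap{\abstractdecomposition}(u)$ by definition, which completes the plan.
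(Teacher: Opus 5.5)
Your proposal is correct and follows the same route as the paper: you verify injectivity and both directions of the incidence biconditional directly from the construction $\vertexset{G}=\mu(\vertexset{H})$, $\edgeset{G}=\edgeset{H}$, and you settle the interface claim with the same computation $\mu(2\theta_2(u)+1)=2\theta_1(u)$. Your version is slightly more careful than the paper's in two places. First, you spell out the parity case split and the use of injectivity of $\theta_1$ behind injectivity of $\phi_2$. Second, where the paper simply asserts the converse incidence directions, you justify them by pulling back along $\mu$ and using that $\mu$ is injective on the vertices of each child copy.
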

\begin{proof}
Let $H\defeq G_1\uplus G_2$ be the disjoint union. By the disjoint union construction (Section~\ref{section:Preliminaries}), the left copy $G_1^{(0)}$ uses even vertex/edge identifiers and the right copy $G_2^{(1)}$ uses odd ones. By Definition~\ref{definition:itd-constructor-semantics}(e), the graph $G$ has edge set $\edgeset{H}$ and incidence relation
\[
\incidencerelation{G}=\{(e,\mu(v)):(e,v)\in \incidencerelation{H}\}.
\]

\emph{(1)} The maps $\phi_1$ and $\nu_1$ are injective. Moreover, since $\mu$ fixes all even vertices, for each $(e,v)\in \incidencerelation{G_1}$ we have $(2e,2v)\in \incidencerelation{H}$ and hence $(2e,\mu(2v))=(\nu_1(e),\phi_1(v))\in \incidencerelation{G}$. Conversely, if $(\nu_1(e),\phi_1(v))=(2e,2v)\in \incidencerelation{G}$, then $(2e,2v)\in \incidencerelation{H}$, which implies $(e,v)\in \incidencerelation{G_1}$ by the definition of the disjoint union. Thus $(\phi_1,\nu_1)$ is an embedding.

\emph{(2)} The map $\nu_2$ is injective. For $\phi_2$, note that $\mu$ fixes odd vertices that are not interface vertices, and maps each odd interface vertex $2\topmap{\sigmaabstractdecomposition_2}(u)+1$ to the even vertex $2\topmap{\sigmaabstractdecomposition_1}(u)$. Distinct odd vertices in $\vertexset{G_2^{(1)}}$ therefore have distinct $\mu$-images, so $\phi_2$ is injective. Now, for each $(e,v)\in \incidencerelation{G_2}$ we have $(2e+1,2v+1)\in \incidencerelation{H}$ and hence $(2e+1,\mu(2v+1))=(\nu_2(e),\phi_2(v))\in \incidencerelation{G}$. Conversely, if $(\nu_2(e),\phi_2(v))=(2e+1,\mu(2v+1))\in \incidencerelation{G}$, then by definition of $\incidencerelation{G}$ we have $(2e+1,2v+1)\in \incidencerelation{H}$, which implies $(e,v)\in \incidencerelation{G_2}$. Thus $(\phi_2,\nu_2)$ is an embedding.

For the final claim, let $u\in \topbag{\abstractdecomposition}$. Then $\phi_1(\topmap{\sigmaabstractdecomposition_1}(u))=2\topmap{\sigmaabstractdecomposition_1}(u)=\topmap{\abstractdecomposition}(u)$ by Definition~\ref{definition:itd-constructor-semantics}(e). Also $\phi_2(\topmap{\sigmaabstractdecomposition_2}(u))=\mu(2\topmap{\sigmaabstractdecomposition_2}(u)+1)=2\topmap{\sigmaabstractdecomposition_1}(u)=\topmap{\abstractdecomposition}(u)$ by the definition of $\mu$.
\end{proof}

\begin{lemma}[No cross-edges in joins]
\label{lemma:join-no-cross-edges}
Let $\treewidthvalue\in \N$ and let $\abstractdecomposition=\jointype(\sigmaabstractdecomposition_1,\sigmaabstractdecomposition_2)\in \allabstractdecompositionstreewidth{\treewidthvalue}$.
Write $G\defeq \decompositiongraph{\abstractdecomposition}$ and $G_i\defeq \decompositiongraph{\sigmaabstractdecomposition_i}$ for $i\in\{1,2\}$, and let $\abag\defeq \topbag{\abstractdecomposition}$.
Let $\phi_i$ and $\nu_i$ be the vertex and edge maps of the embedding of $G_i$ into $G$ from Lemma~\ref{lemma:join-child-embeddings}.
Then every edge identifier of $G$ originates in exactly one child: for each $\anedge\in \edgeset{G}$, there is a unique $i\in\{1,2\}$ and a unique $\anedge_i\in\edgeset{G_i}$ such that $\anedge=\nu_i(\anedge_i)$.
For this child $i$, the endpoints $\edgeendpointsname(\anedge)$ in $G$ are contained in $\phi_i(\vertexset{G_i})$.
In particular, there is no edge of $G$ whose endpoints lie respectively in
\[
\phi_1(\vertexset{G_1}\setminus \topmap{\sigmaabstractdecomposition_1}(\abag))
\quad\text{and}\quad
\phi_2(\vertexset{G_2}\setminus \topmap{\sigmaabstractdecomposition_2}(\abag)).
\]
\end{lemma}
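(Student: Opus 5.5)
The plan is to argue directly from the explicit join semantics in Definition~\ref{definition:itd-constructor-semantics}(e), using the parity bookkeeping of the disjoint-union construction. Recall that $\edgeset{G}=\edgeset{H}=\edgeset{G_1^{(0)}}\cup\edgeset{G_2^{(1)}}$, where the first set contains only even identifiers and the second only odd ones. Hence every $\anedge\in\edgeset{G}$ lies in exactly one of the two sets. This gives the unique index $i$, namely $i=1$ if $\anedge$ is even and $i=2$ if it is odd. It also gives the unique preimage $\anedge_i$: take $\anedge_1=\anedge/2$ in the first case and $\anedge_2=(\anedge-1)/2$ in the second. Uniqueness follows because $\nu_1$ and $\nu_2$ are injective and their images are disjoint by parity.

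Next I would show that the endpoints lie in $\phi_i(\vertexset{G_i})$. By definition, $\incidencerelation{G}=\{(e,\mu(v)):(e,v)\in\incidencerelation{H}\}$, and in $H$ an edge from copy $i$ is incident only to vertices of the same copy. For $i=1$, an incidence $(2e_1,w)\in\incidencerelation{H}$ forces $w=2v$ with $(e_1,v)\in\incidencerelation{G_1}$. The corresponding endpoint in $G$ is then $\mu(2v)=2v=\phi_1(v)$. For $i=2$, an incidence $(2e_2+1,w)$ forces $w=2v+1$ with $(e_2,v)\in\incidencerelation{G_2}$, and the endpoint is $\mu(2v+1)=\phi_2(v)$. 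Thus $\edgeendpointsname(\anedge)\subseteq\phi_i(\vertexset{G_i})$.

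For the final claim, the key step, and the only place where some care is needed, is to show that the two non-interface images are disjoint from everything in the other child's image. Specifically, I will show
\[
\phi_1(\vertexset{G_1}\setminus\topmap{\sigmaabstractdecomposition_1}(\abag))\cap\phi_2(\vertexset{G_2})=\emptyset
\quad\text{and}\quad
\phi_2(\vertexset{G_2}\setminus\topmap{\sigmaabstractdecomposition_2}(\abag))\cap\phi_1(\vertexset{G_1})=\emptyset.
\]
The second holds because for non-interface $v$, $\mu$ fixes $2v+1$, which is odd, while all of $\phi_1$'s image is even. For the first, the image of $\phi_2$ consists of odd numbers together with the even numbers $2\topmap{\sigmaabstractdecomposition_1}(u)$ for $u\in\abag$. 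An even value $2v$ equals such a number only if $v=\topmap{\sigmaabstractdecomposition_1}(u)$, which is excluded for non-interface $v$.

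Now suppose some edge had one endpoint $x$ in the first set and one endpoint $y$ in the second set. By the previous paragraph, both endpoints lie in $\phi_i(\vertexset{G_i})$ for the originating child $i$. If $i=1$, then $y\in\phi_1(\vertexset{G_1})$, contradicting the second disjointness. If $i=2$, then $x\in\phi_2(\vertexset{G_2})$, contradicting the first. So no such edge exists.
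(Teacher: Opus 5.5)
Your proof is correct and follows the same route as the paper's. Both arguments get the unique origin of each edge from the disjoint edge identifiers of the disjoint union, identify the endpoints as $\mu$-images of the child's incident vertices, and conclude because $\mu$ identifies only interface vertices; your parity bookkeeping spells out what the paper states in one line each, in particular that each child's non-interface image is disjoint from the other child's entire image, which is what the paper's final sentence relies on.
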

\begin{proof}
In the join construction of Definition~\ref{definition:itd-constructor-semantics}(e), the edge set of $G$ is the edge set of the disjoint union $H=G_1\uplus G_2$, and vertices are identified only by the map $\mu$ on interface vertices.
The disjoint union uses disjoint edge identifiers, and the join does not identify edge identifiers. Hence every edge $\anedge\in \edgeset{G}$ is uniquely of the form $\nu_i(\anedge_i)$ for one child edge $\anedge_i\in\edgeset{G_i}$.
Its incident vertices in $G$ are precisely the $\mu$-images of the incident vertices of that child edge, so both endpoints of $\anedge$ lie in the image of the corresponding child embedding. Since $\mu$ only identifies interface vertices, no edge can connect a non-interface vertex from the first child image to a non-interface vertex from the second child image.
\end{proof}

\begin{lemma}[Subterms induce subgraphs]
\label{lemma:subterm-subgraph}
Let $\treewidthvalue\in \N$, $\abstractdecomposition\in \allabstractdecompositionstreewidth{\treewidthvalue}$, and let $\sigmaabstractdecomposition$ be a subterm of $\abstractdecomposition$. Then $\decompositiongraph{\sigmaabstractdecomposition}$ is isomorphic to a subgraph of $\decompositiongraph{\abstractdecomposition}$.
\end{lemma}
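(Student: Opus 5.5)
The plan is structural induction on $\abstractdecomposition$, proving the slightly stronger statement that \emph{$\decompositiongraph{\sigmaabstractdecomposition}$ embeds as a subgraph of $\decompositiongraph{\abstractdecomposition}$}. Here ``isomorphic to a subgraph'' is read as ``embeds as a subgraph'' in the sense of the embeddings of Section~\ref{section:Preliminaries}, since the image of an embedding, with the restricted incidence relation, is a subgraph isomorphic to the source. The key auxiliary fact is that embeddings compose: if $(\phi,\nu)$ embeds $A$ into $B$ and $(\phi',\nu')$ embeds $B$ into $C$, then $(\phi'\circ\phi,\nu'\circ\nu)$ is injective and preserves and reflects incidence, by chaining the two biconditionals. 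Reflexivity is given by the identity embedding.

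By transitivity it then suffices to handle immediate subterms: show that if $\sigmaabstractdecomposition$ is a child of the root of $\abstractdecomposition$, then $\decompositiongraph{\sigmaabstractdecomposition}$ embeds into $\decompositiongraph{\abstractdecomposition}$. An arbitrary subterm is reached by a finite chain of child steps, so induction on the depth of the subterm occurrence, composing embeddings along the chain, finishes the argument.

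The child cases follow Definition~\ref{definition:itd-constructor-semantics}.
\begin{itemize}
\item For $\forgetvertextype{\vertexone}$, the graph is unchanged, so the identity is an embedding.
\item For $\introvertextype{\vertexone}$, the identity on vertices and edges works. The vertex set only gains the fresh isolated vertex $x$, and the incidence relation is unchanged. Incidence is reflected because no new pairs $(e,v)$ are added.
\item For $\introedgetype{\vertexone}{\vertextwo}$, the identity again works. The new edge $e$ is fresh, so it is not in the image of $\nu$. Therefore the added incidences $(e,\cdot)$ never involve an edge in the image, and incidence is both preserved and reflected.
\item For $\jointype$, both children embed directly by Lemma~\ref{lemma:join-child-embeddings}.
\end{itemize}

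The only potentially delicate point is the join, and that is already settled by Lemma~\ref{lemma:join-child-embeddings}. The remaining subtlety is bookkeeping. Subterm occurrences should be treated as nodes of the term tree, so that ``child'' is well defined even when the two join children are written identically. One also needs to check, in the edge case, that the reflection direction of the embedding biconditional holds; this is exactly the freshness argument above. I expect no real obstacle beyond stating the composition of embeddings cleanly.
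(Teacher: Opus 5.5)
Your proposal is correct and follows essentially the same route as the paper: induct along the path between the occurrence of $\sigma$ and the root of $\tau$, take a one-step embedding for each constructor (with the join case supplied by Lemma~\ref{lemma:join-child-embeddings}), and compose embeddings. Your explicit checks of the introduce-vertex and introduce-edge cases, of the freshness argument for reflecting incidence, and of composition of embeddings spell out details that the paper's proof leaves implicit.
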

\begin{proof}
We prove the statement by induction on the distance from the root of $\sigmaabstractdecomposition$ to the root of $\abstractdecomposition$ in the term tree. If $\sigmaabstractdecomposition=\abstractdecomposition$, the claim is immediate. Otherwise, let $\pi$ be the parent of $\sigmaabstractdecomposition$ in $\abstractdecomposition$. By Definition~\ref{definition:itd-constructor-semantics}, $\decompositiongraph{\sigmaabstractdecomposition}$ embeds as a subgraph of $\decompositiongraph{\pi}$ for each constructor case (and in the join case, the required embeddings are given explicitly by Lemma~\ref{lemma:join-child-embeddings}). By the induction hypothesis, $\decompositiongraph{\pi}$ embeds as a subgraph of $\decompositiongraph{\abstractdecomposition}$. Composing embeddings yields the claim.
\end{proof}

		\section{DP-Cores for Graph Properties}
\label{appendix:DPCores}

In this appendix, we define DP-cores for the graph properties used throughout the paper that are not already specified in the main text.
For each such graph property $\graphproperty$, we specify a DP-core and prove its correctness on instructive decompositions.
The concrete cores used for Reed's conjecture are denoted
$\maxDegreeAtLeastCore{\maxdegreeparameter}$,
$\simpleCliqueNumberAtLeastCore{\cliqueparameter}$, and
$\hasMultiEdgeCore$, matching the corresponding TreeWidzard property-file names.

Fix $k\in \N$ and let $\abstractdecomposition\in \allabstractdecompositionstreewidth{k}$ be a $k$-instructive decomposition (Section~\ref{section:InstructiveTreeDecompositions}).
Each DP-core is specified via a predicate of the form $\mathrm{Pred}_{\graphproperty}$ relating decompositions and local witnesses, and we prove predicate soundness.
As a consequence, for every $\abstractdecomposition\in \allabstractdecompositionstreewidth{k}$, acceptance by the corresponding width-$k$ core is equivalent to $\decompositiongraph{\abstractdecomposition}\in \graphproperty$ (DP-core soundness).

				\subsection{Maximum Degree}
{
\renewcommand{\dpcore}{\maxDegreeAtLeastCore{\maxdegreeparameter}}

Let $\agraph$ be a graph. The maximum degree $\maxDegree(\agraph)$ of $\agraph$ is the maximum degree among its vertices, i.e.,
\[
\maxDegree(\agraph)=\max_{\xvertex  \in \vertexset{\agraph}} d_\xvertex,
\]
where $d_\xvertex$ is the degree of the vertex $\xvertex$ in $\agraph$. For the empty graph, we use the convention that its maximum degree is $0$. We denote the maximum degree of $\agraph$ by $\maxDegree$ if $\agraph$ is clear from the context. Let $\maxDegreeAtLeastProperty{\maxdegreeparameter}$ be the graph property containing all graphs that have a maximum degree of at least $\maxdegreeparameter$. In the following, we define a DP-core $\dpcore$ that decides $\maxDegreeAtLeastProperty{\maxdegreeparameter}$. More specifically, for each $\treewidthvalue \in \N$, $\dpcore[\treewidthvalue]$ decides the graph property $\maxDegreeAtLeastProperty{\maxdegreeparameter} \cap \allgraphstreewidth{\treewidthvalue}$.
\\

\noindent{\textbf{Local Witness.}} We begin by defining the structure of local witnesses for $\dpcore[\treewidthvalue]$. The main idea is to track the degrees of active labels, which represent the active vertices of a graph, until we encounter a label with a degree of at least $\maxdegreeparameter$. Inactive vertices do not need to be tracked since no edges will be connected to them.

To implement this idea, we define a local witness to be either a map or a Boolean value. We define the map to record the degrees of active labels for each node of a $\treewidthvalue$-instructive tree decomposition. If a local witness is the Boolean value $\maxdegreefound$, it indicates that a label with a degree of at least $\maxdegreeparameter$ has been found.

\begin{definition}
A $\dpcore[\treewidthvalue]$ local witness is either a map $\awitness:\abag \to \{0,1,\dots,\maxdegreeparameter-1\}$, where $\abag \subseteq [\treewidthvalue+1]$, or the Boolean value $\awitness = \maxdegreefound$.
\end{definition}

Intuitively, $\abag$ represents the set of labels that are active in some node of a $\treewidthvalue$-instructive tree decomposition, and the Boolean value $\maxdegreefound$ is a Boolean flag specifying that a vertex of maximum degree at least $\maxdegreeparameter$ was found.

\noindent{\textbf{All Witnesses}.} We let $\dpcore[\treewidthvalue].\allwitnesses$ be the set of all $\dpcore[\treewidthvalue]$-local witnesses:
\[
\dpcore[\treewidthvalue].\allwitnesses = \{\awitness : \awitness \text{ is a } \dpcore[\treewidthvalue]\text{ local witness}\}.
\]

\noindent{\textbf{Initial Witnesses}.} When the instruction $\initialsetgenericcore$ is called, it creates an empty graph. The only possible local witness is the empty map, $\awitness_0:\emptyset\rightarrow \emptyset$, since the empty graph has no vertices.
\[
\dpcore[\treewidthvalue].\initialsetgeneric = \{\awitness_0\}.
\]
\\

\noindent{\textbf{Introduce Vertex}.} Let $\awitness$ be a $\dpcore[\treewidthvalue]$-local witness, and let $\vertexone$ represent a label in $[\treewidthvalue + 1]$. When a new vertex is introduced and labeled with $\vertexone$, we evaluate the situation based on the state of the local witness.

If a vertex with a degree of at least $\maxdegreeparameter$ has already been identified, i.e., $\awitness = \maxdegreefound$, there is no need to modify the local witness, as the desired condition has already been met. However, if no such vertex has been found, i.e., $\awitness$ is a map, the label $\vertexone$ is added to the map $\awitness$ with an initial value of 0. This reflects the fact that the vertex represented by $\vertexone$ currently has a degree of 0, as no edges are connected to it at this stage.

\[
\dpcore[\treewidthvalue].\introvertexgeneric{\vertexone}(\awitness) = \begin{cases}
    \{\awitness\} & \quad \text {if } \awitness = \maxdegreefound, \\[6pt]
    \emptyset & \quad \begin{array}[t]{@{}l@{}}
    \text{if } \awitness\neq \maxdegreefound\\
    \text{and } \vertexone\in\domain(\awitness),
    \end{array}\\[6pt]
    \{\awitness \cup \{\vertexone \to 0\}\} & \quad \text{otherwise.}
\end{cases}
\]
\\

\noindent{\textbf{Forget Vertex}.} Let $\awitness$ be a local witness, and let $\vertexone$ represent an active label. When the vertex labeled $\vertexone$ is forgotten, the label $\vertexone$ is freed, making it available for reuse when introducing a future vertex. If $\awitness= \maxdegreefound$, it implies that a vertex meeting the required degree condition has already been identified. In this case, the local witness $\awitness$ remains unchanged, as no further modifications are needed. On the other hand, if $\awitness$ is a map, this process involves removing the label $\vertexone$ from the map $\awitness$, since the degree of the vertex with label $\vertexone$ will not increase anymore.
\[
\dpcore[\treewidthvalue].\forgetvertextype{\vertexone}(\awitness) = \begin{cases}
\{\awitness\} & \quad \text{if } \awitness=\maxdegreefound, \\[6pt]
\{\awitness|_{\domain(\awitness)\setminus\{\vertexone\}}\} & \quad \text{otherwise.}
\end{cases}
\]

\noindent{\textbf{Introduce Edge}.} When introducing an edge between two vertices, the first step is to determine whether a vertex with a degree of at least $\maxdegreeparameter$ has already been found. Let $\awitness$ be a local witness, and let $\vertexone$ and $\vertextwo$ be distinct active labels representing two vertices. If $\awitness = \maxdegreefound$, it indicates that a qualifying vertex has already been identified, and no further modifications to the local witness $\awitness$ are needed.

Otherwise (i.e., if $\awitness$ is a map), we proceed by incrementing the degrees of the vertices represented by $\vertexone$ and $\vertextwo$ in the map $\awitness$. If either endpoint is missing from the map domain, the local witness is incompatible with introducing an edge on these active labels and the transition returns no witness. We then check whether either $\awitness(\vertexone)$ or $\awitness(\vertextwo)$ reaches $\maxdegreeparameter$ after the increment. If this condition is met, it signifies that a vertex with the required degree has been found. In this case, the local witness will change to the Boolean value $\maxdegreefound$, as further tracking of degrees is unnecessary.

If neither vertex reaches $\maxdegreeparameter$, the degrees of $\vertexone$ and $\vertextwo$ are updated in $\awitness$, and the local witness is modified accordingly.

Formally, the operation $\dpcore[\treewidthvalue].\introedgegeneric{\vertexone}{\vertextwo}(\awitness)$ produces one of the following outcomes:

\begin{enumerate}
    \item If $\awitness=\maxdegreefound$, return $\{\awitness\}$.
    \item If $\awitness\neq\maxdegreefound$ and $\{\vertexone,\vertextwo\}\nsubseteq \domain(\awitness)$, return $\emptyset$.
    \item If either $\awitness(\vertexone)=\maxdegreeparameter-1$ or $\awitness(\vertextwo)=\maxdegreeparameter-1$, return $\{\maxdegreefound\}$.
    \item Otherwise, return
    \[
    \{(\awitness|_{\domain(\awitness)\setminus\{\vertexone,\vertextwo\}}\cup\{\vertexone\to \awitness(\vertexone)+1,\vertextwo\to\awitness(\vertextwo)+1\})\}.
    \]
\end{enumerate}

\noindent{\textbf{Join}.}
Let $\awitness$ and $\awitness'$ be local witnesses associated with the left and right children of a given node, respectively. The goal of the $\joingenericcore$ operation is to merge the graphs corresponding to these children along the vertices labeled with active labels.

The process begins by checking whether either child has already identified a vertex with a degree of at least $\maxdegreeparameter$. Specifically, if $\awitness = \maxdegreefound$ for the left child or $\awitness' = \maxdegreefound$ for the right child, the join operation immediately returns the default local witness $\maxdegreefound$, as no further processing is needed.

If neither child has identified such a vertex (i.e., both $\awitness$ and $\awitness'$ are maps), we first check that the two maps have the same domain. If the domains differ, the two implementation witnesses are incompatible and the join returns the empty set. Otherwise, we proceed by checking the sum of the values of $\awitness$ and $\awitness'$ for each active label. If the sum for any active label is at least $\maxdegreeparameter$, this indicates that a vertex with the required degree has been found. In this case, we again return the local witness $\maxdegreefound$.

If no such vertex is found, we construct a new map $\awitness''$ by summing the values of $\awitness$ and $\awitness'$ for each active label. This new map $\awitness''$ represents the combined state of the left and right children.

Formally, the operation $\dpcore[\treewidthvalue].\joingenericcore(\awitness,\awitness')$ produces one of the following outcomes:
\begin{enumerate}
    \item If $\awitness=\maxdegreefound$ or $\awitness' = \maxdegreefound$, return $\{\maxdegreefound\}$.
    \item If $\domain(\awitness)\neq\domain(\awitness')$, return $\emptyset$.
    \item If there exists $\vertexone  \in \domain(\awitness)$ such that $\awitness(\vertexone)+\awitness'(\vertexone)\geq \maxdegreeparameter$, return $\{\maxdegreefound\}$.
    \item Otherwise, return $\{\awitness''\}$, where for each $\vertexone  \in \domain(\awitness)$, $\awitness''(\vertexone)=\awitness(\vertexone)+\awitness'(\vertexone)$.
\end{enumerate}

\noindent{\textbf{Final Witness Function.}}
The final witness function is used to determine whether the condition of finding a vertex with a degree of at least $\maxdegreeparameter$ has been satisfied. This function checks if $\awitness = \maxdegreefound$.
If $\awitness = \maxdegreefound$, it confirms that a vertex with the required degree has been found, and the function outputs $1$. Otherwise, the function outputs $0$, indicating that the condition has not yet been met.

Formally:
\[
\dpcore[\treewidthvalue].\finalwitnessgenericcore(\awitness) = 1 \quad \text{if and only if} \quad \awitness =  \maxdegreefound.
\]

\noindent{\textbf{Cleaning Function}.} In our example, the cleaning function acts as the identity:
\[
\dpcore[\treewidthvalue].\cleaningfunctioncore(\witnessset)=\witnessset \text{ for every }
\witnessset \subseteq \dpcore[\treewidthvalue].\allwitnesses.
\]

\noindent{\textbf{Invariant Function}.} In our example, the invariant function is trivial. It sends a set of local witnesses $\witnessset$ to $1$ if and only if it contains a final local witness.

\[
\dpcore[\treewidthvalue].\invariantCore(\witnessset)=
\begin{cases}
1 & \quad \text{ if } \witnessset \text{ has a final local witness,} \\[6pt]
0 & \quad \text{ otherwise.}
\end{cases}
\]

The next step is to prove the correctness of the DP-core we just defined. For this, we define the following predicate relating $\treewidthvalue$-instructive tree decompositions with local witnesses.

\begin{definition}
\label{definition:MaxDegreePredicate}
We let $\mxdp$ be the predicate that is true on a pair $(\abstractdecomposition,\awitness)   \in \allabstractdecompositionstreewidth{\treewidthvalue}\times \maxDegreeAtLeastCore{\maxdegreeparameter}[\treewidthvalue].\allwitnesses$ if and only if one of the following happens:
\begin{enumerate}
    \item \label{definition:MaxDegreePredicate-one} $\awitness = \maxdegreefound$ and $\maxDegree(\decompositiongraph{\abstractdecomposition})\geq \maxdegreeparameter$.
    \item $\awitness\neq \maxdegreefound$, $\domain(\awitness)=\topbag{\abstractdecomposition}$, $\maxDegree(\decompositiongraph{\abstractdecomposition})< \maxdegreeparameter$, and for each $\vertexone  \in \topbag{\abstractdecomposition}$, $\awitness(\vertexone) = d_{\decompositiongraph{\abstractdecomposition}}(\topmap{\abstractdecomposition}(\vertexone))$.
\end{enumerate}
\end{definition}

The first condition requires that if the witness $\awitness = \maxdegreefound$, the graph associated with $\abstractdecomposition$ should have a maximum degree of at least $\maxdegreeparameter$. Otherwise, the maximum degree of the graph $\decompositiongraph{\abstractdecomposition}$ should be less than $\maxdegreeparameter$, the map domain should be exactly the active bag, and for each active label in $\topbag{\abstractdecomposition}$, say $\vertexone$, the vertex associated with that label in the graph $\decompositiongraph{\abstractdecomposition}$ should have the degree $\awitness(\vertexone)$.

\begin{proposition}
\label{proposition:PredicateMaxDegree}
For each $\abstractdecomposition  \in \allabstractdecompositionstreewidth{\treewidthvalue}$, a local witness $\awitness$ belongs to
$
\dynamizationfunction{\maxDegreeAtLeastCore{\maxdegreeparameter}}{\treewidthvalue}(\abstractdecomposition)
$
if and only if $\mxdp(\abstractdecomposition,\awitness)=\truevalue$.
\end{proposition}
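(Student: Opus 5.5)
The plan is structural induction on $\abstractdecomposition$, proving both directions at each constructor, in the same style as Proposition~\ref{proposition:PredicateColorable}. A key fact is that the dynamization is deterministic. By induction, $\dynamizationfunction{\maxDegreeAtLeastCore{\maxdegreeparameter}}{\treewidthvalue}(\abstractdecomposition)$ is always a singleton. This holds because every transition on a reachable witness returns exactly one witness: the empty-output guards (label already in the domain, endpoints missing, or mismatched join domains) are never triggered on reachable witnesses. For a map witness, the domain equals $\topbag{\abstractdecomposition}$, so legality of the instructive decomposition rules these guards out. Hence it suffices to show that the unique dynamized witness $\awitness_\abstractdecomposition$ satisfies the predicate, together with the observation that the predicate determines $\awitness$ uniquely. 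The predicate fixes $\awitness$ to be $\maxdegreefound$ exactly when $\maxDegree\ge\maxdegreeparameter$, and otherwise fixes it to be the degree map on the bag. The converse direction is then immediate.

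The leaf case is immediate: the empty graph has maximum degree $0<\maxdegreeparameter$ because $\maxdegreeparameter\in\Nplus$, and the witness is the empty map. For introduce vertex, the new vertex is isolated, so the maximum degree is unchanged and the new label has degree $0$, which matches the extension by $\vertexone\to 0$. For forget vertex, the graph is unchanged and restricting the map matches restricting $\topmap{\abstractdecomposition}$; if the child witness is $\maxdegreefound$, it stays so. For introduce edge, the degrees of exactly $\topmap{}(\vertexone)$ and $\topmap{}(\vertextwo)$ increase by one (multiedges count) and all other degrees are unchanged. The new maximum degree therefore reaches $\maxdegreeparameter$ iff the child already had maximum degree at least $\maxdegreeparameter$ or one of the two endpoints had degree $\maxdegreeparameter-1$, which are exactly the transition cases.

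The join case is the main step. I will use Lemma~\ref{lemma:join-child-embeddings} and Lemma~\ref{lemma:join-no-cross-edges} to compute degrees in $G=\decompositiongraph{\abstractdecomposition}$. By Lemma~\ref{lemma:join-no-cross-edges}, every edge of $G$ comes from exactly one child, via $\nu_1$ or $\nu_2$. Consequently:
\begin{itemize}
\item a non-interface vertex in the image of $\phi_i$ has the same degree as in $G_i$;
\item an interface vertex $\topmap{\abstractdecomposition}(\vertexone)$ has degree $d_{G_1}(\theta_1(\vertexone))+d_{G_2}(\theta_2(\vertexone))$.
\end{itemize}
Therefore $\maxDegree(G)\ge\maxdegreeparameter$ iff some child has maximum degree at least $\maxdegreeparameter$ or some bag label has summed degree at least $\maxdegreeparameter$. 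This matches the join transition cases: $\maxdegreefound$ is returned if either child witness is $\maxdegreefound$ or some sum reaches $\maxdegreeparameter$, and otherwise the summed map is returned. The induction hypothesis guarantees that when both child witnesses are maps, their domains are both $\topbag{\abstractdecomposition}$, so case 2 of the join transition does not fire.

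The main obstacle I expect is making the degree-additivity at join nodes rigorous with the paper's incidence-relation formalism. This needs the vertex map $\mu$ to identify only interface vertices, and each edge's incidences in $G$ to be the $\mu$-images of its incidences in one child. I would state this as a short auxiliary claim derived from Definition~\ref{definition:itd-constructor-semantics}(e) before running the induction.
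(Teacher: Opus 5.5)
Your proof is correct. It shares the paper's skeleton: structural induction over the constructors, with degree additivity at join nodes obtained from the child embeddings and the no-cross-edge lemma. Where it differs is the converse direction.

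The paper proves $(\Leftarrow)$ separately at every constructor. It builds a predecessor witness satisfying the child predicate (restricting the map, decrementing the two endpoint counters at an introduce-edge node, taking the two child degree maps at a join, and case-splitting on whether $\Delta\ge d$ was already reached in a child), and then invokes the induction hypothesis.

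You instead observe two facts:
\begin{itemize}
\item for each $\tau$ the predicate has at most one solution;
\item the dynamized set is a singleton, because the empty-output guards never fire on reachable witnesses.
\end{itemize}
Hence $(\Leftarrow)$ follows from $(\Rightarrow)$ together with nonemptiness. This roughly halves the case analysis. It also gives the sharper statement that the maximum-degree core is deterministic on legal decompositions, matching the paper's multiplicity-$1$ remark. The paper's two-sided template, by contrast, is the one that carries over to the genuinely nondeterministic coloring and clique cores, where no uniqueness shortcut exists.

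Two points to tighten when you write it up. First, state the induction hypothesis as the conjunction ``the dynamized set is $\{w_\tau\}$ and $w_\tau$ satisfies the predicate.'' Ruling out the guards uses the clause $\mathrm{dom}(w)=B(\tau)$, which comes from the predicate itself. Second, at introduce-edge nodes, cite injectivity of the interface map (Lemma~\ref{lemma:itd-interface-map-injective}). This guarantees the two endpoints are distinct, so each gains exactly one incidence.
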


\begin{proof}
We proceed by structural induction on $\abstractdecomposition$. Throughout, we use that the cleaning function of $\dpcore[\treewidthvalue]$ is the identity.

\medskip
\noindent\textbf{Base case:} $\abstractdecomposition=\leaftype$. By Definition~\ref{definition:Dynamization}, $\dynamizationfunction{\dpcore}{\treewidthvalue}(\leaftype)=\dpcore[\treewidthvalue].\initialsetgeneric=\{\awitness_0\}$, where $\awitness_0$ is the empty map. Hence $\awitness\in \dynamizationfunction{\dpcore}{\treewidthvalue}(\leaftype)$ if and only if $\awitness=\awitness_0$. Since $\decompositiongraph{\leaftype}$ is the empty graph and $\topbag{\leaftype}=\emptyset$, we have $\maxDegree(\decompositiongraph{\leaftype})=0<\maxdegreeparameter$, so $\mxdp(\leaftype,\awitness_0)=\truevalue$ and $\mxdp(\leaftype,\maxdegreefound)=\falsevalue$. Conversely, if $\mxdp(\leaftype,\awitness)=\truevalue$ and $\awitness\neq\maxdegreefound$, then Condition~2 forces $\domain(\awitness)=\emptyset$, so $\awitness=\awitness_0$.

\medskip
\noindent\textbf{Induction step:} assume the claim holds for the immediate subterm(s) of $\abstractdecomposition$.

\medskip
\noindent\textbf{Case 1:} $\abstractdecomposition=\introvertextype{\vertexone}(\sigmaabstractdecomposition)$.

$(\Rightarrow)$ Suppose $\awitness\in \dynamizationfunction{\dpcore}{\treewidthvalue}\allowbreak(\abstractdecomposition)$. Then $\awitness\in \introvertextype{\vertexone}(\dynamizationfunction{\dpcore}{\treewidthvalue}\allowbreak(\sigmaabstractdecomposition))$. Hence there exists $\awitness'\in \dynamizationfunction{\dpcore}{\treewidthvalue}\allowbreak(\sigmaabstractdecomposition)$ such that $\awitness\in \introvertextype{\vertexone}(\awitness')$. If $\awitness'=\maxdegreefound$, then $\awitness=\maxdegreefound$ and by the induction hypothesis $\maxDegree(\decompositiongraph{\sigmaabstractdecomposition})\ge \maxdegreeparameter$. By Definition~\ref{definition:itd-constructor-semantics}, introducing a vertex adds an isolated vertex, so $\maxDegree(\decompositiongraph{\abstractdecomposition})\ge \maxdegreeparameter$ and $\mxdp(\abstractdecomposition,\awitness)=\truevalue$. Otherwise $\awitness'$ is a map and $\awitness=\awitness'\cup\{\vertexone\to 0\}$. By the induction hypothesis, $\maxDegree(\decompositiongraph{\sigmaabstractdecomposition})<\maxdegreeparameter$ and $\awitness'$ records the degrees of all active vertices. By Definition~\ref{definition:itd-constructor-semantics}, $\vertexone$ maps to a new isolated vertex, so $\awitness$ records the degrees of all active vertices in $\decompositiongraph{\abstractdecomposition}$ and $\maxDegree(\decompositiongraph{\abstractdecomposition})<\maxdegreeparameter$, hence $\mxdp(\abstractdecomposition,\awitness)=\truevalue$.

$(\Leftarrow)$ Suppose $\mxdp(\abstractdecomposition,\awitness)=\truevalue$. If $\awitness=\maxdegreefound$, then $\maxDegree(\decompositiongraph{\abstractdecomposition})\ge \maxdegreeparameter$, and since introducing an isolated vertex does not decrease maximum degree, we have $\maxDegree(\decompositiongraph{\sigmaabstractdecomposition})\ge \maxdegreeparameter$. By the induction hypothesis, $\maxdegreefound\in \dynamizationfunction{\dpcore}{\treewidthvalue}\allowbreak(\sigmaabstractdecomposition)$, and therefore $\maxdegreefound\in \dynamizationfunction{\dpcore}{\treewidthvalue}\allowbreak(\abstractdecomposition)$. Otherwise $\awitness$ is a map with $\domain(\awitness)=\topbag{\abstractdecomposition}$ and $\maxDegree(\decompositiongraph{\abstractdecomposition})<\maxdegreeparameter$. Let $\awitness'\defeq \awitness|_{\topbag{\sigmaabstractdecomposition}}$. By Definition~\ref{definition:itd-constructor-semantics}, $\decompositiongraph{\sigmaabstractdecomposition}$ is obtained by deleting the new isolated vertex, and $\topbag{\abstractdecomposition}=\topbag{\sigmaabstractdecomposition}\cup\{\vertexone\}$ with the new vertex having degree $0$. Hence $\domain(\awitness')=\topbag{\sigmaabstractdecomposition}$, $\awitness'$ records the degrees of active vertices in $\decompositiongraph{\sigmaabstractdecomposition}$, and $\maxDegree(\decompositiongraph{\sigmaabstractdecomposition})<\maxdegreeparameter$. Thus $\mxdp(\sigmaabstractdecomposition,\awitness')=\truevalue$, and by the induction hypothesis $\awitness'\in \dynamizationfunction{\dpcore}{\treewidthvalue}\allowbreak(\sigmaabstractdecomposition)$. Since $\awitness\in \introvertextype{\vertexone}(\awitness')$, we conclude $\awitness\in \dynamizationfunction{\dpcore}{\treewidthvalue}\allowbreak(\abstractdecomposition)$.

\medskip
\noindent\textbf{Case 2:} $\abstractdecomposition=\forgetvertextype{\vertexone}(\sigmaabstractdecomposition)$.

$(\Rightarrow)$ Suppose $\awitness\in \dynamizationfunction{\dpcore}{\treewidthvalue}\allowbreak(\abstractdecomposition)$. Then $\awitness\in \forgetvertextype{\vertexone}(\dynamizationfunction{\dpcore}{\treewidthvalue}\allowbreak(\sigmaabstractdecomposition))$, so there exists $\awitness'\in \dynamizationfunction{\dpcore}{\treewidthvalue}\allowbreak(\sigmaabstractdecomposition)$ with $\awitness\in \forgetvertextype{\vertexone}(\awitness')$. If $\awitness'=\maxdegreefound$, then $\awitness=\maxdegreefound$ and $\maxDegree(\decompositiongraph{\sigmaabstractdecomposition})\ge \maxdegreeparameter$ by induction, hence also $\maxDegree(\decompositiongraph{\abstractdecomposition})\ge \maxdegreeparameter$ by Definition~\ref{definition:itd-constructor-semantics}, so $\mxdp(\abstractdecomposition,\awitness)=\truevalue$. Otherwise $\awitness'$ is a map and $\awitness=\awitness'|_{\domain(\awitness')\setminus\{\vertexone\}}$. By induction, $\maxDegree(\decompositiongraph{\sigmaabstractdecomposition})<\maxdegreeparameter$ and $\awitness'$ records degrees of active vertices. Since forgetting does not change the graph (Definition~\ref{definition:itd-constructor-semantics}), $\maxDegree(\decompositiongraph{\abstractdecomposition})<\maxdegreeparameter$ and $\awitness$ records degrees of the remaining active vertices, hence $\mxdp(\abstractdecomposition,\awitness)=\truevalue$.

$(\Leftarrow)$ Suppose $\mxdp(\abstractdecomposition,\awitness)=\truevalue$.
If $\awitness=\maxdegreefound$, then $\maxDegree(\decompositiongraph{\abstractdecomposition})\ge \maxdegreeparameter$.
Since forgetting does not change the graph, we also have $\maxDegree(\decompositiongraph{\sigmaabstractdecomposition})\ge \maxdegreeparameter$.
By the induction hypothesis,
\[
\maxdegreefound\in \dynamizationfunction{\dpcore}{\treewidthvalue}(\sigmaabstractdecomposition).
\]
Since $\forgetvertextype{\vertexone}(\maxdegreefound)=\{\maxdegreefound\}$, we get
\[
\maxdegreefound\in \dynamizationfunction{\dpcore}{\treewidthvalue}(\abstractdecomposition).
\]

Otherwise $\awitness$ is a map with $\domain(\awitness)=\topbag{\abstractdecomposition}$ and $\maxDegree(\decompositiongraph{\abstractdecomposition})<\maxdegreeparameter$.
Define a map $\awitness'$ on $\topbag{\sigmaabstractdecomposition}$ by
\[
\awitness'(u)\defeq d_{\decompositiongraph{\sigmaabstractdecomposition}}(\topmap{\sigmaabstractdecomposition}(u)).
\]
Since $\topbag{\sigmaabstractdecomposition}=\topbag{\abstractdecomposition}\cup\{\vertexone\}$ and $\maxDegree(\decompositiongraph{\sigmaabstractdecomposition})=\maxDegree(\decompositiongraph{\abstractdecomposition})<\maxdegreeparameter$, this map is well-defined with domain $\topbag{\sigmaabstractdecomposition}$ and satisfies $\mxdp(\sigmaabstractdecomposition,\awitness')=\truevalue$.
By the induction hypothesis, $\awitness'\in \dynamizationfunction{\dpcore}{\treewidthvalue}(\sigmaabstractdecomposition)$, and since $\forgetvertextype{\vertexone}(\awitness')=\{\awitness\}$, we get $\awitness\in \dynamizationfunction{\dpcore}{\treewidthvalue}(\abstractdecomposition)$.

\medskip
\noindent\textbf{Case 3:} $\abstractdecomposition=\introedgetype{\vertexone}{\vertextwo}(\sigmaabstractdecomposition)$.

$(\Rightarrow)$ Suppose $\awitness\in \dynamizationfunction{\dpcore}{\treewidthvalue}(\abstractdecomposition)$. Then there exists $\awitness'\in \dynamizationfunction{\dpcore}{\treewidthvalue}(\sigmaabstractdecomposition)$ such that $\awitness\in \introedgegeneric{\vertexone}{\vertextwo}(\awitness')$. If $\awitness'=\maxdegreefound$, then $\awitness=\maxdegreefound$ and $\mxdp(\abstractdecomposition,\awitness)=\truevalue$ as in the previous cases. Otherwise $\awitness'$ is a map. If $\awitness=\maxdegreefound$, then by the definition of $\introedgegeneric{\vertexone}{\vertextwo}$ we have $\awitness'(\vertexone)=\maxdegreeparameter-1$ or $\awitness'(\vertextwo)=\maxdegreeparameter-1$. By the induction hypothesis, $\awitness'$ records degrees in $\decompositiongraph{\sigmaabstractdecomposition}$, so after adding the new edge (Definition~\ref{definition:itd-constructor-semantics}) one of the endpoints has degree at least $\maxdegreeparameter$, and hence $\mxdp(\abstractdecomposition,\maxdegreefound)=\truevalue$. If $\awitness\neq \maxdegreefound$, then by definition $\awitness$ is obtained from $\awitness'$ by incrementing the values of $\vertexone$ and $\vertextwo$. By the induction hypothesis and Definition~\ref{definition:itd-constructor-semantics}, $\awitness$ records degrees in $\decompositiongraph{\abstractdecomposition}$, and since neither endpoint reached $\maxdegreeparameter$, we still have $\maxDegree(\decompositiongraph{\abstractdecomposition})<\maxdegreeparameter$. Thus $\mxdp(\abstractdecomposition,\awitness)=\truevalue$.

$(\Leftarrow)$ Suppose $\mxdp(\abstractdecomposition,\awitness)=\truevalue$. If $\awitness=\maxdegreefound$, then $\maxDegree(\decompositiongraph{\abstractdecomposition})\ge \maxdegreeparameter$. Let $\anedge$ be the edge introduced at the root. Removing $\anedge$ yields $\decompositiongraph{\sigmaabstractdecomposition}$ (Definition~\ref{definition:itd-constructor-semantics}), and either $\maxDegree(\decompositiongraph{\sigmaabstractdecomposition})\ge \maxdegreeparameter$ or one of the endpoints of $\anedge$ has degree $\maxdegreeparameter-1$ in $\decompositiongraph{\sigmaabstractdecomposition}$. In the first case, $\maxdegreefound\in \dynamizationfunction{\dpcore}{\treewidthvalue}(\sigmaabstractdecomposition)$ by induction and $\maxdegreefound\in \dynamizationfunction{\dpcore}{\treewidthvalue}(\abstractdecomposition)$. In the second case, let $\awitness'$ be the unique map on $\topbag{\sigmaabstractdecomposition}$ defined by $\awitness'(u)=d_{\decompositiongraph{\sigmaabstractdecomposition}}(\topmap{\sigmaabstractdecomposition}(u))$. Then $\mxdp(\sigmaabstractdecomposition,\awitness')=\truevalue$, so by induction $\awitness'\in \dynamizationfunction{\dpcore}{\treewidthvalue}(\sigmaabstractdecomposition)$, and the intro-edge rule yields $\maxdegreefound\in \introedgegeneric{\vertexone}{\vertextwo}(\awitness')$, hence $\awitness\in \dynamizationfunction{\dpcore}{\treewidthvalue}(\abstractdecomposition)$. If $\awitness\neq \maxdegreefound$, then $\maxDegree(\decompositiongraph{\abstractdecomposition})<\maxdegreeparameter$ and $\awitness$ records degrees in $\decompositiongraph{\abstractdecomposition}$. Let $\awitness'$ be the map obtained from $\awitness$ by decrementing the values at $\vertexone$ and $\vertextwo$ by $1$ (and leaving all other values unchanged). Then $\awitness'$ records degrees in $\decompositiongraph{\sigmaabstractdecomposition}$, so $\mxdp(\sigmaabstractdecomposition,\awitness')=\truevalue$ and by induction $\awitness'\in \dynamizationfunction{\dpcore}{\treewidthvalue}(\sigmaabstractdecomposition)$. By definition of $\introedgegeneric{\vertexone}{\vertextwo}$, we have $\awitness\in \introedgegeneric{\vertexone}{\vertextwo}(\awitness')$, hence $\awitness\in \dynamizationfunction{\dpcore}{\treewidthvalue}(\abstractdecomposition)$.

\medskip
\noindent\textbf{Case 4:} $\abstractdecomposition=\jointype(\sigmaabstractdecomposition_1,\sigmaabstractdecomposition_2)$.

$(\Rightarrow)$ Suppose $\awitness\in \dynamizationfunction{\dpcore}{\treewidthvalue}(\abstractdecomposition)$. Then there exist $\awitness_1\in \dynamizationfunction{\dpcore}{\treewidthvalue}(\sigmaabstractdecomposition_1)$ and $\awitness_2\in \dynamizationfunction{\dpcore}{\treewidthvalue}(\sigmaabstractdecomposition_2)$ such that $\awitness\in \jointype(\awitness_1,\awitness_2)$. If $\awitness=\maxdegreefound$, then either $\awitness_1=\maxdegreefound$ or $\awitness_2=\maxdegreefound$ or some bag label reaches sum $\ge \maxdegreeparameter$ in the join rule. In each subcase, by the induction hypothesis and Definition~\ref{definition:itd-constructor-semantics}, the joined graph $\decompositiongraph{\abstractdecomposition}$ has maximum degree at least $\maxdegreeparameter$, so $\mxdp(\abstractdecomposition,\awitness)=\truevalue$. Otherwise $\awitness$ is a map $\awitness''$ obtained by summing the two maps pointwise, and by induction each $\awitness_i$ records degrees in $\decompositiongraph{\sigmaabstractdecomposition_i}$ with maximum degree $<\maxdegreeparameter$. By Definition~\ref{definition:itd-constructor-semantics}, the two child edge sets are preserved, active bag vertices are identified, and edge identifiers are not merged; hence parallel edges contributed by both children are both counted, active degrees add pointwise, and all non-bag vertices keep their child degrees. Hence $\maxDegree(\decompositiongraph{\abstractdecomposition})<\maxdegreeparameter$ and $\awitness''$ records degrees of active vertices in $\decompositiongraph{\abstractdecomposition}$, so $\mxdp(\abstractdecomposition,\awitness)=\truevalue$.

$(\Leftarrow)$ Suppose $\mxdp(\abstractdecomposition,\awitness)=\truevalue$. If $\awitness\neq \maxdegreefound$, then $\domain(\awitness)=\topbag{\abstractdecomposition}$, $\maxDegree(\decompositiongraph{\abstractdecomposition})<\maxdegreeparameter$, and $\awitness$ records degrees of active vertices. Let $\awitness_i$ be the unique map on $\topbag{\sigmaabstractdecomposition_i}$ given by degrees in $\decompositiongraph{\sigmaabstractdecomposition_i}$. By Definition~\ref{definition:itd-constructor-semantics}, the three active bags are equal and these degrees are bounded by $\maxdegreeparameter-1$, so $\mxdp(\sigmaabstractdecomposition_i,\awitness_i)=\truevalue$ and by induction $\awitness_i\in \dynamizationfunction{\dpcore}{\treewidthvalue}(\sigmaabstractdecomposition_i)$ for $i\in\{1,2\}$. The join rule then yields $\awitness\in \jointype(\awitness_1,\awitness_2)$, hence $\awitness\in \dynamizationfunction{\dpcore}{\treewidthvalue}(\abstractdecomposition)$.

If $\awitness=\maxdegreefound$, then $\maxDegree(\decompositiongraph{\abstractdecomposition})\ge \maxdegreeparameter$. If this is witnessed by a vertex outside the bag, then it belongs to exactly one child graph (Definition~\ref{definition:itd-constructor-semantics}(e)), so the corresponding child already has maximum degree at least $\maxdegreeparameter$, and by induction $\maxdegreefound$ belongs to the corresponding dynamization witness set; the join rule then yields $\maxdegreefound$. Otherwise the maximum degree is witnessed by an identified bag vertex. If either child graph already has maximum degree at least $\maxdegreeparameter$, the same induction argument gives $\maxdegreefound$ in that child witness set and the join rule yields $\maxdegreefound$. It remains to consider the case where both child graphs have maximum degree less than $\maxdegreeparameter$. Let $\awitness_i$ be the degree map of the $i$th child on the common bag. By induction, these maps are in the corresponding witness sets. The degree of the witnessing bag vertex in the join graph is the sum of its two child degrees, so the join rule returns $\maxdegreefound$ when this sum reaches $\maxdegreeparameter$. Therefore $\maxdegreefound\in \dynamizationfunction{\dpcore}{\treewidthvalue}(\abstractdecomposition)$.
\end{proof}

\begin{corollary}
\label{corollary:CorrectnessMaxDegree}
Let $\abstractdecomposition$ be a $\treewidthvalue$-instructive tree decomposition. Then $\decompositiongraph{\abstractdecomposition}$
has $\maxDegree\geq \maxdegreeparameter$ if and only if
\[
\abstractdecomposition \in
\accepteddecompositions{\maxDegreeAtLeastCore{\maxdegreeparameter}[\treewidthvalue]}.
\]
\end{corollary}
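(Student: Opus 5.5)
The statement to prove is Corollary~\ref{corollary:CorrectnessMaxDegree}. I will derive it directly from Proposition~\ref{proposition:PredicateMaxDegree}, using the same pattern as the proof of Corollary~\ref{corollary:CorrectnessColorability}.

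The cleaning function of $\maxDegreeAtLeastCore{\maxdegreeparameter}[\treewidthvalue]$ is the identity. So $\abstractdecomposition$ is accepted exactly when $\dynamizationfunction{\maxDegreeAtLeastCore{\maxdegreeparameter}}{\treewidthvalue}(\abstractdecomposition)$ contains a final witness. By the definition of the final witness function, the only final witness is $\maxdegreefound$. Hence acceptance is equivalent to
\[
\maxdegreefound\in\dynamizationfunction{\maxDegreeAtLeastCore{\maxdegreeparameter}}{\treewidthvalue}(\abstractdecomposition).
\]

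Applying Proposition~\ref{proposition:PredicateMaxDegree} with $\awitness=\maxdegreefound$, this membership holds if and only if $\mxdp(\abstractdecomposition,\maxdegreefound)=\truevalue$. The witness is $\maxdegreefound$, so the second clause of Definition~\ref{definition:MaxDegreePredicate} cannot apply, since it requires $\awitness\neq\maxdegreefound$. Only the first clause can hold, and it is exactly the condition $\maxDegree(\decompositiongraph{\abstractdecomposition})\ge\maxdegreeparameter$. Chaining these equivalences gives both directions of the corollary at once.

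I expect no real obstacle: the substantive work is the structural induction already carried out in Proposition~\ref{proposition:PredicateMaxDegree}. The only point needing care is to state explicitly that $\maxdegreefound$ is the unique final witness. This is what prevents a map-type witness in the dynamization from triggering acceptance.
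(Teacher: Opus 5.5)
Your proposal is correct and follows essentially the same route as the paper: it reduces acceptance to $\maxdegreefound\in\dynamizationfunction{\maxDegreeAtLeastCore{\maxdegreeparameter}}{\treewidthvalue}(\abstractdecomposition)$ via uniqueness of the final witness, then applies Proposition~\ref{proposition:PredicateMaxDegree} and reads off the first clause of Definition~\ref{definition:MaxDegreePredicate}. The paper only differs cosmetically, writing the two directions separately rather than as one chain of equivalences; your appeal to identity cleaning is harmless but not needed, since acceptance is defined directly on the dynamization.
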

\begin{proof}
Suppose that $\abstractdecomposition$ is accepted. Then there is a final witness in
$\dynamizationfunction{\maxDegreeAtLeastCore{\maxdegreeparameter}}{\treewidthvalue}(\abstractdecomposition)$. Therefore, $\maxdegreefound$ belongs to
$\dynamizationfunction{\maxDegreeAtLeastCore{\maxdegreeparameter}}{\treewidthvalue}(\abstractdecomposition)$,
since by definition, $\maxdegreefound$ is the only final witness of the DP-core. By Proposition \ref{proposition:PredicateMaxDegree}, $\maxDegree(\decompositiongraph{\abstractdecomposition})\geq \maxdegreeparameter$.

Now, suppose that $\maxDegree(\decompositiongraph{\abstractdecomposition})\geq \maxdegreeparameter$. Then $\mxdp(\abstractdecomposition,\maxdegreefound)=\truevalue$ by Definition~\ref{definition:MaxDegreePredicate}. By Proposition~\ref{proposition:PredicateMaxDegree}, $\maxdegreefound\in
\dynamizationfunction{\maxDegreeAtLeastCore{\maxdegreeparameter}}{\treewidthvalue}(\abstractdecomposition)$. Since $\maxdegreefound$ is a final witness,
\[
\abstractdecomposition   \in
\accepteddecompositions{\maxDegreeAtLeastCore{\maxdegreeparameter}[\treewidthvalue]}.
\]
\end{proof}

\begin{observation}
\label{observation:ComplexityMeasuresMaxDegree}
$\maxDegreeAtLeastCore{\maxdegreeparameter}[\treewidthvalue]$ has a bit-length of
$O(\treewidthvalue\cdot \log(\maxdegreeparameter+1))$ and timing
$O(\treewidthvalue\cdot \log(\maxdegreeparameter+1))$.
\end{observation}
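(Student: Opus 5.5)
The plan is to bound the size of the witness universe $\maxDegreeAtLeastCore{\maxdegreeparameter}[\treewidthvalue].\allwitnesses$ and then check that each transition can be computed by a constant number of passes over a witness.

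\textbf{Bit-length.} A local witness is either the flag $\maxdegreefound$ or a map $\awitness:\abag\to\{0,\dots,\maxdegreeparameter-1\}$ with $\abag\subseteq[\treewidthvalue+1]$. I would fix a concrete encoding. First, one leading bit says whether the witness is the flag. Otherwise, a map is stored as a vector of length $\treewidthvalue+1$ indexed by labels $u\in[\treewidthvalue+1]$. Each entry holds either a special symbol ``$u\notin\domain(\awitness)$'' or a value in $\{0,\dots,\maxdegreeparameter-1\}$.

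Each entry therefore ranges over $\maxdegreeparameter+1$ possibilities, so it needs $\lceil\log_2(\maxdegreeparameter+1)\rceil$ bits. The total length is $1+(\treewidthvalue+1)\lceil\log_2(\maxdegreeparameter+1)\rceil=O(\treewidthvalue\cdot\log(\maxdegreeparameter+1))$. The $\maxdegreeparameter\ge1$ convention ensures the logarithm is at least a positive constant, which absorbs the additive $1$ and the $+1$ in $\treewidthvalue+1$ into the $O$.

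\textbf{Timing.} I would go through each transition in turn.

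\begin{itemize}
\item $\leaftype$ outputs a fixed constant-size witness.
\item $\introvertexgeneric{\vertexone}$ reads the flag bit, checks entry $\vertexone$, and writes $0$ there.
\item $\forgetvertexgeneric{\vertexone}$ overwrites entry $\vertexone$ with the ``absent'' symbol.
\item $\introedgegeneric{\vertexone}{\vertextwo}$ reads two entries, compares each with $\maxdegreeparameter-1$, and increments them.
\item $\finalwitnessgenericcore$ reads one bit.
\end{itemize}

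Each of these costs at most a copy of the witness plus $O(\log(\maxdegreeparameter+1))$ work on single entries. So they are bounded by the bit-length.

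$\joingenericcore$ scans the $\treewidthvalue+1$ entries of both arguments in parallel. At each entry it compares domain membership and adds two numbers below $\maxdegreeparameter$, comparing the sum with $\maxdegreeparameter$. That is $O(\log(\maxdegreeparameter+1))$ per entry, hence $O(\treewidthvalue\log(\maxdegreeparameter+1))$ overall. Every transition outputs at most one witness, so no multiplicity factor appears.

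\textbf{Main obstacle.} The only delicate point is the cost model for ``timing'': single-entry updates must be accountable in time linear in the encoding length, for example on a RAM or by a linear scan. I would state that the time is measured up to copying the witness, so a single linear pass suffices, and the join bound is then the dominant term.
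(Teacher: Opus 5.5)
Your proposal is correct and is essentially the paper's argument. In both, a witness is either a flag or a map on at most $k+1$ labels with values below $d$, which gives $O(k\cdot\log(d+1))$ bits, and every transition does a single scan or update of one map and returns at most one witness. Your fixed-length vector encoding and transition-by-transition cost accounting just make concrete what the paper states in two sentences.
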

\begin{proof}
A $\maxDegreeAtLeastCore{\maxdegreeparameter}[\treewidthvalue]$ local witness is either a map or a Boolean flag. We can represent a map domain and its values in $O(\treewidthvalue\cdot \log(\maxdegreeparameter+1))$ bits and a Boolean in $O(1)$ bits; therefore, $\maxDegreeAtLeastCore{\maxdegreeparameter}[\treewidthvalue]$ has a bit length of $O(\treewidthvalue\cdot\log(\maxdegreeparameter+1))$. Each transition scans or updates at most one map over the active labels and returns at most one local witness, so the timing is $O(\treewidthvalue\cdot \log(\maxdegreeparameter+1))$.
\end{proof}

\begin{lemma}[Witness action for $\maxDegreeAtLeastCore{\maxdegreeparameter}$]
\label{lemma:witness-action-maxdegree}
Fix $\treewidthvalue\in \N$.
Define $\action_{\dpcore}^{\treewidthvalue}$ on local witnesses by
\[
\action_{\dpcore}^{\treewidthvalue}(\relabelingfunction,\maxdegreefound)\defeq \maxdegreefound
\]
and for a map witness $\awitness:\abag\to\{0,\dots,\maxdegreeparameter-1\}$ with $\abag\subseteq \domain(\relabelingfunction)$,
let $\action_{\dpcore}^{\treewidthvalue}(\relabelingfunction,\awitness)$ be the map on $\relabelingfunction(\abag)$ defined by
\[
\action_{\dpcore}^{\treewidthvalue}(\relabelingfunction,\awitness)\bigl(\relabelingfunction(u)\bigr)\defeq \awitness(u)
\quad\text{for each }u\in \abag.
\]
Then $\action_{\dpcore}^{\treewidthvalue}$ is a witness action for $\dpcore[\treewidthvalue]$ in the sense of Definition~\ref{definition:witnessaction}.
\end{lemma}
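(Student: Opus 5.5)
The plan is to verify the axioms of Definition~\ref{definition:witnessaction} one at a time, by the same pattern as Lemma~\ref{lemma:witness-action-colorability}. Throughout I use $\mathrm{Lbl}_{\treewidthvalue}(\maxdegreefound)=\emptyset$ and $\mathrm{Lbl}_{\treewidthvalue}(\awitness)=\domain(\awitness)$ for a map witness. First I would check that the action is well defined. For a map witness $\awitness$ with $\domain(\awitness)\subseteq\domain(\relabelingfunction)$, injectivity of $\relabelingfunction$ makes $\relabelingfunction(u)\mapsto\awitness(u)$ a function on $\relabelingfunction(\domain(\awitness))\subseteq[\treewidthvalue+1]$. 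Its values stay in $\{0,\dots,\maxdegreeparameter-1\}$, so the result is again a local witness. In other words, the action is $\awitness\circ\relabelingfunction^{-1}$ restricted to $\relabelingfunction(\domain(\awitness))$.

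The structural axioms are routine. Finality invariance holds because $\maxdegreefound$ is fixed and maps are sent to maps. Label support transport is $\domain(\awitness\circ\relabelingfunction^{-1})=\relabelingfunction(\domain(\awitness))$. For the inverse axiom, $(\awitness\circ\relabelingfunction^{-1})\circ\relabelingfunction=\awitness$ on $\domain(\awitness)$. For composition, $\awitness\circ(\relabelingfunction\circ\relabelingfunction')^{-1}=(\awitness\circ\relabelingfunction'^{-1})\circ\relabelingfunction^{-1}$; when the expression is supported, $\relabelingfunction'(\domain(\awitness))\subseteq\domain(\relabelingfunction)$ holds, so both sides have domain $\relabelingfunction(\relabelingfunction'(\domain(\awitness)))$. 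Extension invariance holds because the definition only uses the values of $\relabelingfunction$ on $\domain(\awitness)$.

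For the consistency axioms I write $\Phi(\cdot)=\action_{\dpcore}^{\treewidthvalue}(\relabelingfunction,\cdot)$ and go through the case distinctions of each transition. The key fact is that every guard is preserved by $\Phi$:
\begin{itemize}
\item $\vertexone\in\domain(\awitness)$ if and only if $\relabelingfunction(\vertexone)\in\domain(\Phi(\awitness))$, since $\vertexone\in\domain(\relabelingfunction)$ and $\relabelingfunction$ is injective;
\item $\awitness(\vertexone)=\maxdegreeparameter-1$ if and only if $\Phi(\awitness)(\relabelingfunction(\vertexone))=\maxdegreeparameter-1$;
\item $\domain(\awitness)=\domain(\awitness')$ if and only if $\domain(\Phi(\awitness))=\domain(\Phi(\awitness'))$;
\item $\awitness(\vertexone)+\awitness'(\vertexone)\ge\maxdegreeparameter$ if and only if the same inequality holds at $\relabelingfunction(\vertexone)$ for $\Phi(\awitness),\Phi(\awitness')$.
\end{itemize}
So in every transition the same branch is taken before and after relabeling. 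In the $\maxdegreefound$ and $\emptyset$ branches equivariance is then trivial. In the map branches I check commutation directly:
\begin{itemize}
\item $\Phi(\awitness\cup\{\vertexone\to0\})=\Phi(\awitness)\cup\{\relabelingfunction(\vertexone)\to0\}$;
\item $\Phi$ commutes with restricting to $\domain(\awitness)\setminus\{\vertexone\}$;
\item $\Phi$ commutes with incrementing the values at $\vertexone,\vertextwo$;
\item $\Phi$ commutes with pointwise sums on a common domain.
\end{itemize}

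The only delicate point is supportedness. In the introduce-vertex axiom we need $\vertexone\in\domain(\relabelingfunction)$ so that $\Phi$ is defined on the output, whose domain contains $\vertexone$. This is exactly the hypothesis of Condition~\ref{condition:relabeldpcore-condition-four}. The forget-vertex branch with $\vertexone\notin\domain(\awitness)$ leaves the map unchanged, and this case also commutes. I expect no step to be a genuine obstacle; the most care goes into the join case, matching the empty-output domain-mismatch branch on both sides.
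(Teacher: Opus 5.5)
Your proposal is correct and follows the same route as the paper: you check the axioms one by one, using that the action renames the domain labels by an injective relabeling while preserving values. This preserves every branch guard (domain membership, value tests, domain equality in the join), and the map updates commute with renaming. You are more explicit than the paper about well-definedness of the action and about supportedness in the composition and introduce-vertex axioms, but the argument is the same.
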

\begin{proof}
We verify the conditions of Definition~\ref{definition:witnessaction}.

\smallskip\noindent
\emph{Finality invariance.}
In this core, $\maxdegreefound$ is the only final witness and $\action_{\dpcore}^{\treewidthvalue}(\relabelingfunction,\maxdegreefound)=\maxdegreefound$ by definition.

\smallskip\noindent
\emph{Label support transport.}
The label support of $\maxdegreefound$ is empty and hence preserved. For a map witness $\awitness:\abag\to\{0,\dots,\maxdegreeparameter-1\}$ we have $\mathrm{Lbl}_{\treewidthvalue}(\awitness)=\abag$ and $\mathrm{Lbl}_{\treewidthvalue}(\action_{\dpcore}^{\treewidthvalue}(\relabelingfunction,\awitness))=\relabelingfunction(\abag)=\relabelingfunction(\mathrm{Lbl}_{\treewidthvalue}(\awitness))$.

\smallskip\noindent
\emph{Inverse relabeling, composition, and extension invariance.}
These follow directly from the fact that $\action_{\dpcore}^{\treewidthvalue}(\relabelingfunction,\awitness)$ renames the domain labels by $\relabelingfunction$ while preserving values: for each $u\in \abag$,
\begin{multline*}
\action_{\dpcore}^{\treewidthvalue}\!\bigl(\relabelingfunction^{-1},\action_{\dpcore}^{\treewidthvalue}(\relabelingfunction,\awitness)\bigr)(u)\\
=\action_{\dpcore}^{\treewidthvalue}(\relabelingfunction,\awitness)(\relabelingfunction(u))
=\awitness(u),
\end{multline*}
and similarly for composition. If $\relabelingfunction'$ extends $\relabelingfunction$, then $\relabelingfunction'|_{\abag}=\relabelingfunction|_{\abag}$ and the definition implies $\action_{\dpcore}^{\treewidthvalue}(\relabelingfunction',\awitness)=\action_{\dpcore}^{\treewidthvalue}(\relabelingfunction,\awitness)$.

\smallskip\noindent
\emph{DP-core consistency.}
Let $\vertexone,\vertextwo\in \domain(\relabelingfunction)$ be distinct. If $\awitness=\maxdegreefound$, then every transition returns $\{\maxdegreefound\}$ and both sides of Conditions~\ref{condition:relabeldpcore-condition-four}--\ref{condition:relabeldpcore-condition-seven} are equal.
Otherwise let $\awitness:\abag\to\{0,\dots,\maxdegreeparameter-1\}$ with $\abag\subseteq \domain(\relabelingfunction)$.
Each transition updates $\awitness$ only at specified labels (insert $\vertexone\mapsto 0$, delete $\vertexone$, increment at $\vertexone,\vertextwo$, or sum pointwise in a join), and the definition of $\action_{\dpcore}^{\treewidthvalue}$ transports these updates to the relabeled labels $\relabelingfunction(\vertexone)$, $\relabelingfunction(\vertextwo)$. The empty-output guards are domain-membership tests and are preserved because $\domain(\action_{\dpcore}^{\treewidthvalue}(\relabelingfunction,\awitness))=\relabelingfunction(\domain(\awitness))$.
Concretely, for intro-vertex we have
\[
\action_{\dpcore}^{\treewidthvalue}\!\bigl(\relabelingfunction,\awitness\cup\{\vertexone\to 0\}\bigr)
=\action_{\dpcore}^{\treewidthvalue}(\relabelingfunction,\awitness)\cup\{\relabelingfunction(\vertexone)\to 0\},
\]
and the analogous equalities for restriction, incrementing values, and pointwise addition yield the forget, intro-edge, and join axioms.
Therefore $\action_{\dpcore}^{\treewidthvalue}$ satisfies Conditions~\ref{condition:relabeldpcore-condition-four}--\ref{condition:relabeldpcore-condition-seven}.
\end{proof}

}

				\subsection{Clique Number}

	{
			\renewcommand{\dpcore}{\simpleCliqueNumberAtLeastCore{\cliqueparameter}}
			Let $\agraph$ be a graph. We say $\agraph$ has a clique of size $\cliqueparameter$ if $\agraph$ has a clique of size $\cliqueparameter$ as a subgraph. Let $\simpleCliqueNumberAtLeastProperty{\cliqueparameter}$ be the graph property consisting of all graphs that have a clique of size $\cliqueparameter$. Next, we give the specification of a DP-core $\dpcore[\treewidthvalue]$ for the property $\simpleCliqueNumberAtLeastProperty{\cliqueparameter} \cap \neg \hasMultiEdgeProperty \cap \allgraphstreewidth{\treewidthvalue}$.
		Throughout this subsection we assume $\cliqueparameter\ge 2$.
		\\

	\noindent{\textbf{Local Witness.}} We begin by defining the structure of a local witness for $\dpcore[\treewidthvalue]$. During the processing of a $\treewidthvalue$-instructive tree decomposition, we do not know at the root which set of vertices forms a clique of size $\cliqueparameter$. Therefore, we need to guess possible sets of vertices that have been introduced during the process and could form a clique. As we process edges, we update these guesses. We define a local witness to be a “potential clique,” or rather, a “partial clique” since it is not yet a complete clique. During the process of the $\treewidthvalue$-instructive tree decomposition, we update this local witness.

		To implement this idea, we consider a local witness to be either the Boolean flag $\cliquefound$ or a pair consisting of a map and an integer, $(\cliquemap, \cliquesize)$. If the local witness $\awitness = \cliquefound$, it indicates that we have found a clique of size $\cliqueparameter$. Otherwise, if $\awitness = (\cliquemap,\cliquesize)$, the map $\cliquemap$ is used to track the current degrees of each active vertex included in the partial clique, reflecting its contribution to the potential clique. The integer $\cliquesize$ represents the count of vertices currently included in the partial clique, including selected vertices that may already have been forgotten, ensuring it does not exceed the target size $\cliqueparameter$.

	\begin{definition}\label{definition:clique-witness}
	A $\dpcore[\treewidthvalue]$ local witness $\awitness$ is one of the following:
	\begin{itemize}
		\item The Boolean flag $\cliquefound$, or
				\item A pair $(\cliquemap,\cliquesize)$ where $\cliquemap:\bagset' \to \{0,1,\dots,\cliqueparameter-1\}$ is a map for some $\bagset'\subseteq [\treewidthvalue+1]$, and $\cliquesize$ is an integer in $\{0,1,\dots,\cliqueparameter\}$ with $\domain(\cliquemap)=\bagset'$ and $|\domain(\cliquemap)|\le \cliquesize$.
	\end{itemize}
	\end{definition}

			In Definition \ref{definition:clique-witness}, $\domain(\cliquemap)$ is the set of active labels that are currently selected into the potential clique, and $\cliquemap$ records their current degrees inside this selected set. The value $\cliquesize$ is the size of the selected set (including any selected vertices that may already have been forgotten), so $|\domain(\cliquemap)|\le \cliquesize$. The flag $\cliquefound$ is the normalized form of the implementation witness whose found flag is true. It is important to note that this DP-core is correct only for simple graphs, i.e., graphs with at most one edge between each pair of vertices. The following definitions would fail for multi-edge graphs.
	\\

	\noindent{\textbf{All Witnesses}.} Let $\dpcore[\treewidthvalue].\allwitnesses$ be the set of all $\dpcore[\treewidthvalue]$-local witnesses:
	\begin{multline*}
	\dpcore[\treewidthvalue].\allwitnesses
	= \{\awitness : \awitness \text{ is a } \dpcore[\treewidthvalue]\\
	 \text{ local witness}\}.
	\end{multline*}

	\noindent{\textbf{Initial Witnesses}.} The initial state of the graph is empty, meaning there are no vertices or edges. Consequently, the only possible local witness at this stage is the empty map $\cliquemap_0$ with the clique size set to $0$, indicating that no clique has been found yet. This ensures the DP-core starts with a minimal and consistent initial configuration.

	\[
	\dpcore[\treewidthvalue].\initialsetgeneric =\{(\cliquemap_0,0)\}.
	\]

	\noindent{\textbf{Introduce Vertex}.}
	Let $\awitness$ be a $\dpcore[\treewidthvalue]$-local witness, and let $\vertexone$ represent a label in $[\treewidthvalue + 1]$. When a new vertex is introduced and labeled with $\vertexone$, we evaluate the situation based on the state of the local witness.

	If a clique of size \(\cliqueparameter\) has already been found (i.e., \(\awitness = \cliquefound\)), there is no need to modify the local witness. The function simply returns the current local witness unchanged, as the condition of finding the required clique has already been satisfied.

		Otherwise, let $\awitness = (\cliquemap,\cliquesize)$. If $\vertexone$ is already in $\domain(\cliquemap)$, the witness is incompatible with introducing a fresh vertex at label $\vertexone$, and the transition returns no witness. If the number of vertices currently included in the potential clique equals \(\cliqueparameter\) (i.e., \(\cliquesize = \cliqueparameter\)), no further vertices can be added to the clique. In this case, the local witness is also left unchanged, reflecting that no additional possibilities need to be explored.

		If neither of the above conditions applies, the function always keeps the option of not including the new vertex \(\vertexone\) in the potential clique. The implementation adds \(\vertexone\) as a selected clique vertex only when no selected clique vertex has already been forgotten, that is, when \(\cliquesize=|\domain(\cliquemap)|\). In that case, \(\vertexone\)'s degree is initialized to 0 in the map \(\cliquemap\), and the size of the potential clique is incremented by 1.

	This function is crucial for systematically exploring all configurations of vertices that might form a clique, while respecting the constraint of not exceeding the target clique size \(\cliqueparameter\).

		The transition $\dpcore[\treewidthvalue].\introvertextype{\vertexone}(\awitness)$ is defined by the following cases.
		\begin{enumerate}
		\item If $\awitness=\cliquefound$, then return $\{\awitness\}$.
		\item If $\awitness=(\cliquemap,\cliquesize)$ and $\vertexone\in\domain(\cliquemap)$, then return $\emptyset$.
		\item If $\awitness=(\cliquemap,\cliquesize)$, $\vertexone\notin\domain(\cliquemap)$, and $\cliquesize=\cliqueparameter$, then return $\{\awitness\}$.
		\item If $\awitness=(\cliquemap,\cliquesize)$, $\vertexone\notin\domain(\cliquemap)$, $\cliquesize<\cliqueparameter$, and $\cliquesize=|\domain(\cliquemap)|$, then return
		\[
		\{(\cliquemap,\cliquesize),(\cliquemap\cup\{\vertexone\to 0\},\cliquesize+1)\}.
		\]
			\item If $\awitness=(\cliquemap,\cliquesize)$, $\vertexone\notin\domain(\cliquemap)$, $\cliquesize<\cliqueparameter$, and $\cliquesize>|\domain(\cliquemap)|$, then return $\{(\cliquemap,\cliquesize)\}$.
			\end{enumerate}

			\noindent{\textbf{Forget Vertex}.} Let \(\awitness\) be a \(\dpcore[\treewidthvalue]\)-local witness, and let
		\(\vertexone\) represent a label in $[\treewidthvalue + 1]$. When the vertex labeled \(\vertexone\) is forgotten, the label \(\vertexone\) is freed, making it available for reuse when introducing a future vertex. The behavior of this function depends on the current state of \(\awitness\).

		If \(\awitness = \cliquefound\), this means a clique of size $\cliqueparameter$ has already been found, and as a consequence, the same local witness would be returned.

		Let \(\awitness = (\cliquemap,\cliquesize)\). If $\vertexone\notin \domain(\cliquemap)$, then $\vertexone$ is not currently selected as part of the partial clique, and forgetting it does not affect the witness. Otherwise, if $\vertexone\in \domain(\cliquemap)$ and $\cliquesize < \cliqueparameter$, then forgetting $\vertexone$ would make it impossible to complete the clique, so the witness becomes invalid and the function returns the empty set. Finally, if $\vertexone\in \domain(\cliquemap)$ and $\cliquesize = \cliqueparameter$, then $\vertexone$ can be forgotten only if it is adjacent (within the partial clique) to all other selected vertices, i.e., if $\cliquemap(\vertexone)=\cliqueparameter-1$; otherwise the witness is invalid.

				The transition $\dpcore[\treewidthvalue].\forgetvertextype{\vertexone}(\awitness)$ is defined by the following cases.
				\begin{enumerate}
				\item If $\awitness=\cliquefound$, then return $\{\awitness\}$.
				\item If $\awitness=(\cliquemap,\cliquesize)$ and $\vertexone\notin\domain(\cliquemap)$, then return $\{(\cliquemap,\cliquesize)\}$.
				\item If $\awitness=(\cliquemap,\cliquesize)$, $\vertexone\in\domain(\cliquemap)$, and $\cliquesize<\cliqueparameter$, then return $\emptyset$.
				\item If $\awitness=(\cliquemap,\cliqueparameter)$, $\vertexone\in\domain(\cliquemap)$, and $\cliquemap(\vertexone)\neq\cliqueparameter-1$, then return $\emptyset$.
				\item If $\awitness=(\cliquemap,\cliqueparameter)$, $\vertexone\in\domain(\cliquemap)$, and $\cliquemap(\vertexone)=\cliqueparameter-1$, then return
				\[
				\{(\cliquemap|_{\domain(\cliquemap)\setminus \{\vertexone\}},\cliqueparameter)\}.
				\]
				\end{enumerate}

	\noindent{\textbf{Introduce Edge}.} Let $\awitness$ be a local witness, and suppose a new edge is introduced between two distinct labeled vertices $\vertexone$ and $\vertextwo$. This operation updates the local witness \(\awitness\) to reflect the addition of the new edge, with the behavior depending on the current state of \(\awitness\).

	If \(\awitness = \cliquefound\), then the local witness is not modified since a clique has already been found.

	Let \(\awitness = (\cliquemap,\cliquesize)\). The operation considers two cases:
	1. If either \(\vertexone\) or \(\vertextwo\) is not in the domain of \(\cliquemap\), the function leaves the witness unchanged. This indicates that at least one of the vertices is not currently part of the potential clique, so adding an edge between them does not affect the witness.
	2. If both \(\vertexone\) and \(\vertextwo\) are in the domain of \(\cliquemap\), the function updates the local witness to reflect the addition of the edge. The degrees of \(\vertexone\) and \(\vertextwo\) are incremented in \(\cliquemap\), as the edge increases their connectivity within the potential clique. After updating the degrees, the function checks whether the partial clique is now complete. If all vertices in the domain of \(\cliquemap\) have degrees of \(\cliqueparameter - 1\) and the clique size (\(\cliquesize\)) equals \(\cliqueparameter\), the local witness changes to the Boolean flag \(\cliquefound\), indicating that a valid clique of the required size has been formed. Otherwise, the updated local witness is returned.

	Formally, the operation $\dpcore[\treewidthvalue].\introedgegeneric{\vertexone}{\vertextwo}(\awitness)$ produces one of the following outcomes:

		\begin{enumerate}
		\item If $\awitness=\cliquefound$, return $\{\awitness\}$.
		\item If either $\vertexone$ or $\vertextwo$ is not in $\domain(\cliquemap)$, return $\{\awitness\}$.
		\item If $\vertexone,\vertextwo\in \domain(\cliquemap)$ and either $\cliquemap(\vertexone)=\cliqueparameter-1$ or $\cliquemap(\vertextwo)=\cliqueparameter-1$, return $\emptyset$.
		\item If $\vertexone,\vertextwo\in \domain(\cliquemap)$, create a new map
	\[
	\cliquemap' = \cliquemap|_{\domain(\cliquemap)\setminus \{\vertexone,\vertextwo\}}\cup\{\vertexone\to\cliquemap(\vertexone)+1,\vertextwo\to\cliquemap(\vertextwo)+1\}
	\]
	\begin{itemize}
		\item If $\cliquesize=\cliqueparameter$ and for all $l \in \domain(\cliquemap')$, $\cliquemap'(l) = \cliqueparameter-1$ return
		\[ \{\cliquefound\}\]
		\item otherwise, return
		\[\{ (\cliquemap',\cliquesize)\}\]
		\end{itemize}
		\end{enumerate}

	\noindent{\textbf{Join}.} Let $\awitness$ and $\awitness'$ be local witnesses associated with the left and right children of a given node, respectively. The goal of the $\joingenericcore$ operation is to merge the graphs corresponding to these children along the vertices labeled with active labels. This operation combines the information from the two witnesses to evaluate whether a clique of size \(\cliqueparameter\) can be formed when their corresponding subgraphs are joined. The behavior of the function depends on several conditions.

	If either witness already indicates that a clique of size \(\cliqueparameter\) has been found (i.e., \(\awitness = \cliquefound\) or \(\awitness' = \cliquefound\)), the function immediately returns \(\{\cliquefound\}\) without further processing. This is because the desired clique condition has already been met.

	Let $\awitness = (\cliquemap,\cliquesize)$ and $\awitness' = (\cliquemap',\cliquesize')$.
	Consider the situation where $|\domain(\cliquemap)| < \cliquesize$ and $|\domain(\cliquemap')| < \cliquesize'$. In this case, the function returns the empty set. This situation occurs when some vertices have been forgotten in both local witnesses, making it impossible to form a complete clique of size \(\cliqueparameter\): in a join, there are no cross-edges between vertices outside the interface that originate from different children (Lemma~\ref{lemma:join-no-cross-edges}), so a clique cannot contain non-interface vertices from both sides. Note that we need the size of the domain of the map to be equal to the number of included vertices in at least one local witness.

	If the domains of the two maps \(\cliquemap\) and \(\cliquemap'\) are not identical, the function also returns the empty set. A lack of a shared domain implies that the two partial cliques cannot be merged meaningfully, as there is no common structure to form a combined clique.

		Otherwise, the function proceeds to merge the two witnesses into a new witness. The combined map, \(\cliquemap''\), is created by summing the degrees of each vertex in the shared domain of \(\cliquemap\) and \(\cliquemap'\). If some sum exceeds \(\cliqueparameter-1\), the pair is rejected because the result would not be a local witness. Otherwise the size of the combined clique, \(\cliquesize''\), is calculated as \(\cliquesize + \cliquesize' - |\domain(\cliquemap)|\), which accounts for the overlap of vertices in the shared domain.

		Finally, the operation checks whether the merged witness is a complete clique: if $\cliquesize''=\cliqueparameter$ and each label in $\domain(\cliquemap'')$ has value $\cliqueparameter-1$, the operation returns $\cliquefound$, indicating that a clique has been found. Otherwise, the map together with the new size is returned as a local witness.

	Formally, the operation $\dpcore[\treewidthvalue].\joingenericcore(\awitness, \awitness')$ produces one of the following outcomes:

	\begin{enumerate}
	\item If either $\awitness = \cliquefound$ or $\awitness' = \cliquefound$, return $\{\cliquefound\}$.
	\item If $|\domain(\cliquemap)| < \cliquesize$ and $|\domain(\cliquemap')| < \cliquesize'$, return $\emptyset$.
	\item If $\domain(\cliquemap)\neq \domain(\cliquemap')$, return $\emptyset$.
		\item If there exists $\vertexone\in\domain(\cliquemap)$ such that
		$\cliquemap(\vertexone)+\cliquemap'(\vertexone)>\cliqueparameter-1$, return $\emptyset$.
		\item Otherwise, create $\cliquemap''$ where for all $\vertexone \in \domain(\cliquemap)$\footnote{Note that $\domain(\cliquemap) = \domain(\cliquemap')$}, $\cliquemap''(\vertexone) = \cliquemap(\vertexone) + \cliquemap'(\vertexone)$, and let $\cliquesize'' = \cliquesize + \cliquesize' - |\domain(\cliquemap)|$.

		\begin{itemize}
		\item If $\cliquesize''=\cliqueparameter$ and for all $\vertexone \in \domain(\cliquemap'')$, $\cliquemap''(\vertexone) = \cliqueparameter -1 $, return
		\[\{\cliquefound\}\]
		\item Otherwise return
		\[\{(\cliquemap'',\cliquesize'')\}\]
	\end{itemize}
	\end{enumerate}

	\noindent{\textbf{Final Witness Function.}} The final witness function checks if the local witness \(\awitness\) has successfully found a clique of size \(\cliqueparameter\). If \(\awitness = \cliquefound \), meaning a clique of the required size has been identified, the function returns 1. Otherwise, if \(\awitness \neq \cliquefound\), it returns 0, indicating that the desired clique has not been found.

	Formally:
	\[ \dpcore[\treewidthvalue].\finalwitnessgeneric(\awitness) = \begin{cases}
			1 & \quad \text{if } \awitness = \cliquefound  \\
			0 & \quad \text{otherwise }
		\end{cases}
	\]
			\noindent{\textbf{Cleaning Function}.} The implementation removes all other witnesses once a found witness is present:
			\[
			\dpcore[\treewidthvalue].\cleaningfunctioncore(\witnessset)=
			\begin{cases}
			\{\cliquefound\} & \quad \text{if } \cliquefound\in \witnessset,\\[4pt]
			\witnessset & \quad \text{otherwise.}
			\end{cases}
			\]

	\noindent{\textbf{Invariant Function}.} In our example, the invariant function is the trivial one, which sends a set of local witnesses $\witnessset$ to $1$ if and only if it contains a final local witness.

	\[
		\dpcore[\treewidthvalue].\invariantCore(\witnessset)=
		\begin{cases}
			1 & \quad \text{ if } \witnessset \text{ has a final witness,} \\
			0 & \quad \text{ otherwise.}
		\end{cases}
	\]

    Note that the DP-core $\dpcore$ works only for simple graphs because it assumes that a vertex's degree accurately represents the number of unique neighbors it has. In graphs with multiple edges (multigraphs), the degree would count repeated connections, making it impossible to reliably track clique membership based on degrees alone.

	The next step is to prove the correctness of the DP-core we just defined. For this, we define the following predicate relating $\treewidthvalue$-instructive tree decompositions with local witnesses.

				\begin{definition}
					\label{definition:PredicateSimpleCliqueNumberAtLeast}
					We let $\predicateSimpleCliqueNumberAtLeast{\cliqueparameter}{\treewidthvalue}$ be the predicate that is true on a pair $(\abstractdecomposition,\awitness) \in \allabstractdecompositionstreewidth{\treewidthvalue}\times \simpleCliqueNumberAtLeastCore{\cliqueparameter}[\treewidthvalue].\allwitnesses$ if and only if there exists a subset $\vertexsub$ of $\vertexset{\decompositiongraph{\abstractdecomposition}}$ such that it satisfies one of the following conditions.
					Here, for a graph $G$ and a vertex set $S$, we write $G[S]$ for the induced subgraph of $G$ on $S$.
					\begin{enumerate}
						\item If $\awitness = \cliquefound$, the set $\vertexsub$ forms a clique of size $\cliqueparameter$.
						\item If $\awitness = (\cliquemap,\cliquesize)$, then the following conditions are satisfied.
						\begin{enumerate}
							\item $|\vertexsub| = \cliquesize$,
							\item $\vertexsub\cap \topmap{\abstractdecomposition}(\topbag{\abstractdecomposition}) = \topmap{\abstractdecomposition}(\domain(\cliquemap))$, and
							\item for each $\vertexone\in \domain(\cliquemap)$, we have $\cliquemap(\vertexone) = \vdegree_{\decompositiongraph{\abstractdecomposition}[\vertexsub]}(\topmap{\abstractdecomposition}(\vertexone))$, and for each $\xvertex\in \vertexsub\setminus \topmap{\abstractdecomposition}(\domain(\cliquemap))$, $\vdegree_{\decompositiongraph{\abstractdecomposition}[\vertexsub]}(\xvertex) = \cliqueparameter - 1$.
							\item $\cliquesize < \cliqueparameter$ or there exists $\vertexone\in \domain(\cliquemap)$ such that $\cliquemap(\vertexone) < \cliqueparameter-1$.
						\end{enumerate}
					\end{enumerate}
				\end{definition}

	The first condition indicates that if $\awitness = \cliquefound$, then there should be a vertex set $\vertexsub$ that forms a clique of size $\cliqueparameter$. The second condition means that if a local witness $\awitness = (\cliquemap,\cliquesize)$ is not a complete clique, then there should be a vertex subset $\vertexsub$ of size $\cliquesize$ which satisfies the condition for a partial clique.

			\begin{proposition}
				\label{proposition:PredicateSimpleCliqueNumberAtLeast}
				For each $\abstractdecomposition\in \allabstractdecompositionstreewidth{\treewidthvalue}$ such that $\decompositiongraph{\abstractdecomposition}$ is a simple graph, the following hold.
				\begin{enumerate}
				\item $\cliquefound\in \dynamizationfunction{\simpleCliqueNumberAtLeastCore{\cliqueparameter}}{\treewidthvalue}(\abstractdecomposition)$ if and only if $\predicateSimpleCliqueNumberAtLeast{\cliqueparameter}{\treewidthvalue}(\abstractdecomposition,\cliquefound)=\truevalue$.
					\item If $\cliquefound\notin \dynamizationfunction{\simpleCliqueNumberAtLeastCore{\cliqueparameter}}{\treewidthvalue}(\abstractdecomposition)$, then for every map witness $\awitness=(\cliquemap,\cliquesize)$,
					\begin{multline*}
					\awitness\in \dynamizationfunction{\simpleCliqueNumberAtLeastCore{\cliqueparameter}}{\treewidthvalue}(\abstractdecomposition)
					\\
					\text{if and only if}\\
					\predicateSimpleCliqueNumberAtLeast{\cliqueparameter}{\treewidthvalue}(\abstractdecomposition,\awitness)=\truevalue.
					\end{multline*}
				\end{enumerate}
				In particular, if the dynamization contains $\cliquefound$, then the specified cleaning function makes the cleaned witness set equal to $\{\cliquefound\}$.
				\end{proposition}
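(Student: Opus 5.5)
We argue by structural induction on $\abstractdecomposition$, proving both items simultaneously. Simplicity of $\decompositiongraph{\abstractdecomposition}$ passes to every subterm by Lemma~\ref{lemma:subterm-subgraph}, so the hypothesis holds at every node. Throughout we work with the cleaned dynamization. Cleaning changes a witness set only when $\cliquefound$ is present, and in that case item~2 is vacuous. So in the inductive step we may assume the child sets are uncleaned whenever they lack $\cliquefound$. The only extra fact needed is that $\cliquefound$ is preserved by every transition: if it lies in some child set, it lies in the parent set, and by Lemma~\ref{lemma:subterm-subgraph} the clique also persists in the parent graph. The final sentence of the proposition is then immediate from the definition of $\cleaningfunctioncore$.

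For each constructor we establish two directions.
\begin{itemize}
\item \textbf{Soundness.} Given a predecessor witness together with its witnessing set $\vertexsub$ from Definition~\ref{definition:PredicateSimpleCliqueNumberAtLeast}, we build the set for the output.
\begin{itemize}
\item Introduce-vertex: keep $\vertexsub$, or use $\vertexsub\cup\{x\}$ with the new isolated vertex $x$.
\item Forget-vertex: keep $\vertexsub$. The guard $\cliquemap(\vertexone)=\cliqueparameter-1$ is exactly what condition~(c) requires for a selected vertex leaving the interface.
\item Introduce-edge: keep $\vertexsub$. By simplicity the new edge joins two vertices that were previously non-adjacent, so induced degrees in $\vertexsub$ rise by exactly one; the guard rejecting value $\cliqueparameter-1$ prevents overflow.
\item Join: take $\phi_1(\vertexsub_1)\cup\phi_2(\vertexsub_2)$ via the embeddings of Lemma~\ref{lemma:join-child-embeddings}. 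Its size is $\cliquesize+\cliquesize'-|\domain(\cliquemap)|$ because the two sets overlap exactly on the interface image. Lemma~\ref{lemma:join-no-cross-edges} shows that degrees of selected bag vertices add; simplicity of the joined graph rules out double counting of an edge present in both children, since that would be a multi-edge.
\end{itemize}
In every case, when all values reach $\cliqueparameter-1$ and the size is $\cliqueparameter$, the set $\vertexsub$ is a clique and the output $\cliquefound$ is justified. Otherwise condition~(d) holds.

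\item \textbf{Completeness.} Given $\vertexsub$ for the parent, we restrict it to the child graph and show that the restricted witness satisfies the predicate there. By the induction hypothesis that witness is in the child set, and applying the transition reproduces the parent witness.
\end{itemize}

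The main obstacle is the interplay between the size counter $\cliquesize$ and forgotten selected vertices. A forgotten selected vertex must already have full degree $\cliqueparameter-1$ inside $\vertexsub$, and introduce-vertex may add a selected vertex only while $\cliquesize=|\domain(\cliquemap)|$. For completeness I would therefore first prove an auxiliary claim: if $\vertexsub$ contains a vertex outside the interface with full degree $\cliqueparameter-1$, then $|\vertexsub|=\cliqueparameter$ and $\vertexsub$ is a clique whose vertices are all either forgotten or in the bag. For a partial $\vertexsub$ satisfying condition~(d), this forces every selected vertex introduced later to be ruled out.

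In the join case, completeness uses the same fact: at least one child's restriction of $\vertexsub$ has no forgotten vertices, because non-interface vertices from different children are non-adjacent (Lemma~\ref{lemma:join-no-cross-edges}). This matches the rejection rule for $|\domain(\cliquemap)|<\cliquesize$ on both sides.

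For item~1, completeness proceeds as follows. Take the first node, bottom-up, at which a $\cliqueparameter$-clique appears in the graph. That node is an introduce-edge or a join node, since introduce-vertex and forget-vertex cannot create a clique. Apply the partial-witness direction to the child or children, using the clique restricted to each, and check that the transition emits $\cliquefound$.
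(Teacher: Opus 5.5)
Your plan follows the paper's proof in outline. It uses the same selected-set invariant and the same constructor-by-constructor check. It relies on simplicity for the exact increment at introduce-edge and for no double counting at join, and on Lemma~\ref{lemma:join-no-cross-edges} to justify rejecting joins where both sides have forgotten selected vertices. The paper first characterizes the transition-only sets $\widehat{D}(\tau)$ for all witnesses and then passes through cleaning; your simultaneous induction on the cleaned sets is a harmless variation. In the join case, note that the sibling set is nonempty: it is either $\{\mathfrak{b}\}$ or contains $(\gamma_0,0)$, which always satisfies the predicate with $S=\emptyset$. You need this because the lifted join with an empty argument is empty, so otherwise $\mathfrak{b}$ would not propagate.

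The auxiliary claim you propose to prove first is, however, false as stated. Take $\omega=3$ and $k\ge 2$. Introduce vertices with labels $1,2,3$, add the edges $\{1,2\}$ and $\{2,3\}$, and forget $2$. Forget rule~5 fires, and $(\{1\mapsto 1,3\mapsto 1\},3)$ lies in the dynamization, witnessed by $S$ consisting of all three vertices. These induce a path, not a clique, even though the forgotten vertex has degree $\omega-1$ in $G[S]$. Read literally together with condition~(d), your claim would rule out every predicate-satisfying map witness that has a forgotten selected vertex, which contradicts your own forget-vertex soundness case.

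The correct statement is weaker. If $y\in S$ lies outside the interface and has degree $\omega-1$ in $G[S]$, then, since $|S|=s\le\omega$ and the graph is simple, $|S|=\omega$ and $y$ is adjacent to every other vertex of $S$. That is all both of your uses need, and it is how the paper argues:
\begin{itemize}
\item At an introduce-vertex node, the new isolated vertex can be selected only when $S$ has no forgotten vertex, i.e.\ $s=|\mathrm{dom}(\gamma)|$; this is rule~4 versus rule~5.
\item At a join node, $S$ cannot contain non-interface vertices from both children; this is rule~2.
\end{itemize}
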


				\begin{proof}
				First ignore the cleaning function and let $\widehat D(\abstractdecomposition)$ denote the witness set obtained by applying only the transition rules. We prove by structural induction on $\abstractdecomposition$ that, for every local witness $\awitness$,
				\[
				\awitness\in \widehat D(\abstractdecomposition)
				\quad\text{if and only if}\quad
				\predicateSimpleCliqueNumberAtLeast{\cliqueparameter}{\treewidthvalue}(\abstractdecomposition,\awitness)=\truevalue.
				\]
				The witness set $\vertexsub$ in Definition~\ref{definition:PredicateSimpleCliqueNumberAtLeast} is the selected set. The map domain records exactly the selected vertices that are still active, the integer $\cliquesize$ records the total number of selected vertices including already forgotten ones, and every selected forgotten vertex is required to have degree $\cliqueparameter-1$ inside the selected induced subgraph. This is the invariant maintained below.

				The leaf case gives the single witness $(\cliquemap_0,0)$, witnessed by the empty selected set. For an introduce-vertex node, legality gives $\vertexone\notin\topbag{\sigmaabstractdecomposition}$, so the fresh-label guard is not triggered on reachable witnesses. If the new isolated vertex is not selected, the selected set and the predecessor witness are unchanged. If it is selected, then no selected vertex can already have been forgotten: otherwise the forgotten vertex would need degree $\cliqueparameter-1$ in the selected set, but the new isolated vertex is not adjacent to it. Hence selection of the new vertex is possible exactly in the branch $\cliquesize=|\domain(\cliquemap)|$, and the transition adds $\vertexone$ to the domain with counter $0$ and increments $\cliquesize$.

				For a forget-vertex node, if $\vertexone$ is not selected, the same selected set witnesses the unchanged map. If $\vertexone$ is selected and $\cliquesize<\cliqueparameter$, then after forgetting it the selected set cannot still be extended to a clique of size $\cliqueparameter$ through active vertices, so the transition rejects. If $\cliquesize=\cliqueparameter$, then $\vertexone$ may become inactive only when it already has degree $\cliqueparameter-1$ in the selected induced subgraph; this is exactly the predicate requirement for selected non-active vertices. The transition therefore deletes $\vertexone$ from the domain precisely in that case.

				For an introduce-edge node, if at least one endpoint is not selected, the selected induced subgraph does not change and the witness is kept. If both endpoints are selected, simplicity of $\decompositiongraph{\abstractdecomposition}$ implies that the new edge contributes exactly one new adjacency between them. The transition rejects if either counter is already $\cliqueparameter-1$, since incrementing would leave the declared witness universe. Otherwise it increments the two counters. If after this update the selected set has size $\cliqueparameter$ and every active selected counter is $\cliqueparameter-1$, then all selected active and forgotten vertices have degree $\cliqueparameter-1$ in the selected induced subgraph, so the selected set is a clique of size $\cliqueparameter$ and the transition returns $\cliquefound$. In all other cases the updated map is exactly the predicate witness.

					For a join node, the two child top bags are identified. For the forward direction, let $\vertexsub$ be a selected set witnessing the parent predicate. Its restrictions to the two child graphs have the same selected active labels, giving the equal-domain condition in the transition. If both restrictions had selected vertices outside the interface, then $\vertexsub$ would contain non-interface selected vertices from both children; by Lemma~\ref{lemma:join-no-cross-edges} there is no edge between such vertices, so they cannot both have degree $\cliqueparameter-1$ inside $\decompositiongraph{\abstractdecomposition}[\vertexsub]$. Thus the transition correctly rejects exactly this incompatible case. Otherwise, $\vertexsub$ is the union of the two child selected sets, with the active selected vertices counted once, so its size is $\cliquesize+\cliquesize'-|\domain(\cliquemap)|$. Active degrees add pointwise across the children. Since the parent graph is simple, no active-active edge is contributed by both children, so this sum does not double-count an edge. The overflow guard rejects exactly the cases in which the summed map would not be a local witness. If the merged selected set has size $\cliqueparameter$ and all active counters reach $\cliqueparameter-1$, then all selected active vertices and all selected forgotten vertices have degree $\cliqueparameter-1$ in a selected set of size $\cliqueparameter$, so the transition returns $\cliquefound$; otherwise it returns the merged map, which satisfies the predicate.

					Conversely, suppose a join transition combines child witnesses satisfying the predicate. If either child witness is $\cliquefound$, the child clique embeds into the joined graph, so $\cliquefound$ satisfies the parent predicate. Otherwise, take selected sets witnessing the two child map predicates. The transition requires equal active selected domains and rejects the case where both children contain forgotten selected vertices. Hence their union is well-defined after identifying the common active vertices, has size $\cliquesize+\cliquesize'-|\domain(\cliquemap)|$, and has no pair of selected non-interface vertices coming from different children. For selected active vertices, the parent degree is the sum of the two child counters; for selected forgotten vertices, the required degree $\cliqueparameter-1$ already holds in the child where the vertex occurs, and all selected vertices outside that child are active interface vertices already represented in that child. Thus the merged witness satisfies the parent predicate. If the transition returns $\cliquefound$, the same degree argument shows that the selected set has size $\cliqueparameter$ and every selected vertex has degree $\cliqueparameter-1$, hence it is a clique of size $\cliqueparameter$.

				The actual dynamization applies the specified cleaning function after each transition. This cleaning function is the identity on witness sets not containing $\cliquefound$, and maps every witness set containing $\cliquefound$ to $\{\cliquefound\}$. Since $\cliquefound$ is preserved by all transitions, the cleaned dynamization contains $\cliquefound$ exactly in the cases characterized above by the predicate. If $\cliquefound$ is absent, cleaning is the identity throughout the computation, so the characterization of non-found witnesses is the same as for $\widehat D$.
				\end{proof}

			\begin{corollary}
				\label{corollary:CorrectnessSimpleCliqueNumberAtLeast}
				Let $\abstractdecomposition$ be a $\treewidthvalue$-instructive tree
				decomposition such that $\decompositiongraph{\abstractdecomposition}$ is simple. Then $\decompositiongraph{\abstractdecomposition}$
				has a clique of size $\cliqueparameter$ if and only if $\abstractdecomposition\in
					\accepteddecompositions{\simpleCliqueNumberAtLeastCore{\cliqueparameter}[\treewidthvalue]}$.
			\end{corollary}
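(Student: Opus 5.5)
The plan is to derive this from Proposition~\ref{proposition:PredicateSimpleCliqueNumberAtLeast}, in the same way Corollaries~\ref{corollary:CorrectnessMaxDegree} and~\ref{corollary:CorrectnessColorability} follow from their predicate propositions. The key point is that $\cliquefound$ is the only final witness of $\simpleCliqueNumberAtLeastCore{\cliqueparameter}[\treewidthvalue]$. So $\abstractdecomposition$ is accepted exactly when $\cliquefound$ lies in the cleaned dynamization $\dynamizationfunction{\simpleCliqueNumberAtLeastCore{\cliqueparameter}}{\treewidthvalue}(\abstractdecomposition)$. Cleaning only ever collapses a set containing $\cliquefound$ to $\{\cliquefound\}$, and the proposition states this explicitly, so presence of $\cliquefound$ is unaffected by cleaning.

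Fix $\abstractdecomposition$ with $\decompositiongraph{\abstractdecomposition}$ simple, so that the proposition applies. For the forward direction, suppose $\abstractdecomposition$ is accepted. Then $\cliquefound$ lies in the dynamization. Item~1 of the proposition gives $\predicateSimpleCliqueNumberAtLeast{\cliqueparameter}{\treewidthvalue}(\abstractdecomposition,\cliquefound)=\truevalue$. By Definition~\ref{definition:PredicateSimpleCliqueNumberAtLeast}, this means some $\vertexsub\subseteq\vertexset{\decompositiongraph{\abstractdecomposition}}$ forms a clique of size $\cliqueparameter$.

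For the converse, suppose $\decompositiongraph{\abstractdecomposition}$ contains a clique of size $\cliqueparameter$ on a vertex set $\vertexsub$. Taking this $\vertexsub$ in Definition~\ref{definition:PredicateSimpleCliqueNumberAtLeast}, condition~1 holds, so $\predicateSimpleCliqueNumberAtLeast{\cliqueparameter}{\treewidthvalue}(\abstractdecomposition,\cliquefound)=\truevalue$. Item~1 of the proposition then puts $\cliquefound$ in the dynamization, and since $\cliquefound$ is final, $\abstractdecomposition$ is accepted.

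The only point that needs care is the reading of ``has a clique of size $\cliqueparameter$ as a subgraph''. The predicate speaks of a vertex set forming a clique, so I will note that in a simple graph these agree: a $K_{\cliqueparameter}$ subgraph gives $\cliqueparameter$ pairwise adjacent vertices, and conversely. Simplicity is needed only so that the proposition applies. I do not expect a genuine obstacle, since all the inductive work is done in the proposition.
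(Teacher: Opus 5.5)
Your proposal is correct and matches the paper's proof. Acceptance is equivalent to $\cliquefound$ lying in the dynamization, since it is the unique final witness, and item~1 of Proposition~\ref{proposition:PredicateSimpleCliqueNumberAtLeast} together with Definition~\ref{definition:PredicateSimpleCliqueNumberAtLeast} turns this into the existence of a clique of size $\cliqueparameter$ (your remark reconciling ``clique as a subgraph'' with ``vertex set forming a clique'' is a harmless clarification the paper leaves implicit).
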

			\begin{proof}
			Assume that $\decompositiongraph{\abstractdecomposition}$ is simple and let
			$S\defeq \dynamizationfunction{\simpleCliqueNumberAtLeastCore{\cliqueparameter}}{\treewidthvalue}(\abstractdecomposition)$.
			By definition, $\abstractdecomposition$ is accepted if and only if $S$ contains a final witness, i.e., $\cliquefound\in S$.
			By Proposition~\ref{proposition:PredicateSimpleCliqueNumberAtLeast}, this holds if and only if
			$\predicateSimpleCliqueNumberAtLeast{\cliqueparameter}{\treewidthvalue}(\abstractdecomposition,\cliquefound)=\truevalue$, which is equivalent to
			$\decompositiongraph{\abstractdecomposition}$ having a clique of size $\cliqueparameter$.
			\end{proof}

		\begin{observation}
			\label{observation:ComplexityMeasuresClique}
			$\simpleCliqueNumberAtLeastCore{\cliqueparameter}[\treewidthvalue]$ has bit-length
			$O(\treewidthvalue\cdot \log \cliqueparameter)$ and timing
			$O(\treewidthvalue\cdot \log \cliqueparameter)$.
		\end{observation}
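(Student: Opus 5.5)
The plan is to mirror Observation~\ref{observation:ComplexityMeasuresMaxDegree}: first bound the size of a single local witness, then bound the work done by each transition on one witness (or on a pair of witnesses for $\jointype$).

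\emph{Bit-length.} A $\simpleCliqueNumberAtLeastCore{\cliqueparameter}[\treewidthvalue]$ local witness is either the flag $\cliquefound$ or a pair $(\cliquemap,\cliquesize)$ as in Definition~\ref{definition:clique-witness}.
\begin{itemize}
\item The flag costs $O(1)$ bits.
\item The map $\cliquemap$ has domain contained in $[\treewidthvalue+1]$. We encode it as a length-$(\treewidthvalue+1)$ array indexed by labels. Each entry is either a ``not in domain'' marker or a value in $\{0,\dots,\cliqueparameter-1\}$, so there are $\cliqueparameter+1$ possible symbols per entry. This costs $O(\log(\cliqueparameter+1))=O(\log\cliqueparameter)$ bits per entry, using $\cliqueparameter\ge 2$.
\item The counter $\cliquesize\in\{0,\dots,\cliqueparameter\}$ adds $O(\log\cliqueparameter)$ bits.
\end{itemize}
The total is $O(\treewidthvalue\log\cliqueparameter)$ bits.

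\emph{Timing.} We go through the case definitions of each transition and check that every case tests and updates at most $O(\treewidthvalue)$ array entries of $O(\log\cliqueparameter)$ bits each.
\begin{itemize}
\item $\introvertexgeneric{\vertexone}$ checks domain membership of $\vertexone$ and compares $\cliquesize$ with $\cliqueparameter$. It also compares $\cliquesize$ with $|\domain(\cliquemap)|$, which needs a scan over $O(\treewidthvalue)$ entries. It then outputs at most two witnesses, each a copy of size $O(\treewidthvalue\log\cliqueparameter)$.
\item $\forgetvertexgeneric{\vertexone}$ makes a constant number of entry checks plus one copy with a single entry removed.
\item $\introedgegeneric{\vertexone}{\vertextwo}$ increments two entries. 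It then checks whether every entry equals $\cliqueparameter-1$ and $\cliquesize=\cliqueparameter$, which is a single linear scan.
\item $\jointype$ compares the two domains entrywise, computes both domain sizes, and sums the entries pointwise with the overflow check. It forms $\cliquesize''=\cliquesize+\cliquesize'-|\domain(\cliquemap)|$ and then runs the final all-entries-equal-$\cliqueparameter-1$ scan. All of this is a constant number of linear passes over $\treewidthvalue+1$ entries.
\end{itemize}
Each pass performs $O(\treewidthvalue)$ arithmetic or comparison operations on $O(\log\cliqueparameter)$-bit numbers. This gives $O(\treewidthvalue\log\cliqueparameter)$ per transition, with at most two output witnesses in every case.

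\emph{Main obstacle.} The one point needing care is the timing convention. The bound must be stated per witness (per pair for $\jointype$), in the same sense as Observation~\ref{observation:ComplexityMeasuresMaxDegree}, and not per witness set. The specified cleaning function, which checks whether $\cliquefound\in\witnessset$, is applied at the set level. Following the max-degree observation, I would treat it as set-level bookkeeping outside the per-transition timing measure. With that convention fixed, the remaining verification is routine.
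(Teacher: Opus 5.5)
Your proposal is correct and follows the same route as the paper. The paper encodes a witness as the flag, or as a map on at most $k+1$ labels with values below $\omega$ together with the counter $s\le\omega$, and then observes that every transition does a constant number of map updates and linear scans over the current domain; your case-by-case walk through the transitions and your per-witness timing convention (with cleaning treated as set-level bookkeeping) just spell out what the paper's two-sentence argument leaves implicit.
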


		\begin{proof}
			A $\simpleCliqueNumberAtLeastCore{\cliqueparameter}[\treewidthvalue]$ local witness is either the Boolean flag $\cliquefound$ or a pair $(\cliquemap,\cliquesize)$, where $\cliquemap$ is a map whose domain is a subset of $[\treewidthvalue+1]$ and whose values are bounded by $\cliqueparameter$. Therefore, such a witness can be encoded using $O(\treewidthvalue\cdot \log \cliqueparameter)$ bits.
			Each transition of $\simpleCliqueNumberAtLeastCore{\cliqueparameter}[\treewidthvalue]$ performs a constant number of map updates and/or scans over the current domain (of size at most $\treewidthvalue+1$), so its running time is $O(\treewidthvalue\cdot \log \cliqueparameter)$.
		\end{proof}

			\begin{lemma}[Witness action for $\simpleCliqueNumberAtLeastCore{\cliqueparameter}$]
		\label{lemma:witness-action-clique}
		Fix $\treewidthvalue\in \N$.
		Define $\action_{\dpcore}^{\treewidthvalue}$ on local witnesses by
		\[
		\action_{\dpcore}^{\treewidthvalue}(\relabelingfunction,\cliquefound)\defeq \cliquefound
		\]
		and for a map witness $\awitness=(\cliquemap,\cliquesize)$ with $\mathrm{Lbl}_{\treewidthvalue}(\awitness)\subseteq \domain(\relabelingfunction)$, let
		\[
		\action_{\dpcore}^{\treewidthvalue}(\relabelingfunction,\awitness)
		\defeq
		(\cliquemap',\cliquesize),
		\]
			where
			\[
			\domain(\cliquemap')=\relabelingfunction(\domain(\cliquemap))
			\quad\text{and}\quad
			\cliquemap'(\relabelingfunction(u))=\cliquemap(u)
			\]
			for each $u\in \domain(\cliquemap)$.
		Then $\action_{\dpcore}^{\treewidthvalue}$ is a witness action for $\dpcore[\treewidthvalue]$ in the sense of Definition~\ref{definition:witnessaction}.
		\end{lemma}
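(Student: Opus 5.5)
The plan is to follow the template of Lemma~\ref{lemma:witness-action-maxdegree} and check the conditions of Definition~\ref{definition:witnessaction} one at a time. For a map witness we have $\mathrm{Lbl}_{\treewidthvalue}((\cliquemap,\cliquesize))=\domain(\cliquemap)$, and $\mathrm{Lbl}_{\treewidthvalue}(\cliquefound)=\emptyset$. The action renames the domain of $\cliquemap$ along $\relabelingfunction$ and leaves both the values and $\cliquesize$ unchanged.

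\emph{Finality invariance.} This holds because $\cliquefound$ is the only final witness and it is fixed by the action, while map witnesses are sent to map witnesses.

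\emph{Label support transport.} This is the identity $\domain(\cliquemap')=\relabelingfunction(\domain(\cliquemap))$.

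\emph{Well-definedness.} The image is again a local witness: $|\domain(\cliquemap')|=|\domain(\cliquemap)|\le\cliquesize$ by injectivity of $\relabelingfunction$, and the values are unchanged.

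\emph{Inverse, composition, extension invariance.} These follow pointwise from $\cliquemap'(\relabelingfunction(u))=\cliquemap(u)$, exactly as in the max-degree case. The second coordinate $\cliquesize$ is untouched throughout, and the action only reads $\relabelingfunction$ on $\domain(\cliquemap)$.

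\emph{Equivariance.} Write $\Phi=\action_{\dpcore}^{\treewidthvalue}(\relabelingfunction,\cdot)$ and go through each transition case by case. The key point is that every guard in the clique core is one of three kinds:
\begin{itemize}
\item a domain-membership test such as $\vertexone\in\domain(\cliquemap)$;
\item a comparison involving $\cliquesize$ and $|\domain(\cliquemap)|$;
\item a value test on $\cliquemap$ at specified labels, or a test quantified over all labels in the domain.
\end{itemize}
Each of these is preserved when we replace $\vertexone,\vertextwo$ by $\relabelingfunction(\vertexone),\relabelingfunction(\vertextwo)$ and $\cliquemap$ by $\cliquemap'$. This uses the bijection $\relabelingfunction:\domain(\cliquemap)\to\domain(\cliquemap')$ and the fact that $|\domain(\cliquemap')|=|\domain(\cliquemap)|$.

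The outputs also commute with $\Phi$:
\begin{itemize}
\item Introduce vertex. The two-element output satisfies $\Phi(\cliquemap\cup\{\vertexone\to0\},\cliquesize+1)=(\cliquemap'\cup\{\relabelingfunction(\vertexone)\to0\},\cliquesize+1)$.
\item Forget vertex. Restriction commutes with renaming.
\item Introduce edge. Incrementing at $\vertexone,\vertextwo$ becomes incrementing at their images. The ``all values equal $\cliqueparameter-1$'' test is invariant under renaming, so the $\cliquefound$ outcome matches on both sides.
\end{itemize}
In each case both sides are sets, computed by applying the pointwise extension to the finite output.

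\emph{Join.} We need $\Phi(\jointype(\awitness,\awitness'))=\jointype(\Phi(\awitness),\Phi(\awitness'))$, where both witnesses are supported by $\relabelingfunction$. The $\cliquefound$ short-circuit is preserved. The test $|\domain|<\cliquesize$ on both sides is preserved. Equality of domains is preserved because $\relabelingfunction$ is injective. The overflow test and the pointwise sum are label-wise, so renaming commutes with them. Finally $\cliquesize''=\cliquesize+\cliquesize'-|\domain(\cliquemap)|$ is unchanged because $|\domain(\cliquemap')|=|\domain(\cliquemap)|$.

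I expect the main obstacle to be the bookkeeping rather than any conceptual step: there are many guarded branches, and one must make sure that the branch selected for $\awitness$ is exactly the branch selected for $\Phi(\awitness)$, including the empty-output branches. I would handle this with a single auxiliary claim that every guard is a predicate on $(\cliquemap,\cliquesize,\vertexone,\vertextwo)$ invariant under simultaneous renaming by $\relabelingfunction$, and then apply it uniformly to all branches.
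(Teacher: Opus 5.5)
Your proposal is correct and takes essentially the same route as the paper. Both check finality, support transport, and inverse/composition/extension invariance by pointwise renaming, then argue that every branch guard of the clique core is invariant under renaming while the updates are transported along $f$. Your version is more explicit in places: you note that $|\mathrm{dom}(\gamma')|=|\mathrm{dom}(\gamma)|$, which is what makes the $s=|\mathrm{dom}(\gamma)|$ guard and the join size $s+s'-|\mathrm{dom}(\gamma)|$ commute with the action. The paper additionally remarks that the non-identity cleaning function is equivariant, which Definition~\ref{definition:witnessaction} does not require.
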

		\begin{proof}
		We verify the conditions of Definition~\ref{definition:witnessaction}.

		\smallskip\noindent
		\emph{Finality invariance.}
		The only final witness is $\cliquefound$, and it is fixed by the action.

		\smallskip\noindent
		\emph{Label support transport.}
		The label support of $\cliquefound$ is empty. For a map witness $\awitness=(\cliquemap,\cliquesize)$, the only labels mentioned in $\awitness$ are those in $\domain(\cliquemap)$, and by definition $\domain(\cliquemap')=\relabelingfunction(\domain(\cliquemap))$, so
		\[
		\mathrm{Lbl}_{\treewidthvalue}(\action_{\dpcore}^{\treewidthvalue}(\relabelingfunction,\awitness))
		=\domain(\cliquemap')
		=\relabelingfunction(\domain(\cliquemap))
		=\relabelingfunction(\mathrm{Lbl}_{\treewidthvalue}(\awitness)).
		\]

		\smallskip\noindent
		\emph{Inverse relabeling, composition, and extension invariance.}
		For map witnesses, the action renames the domain labels by $\relabelingfunction$ and preserves all integer values as well as $\cliquesize$, so applying $\relabelingfunction^{-1}$ recovers the original map, and composition holds labelwise.
		If $\relabelingfunction'$ extends $\relabelingfunction$, then $\relabelingfunction'$ agrees with $\relabelingfunction$ on $\domain(\cliquemap)$ and hence induces the same renamed map.

		\smallskip\noindent
		\emph{DP-core consistency.}
		If $\awitness=\cliquefound$, then every transition returns $\{\cliquefound\}$ and commutation is immediate.
		Otherwise let $\awitness=(\cliquemap,\cliquesize)$.
		Each transition of $\dpcore[\treewidthvalue]$ is defined by predicates of the form $\vertexone\in \domain(\cliquemap)$, comparisons involving $\cliquesize$, and pointwise updates/restrictions/sums of $\cliquemap$ at named labels.
		Because $\cliquemap'(\relabelingfunction(u))=\cliquemap(u)$ for every $u\in \domain(\cliquemap)$, we have
		\begin{align*}
		\vertexone\in \domain(\cliquemap) &\iff \relabelingfunction(\vertexone)\in \domain(\cliquemap'),\\
		\cliquemap(\vertexone)=t &\iff \cliquemap'(\relabelingfunction(\vertexone))=t,
		\end{align*}
		so each branch condition in $\introvertextype{\cdot}$, $\forgetvertextype{\cdot}$, $\introedgegeneric{\cdot}{\cdot}$, and $\joingenericcore$ is preserved under relabeling.
			Moreover, the corresponding updates (adding/removing a label from the domain, incrementing/summing counters, and updating $\cliquesize$) are transported by renaming labels via $\relabelingfunction$.
			Hence $\action_{\dpcore}^{\treewidthvalue}$ commutes with all four transitions and satisfies Conditions~\ref{condition:relabeldpcore-condition-four}--\ref{condition:relabeldpcore-condition-seven}.
				The cleaning function is equivariant as well: $\cliquefound$ is fixed by the action, and
				\begin{multline*}
				\action_{\dpcore}^{\treewidthvalue}(\relabelingfunction,
				\dpcore[\treewidthvalue].\cleaningfunctioncore(\witnessset))\\
				=
				\dpcore[\treewidthvalue].\cleaningfunctioncore(
				\action_{\dpcore}^{\treewidthvalue}(\relabelingfunction,\witnessset))
				\end{multline*}
				for every witness set $\witnessset$ on which the action is defined.
			Indeed, both sides are $\{\cliquefound\}$ if $\cliquefound\in\witnessset$, and otherwise cleaning is the identity on both sides.
			\end{proof}

}

				\subsection{Multiple Edges}
{
	\renewcommand{\dpcore}{\hasMultiEdgeCore}

	Let \(\agraph=(\vertexsetname,\edgesetname,\incidencerelationname)\) be a graph. \emph{Multiple edges} refer to two or more edges that connect the same pair of vertices. A graph \(\agraph\) is called a \emph{multigraph} if it contains multiple edges. Let \(\hasMultiEdgeProperty\) denote the property of graphs that have multiple edges. We define a DP-core \(\hasMultiEdgeCore[\treewidthvalue]\) for the property \(\hasMultiEdgeProperty \cap \allgraphstreewidth{\treewidthvalue}\).

	\noindent\textbf{Local Witness.}
	We begin by describing the structure of local witnesses for \(\dpcore[\treewidthvalue]\). To verify whether a graph contains multiple edges, it is necessary to track its edges. Specifically, a local witness is represented either as an unordered set of sets or as the Boolean flag \(\multiedgefound\). The unordered set is used to store pairs of vertex labels for which an edge exists during the processing of a \(\treewidthvalue\)-instructive tree decomposition. The Boolean flag indicates that multiple edges have been found, so no further checks are necessary.

	\begin{definition}
	A \(\dpcore[\treewidthvalue]\) local witness \(\awitness\) is one of the following:

	\begin{itemize}
		\item The Boolean flag \(\multiedgefound\), or
		\item An unordered set containing sets of size \(2\) indicating existing edges between two distinct labels associated with active vertices.
	\end{itemize}
	\end{definition}

	\noindent\textbf{All Witnesses.}
	We define \(\dpcore[\treewidthvalue].\allwitnesses\) as the set of all \(\dpcore[\treewidthvalue]\)-local witnesses:
	\[
	\dpcore[\treewidthvalue].\allwitnesses = \{\awitness : \awitness \text{ is a } \dpcore[\treewidthvalue] \text{ local witness}\}.
	\]

	\noindent\textbf{Initial Witnesses.}
	When the instruction \(\initialsetgenericcore\) is invoked, it initializes an empty graph. For the empty graph, the only valid local witness is the empty set since no edge has been introduced.
	\[
	\dpcore[\treewidthvalue].\initialsetgeneric = \{\emptyset\}.
	\]

	\noindent\textbf{Introduce Vertex.}
	Let \(\awitness\) be a \(\dpcore[\treewidthvalue]\)-local witness, and let \(\vertexone\) represent a label in \([\treewidthvalue + 1]\). When a new vertex is introduced and labeled with \(\vertexone\), there are no changes to the edges of the graph. Consequently, the same local witness is returned without modification. Note that the local witness can be either the Boolean flag or a set.

	\[
	\dpcore[\treewidthvalue].\introvertexgeneric{\vertexone}(\awitness) = \{\awitness\}.
	\]

	\noindent\textbf{Forget Vertex.}
	Let \(\awitness\) be a local witness, and let \(\vertexone\) represent an active label. When the vertex labeled \(\vertexone\) is forgotten, the label \(\vertexone\) is released, making it available for future use when introducing another vertex. If multiple edges have been found, i.e., \(\awitness = \multiedgefound\), no modifications are required and the same local witness is returned. If multiple edges have not been found (\(\awitness\neq \multiedgefound\)), then the edges connected to the vertex labeled \(\vertexone\) are no longer relevant and are removed.

		\[
		\dpcore[\treewidthvalue].\forgetvertextype{\vertexone}(\awitness) =
		\begin{cases}
			\{\awitness\} & \quad \text{if } \awitness = \multiedgefound, \\
			\Bigl\{\bigl\{\{\vertextwo, \vertextwo'\} : \substack{\vertextwo,\vertextwo'\neq \vertexone\\ \{\vertextwo, \vertextwo'\}\in \awitness}\bigr\}\Bigr\} & \quad \text{otherwise.}
		\end{cases}
		\]

	\noindent\textbf{Introduce Edge.}
	Let \(\awitness\) be a local witness. When introducing an edge between two distinct vertices labeled \(\vertexone\) and \(\vertextwo\), there are three possible outcomes:
	\begin{enumerate}
	    \item If multiple edges have already been found (\(\awitness = \multiedgefound\)), the local witness remains unchanged.
	    \item If \(\awitness \neq \multiedgefound\) and the new edge creates a multiple edge (i.e., \(\{\vertexone, \vertextwo\} \in \awitness\)), then the returned local witness is the Boolean flag \(\multiedgefound\).
	    \item Otherwise, if \(\{\vertexone, \vertextwo\} \notin \awitness\), the edge pair \(\{\vertexone, \vertextwo\}\) is added to \(\awitness\).
	\end{enumerate}

	\[
	\dpcore[\treewidthvalue].\introedgegeneric{\vertexone}{\vertextwo}(\awitness) =
		\begin{cases}
			\{\multiedgefound\} & \quad \text{if } \awitness = \multiedgefound, \\
			\{\awitness \cup \{\{\vertexone, \vertextwo\}\}\} & \quad \text{if } \awitness\neq\multiedgefound \text{ and } \{\vertexone, \vertextwo\} \notin \awitness, \\
			\{\multiedgefound\} & \quad \text{if } \awitness\neq\multiedgefound \text{ and } \{\vertexone, \vertextwo\} \in \awitness.
		\end{cases}
		\]

	\noindent\textbf{Join.}
	Let \(\awitness \) and \(\awitness'\) be local witnesses associated with the left and right children of a given node. The purpose of the \(\joingenericcore\) operation is to combine the graphs corresponding to these children, focusing on the vertices labeled with active labels.

	If multiple edges have already been found in either child (\(\awitness = \multiedgefound\) or \(\awitness'=\multiedgefound\)), the join operation immediately returns \(\{\multiedgefound\}\). Otherwise, if the intersection \(\awitness \cap \awitness'\) is non-empty, this indicates the presence of multiple edges spanning both children, and the operation returns \(\{\multiedgefound\}\). If neither of these conditions is met, meaning no multiple edges exist, the join operation returns a combined local witness \(\awitness \cup \awitness'\).

	\[
	\dpcore[\treewidthvalue].\joingenericcore(\awitness, \awitness') =
		\begin{cases}
			\{\multiedgefound\} & \quad \text{if } \awitness = \multiedgefound \text{ or } \awitness' = \multiedgefound, \\
			\{\multiedgefound\} & \quad \text{if } \awitness,\awitness'\neq\multiedgefound \text{ and } \awitness \cap \awitness' \neq \emptyset, \\
			\{\awitness \cup \awitness'\} & \quad \text{if } \awitness,\awitness'\neq\multiedgefound \text{ and } \awitness \cap \awitness' = \emptyset.
		\end{cases}
		\]

	\noindent\textbf{Final Witness Function.}
	The final witness function is used to determine whether the condition of finding multiple edges has been satisfied.

	\[
	\dpcore[\treewidthvalue].\finalwitnessgenericcore(\awitness) = 1 \iff \awitness = \multiedgefound.
	\]

	\noindent\textbf{Cleaning Function.}
	In our example, the cleaning function acts as the identity.
	\[
	\dpcore[\treewidthvalue].\cleaningfunctioncore(\witnessset)=\witnessset \text{ for every }
		\witnessset \subseteq \dpcore[\treewidthvalue].\allwitnesses.
	\]

	\noindent\textbf{Invariant Function.}
	In our example, the invariant function is the trivial one, which sends a set of local witnesses \(\witnessset\) to \(1\) if and only if it contains a final witness.

	\[
	\dpcore[\treewidthvalue].\invariantCore(\witnessset)=
		\begin{cases}
			1 & \quad \text{ if } \witnessset \text{ contains a final witness}, \\
			0 & \quad \text{ otherwise}.
		\end{cases}
	\]

	The next step is to prove the correctness of the DP-core we just defined. For this, we define the following predicate relating \(\treewidthvalue\)-instructive tree decompositions with local witnesses.

		\begin{definition}
			\label{definition:PredicateHasMultiEdge}
			We let \(\predicateHasMultiEdge{\treewidthvalue}\) be the predicate that is true on a pair \((\abstractdecomposition,\awitness) \in \allabstractdecompositionstreewidth{\treewidthvalue} \times \hasMultiEdgeCore[\treewidthvalue].\allwitnesses\) if and only if one of the following is satisfied:
			\begin{enumerate}
				\item If \(\awitness = \multiedgefound\), \(\decompositiongraph{\abstractdecomposition}\) is a multigraph.
				\item If \(\awitness \neq \multiedgefound\), \(\decompositiongraph{\abstractdecomposition}\) is a simple graph and
				\[
					\awitness
					=
					\Bigl\{\{u,v\}\in \powersetchoosek{\topbag{\abstractdecomposition}}{2} : \exists \anedge \in \edgeset{\decompositiongraph{\abstractdecomposition}} \text{ with }
					\edgeendpointsname(\anedge)=\{\topmap{\abstractdecomposition}(u),\topmap{\abstractdecomposition}(v)\}\Bigr\}.
				\]
			\end{enumerate}
		\end{definition}

		The first condition requires that if the witness is the Boolean flag \(\multiedgefound\), the graph associated with \(\abstractdecomposition\) should have multiple edges. The second condition ensures that if the witness is a set of label pairs, the graph is simple and \(\awitness\) encodes \emph{exactly} those edges whose endpoints are currently active.

	\begin{proposition}
		\label{proposition:PredicateHasMultiEdgeCorrectness}
		For each \(\abstractdecomposition \in \allabstractdecompositionstreewidth{\treewidthvalue}\),
		a local witness \(\awitness\) belongs to \(\dynamizationfunction{\hasMultiEdgeCore}{\treewidthvalue}(\abstractdecomposition)\)
		if and only if \(\predicateHasMultiEdge{\treewidthvalue}(\abstractdecomposition,\awitness) = \truevalue\).
		\end{proposition}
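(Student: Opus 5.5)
I will prove the statement by structural induction on $\abstractdecomposition$, following the template of Proposition~\ref{proposition:PredicateMaxDegree}. The cleaning function of $\hasMultiEdgeCore$ is the identity, so each dynamization step is just the lifted transition. In every case I verify both directions. The direction ($\Rightarrow$) pushes a child witness satisfying the predicate through the transition. The direction ($\Leftarrow$) exhibits a child witness: in the set case it is the unique active-edge set of the child, and in the flag case it is either that set or $\multiedgefound$. The child witness satisfies the child predicate, and the transition maps it to $\awitness$.

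The leaf case is immediate: the empty graph is simple and has no active edges, so the only predicate-true witness is $\emptyset$.

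\emph{Introduce vertex.} An isolated vertex is added. Simplicity and the set of active edges are unchanged, since the new label carries no edges.

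\emph{Forget vertex.} The graph is unchanged and the bag shrinks. The active-edge set loses exactly the pairs containing $\vertexone$, which is what the transition removes. For ($\Leftarrow$) in the set case, I take the child witness to be the full active-edge set of the child.

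\emph{Introduce edge.} The new edge $\anedge$ has endpoints $\topmap{\sigmaabstractdecomposition}(\vertexone),\topmap{\sigmaabstractdecomposition}(\vertextwo)$, which are distinct by Lemma~\ref{lemma:itd-interface-map-injective}. The parent graph is a multigraph iff either the child is a multigraph or the child already has an edge with these endpoints. Under the predicate for a set witness, the second condition is exactly $\{\vertexone,\vertextwo\}\in\awitness$. This matches the three branches of the transition.

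The join case is the main obstacle, and I will handle it via Lemmas~\ref{lemma:join-child-embeddings} and~\ref{lemma:join-no-cross-edges}. Every edge of $G=\decompositiongraph{\abstractdecomposition}$ comes from exactly one child, and its endpoints are the $\phi_i$-images of the child endpoints. Two edges $\anedge,\anedge'$ with equal endpoints in $G$ therefore fall into one of two cases.
\begin{itemize}
\item If both come from the same child $i$, then injectivity of $\phi_i$ gives equal endpoints in $G_i$, so $G_i$ is a multigraph.
\item If they come from different children, then by Lemma~\ref{lemma:join-no-cross-edges} and injectivity each shared endpoint must lie in the image of both children. Hence both endpoints are identified interface vertices $\topmap{\abstractdecomposition}(u),\topmap{\abstractdecomposition}(v)$, which yields $\{u,v\}\in\awitness\cap\awitness'$ for the children's active-edge sets.
\end{itemize}
Conversely, a multi-edge in a child persists under the embedding into $G$. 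Likewise, a common pair $\{u,v\}\in\awitness\cap\awitness'$ yields two distinct edge identifiers, one even and one odd, with the same endpoints in $G$.

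Hence $G$ is a multigraph iff some child is a multigraph or $\awitness_1\cap\awitness_2\neq\emptyset$, where $\awitness_i$ are the active-edge sets of the children. The two sets are well defined whenever the children are simple, and the bags coincide. When $G$ is simple, its active-edge set is $\awitness_1\cup\awitness_2$, because every edge between active vertices comes from one child with both endpoints active there. Combining these facts with the induction hypothesis matches each branch of $\joingenericcore$ in both directions.

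In the flag case of ($\Leftarrow$), I choose $\multiedgefound$ for a child that is a multigraph. If neither child is a multigraph, I choose the two children's set witnesses, which must intersect. The subtle point is the bookkeeping of identifiers through $\mu$, and I expect that to need the most care.
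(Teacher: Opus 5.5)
Your proposal is correct and follows essentially the same route as the paper's proof: structural induction on the term with identity cleaning, both directions checked constructor by constructor, the child's active-edge set (or the flag) as the predecessor witness in the $(\Leftarrow)$ direction, and the join case handled via Lemma~\ref{lemma:join-no-cross-edges}. Your join argument is a bit more explicit than the paper's, since you use the injective embeddings of Lemma~\ref{lemma:join-child-embeddings} to separate parallel edges from a single child from parallel edges created across the identified interface.
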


		\begin{proof}
		We proceed by structural induction on \(\abstractdecomposition\). Throughout, we use that the cleaning function of \(\dpcore[\treewidthvalue]\) is the identity.

		\medskip
		\noindent\textbf{Base case:} \(\abstractdecomposition=\leaftype\). Then \(\dynamizationfunction{\dpcore}{\treewidthvalue}(\leaftype)=\{\emptyset\}\). Since \(\decompositiongraph{\leaftype}\) is the empty (simple) graph, \(\predicateHasMultiEdge{\treewidthvalue}(\leaftype,\emptyset)=\truevalue\) and \(\predicateHasMultiEdge{\treewidthvalue}(\leaftype,\multiedgefound)=\falsevalue\).

		\medskip
		\noindent\textbf{Induction step:} assume the claim holds for the immediate subterm(s) of \(\abstractdecomposition\).

		\medskip
			\noindent\textbf{Case 1:} \(\abstractdecomposition=\introvertextype{\vertexone}(\sigmaabstractdecomposition)\).

			By Definition~\ref{definition:Dynamization}, a witness \(\awitness\) belongs to \(\dynamizationfunction{\dpcore}{\treewidthvalue}(\abstractdecomposition)\) if and only if there exists \(\awitness'\in \dynamizationfunction{\dpcore}{\treewidthvalue}(\sigmaabstractdecomposition)\) with \(\awitness\in \introvertextype{\vertexone}(\awitness')\). Since \(\introvertextype{\vertexone}(\awitness')=\{\awitness'\}\) for this core, we have \(\awitness=\awitness'\). Moreover, introducing a vertex does not change the edge set (Definition~\ref{definition:itd-constructor-semantics}), and the introduced vertex is isolated at this point. Hence \(\decompositiongraph{\abstractdecomposition}\) has multiple edges if and only if \(\decompositiongraph{\sigmaabstractdecomposition}\) has multiple edges, and the set of edges with both endpoints active is unchanged. The equivalence then follows directly from the induction hypothesis.

		\medskip
		\noindent\textbf{Case 2:} \(\abstractdecomposition=\forgetvertextype{\vertexone}(\sigmaabstractdecomposition)\).

			$(\Rightarrow)$ Suppose \(\awitness\in \dynamizationfunction{\dpcore}{\treewidthvalue}(\abstractdecomposition)\). Then there exists \(\awitness'\in \dynamizationfunction{\dpcore}{\treewidthvalue}(\sigmaabstractdecomposition)\) with \(\awitness\in \forgetvertextype{\vertexone}(\awitness')\). If \(\awitness'=\multiedgefound\), then \(\awitness=\multiedgefound\), and by the induction hypothesis \(\decompositiongraph{\sigmaabstractdecomposition}\) is a multigraph. Since forgetting does not change the graph (Definition~\ref{definition:itd-constructor-semantics}), also \(\decompositiongraph{\abstractdecomposition}\) is a multigraph, so \(\predicateHasMultiEdge{\treewidthvalue}(\abstractdecomposition,\awitness)=\truevalue\). Otherwise \(\awitness'\) is a set of label pairs, and \(\awitness\) is obtained from \(\awitness'\) by removing all pairs that contain \(\vertexone\). By the induction hypothesis, \(\decompositiongraph{\sigmaabstractdecomposition}\) is simple and \(\awitness'\) is exactly the set of label pairs corresponding to edges between active vertices. Since forgetting only removes \(\vertexone\) from the interface and does not change the graph, the remaining pairs in \(\awitness\) are exactly those edges whose endpoints remain active in \(\abstractdecomposition\). Hence \(\predicateHasMultiEdge{\treewidthvalue}(\abstractdecomposition,\awitness)=\truevalue\).

				$(\Leftarrow)$ Suppose \(\predicateHasMultiEdge{\treewidthvalue}(\abstractdecomposition,\awitness)=\truevalue\).
					If \(\awitness=\multiedgefound\), then \(\decompositiongraph{\abstractdecomposition}\) is a multigraph, hence so is \(\decompositiongraph{\sigmaabstractdecomposition}\).
					By the induction hypothesis,
					\[
					\multiedgefound\in \dynamizationfunction{\dpcore}{\treewidthvalue}(\sigmaabstractdecomposition).
					\]
					Since \(\forgetvertextype{\vertexone}(\multiedgefound)=\{\multiedgefound\}\), we get
					\[
					\awitness\in \dynamizationfunction{\dpcore}{\treewidthvalue}(\abstractdecomposition).
					\]

				Otherwise \(\awitness\) is a set of pairs and \(\decompositiongraph{\abstractdecomposition}\) is simple.
				Let \(G\defeq \decompositiongraph{\sigmaabstractdecomposition}=\decompositiongraph{\abstractdecomposition}\) and \(\theta\defeq \topmap{\sigmaabstractdecomposition}\).
				Define \(\awitness'\) to be \(\awitness\) together with all pairs \(\{\vertexone,\vertextwo\}\) such that there is an edge between \(\theta(\vertexone)\) and \(\theta(\vertextwo)\) in \(G\).
				Then \(\predicateHasMultiEdge{\treewidthvalue}(\sigmaabstractdecomposition,\awitness')=\truevalue\), so by the induction hypothesis \(\awitness'\in \dynamizationfunction{\dpcore}{\treewidthvalue}\allowbreak(\sigmaabstractdecomposition)\).
				By definition of the forget transition, \(\forgetvertextype{\vertexone}(\awitness')=\{\awitness\}\), hence \(\awitness\in \dynamizationfunction{\dpcore}{\treewidthvalue}\allowbreak(\abstractdecomposition)\).

		\medskip
		\noindent\textbf{Case 3:} \(\abstractdecomposition=\introedgetype{\vertexone}{\vertextwo}(\sigmaabstractdecomposition)\).

			$(\Rightarrow)$ Suppose \(\awitness\in \dynamizationfunction{\dpcore}{\treewidthvalue}(\abstractdecomposition)\). Then there exists \(\awitness'\in \dynamizationfunction{\dpcore}{\treewidthvalue}(\sigmaabstractdecomposition)\) with \(\awitness\in \introedgegeneric{\vertexone}{\vertextwo}(\awitness')\). If \(\awitness'=\multiedgefound\), then \(\awitness=\multiedgefound\) and the claim follows by induction. Otherwise \(\awitness'\) is a set. If \(\{\vertexone,\vertextwo\}\in \awitness'\), then the transition returns \(\multiedgefound\), and by the induction hypothesis \(\decompositiongraph{\sigmaabstractdecomposition}\) already has an edge between the corresponding vertices. Since \(\introedgetype{\vertexone}{\vertextwo}\) adds another edge between the same endpoints (Definition~\ref{definition:itd-constructor-semantics}), \(\decompositiongraph{\abstractdecomposition}\) is a multigraph and \(\predicateHasMultiEdge{\treewidthvalue}(\abstractdecomposition,\multiedgefound)=\truevalue\). If \(\{\vertexone,\vertextwo\}\notin \awitness'\), then \(\awitness=\awitness'\cup\{\{\vertexone,\vertextwo\}\}\). In this case, \(\decompositiongraph{\sigmaabstractdecomposition}\) is simple by the induction hypothesis and does not contain an edge between the corresponding active vertices. Therefore \(\decompositiongraph{\abstractdecomposition}\) is still simple, and the set of label pairs corresponding to edges between active vertices is exactly \(\awitness\). Hence \(\predicateHasMultiEdge{\treewidthvalue}(\abstractdecomposition,\awitness)=\truevalue\).

			$(\Leftarrow)$ Suppose \(\predicateHasMultiEdge{\treewidthvalue}(\abstractdecomposition,\awitness)=\truevalue\). If \(\awitness=\multiedgefound\), then \(\decompositiongraph{\abstractdecomposition}\) is a multigraph. Removing the last introduced edge yields \(\decompositiongraph{\sigmaabstractdecomposition}\), which is either already a multigraph or has an edge between the same endpoints as the removed edge. In either case, by the induction hypothesis there exists \(\awitness'\in \dynamizationfunction{\dpcore}{\treewidthvalue}(\sigmaabstractdecomposition)\) such that \(\introedgegeneric{\vertexone}{\vertextwo}(\awitness')\) produces \(\multiedgefound\), and hence \(\multiedgefound\in \dynamizationfunction{\dpcore}{\treewidthvalue}(\abstractdecomposition)\). Otherwise \(\awitness\) is a set and \(\decompositiongraph{\abstractdecomposition}\) is simple. Since \(\introedgetype{\vertexone}{\vertextwo}\) introduces an edge between the active vertices labeled \(\vertexone\) and \(\vertextwo\), we have \(\{\vertexone,\vertextwo\}\in \awitness\). Let \(\awitness'\defeq \awitness\setminus\{\{\vertexone,\vertextwo\}\}\). Then \(\decompositiongraph{\sigmaabstractdecomposition}\) is obtained from \(\decompositiongraph{\abstractdecomposition}\) by deleting this edge, hence it is simple and \(\awitness'\) is exactly the set of label pairs corresponding to edges between active vertices in \(\decompositiongraph{\sigmaabstractdecomposition}\). Therefore \(\predicateHasMultiEdge{\treewidthvalue}(\sigmaabstractdecomposition,\awitness')=\truevalue\), so by the induction hypothesis \(\awitness'\in \dynamizationfunction{\dpcore}{\treewidthvalue}(\sigmaabstractdecomposition)\). Since \(\{\vertexone,\vertextwo\}\notin \awitness'\), the intro-edge rule returns \(\awitness'\cup\{\{\vertexone,\vertextwo\}\}=\awitness\), hence \(\awitness\in \dynamizationfunction{\dpcore}{\treewidthvalue}(\abstractdecomposition)\).

		\medskip
		\noindent\textbf{Case 4:} \(\abstractdecomposition=\jointype(\sigmaabstractdecomposition_1,\sigmaabstractdecomposition_2)\).

				$(\Rightarrow)$ Suppose \(\awitness\in \dynamizationfunction{\dpcore}{\treewidthvalue}(\abstractdecomposition)\). Then there exist \(\awitness_1\in \dynamizationfunction{\dpcore}{\treewidthvalue}(\sigmaabstractdecomposition_1)\) and \(\awitness_2\in \dynamizationfunction{\dpcore}{\treewidthvalue}(\sigmaabstractdecomposition_2)\) such that \(\awitness\in \jointype(\awitness_1,\awitness_2)\). If \(\awitness=\multiedgefound\), then either \(\awitness_1=\multiedgefound\) or \(\awitness_2=\multiedgefound\) or \(\awitness_1\cap \awitness_2\neq\emptyset\). In the first two cases, by the induction hypothesis one child graph already has multiple edges, hence so does the join graph. In the third case, there is a label pair \(\{\vertexone,\vertextwo\}\) that corresponds to an edge in both children; by Definition~\ref{definition:itd-constructor-semantics}(e), joining identifies the corresponding endpoints, producing two distinct edges between the same pair of vertices in \(\decompositiongraph{\abstractdecomposition}\). Thus \(\decompositiongraph{\abstractdecomposition}\) is a multigraph and \(\predicateHasMultiEdge{\treewidthvalue}(\abstractdecomposition,\multiedgefound)=\truevalue\). If \(\awitness\neq \multiedgefound\), then \(\awitness=\awitness_1\cup \awitness_2\) and \(\awitness_1\cap \awitness_2=\emptyset\). By the induction hypothesis, both child graphs are simple and \(\awitness_1,\awitness_2\) encode exactly the edges between active vertices in the respective child graphs. By Lemma~\ref{lemma:join-no-cross-edges}, every edge between active vertices in \(\decompositiongraph{\abstractdecomposition}\) comes from exactly one child, and therefore \(\awitness\) is exactly the set of label pairs corresponding to edges between active vertices in the join graph. Hence \(\predicateHasMultiEdge{\treewidthvalue}(\abstractdecomposition,\awitness)=\truevalue\).

			$(\Leftarrow)$ Suppose \(\predicateHasMultiEdge{\treewidthvalue}(\abstractdecomposition,\awitness)=\truevalue\). If \(\awitness=\multiedgefound\), then \(\decompositiongraph{\abstractdecomposition}\) is a multigraph. This can happen either because one child already has multiple edges or because the join identifies two distinct edges between the same labeled endpoints from the two children. In both cases, by the induction hypothesis there exist witnesses \(\awitness_1\in \dynamizationfunction{\dpcore}{\treewidthvalue}\allowbreak(\sigmaabstractdecomposition_1)\) and \(\awitness_2\in \dynamizationfunction{\dpcore}{\treewidthvalue}\allowbreak(\sigmaabstractdecomposition_2)\) such that \(\multiedgefound\in \jointype(\awitness_1,\awitness_2)\), hence \(\multiedgefound\in \dynamizationfunction{\dpcore}{\treewidthvalue}\allowbreak(\abstractdecomposition)\). Otherwise \(\awitness\) is a set and \(\decompositiongraph{\abstractdecomposition}\) is simple. Let \(\awitness_i\) be the set of label pairs that correspond to edges between active vertices in \(\decompositiongraph{\sigmaabstractdecomposition_i}\). Then \(\predicateHasMultiEdge{\treewidthvalue}(\sigmaabstractdecomposition_i,\awitness_i)=\truevalue\), so by induction \(\awitness_i\in \dynamizationfunction{\dpcore}{\treewidthvalue}\allowbreak(\sigmaabstractdecomposition_i)\) for \(i\in\{1,2\}\). Since \(\decompositiongraph{\abstractdecomposition}\) is simple, no label pair can correspond to edges in both children, so \(\awitness_1\cap \awitness_2=\emptyset\). Moreover, every edge between active vertices in the join graph belongs to exactly one child, so \(\awitness=\awitness_1\cup \awitness_2\). Therefore \(\jointype(\awitness_1,\awitness_2)=\{\awitness\}\), and hence \(\awitness\in \dynamizationfunction{\dpcore}{\treewidthvalue}\allowbreak(\abstractdecomposition)\).
		\end{proof}

		\begin{corollary}
			\label{corollary:CorrectnessMultiEdge}
			Let \(\abstractdecomposition\) be a \(\treewidthvalue\)-instructive tree decomposition. Then \(\decompositiongraph{\abstractdecomposition}\) has multiple edges
			if and only if \(\abstractdecomposition \in
				\accepteddecompositions{\dpcore[\treewidthvalue]}\).
		\end{corollary}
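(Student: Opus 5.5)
The plan is to derive Corollary~\ref{corollary:CorrectnessMultiEdge} directly from Proposition~\ref{proposition:PredicateHasMultiEdgeCorrectness}, in the same way Corollary~\ref{corollary:CorrectnessMaxDegree} was obtained from Proposition~\ref{proposition:PredicateMaxDegree}. The first step is to record that the cleaning function of $\hasMultiEdgeCore[\treewidthvalue]$ is the identity. Hence $\abstractdecomposition$ is accepted exactly when $\dynamizationfunction{\hasMultiEdgeCore}{\treewidthvalue}(\abstractdecomposition)$ contains a final witness. Since $\finalwitnessgenericcore(\awitness)=1$ holds only for $\awitness=\multiedgefound$, acceptance is equivalent to $\multiedgefound\in\dynamizationfunction{\hasMultiEdgeCore}{\treewidthvalue}(\abstractdecomposition)$.

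For the forward direction, assume $\abstractdecomposition$ is accepted. Then $\multiedgefound$ lies in the dynamization. By Proposition~\ref{proposition:PredicateHasMultiEdgeCorrectness}, $\predicateHasMultiEdge{\treewidthvalue}(\abstractdecomposition,\multiedgefound)=\truevalue$. By clause~1 of Definition~\ref{definition:PredicateHasMultiEdge}, this means $\decompositiongraph{\abstractdecomposition}$ is a multigraph, i.e., it has multiple edges.

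For the converse, assume $\decompositiongraph{\abstractdecomposition}$ has multiple edges. Then clause~1 of Definition~\ref{definition:PredicateHasMultiEdge} gives $\predicateHasMultiEdge{\treewidthvalue}(\abstractdecomposition,\multiedgefound)=\truevalue$. The reverse implication of Proposition~\ref{proposition:PredicateHasMultiEdgeCorrectness} then places $\multiedgefound$ in the dynamization, and since $\multiedgefound$ is final, $\abstractdecomposition$ is accepted.

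There is no real obstacle, since all the work is in the proposition. The only point to check is that the predicate's first clause matches ``has multiple edges'' exactly. It does, because the paper defines a multigraph as a graph containing multiple edges.
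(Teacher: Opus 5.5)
Your proposal is correct and matches the paper's proof: since $\multiedgefound$ is the only final witness, acceptance is equivalent to $\multiedgefound$ lying in the dynamization. Proposition~\ref{proposition:PredicateHasMultiEdgeCorrectness} together with clause~1 of Definition~\ref{definition:PredicateHasMultiEdge} identifies this with $\decompositiongraph{\abstractdecomposition}$ having multiple edges (your remark about identity cleaning is harmless but not needed).
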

		\begin{proof}
			By definition, \(\abstractdecomposition\) is accepted by \(\dpcore[\treewidthvalue]\) if and only if \(\dynamizationfunction{\dpcore}{\treewidthvalue}(\abstractdecomposition)\) contains a final witness, i.e., \(\multiedgefound\). By Proposition~\ref{proposition:PredicateHasMultiEdgeCorrectness}, this holds if and only if \(\decompositiongraph{\abstractdecomposition}\) has multiple edges (equivalently, is not simple).
		\end{proof}

	\begin{observation}
		\label{observation:ComplexityMeasuresHasMultiEdge}
		A \(\dpcore[\treewidthvalue]\) local witness has a bit-length
		\(O(\treewidthvalue^2)\) and timing \(O(\treewidthvalue^2)\).
	\end{observation}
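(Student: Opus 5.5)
The statement is Observation~\ref{observation:ComplexityMeasuresHasMultiEdge}. I would prove it by the same counting argument used for Observations~\ref{observation:ComplexityMeasuresMaxDegree} and~\ref{observation:ComplexityMeasuresClique}. The proof splits into a bound on the encoding size of a witness, followed by a bound on the cost of each transition.

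\emph{Bit-length.} A $\hasMultiEdgeCore[\treewidthvalue]$ local witness is either the flag $\multiedgefound$ or a set $\awitness\subseteq\powersetchoosek{[\treewidthvalue+1]}{2}$.
\begin{itemize}
\item The flag costs $O(1)$ bits.
\item For the set case, I would fix the encoding as a characteristic vector indexed by the $\binom{\treewidthvalue+1}{2}$ unordered label pairs, plus one tag bit separating it from the flag.
\end{itemize}
This gives $\binom{\treewidthvalue+1}{2}+1=O(\treewidthvalue^2)$ bits. An explicit list of pairs, at $O(\log\treewidthvalue)$ bits per pair, would give only $O(\treewidthvalue^2\log\treewidthvalue)$. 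That is why I commit to the characteristic-vector encoding.

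\emph{Timing.} I would go through the transitions one by one.
\begin{itemize}
\item $\introvertexgeneric{\vertexone}$ returns its input unchanged, so it costs at most a copy, $O(\treewidthvalue^2)$.
\item $\forgetvertexgeneric{\vertexone}$ clears every pair containing $\vertexone$. This is one scan of the vector, $O(\treewidthvalue^2)$.
\item $\introedgegeneric{\vertexone}{\vertextwo}$ tests and sets a single bit at the index of $\{\vertexone,\vertextwo\}$. That is $O(1)$ beyond copying, so $O(\treewidthvalue^2)$ overall.
\item $\joingenericcore$ computes the bitwise intersection of two vectors and tests it for emptiness, then, if it is empty, forms the bitwise union. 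Both are single passes over $O(\treewidthvalue^2)$ bits.
\item Deciding whether a witness is final, and applying the identity cleaning, are each $O(1)$ or a copy.
\end{itemize}
Each transition outputs at most one witness, so there is no further multiplicative factor.

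I do not expect a real obstacle. The only point needing care is fixing the encoding so that the bound is exactly $O(\treewidthvalue^2)$, with no logarithmic factor from listing label pairs. I would state that encoding choice explicitly at the start of the proof.
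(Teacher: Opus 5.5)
Your proposal is correct and uses the same argument as the paper. You encode a non-flag witness as a characteristic bitvector over the $\binom{\treewidthvalue+1}{2}$ label pairs, giving $O(\treewidthvalue^2)$ bits, and note that every transition is an insertion, deletion, union, or intersection over that universe, costing $O(\treewidthvalue^2)$ time; your per-transition breakdown and your remark about avoiding the $O(\treewidthvalue^2\log\treewidthvalue)$ list encoding spell out what the paper states in one line.
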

	\begin{proof}
		A \(\dpcore[\treewidthvalue]\) witness is either the Boolean flag \(\multiedgefound\) or a set \(S\subseteq \powersetchoosek{[\treewidthvalue+1]}{2}\) of label pairs.
		Since \(|\powersetchoosek{[\treewidthvalue+1]}{2}|=\binom{\treewidthvalue+1}{2}=O(\treewidthvalue^2)\), we can encode such a set \(S\) as a bitvector indexed by \(\powersetchoosek{[\treewidthvalue+1]}{2}\), using \(O(\treewidthvalue^2)\) bits.
		The operations of the core perform set insertions/deletions and unions/intersections over this universe, which can be implemented in \(O(\treewidthvalue^2)\) time.
	\end{proof}

		\begin{lemma}[Witness action for $\hasMultiEdgeCore$]
	\label{lemma:witness-action-multiedge}
	Fix $\treewidthvalue\in \N$.
	Define $\action_{\dpcore}^{\treewidthvalue}$ on local witnesses by
	\[
	\action_{\dpcore}^{\treewidthvalue}(\relabelingfunction,\multiedgefound)\defeq \multiedgefound
	\]
	and for a set witness $\awitness\subseteq \powersetchoosek{[\treewidthvalue+1]}{2}$ with $\mathrm{Lbl}_{\treewidthvalue}(\awitness)\subseteq \domain(\relabelingfunction)$, let
	\[
	\action_{\dpcore}^{\treewidthvalue}(\relabelingfunction,\awitness)
	\defeq
	\bigl\{\{\relabelingfunction(u),\relabelingfunction(v)\} : \{u,v\}\in \awitness\bigr\}.
	\]
	Then $\action_{\dpcore}^{\treewidthvalue}$ is a witness action for $\dpcore[\treewidthvalue]$ in the sense of Definition~\ref{definition:witnessaction}.
	\end{lemma}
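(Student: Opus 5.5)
I will verify the conditions of Definition~\ref{definition:witnessaction} one at a time, following the template of Lemmas~\ref{lemma:witness-action-maxdegree} and~\ref{lemma:witness-action-clique}. The label support of a set witness $\awitness$ is $\mathrm{Lbl}_{\treewidthvalue}(\awitness)=\bigcup_{p\in\awitness}p$, and the support of $\multiedgefound$ is empty. First I check that the action is well defined. Since $\relabelingfunction$ is injective, each pair $\{u,v\}$ with $u\neq v$ is sent to a pair of two distinct labels. Also, $p\mapsto \relabelingfunction(p)$ is injective on $2$-subsets of $\domain(\relabelingfunction)$. So the image is again a set of $2$-element subsets of $[\treewidthvalue+1]$, and no two pairs collapse.

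\emph{Finality invariance} holds because $\multiedgefound$ is the only final witness and it is fixed by the action. \emph{Label support transport} holds because $\bigcup_{p\in\awitness}\relabelingfunction(p)=\relabelingfunction(\bigcup_{p\in\awitness}p)$. \emph{Inverse relabeling} and \emph{composition} follow pairwise from $\relabelingfunction^{-1}(\relabelingfunction(p))=p$ and $(\relabelingfunction\circ\relabelingfunction')(p)=\relabelingfunction(\relabelingfunction'(p))$. All expressions are defined, since the supports are transported into the relevant domains. \emph{Extension invariance} holds because the definition only evaluates $\relabelingfunction$ on labels in $\mathrm{Lbl}_{\treewidthvalue}(\awitness)$.

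For the DP-core consistency axioms, write $\Phi(\cdot)=\action_{\dpcore}^{\treewidthvalue}(\relabelingfunction,\cdot)$. If the argument is $\multiedgefound$, every transition returns $\{\multiedgefound\}$, and both sides agree. For a set witness I check each transition in turn.
\begin{itemize}
\item Introduce vertex: both sides equal $\{\Phi(\awitness)\}$.
\item Forget vertex: a pair avoids $\vertexone$ if and only if its image avoids $\relabelingfunction(\vertexone)$, by injectivity. Hence deleting pairs containing $\vertexone$ and then relabeling gives the same result as relabeling and then deleting pairs containing $\relabelingfunction(\vertexone)$.
\item Introduce edge: the guard $\{\vertexone,\vertextwo\}\in\awitness$ is equivalent to $\{\relabelingfunction(\vertexone),\relabelingfunction(\vertextwo)\}\in\Phi(\awitness)$, again by injectivity on pairs. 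Moreover $\Phi(\awitness\cup\{\{\vertexone,\vertextwo\}\})=\Phi(\awitness)\cup\{\{\relabelingfunction(\vertexone),\relabelingfunction(\vertextwo)\}\}$.
\item Join: $\Phi$ is injective on pairs, so it commutes with unions and intersections of pair sets. In particular $\Phi(\awitness\cap\awitness')=\Phi(\awitness)\cap\Phi(\awitness')$, so the emptiness guard is preserved, and $\Phi(\awitness\cup\awitness')=\Phi(\awitness)\cup\Phi(\awitness')$.
\end{itemize}

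The only point needing care is the repeated use of injectivity of $\relabelingfunction$ on $2$-subsets. This is what makes the membership guards in introduce-edge and join, and the intersection test in join, commute with relabeling. I expect the join intersection step to be the one to state explicitly. Everything else is routine bookkeeping.
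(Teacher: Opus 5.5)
Your proof is correct and follows essentially the same route as the paper. Both verify the axioms of the witness-action definition one by one. In both, injectivity of the relabeling on $2$-element subsets of its domain is what makes the membership guards, the intersection test in join, and the set operations (deletion, insertion, union) commute with renaming. Your explicit well-definedness check, that images of pairs are again distinct $2$-subsets, is a small addition the paper leaves implicit.
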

	\begin{proof}
	We verify the conditions of Definition~\ref{definition:witnessaction}.

	\smallskip\noindent
	\emph{Finality invariance.}
	The only final witness is $\multiedgefound$, and it is fixed by the action.

	\smallskip\noindent
	\emph{Label support transport.}
	The label support of $\multiedgefound$ is empty. For a set witness $\awitness\subseteq \powersetchoosek{[\treewidthvalue+1]}{2}$, the only labels appearing in $\awitness$ are those occurring in one of its pairs, so
	\[
	\mathrm{Lbl}_{\treewidthvalue}(\awitness)=\bigcup_{\{u,v\}\in \awitness}\{u,v\}.
		\]
		By definition of the action, each pair $\{u,v\}$ is renamed to $\{\relabelingfunction(u),\relabelingfunction(v)\}$, and therefore
		\[
		\mathrm{Lbl}_{\treewidthvalue}(\action_{\dpcore}^{\treewidthvalue}(\relabelingfunction,\awitness))
		=\relabelingfunction(\mathrm{Lbl}_{\treewidthvalue}(\awitness)).
		\]

	\smallskip\noindent
	\emph{Inverse relabeling, composition, and extension invariance.}
	Since $\relabelingfunction$ is injective, applying $\relabelingfunction^{-1}$ maps each renamed pair back to the original pair, giving inverse relabeling; composition is immediate from $\relabelingfunction(\relabelingfunction'(u))=(\relabelingfunction\circ\relabelingfunction')(u)$ applied elementwise to pairs.
	If $\relabelingfunction'$ extends $\relabelingfunction$, then $\relabelingfunction'$ agrees with $\relabelingfunction$ on $\mathrm{Lbl}_{\treewidthvalue}(\awitness)$, so the action produces the same set of renamed pairs.

	\smallskip\noindent
	\emph{DP-core consistency.}
	If $\awitness=\multiedgefound$, then every transition returns $\{\multiedgefound\}$ and the commutation identities are trivial.
	Otherwise, each transition is defined by set membership and by inserting/removing a single pair or taking union/intersection, and these operations commute with renaming pairs:
	for instance, for $\vertexone,\vertextwo\in \domain(\relabelingfunction)$,
	\[
	\{\vertexone,\vertextwo\}\in \awitness
	\quad\Longleftrightarrow\quad
	\{\relabelingfunction(\vertexone),\relabelingfunction(\vertextwo)\}\in \action_{\dpcore}^{\treewidthvalue}(\relabelingfunction,\awitness),
	\]
	and renaming distributes over set difference and union.
	Thus $\action_{\dpcore}^{\treewidthvalue}$ satisfies Conditions~\ref{condition:relabeldpcore-condition-four}--\ref{condition:relabeldpcore-condition-seven}.
	\end{proof}
}

\end{document}